\definecolor{darkblue}{rgb}{0.05,0.25,0.65}
\definecolor{greenii}{RGB}{20,140,10}
\definecolor{darkgreen}{rgb}{0.00,0.85,0.1}
\definecolor{lightgray}{rgb}{0.9,0.9,0.9}
\definecolor{orangeii}{RGB}{200,100,5}
\definecolor{darkyellow}{rgb}{.91,.91,0}
\DeclareRobustCommand{\coprod}{\mathop{\text{\fakecoprod}}}
\newcommand{\fakecoprod}{%
  \sbox0{$\prod$}%
  \smash{\raisebox{\dimexpr.9625\depth-\dp0}{\scalebox{1}[-1]{$\prod$}}}%
  \vphantom{$\prod$}%
}
\newcommand{\Differential}{
  \mathrm{d}
}
\newcommand{\DeRham}{\mathrm{dR}}
\newcommand{\DifferentialForms}[3]{
  \Omega^{#2}_{\mathrm{dR}}
  #1(
    #3
  #1)
}
\newcommand{\deRhamDifferential}{\mathrm{d}}
\newcommand{\DeRhamDifferential}{\deRhamDifferential}
\newcommand{\DolbeaultDifferential}{\partial}
\newcommand{\DolbeaultDifferentialForms}[4]{
 \DifferentialForms{}{#2,#3}
   {#4}
}
\newcommand{\HolomorphicDifferentialForms}[3]{
  \DolbeaultDifferentialForms{#1}{#2}{0}{#3}
    \vert_{ \overline{\DolbeaultDifferential}=0}
}
\newcommand{\HolomorphicForms}[3]{\HolomorphicDifferentialForms{#1}{#2}{#3}}
\newcommand{\DeRhamComplex}[2]{
  \DifferentialForms{#1}{\bullet}
    {#2}
}
\newcommand{\HolomorphicDeRhamComplex}[2]{\HolomorphicDifferentialForms{#1}{\bullet}{#2}}
\newcommand{\HypothesisK}{\hyperlink{HypothesisK}{\it Hypothesis K}}
\newcommand{\HypothesisH}{\hyperlink{HypothesisH}{\it Hypothesis H}}
\newcommand{\EquivarianceGroup}{G}
\newcommand{\Order}{\mathrm{ord}}
\newcommand{\CharacterGroup}[1]{
  #1^\ast
}
\newcommand{\Irreps}[1]{\mathrm{Irr}(#1)}
\newcommand{\RepresentationRing}[2]{
  R_{\scalebox{.67}{$#2$}}({#1})
}
\newcommand{\ComplexRepresentationRing}[1]{
  \RepresentationRing{#1}{\ComplexNumbers}
}
\newcommand{\NaturalNumbers}{
  \mathbb{N}
}
\newcommand{\Integers}{
  \mathbb{Z}
}
\newcommand{\ProfiniteIntegers}{
  \widehat{\Integers}
}
\newcommand{\Rationals}{
  \RationalNumbers
}
\newcommand{\RationalNumbers}{
  \mathbb{Q}
}
\newcommand{\RealNumbers}{
  \mathbb{R}
}
\newcommand{\Torus}[1]{\mathbb{T}^{#1}}
\newcommand{\MFTorus}{\Torus{2}_{\scalebox{.6}{M/F}}}
\newcommand{\ImaginaryUnit}{
  \mathrm{i}
}
\newcommand{\ComplexNumbers}{\mathbb{C}}
\newcommand{\ComplexNumber}{c}
\newcommand{\Quaternions}{
  \mathbb{H}
}
\newcommand{\MasterFunction}{
  \ell
}
\newcommand{\CyclicGroup}[1]{C_{#1}}
\newcommand{\IrrepOfCyclicGroup}[2]{\ComplexNumbers_{#1/#2}}
\newcommand{\modulo}{\,\mathrm{mod}\,}
\newcommand{\RationalCircleGroup}{
  \Quotient
    { \Rationals }
    { \Integers }
}
\newcommand{\UnitaryGroup}{
  \mathrm{U}
}
\newcommand{\SpecialUnitaryGroup}{
  \mathrm{S}\UnitaryGroup
}
\newcommand{\QuaternionicGroup}{
  \mathrm{Sp}
}
\newcommand{\SpOne}{
  \QuaternionicGroup(1)
}
\newcommand{\SpinGroup}{
  \mathrm{Spin}
}
\newcommand{\Spin}{\SpinGroup}
\newcommand{\UnitaryLieAlgebra}[1]{\mathfrak{u}_{#1}}
\newcommand{\SpecialUnitaryLieAlgebra}[1]{\mathfrak{s}\UnitaryLieAlgebra{#1}}
\newcommand{\suTwo}{\SpecialUnitaryLieAlgebra{2}}
\newcommand{\SpecialLinearLieAlgebra}[1]{\mathfrak{sl}_{#1}}
\newcommand{\slTwo}{\SpecialLinearLieAlgebra{2}}
\newcommand{\RaisingOperator}{e}
\newcommand{\LoweringOperator}{f}
\newcommand{\WeightOperator}{h}
\newcommand{\HighestWeightVector}{v^0}
\newcommand{\HighestWeightIrrep}[1]{L^{#1}}
\newcommand{\AffineHighestWeightIrrep}[2]{\widehat{{L}_{#1}\hspace{-1.38pt}}^{\raisebox{-2.5pt}{\scalebox{.72}{\hspace{+.2pt}$#2$}}}}
\newcommand{\AffineCharacter}[2]{\mathrm{ch}_{#1}^{#2}}
\newcommand{\IncomingHighestWeightReps}
{L_{\mathrm{in}}}
\newcommand{\SpecialLinearGroup}{\mathrm{SL}}
\newcommand{\SLTwo}{\SpecialLinearGroup(2,\ComplexNumbers)}
\newcommand{\SLTwoZ}{\SpecialLinearGroup(2,\Integers)}
\newcommand{\TOperator}{\mathrm{T}}
\newcommand{\SOperator}{\mathrm{S}}
\newcommand{\slAffine}[2]{\widehat{\SpecialLinearLieAlgebra{#1}}^{\raisebox{-2.5pt}{\scalebox{.73}{\hspace{-1pt}$#2$}}}}
\newcommand{\suAffine}[2]{\widehat{\SpecialUnitaryLieAlgebra{#1}}^{\raisebox{-2.5pt}{\scalebox{.73}{\hspace{-1pt}$#2$}}}}
\newcommand{\suTwoAffine}[1]{\suAffine{2}{#1}}
\newcommand{\slTwoAffine}[1]{\slAffine{2}{#1}}
\newcommand{\HilbertSpace}{
  \ell^2(\ComplexNumbers)
}
\newcommand{\BoundedOperators}{\mathcal{B}}
\newcommand{\CompactOperators}{\mathcal{K}}
\newcommand{\UnitaryOperator}{u}
\newcommand{\FredholmOperators}{
  \mathrm{Fred}
}
\newcommand{\FredholmOperator}{
  F
}
\newcommand{\UH}{
  \UnitaryGroup_{\!\omega}
}
\newcommand{\CircleGroup}{
  {\UnitaryGroup_{\!1}}
}
\newcommand{\PUH}{
  \mathrm{P}\UH
}
\newcommand{\GradedUH}{\UH^{\mathrm{gr}}}
\newcommand{\GradedPUH}{
  \PUH^{\mathrm{gr}}
}
\newcommand{\SUTwo}{
  \SpecialUnitaryGroup(2)
}
\newcommand{\simplicial}{
  \Delta
}
\newcommand{\Groupoids}[1]{
  \mathrm{Grpd}_{#1}
}
\newcommand{\limit}[1]{
  \underset{
    \underset{#1}{\longleftarrow}
  }{\lim}
  \,
}
\newcommand{\colimit}[1]{
  \underset{
    \underset{#1}{\longrightarrow}
  }{\lim}
  \,
}
\newcommand{\Complex}{\mathrm{Cplx}}
\newcommand{\SmoothManifold}{
  \TopologicalSpace
}
\newcommand{\ComplexManifold}{
  \SmoothManifold
}
\newcommand{\ComplexPlane}{\ComplexNumbers}
\newcommand{\RiemannSphere}{{\ComplexNumbers}P^1}
\newcommand{\FlatConnectionForm}{\omega_1}
\newcommand{\NumberOfPunctures}{N}
\newcommand{\NumberOfProbeBranes}{n}
\newcommand{\ConfigurationSpace}[1]{  \underset{
    \scalebox{.65}{$
      \{1,\cdots,#1\}
    $}
  }
  {\mathrm{Conf}}
}
\newcommand{\weight}{\mathrm{w}}
\newcommand{\IncomingWeight}{\weight_{\mathrm{in}}}
\newcommand{\OutgoingWeight}{\weight_{\mathrm{out}}}
\newcommand{\level}{k}
\newcommand{\Level}{\level}
\newcommand{\ShiftedLevel}{\kappa}
\newcommand{\Denominator}{r}
\newcommand{\DualDenominator}{s}
\newcommand{\ConformalBlocks}{\mathrm{CnfBlck}}
\newcommand{\compact}{\mathrm{cpt}}
\newcommand{\CartesianSpaces}{
  \mathrm{CartSp}
}
\newcommand{\TopologicalSpace}{
  \mathrm{X}
}
\newcommand{\InfinityGroupoids}{
  \Groupoids{\infty}
}
\newcommand{\WeakEquivalences}{
  \mathrm{WEq}
}
\newcommand{\Localization}[1]{
  \mathrm{Loc}^{\scalebox{.6}{$#1$}}
}
\newcommand{\CRationalization}{\Localization{\ComplexNumbers}}
\newcommand{\Local}{
  \mathrm{Lcl}
}
\newcommand{\LocalWeakEquivalences}{
  \Local\WeakEquivalences
}
\newcommand{\shape}{
  \raisebox{1pt}{\rm\normalfont\textesh}
}
\newcommand{\Smooth}{
  \mathrm{Smth}
}
\newcommand{\ShapeOfSphere}[1]{
  S^{#1}
}
\newcommand{\proofstep}[1]{
  \mbox{\small #1}
}
\newcommand{\SimplicialPresheaves}{
  \simplicial\Presheaves
}
\newcommand{\Sheaves}{
  \mathrm{Sh}
}
\newcommand{\Presheaves}{
  \mathrm{P}\Sheaves
}
\newcommand{\Truncation}[1]{
  \tau_{#1}
}
\newcommand{\Slice}[2]{
  {#1}_{\scalebox{.7}{$/#2$}}
}
\newcommand{\Maps}[3]{
  \mathrm{Map}
  #1(
    #2
    ,\,
    #3
  #1)
}
\newcommand{\stable}{\mathrm{stbl}}
\newcommand{\diff}{\mathrm{diff}}
\newcommand{\homotopy}{\mathrm{htpy}}
\newcommand{\ConstrainedMaps}[3]{
  \mathrm{Map}^{\mathrm{\ast}/}
  #1(
    #2
    ,\,
    #3
  #1)
}
\newcommand{\Homs}[3]{
  \mathrm{Hom}
  #1(
    #2
    ,\,
    #3
  #1)
}
\newcommand{\TED}{$\mathrm{TED}$}
\newcommand{\DifferentialKU}{\mathrm{KU}_{\diff}}
\newcommand{\SliceMaps}[4]{
  \Maps{#1}{#3}{#4}_{\!\scalebox{.7}{$#2$}}
}
\newcommand{\Quotient}[2]{
  #1 / #2
}
\newcommand{\HomotopyQuotient}[2]{
  #1 \!\sslash\! #2
}
\newcommand{\SmoothInfinityGroupoids}{
  \Smooth\Groupoids{\infty}
}
\newcommand{\ComplexInfinityGroupoids}{\Complex\InfinityGroupoids}
\newcommand{\DoldKanCorrespondence}{\mathrm{DK}}
\newcommand{\DoldKanConstruction}{\DoldKanCorrespondence}
\newcommand{\Modules}{\mathrm{Mod}}
\newcommand{\mysetminus}{\setminus}
\def\acts{\raisebox{1.4pt}{\;\rotatebox[origin=c]{90}{$\curvearrowright$}}\hspace{.5pt}}
\newif\if@sup
\newtoks\@sups
\def\append@sup#1{\edef\act{\noexpand\@sups={\the\@sups #1}}\act}%
\def\reset@sup{\@supfalse\@sups={}}%
\def\mk@scripts#1#2{\if #2/ \if@sup ^{\the\@sups}\fi \else%
  \ifx #1_ \if@sup ^{\the\@sups}\reset@sup \fi {}_{#2}%
  \else \append@sup#2 \@suptrue \fi%
  \expandafter\mk@scripts\fi}
\def\tensor#1#2{\reset@sup#1\mk@scripts#2_/}
\def\multiscripts#1#2#3{\reset@sup{}\mk@scripts#1_/#2%
  \reset@sup\mk@scripts#3_/}
\newbox\slashbox \setbox\slashbox=\hbox{$/$}
\def\itex@pslash#1{\setbox\@tempboxa=\hbox{$#1$}
  \@tempdima=0.5\wd\slashbox \advance\@tempdima 0.5\wd\@tempboxa
  \copy\slashbox \kern-\@tempdima \box\@tempboxa}
\def\slash{\protect\itex@pslash}
\def\clap#1{\hbox to 0pt{\hss#1\hss}}
\def\mathllap{\mathpalette\mathllapinternal}
\def\mathrlap{\mathpalette\mathrlapinternal}
\def\mathclap{\mathpalette\mathclapinternal}
\def\mathllapinternal#1#2{\llap{$\mathsurround=0pt#1{#2}$}}
\def\mathrlapinternal#1#2{\rlap{$\mathsurround=0pt#1{#2}$}}
\def\mathclapinternal#1#2{\clap{$\mathsurround=0pt#1{#2}$}}
\let\oldroot\root
\def\root#1#2{\oldroot #1 \of{#2}}
\renewcommand{\sqrt}[2][]{\oldroot #1 \of{#2}}
\DeclareSymbolFont{symbolsC}{U}{txsyc}{m}{n}
\DeclareSymbolFont{stmry}{U}{stmry}{m}{n}
\DeclareFontFamily{OMX}{MnSymbolE}{}
\DeclareSymbolFont{mnomx}{OMX}{MnSymbolE}{m}{n}
\DeclareFontShape{OMX}{MnSymbolE}{m}{n}{
    <-6>  MnSymbolE5
   <6-7>  MnSymbolE6
   <7-8>  MnSymbolE7
   <8-9>  MnSymbolE8
   <9-10> MnSymbolE9
  <10-12> MnSymbolE10
  <12->   MnSymbolE12}{}
\newtheorem{theorem}{Theorem}[section]
\newtheorem{lemma}[theorem]{Lemma}
\newtheorem{conjecture}[theorem]{Conjecture}
\newtheorem{proposition}[theorem]{Proposition}
\theoremstyle{definition}
\newtheorem{definition}[theorem]{Definition}
\newtheorem{example}[theorem]{Example}
\newtheorem{remark}[theorem]{Remark}
\renewcommand{\emph}{\textit}
\begin{document}

%%%%%%%%%%%%%%%%%%%%%%%%%%%%%%%%%%%%%%%%%%%%%%%%%%%%%%
\title{Anyonic Defect Branes and Conformal Blocks
in
\\
Twisted Equivariant Differential (TED) K-theory}

\author{Hisham Sati, \qquad Urs Schreiber}
%%%%%%%%%%%%%%%%%%%%%%%%%%%%%%%%%%%%%%%%%%%%%%%%%%%%%%

\maketitle

\begin{abstract}
We demonstrate that
twisted equivariant differential K-theory of transverse complex curves
accommodates  exotic charges of the form  expected of
   codimension=2 defect branes,
   such as
   of $\mathrm{D7}$-branes in IIB/F-theory
   on $\mathbb{A}$-type orbifold singularities,
   but also of their dual
   3-brane defects of class-S theories on
   M5-branes.
   These branes have been argued,
   within F-theory and the AGT correspondence,
   to carry special
    $\mathrm{SL}(2)$-monodromy charges not seen for other branes,
    at least partially
    reflected in conformal blocks of the $\suTwo$-WZW model over their transverse punctured complex curve.
  Indeed, it has been argued that
  all ``exotic'' branes of string theory are defect branes carrying such U-duality monodromy charges
  --
  but none of these had previously been identified
  in the
  expected brane charge quantization law
  given by K-theory.

  Here we observe that it is the
  subtle
  (and previously somewhat neglected)
  twisting of equivariant K-theory
  by flat complex line bundles
  appearing
  inside orbi-singularities
  (``inner local systems'')
  that makes the
  secondary
  Chern character
  on a punctured plane inside an $\mathbb{A}$-type singularity
  evaluate to the twisted
  holomorphic de Rham cohomology which
  Feigin, Schechtman \& Varchenko
  showed
  realizes
  $\slTwoAffine{\Level}$-conformal blocks,
  here in degree 1
  --
  in
  fact it gives the direct sum of these over all
  admissible
  fractional levels $k = - 2 + \ShiftedLevel/\Denominator$.
  The remaining higher-degree
  $\slTwoAffine{\Level}$-conformal blocks
  appear similarly
  if we assume our previously discussed ``Hypothesis H'' about brane charge quantization in M-theory.
  Since conformal blocks -- and hence these twisted equivariant secondary Chern characters --
  solve the Knizhnik-Zamolodchikov equation and thus
  constitute representations of the braid group of motions of defect branes inside their transverse space,
  this provides a
  concrete
  first-principles
  realization
  of anyon statistics of
  --
   and hence of topological quantum computation on --
  defect branes in string/M-theory.
\end{abstract}

\medskip

\tableofcontents

\newpage

%%%%%%%%%%%%%%%%%%%%%%%%%%%%%%%%
\section{Introduction}
%%%%%%%%%%%%%%%%%%%%%%%%%%%%%%%%

It is a classical and famous fact that the Chern character on twisted K-theory classes over smooth manifolds may be understood
as taking values in (complex) de Rham cohomology twisted by a closed differential 3-form $H_3$, i.e., in the cohomology of the shifted de Rham differential $\DeRhamDifferential + H_3 \wedge$ acting on (complex) differential forms of even or odd degrees (\cite{RohmWitten86}\cite{BCMMS02},
review in \cite[Prop. 3.109 \& 5.6]{FSS20CharacterMap}). What has received much less attention than this ``3-twist'' of the Chern character is the curious fact that -- in further generalization to {\it equivariant} twisted K-theory and hence to the twisted K-theory of smooth orbifolds -- there appears
(\cite{TuXu06}\cite{FreedHopkinsTeleman02ComplexCoefficients}, see Prop. \ref{TheBareBProfiniteIntegersTwistOfComplexRationalizedAEquivariantKTheory} below)
``inside'' the  orbi-singularities a system of further shifts of the de Rham differential by closed differential 1-forms $\FlatConnectionForm$, hence by flat connections on (complex) line bundles. In \cref{OneTwistedCohomologyAsTwistedEquivariantKTheory} we give a detailed outline of a transparent proof of this fact.

\medskip
Of course, in themselves such ``1-twists'' $\DeRhamDifferential + \FlatConnectionForm \wedge$ of the de Rham differential are even more classical (\cite[\S 2, 6]{Deligne70}, review in \cite[\S 5.1.1]{Voisin03}, see also \cite{Witten82}\cite{GS-Deligne}),
often known as or implicitly identified with ``local systems'' (\cite{Steenrod43}, whence their orbifold version has been named ``inner local systems'' \cite{Ruan00}).
A particularly rich class of examples is the 1-twisted de Rham cohomology already of the simple example of $\NumberOfPunctures$-punctured planes, which is known to be the source of the ``hypergeometric'' solutions
(\cite{SchechtmanVarchenko89}\cite{SchechtmanVarchenko90}\cite{SchechtmanVarchenko91}, review in \cite[\S 7]{EtingofFrenkelKirillov98})
of the celebrated Knizhnik-Zamolodchikov (KZ) equation (\cite{KZ84}, review in \cite[\S 1.5]{Kohno02}). In the special case
that the holonomy of the twisting connection takes values in the group of rational phases $\RationalCircleGroup \xhookrightarrow{\;} \CircleGroup$ (i.e., roots of unity), these KZ solutions and hence these 1-twisted de Rham cohomology spaces constitute (\cite{FeiginSchechtmanVarchenko90}\cite{FeiginSchechtmanVarchenko94}\cite{FeiginSchechtmanVarchenko95})
the conformal blocks
(e.g. \cite{Beauville94}\cite[\S 1.4]{Kohno02})
of chiral WZW models (e.g. \cite[\S C]{DMS97}\cite{Walton00}), thus providing a {\it twisted cohomological re-incarnation of the core structure of conformal field theory} (CFT). Specifically, if the holonomy of the 1-twist is inside the cyclic group of $\ShiftedLevel$-fold roots of unity $\CyclicGroup{\ShiftedLevel} \xhookrightarrow{\;} \CircleGroup$, then the corresponding 1-twisted de Rham cohomology of the $\NumberOfPunctures$-punctured plane reflects the $\slTwoAffine{\level}$-conformal blocks at level $\Level = -2 + \ShiftedLevel$. We review this as Prop. \ref{Degree1ConformalBlocksInTwistedCohomology}, \ref{ConformalBlocksInTwistedCohomology} below.
 It is expected that this remains the case also for the fractional levels $\Level = - 2 + \ShiftedLevel/\Denominator$ at which the $\slTwo$-WZW model exists as a ``logarithmic'' CFT (see Rem. \ref{FractionalLevelWZWModels} below).

\medskip
One of the main observations of this paper is the unification of these two threads by regarding the 1-twisted de Rham cohomology which constitutes $\slTwoAffine{\Level}$-conformal blocks as the (secondary) twisted equivariant Chern characters of the twisted equivariant differential (\TED) K-theory of the punctured plane regarded as sitting inside an $\mathbb{A}_{\ShiftedLevel-1}$-orbi-singularity (Prop. \ref{Degree1ConformalBlockInTEdKTheory}, Thm. \ref{ConformalBlockInTEdKTheory} below). We observe that this K-theoretic perspective naturally enforces the holonomy in $\CyclicGroup{\ShiftedLevel} \xhookrightarrow{\;} \CircleGroup$ as well as the ``admissible'' fractional levels $\Level = -2 + \ShiftedLevel/\Denominator$ (see Rem. \ref{RationalLevelsInKTheory} below), both of which are necessary for ensuring that the general hypergeometric construction of KZ-solutions specializes to conformal blocks of WZW models.

\medskip
Moreover, under the widely expected {\HypothesisK} (see Rem. \ref{HypothesesAboutBraneChargeQuantization}) that twisted
equivariant
differential
K-theory
(henceforth \TED-K-theory) classifies the RR-fluxes and D-brane charges in type-II string theory, this identification implies (discussed in \cref{QuantumStatesAndObservables})
that charges of codim=2 ``defect branes'' on $\mathbb{A}_{\ShiftedLevel-1}$-type singularities have exotic properties not previously identified in K-theory, such as transformation laws under $\SLTwoZ$ and under the braid group of motions of the defect branes in their transverse space. Both of these are hallmarks expected of D7/D3-branes in F-theory but have not previously been derived from first principles of charge quantization. Finally, assuming also our more recently introduced {\HypothesisH} about the charge quantization of M-branes, it follows that charges of defect M-branes reflect the full structure of $\slTwoAffine{\Level}$-conformal blocks, which is a key property expected (\cite{NishiokaTachikawa11}\cite{FMMW20}\cite{Manabe20}) of the ``dual'' incarnation of D3-branes at IIB $\mathbb{A}$-type singularities as codim=2 defects inside M5-branes, according to the AGT correspondence (\cite{AGT10}, review in \cite{Tachikawa16}\cite{LeFloch20}\cite{Akhond21}).

\vspace{-.1cm}
\hspace{-.8cm}
\begin{tikzpicture}
\draw (0,0) node {
\begin{tikzcd}[
  column sep={between origins, 190pt},
  row sep=07pt
]
&
|[alias=top]|
\fbox{\!\!\!\!\!\! \small
     \def\arraystretch{.9}
     \begin{tabular}{c}
       \TED-K-Theory of
       \\
       (configuration space of points in)
       \\
       $\NumberOfPunctures$-punctured plane
       \\
       inside $\mathbb{A}_{\ShiftedLevel-1}$-singularity
     \end{tabular}
   \!\!\!\!\!}
&
\\
|[alias=left]|
 \fbox{\!\!\!\!\!\!  \small
    \def\arraystretch{.9}
    \begin{tabular}{c}
      1-twisted holomorphic
      \\
      deRham-cohomology of
      \\
      (configuration space of points in)
      \\
      $\NumberOfPunctures$-punctured plane
    \end{tabular}
  \!\!\!\!\!}
&&
|[alias=right]|
\fbox{\!\!\!\! \small
     \begin{tabular}{c}
       Exotic charges of $\NumberOfPunctures$
       \\
       codim=2 defect branes
       \\
       in F-theory (M-theory)
     \end{tabular}
   \!\!\!\!}
\\
&
|[alias=bottom]|
\fbox{\!\!\!\!\! \small
     \def\arraystretch{.9}
     \begin{tabular}{c}
       $\slTwoAffine{\Level}$-conformal blocks
       \\
       with $\NumberOfPunctures$ field insertions
       \\
       of integrable weights
       \\
       transforming under $\SLTwoZ$
     \end{tabular}
  \!\!\!\!\!\! }
  %
  % the arrows
  %
  \ar[
    from=top, to=right,
    -,
    shorten=-9pt,
    "{
      \color{greenii}
      \scalebox{.8}{
        \begin{tabular}{c}
          {\HypothesisK}
          \\
          ({\HypothesisH})
        \end{tabular}
      }
    }"{sloped},
    "{
      \scalebox{.8}{
        Rem. \ref{HypothesesAboutBraneChargeQuantization}
      }
    }"{swap, sloped}
  ]
   \ar[
    from=top, to=left,
    -,
    shorten=-11pt,
    "{
      \hspace{-4pt}
      \color{greenii}
      \scalebox{.8}{
        \def\arraystretch{.9}
        \begin{tabular}{c}
          (secondary)
          \\
          \TED-Chern character
        \end{tabular}
      }
    }"{sloped},
    "{
      \scalebox{.8}{
        Prop. \ref{TheBareBProfiniteIntegersTwistOfComplexRationalizedAEquivariantKTheory},
        and
        \cref{OneTwistedCohomologyAsTwistedEquivariantKTheory}
      }
    }"{swap, sloped}
  ]
  \ar[
    from=top, to=bottom,
    -,
    shorten=-10,
    "{
      \scalebox{.8}{
        \cref{ConformalBlocksAsTEdKTheory}
      }
    }"{swap}
  ]
  \ar[
    from=left, to=bottom,
    -,
    shorten=-7pt,
    "{
      \color{greenii}
      \scalebox{.8}{
        \def\arraystretch{.9}
        \begin{tabular}{c}
          hypergeometric
          \\
          construction
        \end{tabular}
      }
    }"{sloped},
    "{
      \scalebox{.8}{
        Prop. \ref{Degree1ConformalBlocksInTwistedCohomology},
        \ref{ConformalBlocksInTwistedCohomology}
      }
    }"{swap, sloped}
  ]
  \ar[
    from=bottom, to=right,
    -,
    shorten=-7pt,
    dashed,
    "{
      \color{greenii}
      \scalebox{.8}{
        \def\arraystretch{.9}
        \begin{tabular}{c}
        F-theory prediction
        \\
        (AGT correspondence)
        \end{tabular}
      }
    }"{sloped},
    "{
      \scalebox{.8}{
        \cref{QuantumStatesAndObservables}
      }
    }"{swap, sloped}
  ]
\end{tikzcd}
};
%
% draw all nodes again to hide arrow overlap
%
\draw (0,0) node {
\begin{tikzcd}[
  column sep={between origins, 190pt},
  row sep=07pt
]
&
|[alias=top]|
\fcolorbox{black}{white}{\!\!\!\!\!\! \small
     \def\arraystretch{.9}
     \begin{tabular}{c}
       \TED-K-Theory of
       \\
       (configuration space of points in)
       \\
       $\NumberOfPunctures$-punctured plane
       \\
       inside $\mathbb{A}_{\ShiftedLevel-1}$-singularity
     \end{tabular}
   \!\!\!\!\!}
&
\\
|[alias=left]|
\fcolorbox{black}{white}{\!\!\!\!\!\!  \small
    \def\arraystretch{.9}
    \begin{tabular}{c}
      1-twisted holomorphic
      \\
      deRham-cohomology of
      \\
      (configuration space of points in)
      \\
      $\NumberOfPunctures$-punctured plane
    \end{tabular}
  \!\!\!\!\!}
&&
|[alias=right]|
\fcolorbox{black}{white}{\!\!\!\! \small
     \begin{tabular}{c}
       Exotic charges of $\NumberOfPunctures$
       \\
       codim=2 defect branes
       \\
       in F-theory (M-theory)
     \end{tabular}
   \!\!\!\!}
\\
&
|[alias=bottom]|
\fcolorbox{black}{white}{\!\!\!\!\! \small
     \def\arraystretch{.9}
     \begin{tabular}{c}
       $\slTwoAffine{\Level}$-conformal blocks
       \\
       with $\NumberOfPunctures$ field insertions
       \\
       of integrable weights
       \\
       transforming under $\SLTwoZ$
     \end{tabular}
  \!\!\!\!\!\! }
\end{tikzcd}
};
\end{tikzpicture}

\noindent
In conclusion, we describe a curious embedding of WZW conformal field theory and of anyonic braid representations into \TED-K-theory, compatibly so with expectations about defect brane charges in F-theory/M-theory. An outlook is given in \cref{Outlook}.

\medskip

%\noindent
%{\bf Outline.}
%\vspace{-3mm}
%\begin{itemize}[leftmargin=*]
%\setlength\itemsep{-1pt}
%
%
%\item \cref{ConformalBlocksAsTEdKTheory} discusses how $\slTwoAffine{\Level}$-conformal blocks are reflected in the \TED-K-theory of complex surfaces inside $\mathbb{A}$-type singularities;
%
%\item \cref{OneTwistedCohomologyAsTwistedEquivariantKTheory} provides mathematical background on the peculiar mechanism in \TED-K-theory that underlies this phenomenon;
%
%\item \cref{QuantumStatesAndObservables} matches this mathematical result to expected phenomena in string/M-theory.
%
%\end{itemize}

\newpage

\noindent
{\bf Quantum states of M-branes.}
We suggest that this result further supports the {\HypothesisH} (Rem. \ref{HypothesesAboutBraneChargeQuantization}) that quantum states of M-branes in the elusive non-perturbative completion of string theory known as ``M-theory'' or ``F-theory'' are reflected in the cohomology of Cohomotopy cocycle spaces, which for the low-codimension intersecting ``flat branes'' of interest here are given by configuration spaces of points \eqref{OrderedConfigurationSpaceAsSpaceOfIntersections} in the plane. These form a direct system whose fibers
\eqref{ConfigurationSpaceAsFiberProduct}
are exactly the configuration spaces of points in the $\NumberOfPunctures$-punctured plane whose 1-twisted cohomology
reflects $\slTwoAffine{\Level}$-conformal blocks, by the above discussion.

\medskip
Previously we had discussed this (in \cite{SS19ConfigurationSpaces}\cite{CSS21}) for $\mathrm{NS5} = \mathrm{D6}\!\perp\!\mathrm{D}8$-brane intersections within the reduced M-theory bulk spacetime where charges are in 4-Cohomotopy, quantizing the degree=4 C-field flux density $G_4$ (incarnated as $F_4$).
The discussion here (\cref{QuantumStatesAndObservables}) concerns the
directly analogous situation (see \hyperlink{TableOfCohomologyOfCohomotopyCocycleSpaces}{\it Table 1}), now for $\mathrm{M3} = \mathrm{M5} \!\perp\! \mathrm{M5}$-intersections within the ambient $\mathrm{MK6}$-singularity where charges are in 3-cohomotopy (according to the discussion in \cite{FSS20TwistedString}\cite{FSS19HopfWZ}), quantizing the degree=3 flux density $H_3$.

\medskip

\begin{center}
\hypertarget{TableOfCohomologyOfCohomotopyCocycleSpaces}{}
\def\arraystretch{2.5}
\begin{tabular}{|l||l||l|c|}
  \hline
  &
  \def\arraystretch{.85}
  \begin{tabular}{c}
    \bf
    Hanany-Witten theory of
    \\
    \bf
    codim=3 branes
    \\
    (\cite{SS19ConfigurationSpaces}\cite{CSS21})
    $\mathclap{\phantom{\vert_{\vert_{\vert}}}}$
  \end{tabular}
  &
  \def\arraystretch{.85}
  \begin{tabular}{c}
    \bf
    Seiberg-Witten theory of
    \\
    \bf
    defect codim=2 branes
    \\
    (here)
    $\mathclap{\phantom{\vert_{\vert_{\vert}}}}$
  \end{tabular}
  &
  \\
  \hline
  \rowcolor{lightgray}
  \def\arraystretch{.85}
  \begin{tabular}{l}
    Intersecting branes
  \end{tabular}
  &
  \def\arraystretch{.9}
  \begin{tabular}{l}
    $\mathrm{NS5} = \mathrm{D6} \!\perp\! \mathrm{D8}$
    \\
    in 10d bulk spacetime
    $\mathclap{\phantom{\vert_{\vert_{\vert_{\vert}}}}}$
  \end{tabular}
  &
  \def\arraystretch{.9}
  \begin{tabular}{l}
    $
    \mathrm{D7}\Vert\mathrm{D3}
    \,\rightsquigarrow\,
    \mathrm{M3} = \mathrm{M5} \!\perp\! \mathrm{M5}$
    \\
    in 7d MK6 worldvolume
    $\mathclap{\phantom{\vert_{\vert_{\vert_{\vert}}}}}$
  \end{tabular}
  &
  \cref{QuantumStatesAndObservables}
  \\
  \def\arraystretch{.5}
  \begin{tabular}{l}
    Charge quantization law
  \end{tabular}
  &
  \def\arraystretch{.5}
  \begin{tabular}{c}
    4-Cohomotopy
  \end{tabular}
  &
  \def\arraystretch{.5}
  \begin{tabular}{c}
     3-Cohomotopy
  \end{tabular}
  &
  \cite{FSS20TwistedString}
  \\
%  \hline
\rowcolor{lightgray}  \def\arraystretch{.85}
  \begin{tabular}{l}
    Cocycle space /
    \\
    configuration space
  \end{tabular}
  &
  \begin{tabular}{c}
  $\underset{\NumberOfPunctures_{\mathrm{f}} \in \mathbb{N}}{\coprod}
  \;
  \ConfigurationSpace{\NumberOfPunctures_{\mathrm{f}}}
  \big(
    \RealNumbers^3
  \big)
  \mathclap{\phantom{\vert_{\vert_{\vert_{\vert_{\vert_{\vert_{\vert_{\vert}}}}}}}}}
  $
  \end{tabular}
  &
  \begin{tabular}{l}
  $\underset{\NumberOfPunctures \in \mathbb{N}}{\coprod}
  \;
  \ConfigurationSpace{\NumberOfPunctures}
  \big(
    \RealNumbers^2
  \big)
  \mathclap{\phantom{\vert_{\vert_{\vert_{\vert_{\vert_{\vert_{\vert_{\vert}}}}}}}}}
  $
  \end{tabular}
  &
  \eqref{OrderedConfigurationSpaceAsSpaceOfIntersections}
  \\
%  \hline
  \def\arraystretch{.85}
  \begin{tabular}{l}
    (Twisted, fiberwise)
    \\
    (Co)Homology / \\
    quantum states + observables
  \end{tabular}
  &
  \def\arraystretch{1}
  \begin{tabular}{c}
    Weight systems/
    \\
    Chord diagrams
  \end{tabular}
  &
  \def\arraystretch{1}
  \begin{tabular}{l}
    $\slTwoAffine{\Level}$-Conformal blocks/
    \\
    Braid group representations
  \end{tabular}
  &
  \cref{ConformalBlocksAsTEdKTheory}
  $\mathclap{\phantom{\vert_{\vert_{\vert_{\vert_{\vert_{\vert_{\vert_{\vert}}}}}}}}}$
  \\
  \hline
  \end{tabular}

  \vspace{.2cm}

  {\bf Table 1.}
  Quantum states of branes as cohomology of
  Cohomotopy cocycle spaces
  according to {\HypothesisH}.
\end{center}

\vspace{.1cm}

Independently of this relation to M-theory, the results we present now indicate a new way in which generalized cohomology theory and tools from algebraic topology are usefully brought to bear on questions of quantum physics in general and of topologival quantum computation, in particular. We come back to this at the end in \cref{Outlook}.

\medskip
\noindent
{\bf Acknowledgements.} We thank David Ridout, Eric Sharpe and Guo Chuan Thiang for useful discussions.

\medskip

%%%%%%%%%%%%%%%%%%%%%%%%%%%%%%%%%%
\section{Conformal blocks in TED-K-Theory}
\label{ConformalBlocksAsTEdKTheory}
%%%%%%%%%%%%%%%%%%%%%%%%%%%%%%%%%%

This section discusses how the
secondary sector of
twisted equivariant differential K-theory
(``\TED-K-theory'' for short, see \cref{OneTwistedCohomologyAsTwistedEquivariantKTheory})
of configuration spaces of points in the punctured plane
inside $\mathbb{A}_{\ShiftedLevel-1}$-singularities naturally reflects
the spaces of conformal blocks of the $\slTwoAffine{\Level}$-WZW model
(Prop. \ref{Degree1ConformalBlockInTEdKTheory}, Thm. \ref{ConformalBlockInTEdKTheory} below)
at the usual integer level $\Level = - 2 + \ShiftedLevel$,
as well as those expected for the fractional-level WZW models which are conjectured
to exist (Rem. \ref{FractionalLevelWZWModels} below).

\medskip

Our Theorem \ref{ConformalBlockInTEdKTheory} is a consequence of
Prop.  \ref{TheBareBProfiniteIntegersTwistOfComplexRationalizedAEquivariantKTheory} and Prop. \ref{ConformalBlocksInTwistedCohomology}
below, for which we now provide
a quick introduction of all the notions and notation that go into the them,
together with pointers to full details.

\medskip

% \noindent
% In short:

\noindent
-- Prop. \ref{ConformalBlocksInTwistedCohomology} below is a digest of
Feigin, Schechtman \& Varchenko's
identification
\cite{FeiginSchechtmanVarchenko94}
of $\slTwoAffine{\Level}$-conformal blocks inside the twisted holomorphic de Rham cohomology of configuration spaces of points in the complex plane.
This is a special case of a more general construction, via ``hypergeometric integrals'', of solutions to the Knizhnik-Zamolodchikov equation due to \cite{SchechtmanVarchenko89}\cite{SchechtmanVarchenko90}\cite{SchechtmanVarchenko91}, whose specialization to level-$\Level$ conformal blocks, when the twisting holonomy is inside $\CyclicGroup{\Level+2} \xhookrightarrow{\;} \CircleGroup$,
was established
 in \cite{FeiginSchechtmanVarchenko90}\cite{FeiginSchechtmanVarchenko94}\cite{FeiginSchechtmanVarchenko95}
 (see also \cite{Varchenko95}),
following special cases established in
\cite{DotsenkoFateev84}\cite{ChristeFlume87}.
Reviews include
\cite[\S 7]{EtingofFrenkelKirillov98}\cite[\S 14.3]{FrenkelBenZvi04}\cite{Kohno12}\cite{Kohno14}.

\medskip
\noindent
-- Prop. \ref{TheBareBProfiniteIntegersTwistOfComplexRationalizedAEquivariantKTheory} highlights the structure of \TED-K-theory of $\mathbb{A}$-type singularities, in order to point out (Prop. \ref{Degree1ConformalBlockInTEdKTheory}, Thm. \ref{ConformalBlockInTEdKTheory}) that this is a natural home for the twisted cohomology groups in the FSV-construction of conformal blocks, in that it features twists by
just the relevant local systems with
exactly the necessary restriction to rational phases and to admissible levels, and automatically sums up the conformal blocks at all the commensurate fractional levels at which the $\slTwoAffine{\level}$-CFT is thought to make sense.

\newpage

\noindent
{\bf The level.}

\vspace{-2mm}
\begin{itemize}[leftmargin=*]
\setlength\itemsep{-1pt}

\item $\kappa \,\in\, \NaturalNumbers_{\geq 2}$ denotes a natural number,
which {\it a priori} is the order of an $\mathbb{A}_{\kappa-1}$-type orbi-singularity
(beginning in Prop. \ref{TheBareBProfiniteIntegersTwistOfComplexRationalizedAEquivariantKTheory} below)
and as such necessarily $\geq 2$,
but which eventually is being identified with
the shift of:

\item
$\Level \,\in\, \RationalNumbers$, denoting a rational number that is eventually identified (see Cor. \ref{Degree1ConformalBlockInTEdKTheory}, Thm. \ref{ConformalBlockInTEdKTheory}, Rem. \ref{FractionalLevelWZWModels} below) with the {\it level} of an affine Kac-Moody algebra $\slTwoAffine{\Level}$, such that $\ShiftedLevel$ is its {\it shift} by the dual Coxeter number,
which for $\slTwo$ means:
\vspace{-1mm}
\begin{equation}
  \label{ShiftedLevel}
  \ShiftedLevel
    \,:=\,
  \Level
    \; +\;\;\;
  \underset{
    \mathclap{
    \raisebox{-2pt}{
     \tiny
     \color{darkblue}
     \bf
     \begin{tabular}{c}
       dual Coxeter number
       \\
        of $\slTwo$
     \end{tabular}
    }
    }
  }{
    \underbrace{
      2
    }
  }
  \;\;\;
  \in
  \;
  \NaturalNumbers_{\geq 2}
  \,.
\end{equation}
\vspace{-.4cm}

Eventually $\slTwoAffine{\Level}$ will be identified
with the symmetry algebra of the chiral $\slTwo$-WZW model at level $\Level$, arising as the complexification
\eqref{ComplexificationOfsuTwo}
of the affine $\suTwo$-algebra which, in turn, appears as a summand in the super-conformal algebra on the worldsheet of strings probing an A-type singularity \eqref{ShiftOfLevelViaStringsOnALE};
and the shift \eqref{ShiftedLevel} will arise from a cancellation with a fermionic level.
After the fact, this shift is also the {\it renormalization} of the level (see \cite{AGLR90}\cite{Shifman91}) of the corresponding $\suTwo$ Chern-Simons theory (\cite{Witten89}): the dual Coxeter number constitutes the quantum correction to the classical Chern-Simons level $\Level$.

In fact, more generally:

\item
$ \Denominator \,\in\, \{1, 2,  \cdots, \ShiftedLevel\}$ serves as
a denominator in the more general identification between these two numbers
\begin{equation}
  \label{FractionLevelInCastOfCharacters}
  k
  \,+\,
  2
  \,:=\,
  {\ShiftedLevel}/r
  \;\;\;\;\;\;\;\;
  \Leftrightarrow
  \;\;\;\;\;\;\;\;
  \ShiftedLevel
  \,=\,
  r(\Level + 2)
  \;\in\;
  \NaturalNumbers_{\geq 2}
  \,,
\end{equation}
that {\it a priori} appears in the secondary \TED-K-theory of the punctured plane inside an
$\mathbb{A}_{\ShiftedLevel - 1}$-singularity (see \eqref{TEdKTheoryOfPuncturedPlaneAsDirectSum} below),
but eventually characterizes the (admissible) {\it fractional level} of an $\slTwo$-CFT
(Rem. \ref{FurtherContentSeenInTheTEdKTheory} below).

\end{itemize}

\vspace{0mm}
\noindent
{\bf The groups.}

\vspace{-2mm}
\begin{itemize}[leftmargin=*]
\setlength\itemsep{-1pt}

\item $\CyclicGroup{\kappa} \,\subset\, \CircleGroup$ denotes the cyclic group of order $\kappa$, regarded here as the subgroup
of $\kappa$th roots of unity inside the circle group $\CircleGroup$.

\item $\IrrepOfCyclicGroup{\DualDenominator}{\ShiftedLevel}$ denotes its $\DualDenominator$-th irrep with action:
\begin{equation}
  \label{IrrepsOfCyclicGroup}
    \IrrepOfCyclicGroup
      { \DualDenominator }
      { \ShiftedLevel }
    \,:=\,
    e^{ 2 \pi \ImaginaryUnit (-)\DualDenominator/\ShiftedLevel }
    \!\cdot\!
    \acts \;
    \ComplexNumbers
  \,.
\end{equation}

\item $\CharacterGroup{\CyclicGroup{\kappa}} := \Homs{}{\CyclicGroup{\kappa}}{\CircleGroup}$ denotes its character group.
Notice that this is isomorphic to the original cyclic group, $\CharacterGroup{\CyclicGroup{\kappa}} \,\simeq\, \CyclicGroup{\kappa}$,
but {\it not naturally} so; isomorphisms are in bijection to {\it choices} of generators $g \,\in\, \CyclicGroup{\kappa}$, i.e.,
of elements with $\Order(g) \,=\, \kappa$:
\vspace{-2mm}
$$
  \begin{tikzcd}[row sep=3pt]
    \mathllap{
      \Homs{}{\CyclicGroup{\kappa}}{\CircleGroup}
      \;=\;\;
    }
    \CharacterGroup{\CyclicGroup{\kappa}}
    \ar[r,  "\mathrm{ev}_{g}", "{\sim}"{swap}]
    &
    \CyclicGroup{\kappa}
    \mathrlap{
      \;\;\subset\;
      \CircleGroup
      \,.
    }
  \end{tikzcd}
$$

\vspace{-2mm}
\noindent This little subtlety leads to the following major distinction:

\item $\RationalCircleGroup \;\simeq\; \colimit{\kappa \,\in\, \NaturalNumbers_+} \CyclicGroup{\kappa}$ is the {\it colimit} over the inclusions
$\CyclicGroup{\kappa} \xhookrightarrow{ \cdot n} \CyclicGroup{ n \kappa  }$ for $n \in \NaturalNumbers_+$. We may think of this as the
``rational circle group'' $\RationalCircleGroup \,\xhookrightarrow{\;}\,\CircleGroup$ or the group ``of rational phases'' or ``of any roots of unity'':
$\RationalCircleGroup \,\simeq\, \big\{ \exp(2\pi\ImaginaryUnit \, {\weight/\kappa}) \; \vert \; \weight \,\in\, \Integers, \, \kappa \geq 2  \big\}$.

\item $\ProfiniteIntegers \;\simeq\; \limit{\kappa \,\in\, \NaturalNumbers_+}$ is the {\it limit} over the surjections
$\begin{tikzcd}[column sep=50pt]\CharacterGroup{\CyclicGroup{n \kappa}} \ar[r,->>, "{ \CharacterGroup{(\cdot n)}  =\; \modulo \kappa \;\;\;\; }"] & \CharacterGroup{\CyclicGroup{\kappa}},\end{tikzcd}$known as the {\it profinite integers}. Here we wrote surjections of character groups
to make manifest that the profinite integers constitute the character group of the rational circle group:
$$
  \CharacterGroup{
    \big(
      \RationalCircleGroup
    \big)
  }
  \,=\,
  \CharacterGroup{
    \Big(\;
      \colimit{\kappa \,\in\, \NaturalNumbers_+}
      \CyclicGroup{\kappa}
    \Big)
  }
  \;=\;
  \Homs{\Big}{ \;
    \colimit{\kappa \,\in\, \NaturalNumbers_+}
    \CyclicGroup{\kappa}
  }{\CircleGroup}
  \;\simeq\;
  \limit{n \,\in\, \NaturalNumbers_+}
  \Homs{\big}{
    \CyclicGroup{\kappa}
  }{\CircleGroup}
  \;\simeq\;
  \limit{n \,\in\, \NaturalNumbers_+}
  \CharacterGroup{\CyclicGroup{\kappa}}
  \;\simeq\;
  \ProfiniteIntegers
  \,.
$$

\end{itemize}

\medskip

\noindent
{\bf The $\mathbb{A}$-type orbifold singularity.}

\vspace{-.2cm}
\begin{itemize}[leftmargin=*]
\setlength\itemsep{-1pt}

\item $\Quaternions$ denotes the space of quaternions, to be understood as a complex manifold $\Quaternions \,\simeq\, \ComplexNumbers^2$.

\item $\SpOne \,:=\,S(\Quaternions)\, \,\subset\, \Quaternions$ denotes the multiplicative group of unit quaternions (which, as a Lie group,
is $\simeq\, \SUTwo \,\simeq\, \Spin(3)$).

Its left multiplication action on $\Quaternions$ induces a canonical left action $\CyclicGroup{\kappa} \acts \, \Quaternions$ of
$\CyclicGroup{\kappa} \,\subset\, \CircleGroup \,\subset\, \SpOne$.

\item
$\HomotopyQuotient{\Quaternions}{\CyclicGroup{\kappa}}$ denotes the
complex {\it orbifold} of this action. As such, the singular point $[0] \,\in\, \Quotient{\Quaternions}{\CyclicGroup{\kappa}}$ in the corresponding
ordinary quotient space
is an $\mathbb{A}_{\ShiftedLevel - 1}$-type singularity.

\item $\ComplexManifold$ denotes any complex manifold, and $\ComplexManifold \times \HomotopyQuotient{\Quaternions}{\CyclicGroup{\kappa}}$
its direct product, as orbifolds, with the above complex orbifold.

\item The orbifold obtained from equipping $\ComplexManifold$ with the {\it trivial} action of a cyclic group of order $\kappa$ is denoted
$$
  \HomotopyQuotient
    {\ComplexManifold}
    {\CyclicGroup{\kappa}}
      \,\simeq\,
    \ComplexManifold
      \times
    \HomotopyQuotient
      {\ast}
      {\CyclicGroup{\kappa}}
   \;  \xhookrightarrow{ \qquad } \;
    \ComplexManifold
      \times
    \HomotopyQuotient
      {\Quaternions}
      {\CyclicGroup{\kappa}}\;,
$$
 to be thought of as the extended
$\mathbb{A}_{\kappa-1}$-type singular locus.
\end{itemize}

\medskip

\noindent
{\bf Cohomology.}
All domain spaces we consider here are connected, and we regard them all as pointed spaces, meaning that all (generalized) cohomology groups in the following are {\it reduced}.
For further pointers and discussion in our context see \cite{FSS20CharacterMap}.

\newpage

\noindent
{\bf Differential K-theory.}

\vspace{-.2cm}
\begin{itemize}[leftmargin=*]
\setlength\itemsep{-1pt}

\item $\mathrm{KU}^{ \NumberOfProbeBranes - 1}(\ComplexManifold)$ denotes the complex topological K-cohomology
(e.g. \cite{Karoubi78})
of the manifold $\SmoothManifold$ in degree $\NumberOfProbeBranes - 1$ ($\modulo\, 2$). The shifted degree $\NumberOfProbeBranes$ will eventually be identified with the number of points (probe branes) in a configuration \eqref{ConfigurationSpaceOfPointsInThePlane},
as well as with the degree of conformal blocks \eqref{slTwoConformalBlocksAsQuotientOfSlTwoCoinvariants}.

\item
$\mathrm{KU}^{\bullet}_{\diff}(\SmoothManifold)$,
$\mathrm{KU}^{\bullet}_{\flat}(\SmoothManifold)$
denote, respectively, {\it differential} and {\it flat differential} K-theory (e.g. \cite{BunkeSchick12}).

These cohomology groups arrange into a hexagonal commuting diagram, the lower part of which looks as follows
(\cite[\S 2]{SimonsSUllivan08}\cite[\S 6]{BNV13}), where the bottom sequence of groups is long exact (\cite[\S 7.21]{Karoubi87}\cite[Ex. 3]{Karoubi90}\cite[(16)]{Lott94}):

\vspace{-1mm}

\begin{equation}
\label{TheDifferentialCohomologyHexagonForKTheory}
\adjustbox{raise=-2cm}{
\begin{tikzpicture}
  \clip (-6.5,-1.9) rectangle (6.1,1.1);
  \draw (0,0) node {
  \begin{tikzcd}[column sep=15pt]
    &
    {}
    \ar[rr, lightgray]
    \ar[
      dr,
      gray,
      shorten >=-0
      ]
    {}
    &&
    {}
    \ar[
      dr,
      gray,
      shorten >=-0pt
    ]
    &
    \\
    \mathrm{KU}
      ^{\NumberOfProbeBranes -2}
      (\SmoothManifold;\, \ComplexNumbers)
  \ar[
    dr,
    shorten <=-0pt,
    "{
        \mbox{
          \tiny
          \color{greenii}
          \bf
          \def\arraystretch{.9}
          \begin{tabular}{c}
            secondary
            \\
            Chern character
          \end{tabular}
        }
    }"{swap, sloped, pos=.35}
  ]
  \ar[
    ur,
    gray
  ]
  \ar[
    rr,
    "{
      \mbox{
        \tiny
        \color{greenii}
        \bf
        \def\arraystretch{.9}
        \begin{tabular}{c}
        \end{tabular}
      }
    }"{pos=.4}
  ]
  &&
    \overset{
      {
      \mathclap{
      \raisebox{1pt}{
        \tiny
        \color{darkblue}
        \bf
        \def\arraystratch{.9}
        \begin{tabular}{c}
        differential K-theory
        \end{tabular}
      }
      }
      }
    }{
      \mathrm{KU}^{
        \NumberOfProbeBranes
          -
        1
      }
      _{\mathrm{diff}}
      \big(
        \SmoothManifold
      \big)
    }
    \ar[
      rr,
      "{
        \mbox{
          \tiny
          \color{greenii}
          \bf
          \def\arraystretch{.9}
          \begin{tabular}{c}
          \end{tabular}
        }
      }"{pos=.6}
    ]
    \ar[
      ur,
      gray,
      shorten <=-0
    ]
    \ar[
      dr,
      "{
        \mbox{
          \tiny
          \color{greenii}
          \bf
          forget
        }
      }"{swap, sloped},
      shorten <=-0
    ]
    &&
    \mathrm{KU}^{\NumberOfProbeBranes - 1}
    (\SmoothManifold;\, \ComplexNumbers)\;.
    \\
    &
    \underset{
      \raisebox{-1pt}{
        \tiny
        \color{darkblue}
        \bf
        flat K-theory
      }
    }{
    \mathrm{KU}
      ^{
        \NumberOfProbeBranes
          -
        1
      }
      _{\flat}
      (
        \SmoothManifold
      )
    }
    \ar[
      ur,
      shorten >=-0pt,
      "{
        \mbox{
          \tiny
          \color{greenii}
          \bf
          include
        }
      }"{swap, sloped}
    ]
    \ar[
      rr,
      "{
        \mbox{
          \tiny
          \color{greenii}
          \bf
          \def\arraystretch{.9}
          \begin{tabular}{c}
          \end{tabular}
        }
      }"{swap, yshift=-0pt, pos=.4}
    ]
    &&
    \underset{
      \mathclap{
      \raisebox{-2pt}{
        \tiny
        \color{darkblue}
        \bf
        \begin{tabular}{c}
          underlying
          plain K-theory
        \end{tabular}
      }
      }
    }{
    \mathrm{KU}
      ^{
        \NumberOfProbeBranes
          -
        1
      }
    (
      \SmoothManifold
    )
    }
        \ar[
      ur,
      shorten >=-0pt,
      "{
        \mbox{
          \tiny
          \color{greenii}
          \bf
          \def\arraystretch{.9}
          \begin{tabular}{c}
            Chern
            \\
            character
          \end{tabular}
        }
      }"{sloped, swap, pos=.5}
    ]
  \end{tikzcd}
  };
  \end{tikzpicture}
  }
\end{equation}

\end{itemize}

\noindent
Here the items on the left and right denote
$\ComplexNumbers$-rationalized
K-theory (i.e., ``with complex coefficients'', e.g. \cite[Def. 4]{Lott94}),
which is  equivalently even-periodic ordinary cohomology (cf. \cite{GS-Higher}).
For making contact to the hypergeometric formalism, it is convenient to represent this,
and its variants below, via de Rham cohomology (cf. Rem. \ref{TwistedCohomologyOverSteinManifolds} below):
$$
  \mathrm{KU}^{\NumberOfProbeBranes}
  (\SmoothManifold;\, \ComplexNumbers)
  \;\;
  :=
  \;\;
    {
    \underset{
      \mathclap{
        { d \,\in\, \Integers }
      }
    }{\bigoplus}
    \;
    }
    \overset{
      \mathclap{
      \hspace{-10pt}
      \raisebox{1pt}{
        \tiny
        \color{darkblue}
        \bf
        \def\arraystretch{.9}
        \begin{tabular}{c}
          even-periodic
          de Rham cohomology
        \end{tabular}
      }
      }
    }{
    H^{
      \NumberOfProbeBranes
      +
      2d
    }
    \big(
      \DeRhamComplex{}
        { \SmoothManifold;
        \, \ComplexNumbers}
      ,\,
      \DeRhamDifferential
    }
    \big).
$$

\noindent
{\bf Twisted Equivariant Differential K-theory.}

\vspace{-.2cm}
\begin{itemize}[leftmargin=*]
\setlength\itemsep{-1pt}

\item
$\mathrm{KU}^{\NumberOfProbeBranes - 1 }
\big(
  \HomotopyQuotient
    {\TopologicalSpace}
    {\CyclicGroup{\kappa}})
$
denotes the $\CyclicGroup{\kappa}$-{\it equivariant} K-theory of $\CyclicGroup{\kappa} \acts \, \TopologicalSpace$ (e.g. \cite{Greenlees05}\cite{SzaboValentino07}).
The notation suggestively follows that for orbifold K-theory (\cite{LupercioUribe01}\cite{AdemRuan01}), which coincides with equivariant K-theory for our case of global quotient orbifolds.

\item
$
  \FlatConnectionForm
  \,\in\,
  \DifferentialForms{}{1}{\SmoothManifold}
      \vert_{
    \scalebox{.8}{$
      \DeRhamDifferential = 0
    $}
   }
$
denotes a closed differential 1-form
 such that its holonomy in $\RealNumbers \twoheadrightarrow \CircleGroup$ is in fact in $\CyclicGroup{\kappa} \xhookrightarrow{\;} \CircleGroup$,
hence such that its class is that of a flat connection on a complex line bundle whose holonomy is in $\CyclicGroup{\kappa}$:
\vspace{-2mm}
\begin{equation}
\label{ClassofTheFlatConnectionOneForm}
\hspace{-3cm}
\begin{tikzcd}[column sep=70pt]
\overset{
  \mathclap{
 \raisebox{4pt}{
   \tiny
   \color{darkblue}
   \bf
   \def\arraystretch{.9}
   \begin{tabular}{c}
    a flat connection on
    \\
    a complex line bundle
   \end{tabular}
 }
 }
}{
  \big\{[\FlatConnectionForm]\big\}
}
\ar[
  r,
  "{
    \mbox{
      \tiny
      \color{orangeii}
      \bf
      \begin{tabular}{c}
        with holonomy in
        \\
        $\kappa$th roots of unity
      \end{tabular}
    }
  }"{swap}
]
\ar[
  d,
  "{
    \mbox{
      \tiny
      \color{orangeii}
      \bf
      \begin{tabular}{c}
        with globally defined
        \\
        connection 1-form
      \end{tabular}
    }
  }"{swap, xshift=5pt}
]
&
H^1(\SmoothManifold;\, \CyclicGroup{\kappa})
\mathrlap{
  \; \simeq\,
H^1(
  { \SmoothManifold }
  ;\,
  { \CharacterGroup{\CyclicGroup{\kappa}} }
)
}
\ar[d]
\\
H^1_{\DeRham}(\SmoothManifold)
\ar[
  r,
  "{
    \mbox{
      \tiny
      \color{greenii}
      \bf
      \def\arraystretch{.9}
      \begin{tabular}{c}
        de Rham theorem
        \\
        mod $\Integers$
      \end{tabular}
    }
  }"{swap}
]
&
\underset{
  \mathclap{
  \raisebox{-3pt}{
    \tiny
    \color{darkblue}
    \bf
    \begin{tabular}{c}
      equivalence classes of
      \\
      all flat $\CircleGroup$-connections
    \end{tabular}
  }
  }
}{
  H^1(\SmoothManifold;\, \CircleGroup)
}
\end{tikzcd}
\end{equation}

\vspace{-.4cm}
\item
$
  \mathrm{KU}
    ^{
      \NumberOfProbeBranes
      +
      1
      +
      [\FlatConnectionForm]
    }
    _{(\diff)}
  \big(
    \HomotopyQuotient
      { \TopologicalSpace }
      { \CyclicGroup{\kappa} }
  \big)
$
denotes $[\FlatConnectionForm]$-twisted $\CyclicGroup{\kappa}$-equivariant
(differential) K-cohomology, according to the following:
\end{itemize}

\begin{proposition}[Secondary Chern characters in \TED-K-theory of an $\mathbb{A}$-type singularity]
\label{TheBareBProfiniteIntegersTwistOfComplexRationalizedAEquivariantKTheory}
% $\,$
% \noindent
The closed differential 1-forms $\FlatConnectionForm$
\eqref{ClassofTheFlatConnectionOneForm}
twist
the $\CyclicGroup{\kappa}$-equivariant K-cohomology
of
smooth manifolds $\SmoothManifold$ inside
$\CyclicGroup{\ShiftedLevel}$-singularities
such that the
corresponding
(secondary) twisted equivariant Chern character classes
fill the direct sum
over $r \,\in\, \{0, \cdots, \kappa-1 \}$
of the
$r \cdot \FlatConnectionForm$-twisted de Rham cohomology groups of $\SmoothManifold$;
in that the
 $B \CharacterGroup{\CyclicGroup{\ShiftedLevel}}$-twisted $\CyclicGroup{\ShiftedLevel}$-equivariant enhancement of the differential cohomology hexagon
 \eqref{TheDifferentialCohomologyHexagonForKTheory}
looks as follows:

\vspace{-.2cm}
$$
  \label{TheSpecializedTwistedEquivariantCharacterMap}
  \begin{tikzpicture}
  \clip (-8.3,-2.4) rectangle (9.3,1.5);
  \draw (0,0) node {
  \begin{tikzcd}[column sep=-28pt]
    &
    {}
    \ar[rr, lightgray]
    \ar[
      dr,
      gray,
      shorten >=-16
      ]
    {}
    &[-4pt]&[-10pt]
    {}
    \ar[dr, gray]
    \\
    \mathllap{
    \underset{
      \mathclap{
        \scalebox{.61}{$
        \begin{array}{c}
          { d \,\in\, \Integers }
          \\
          { 1 \,\leq\, \Denominator \,\leq\, \ShiftedLevel  }
        \end{array}
        $}
      }
    }{\bigoplus}
    \;
    }
    \overset{
      \mathclap{
      \hspace{-40pt}
      \raisebox{3pt}{
        \tiny
        \color{darkblue}
        \bf
        \def\arraystretch{.9}
        \begin{tabular}{c}
          even-periodic
          de Rham cohomology
        \end{tabular}
      }
      }
    }{
    H^{
      \NumberOfProbeBranes
      +
      2d
    }
    \big(
      \DeRhamComplex{}
        { \SmoothManifold; \, \ComplexNumbers}
      ,\,
      \DeRhamDifferential
    }
    \overset{
      \mathclap{
      \hspace{-13pt}
      \raisebox{+6pt}{
        \tiny
        \color{orangeii}
        \bf
        \begin{tabular}{c}
          twisted by tensor powers of
          \\
           the flat complex line bundle
        \end{tabular}
      }
      }
    }{
      +
      \;
      \Denominator
      \cdot
      \FlatConnectionForm \wedge
    }
    \big)
  \ar[
    dr,
    shorten <=-20pt,
    "{
      \mbox{
        \tiny
        \color{greenii}
        \bf
        \def\arraystretch{.9}
        \begin{tabular}{c}
          secondary
          \\
          Chern character
        \end{tabular}
      }
    }"{sloped, swap, pos=.0}
  ]
  \ar[
    ur,
    gray
  ]
  \ar[
    rr,
    "{
      \mbox{
        \tiny
        \color{greenii}
        \bf
        \def\arraystretch{.9}
        \begin{tabular}{c}
        \end{tabular}
      }
    }"
  ]
  &&
    \overset{
      \mathclap{
      \raisebox{3pt}{
      \scalebox{.65}{
        \color{darkblue}
        \bf
        \def\arraystratch{.9}
        \begin{tabular}{c}
        \color{orangeii}
        $B \CharacterGroup{\CyclicGroup{\kappa}}$-twisted
        \\
        \color{orangeii}
        $\CyclicGroup{\kappa}$-equivariant
        \\
        differential K-theory
        \end{tabular}
      }
      }
      }
    }{
      \mathrm{KU}^{
        \NumberOfProbeBranes
          -
        1
          +
        [\FlatConnectionForm]
      }
      _{\mathrm{diff}}
      \big(
        \underset{
          \mathclap{
          \raisebox{-5pt}{
            \scalebox{.59}{
            \color{darkblue}
            \bf
            \def\arraystretch{.9}
            \begin{tabular}{c}
              smooth manifold inside
              \\
              $\CyclicGroup{\kappa}$-orbi-singularity
            \end{tabular}
            }
          }
          }
        }{
        \SmoothManifold
        \times
        \HomotopyQuotient
          { \ast }
          { \CyclicGroup{\kappa} }
        }
      \big)
    }
    \ar[
      rr,
      "{
        \mbox{
          \tiny
          \color{greenii}
          \bf
          \def\arraystretch{.9}
          \begin{tabular}{c}
          \end{tabular}
        }
      }"{xshift=-3pt}
    ]
    \ar[
      ur,
      gray,
      shorten <=-20
    ]
    \ar[
      dr,
      "{
      }",
      shorten <=-5,
      "{
        \mbox{
          \tiny
          \color{greenii}
          \bf
          forget
        }
      }"{swap, sloped, pos=.3}
    ]
    &&
    \underset{
      \mathclap{
        \scalebox{.61}{$
        \begin{array}{c}
          { d \,\in\, \Integers }
          \\
          { 1 \,\leq\, \Denominator \,\leq\, \ShiftedLevel  }
        \end{array}
        $}
      }
    }{\bigoplus}
    \;
    \overset{
      \mathclap{
      \hspace{+27pt}
      \raisebox{3pt}{
        \tiny
        \color{darkblue}
        \bf
        \def\arraystretch{.9}
        \begin{tabular}{c}
          even-periodic
          twisted
          de Rham cohomology
          in opposite degree
        \end{tabular}
      }
      }
    }{
    H^{
      \NumberOfProbeBranes
        -
      1
        +
      2d
    }
    \big(
      \DeRhamComplex{}
       { \SmoothManifold; \, \ComplexNumbers }
      ,\,
      \DeRhamDifferential
    }
    \mathrlap{
      +
      \;
      \Denominator
      \cdot
      \FlatConnectionForm \wedge
    \big).
    }
    \\
    &
    \underset{
      \raisebox{-1pt}{
        \tiny
        \color{darkblue}
        \bf
        flat \TED-K-theory
      }
    }{
    \mathrm{KU}
      ^{
        \NumberOfProbeBranes
          -
        1
          +
        [\FlatConnectionForm]
      }
      _{\flat}
      (
        \SmoothManifold
        \times
        \HomotopyQuotient
          {\ast}
          {\CyclicGroup{\ShiftedLevel}}
      )
    }
    \ar[
      ur,
      shorten >=-18pt,
      "{
        \mbox{
          \tiny
          \color{greenii}
          \bf
          include
        }
      }"{swap, sloped, pos=.7}
    ]
    \ar[
      rr,
      "{
        \mbox{
          \tiny
          \color{greenii}
          \bf
          \def\arraystretch{.9}
          \begin{tabular}{c}
          \end{tabular}
        }
      }"{swap, yshift=-2pt, pos=.4}
    ]
    &&
    \underset{
      \mathclap{
      \raisebox{-2pt}{
        \tiny
        \color{darkblue}
        \bf
        \begin{tabular}{c}
          underlying
          twisted equivariant K-theory
        \end{tabular}
      }
      }
    }{
    \mathrm{KU}
      ^{
        \NumberOfProbeBranes
          -
        1
          +
        [\FlatConnectionForm]
      }
    (
      \SmoothManifold
      \times
      \HomotopyQuotient
        {\ast}
        {\CyclicGroup{\ShiftedLevel}}
    )
    }
    \ar[
      ur,
      shorten >=-16pt,
      "{
        \mbox{
          \tiny
          \color{greenii}
          \bf
          Chern character
        }
      }"{sloped, swap, pos=.7}
    ]
  \end{tikzcd}
  };
  \end{tikzpicture}
$$
\end{proposition}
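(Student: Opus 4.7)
The strategy is to assemble the claimed twisted equivariant differential cohomology hexagon by combining three ingredients: (i) the standard differential cohomology hexagon \eqref{TheDifferentialCohomologyHexagonForKTheory} for complex K-theory of the smooth manifold $\SmoothManifold$, (ii) the equivariant decomposition of $\ComplexNumbers$-rationalized $\CyclicGroup{\ShiftedLevel}$-equivariant K-theory of a point into characters of $\CyclicGroup{\ShiftedLevel}$, and (iii) the Tu–Xu / Freed–Hopkins–Teleman identification of 1-form twists with shifts of the de Rham differential by $\FlatConnectionForm \wedge$.

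First I would establish that after $\ComplexNumbers$-rationalization,
$$
  \mathrm{KU}^{\bullet}\bigl(\HomotopyQuotient{\ast}{\CyclicGroup{\ShiftedLevel}}\bigr) \otimes_{\Integers} \ComplexNumbers
  \;\simeq\;
  \RepresentationRing{\CyclicGroup{\ShiftedLevel}}{\ComplexNumbers} \otimes_{\Integers} \ComplexNumbers
  \;\simeq\;
  \bigoplus_{\DualDenominator = 0}^{\ShiftedLevel - 1} \ComplexNumbers\,,
$$
where the last isomorphism sends a virtual representation to its vector of characters, the $\DualDenominator$-th component corresponding to the irrep $\IrrepOfCyclicGroup{\DualDenominator}{\ShiftedLevel}$ of \eqref{IrrepsOfCyclicGroup}. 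Combined with a K\"unneth-type argument for the trivial $\CyclicGroup{\ShiftedLevel}$-action on $\SmoothManifold$, this produces a canonical direct-sum decomposition of $\mathrm{KU}^{\bullet}\bigl(\SmoothManifold \times \HomotopyQuotient{\ast}{\CyclicGroup{\ShiftedLevel}};\,\ComplexNumbers\bigr)$ into $\ShiftedLevel$ copies of the rationalized K-theory of $\SmoothManifold$, each summand labeled by one of the irreducible characters of $\CyclicGroup{\ShiftedLevel}$.

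Second, I would introduce the twist by a closed 1-form $\FlatConnectionForm$ whose class lifts to $H^1(\SmoothManifold;\,\CyclicGroup{\ShiftedLevel}) \subset H^1(\SmoothManifold;\,\CircleGroup)$, i.e., a flat complex line bundle with holonomy in $\CyclicGroup{\ShiftedLevel}$. Following the Tu–Xu analysis (and the inner-local-system description of Ruan used in orbifold K-theory), tensoring with this twist acts on the $\DualDenominator$-th character summand by tensoring with the $\DualDenominator$-th tensor power of the flat line bundle, since the character grading of the $\mathbb{A}$-singularity and the flat line-bundle twist combine multiplicatively via the inclusion $\CyclicGroup{\ShiftedLevel} \xhookrightarrow{\;} \CircleGroup \xhookrightarrow{\;} \SpOne$. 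The key numerical check is that the $\DualDenominator$-th tensor power of a line bundle with holonomy $\exp(2\pi\ImaginaryUnit / \ShiftedLevel)$ has holonomy $\exp(2\pi\ImaginaryUnit \DualDenominator / \ShiftedLevel)$, matching precisely the character acting on $\IrrepOfCyclicGroup{\DualDenominator}{\ShiftedLevel}$. Passing to the Chern character, tensoring by the $\DualDenominator$-th tensor power of a flat line bundle with connection 1-form $\FlatConnectionForm$ translates into the shifted de Rham differential $\DeRhamDifferential + \DualDenominator \cdot \FlatConnectionForm \wedge$ on complex-valued differential forms on $\SmoothManifold$.

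Finally, I would transport the standard differential cohomology hexagon \eqref{TheDifferentialCohomologyHexagonForKTheory} along this identification summand-by-summand, obtaining the claimed hexagon in which both the top-left and bottom-right entries become direct sums of $\DualDenominator \cdot \FlatConnectionForm$-twisted even-periodic de Rham cohomologies in complementary degree parities, with the vertical secondary Chern character assembled component-wise. The main obstacle, and the step demanding the most care, is verifying the compatibility between the equivariant character decomposition and the 1-form twist: one must check that the twist $[\FlatConnectionForm]$ acts \emph{diagonally} with respect to the character grading and that its effect on the $\DualDenominator$-th summand is exactly the $\DualDenominator$-th tensor power of the underlying flat line bundle. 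This hinges on the way inner local systems enter equivariant K-theory at an $\mathbb{A}$-singularity, and is where the previously neglected 1-twists of \cite{TuXu06,FreedHopkinsTeleman02ComplexCoefficients} --- and the outline in \cref{OneTwistedCohomologyAsTwistedEquivariantKTheory} --- do the essential work.
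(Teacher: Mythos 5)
Your proposal is correct and follows essentially the same route as the paper: the paper's own proof is likewise an outline deferring to the twisted equivariant Chern character of Tu--Xu and Freed--Hopkins--Teleman, with the detailed verification carried out in \cref{OneTwistedCohomologyAsTwistedEquivariantKTheory} via the character decomposition $\ComplexNumbers \otimes_{\Integers} \ComplexRepresentationRing{\CyclicGroup{\ShiftedLevel}} \simeq \bigoplus_{\DualDenominator} \IrrepOfCyclicGroup{-\DualDenominator}{\ShiftedLevel}$ of the rationalized equivariant coefficients and the diagonal action of the twist by $\DualDenominator$-th tensor powers. The step you correctly single out as the crux --- that the $[\FlatConnectionForm]$-twist acts diagonally on the character-graded summands --- is exactly what the paper makes explicit through the projective auto-intertwiner action \eqref{CharacterGroupActionByAutoIntertwinersOnStableProjectiveRepresentation} of $\CharacterGroup{\CyclicGroup{\ShiftedLevel}}$ on the $\CyclicGroup{\ShiftedLevel}$-fixed Fredholm operators, culminating in \eqref{CharacterGroupActionOnShapeOfGFixedFredholmOperators} and \eqref{ObtainingTEdKTheoryAndItsChernCharacter}.
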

\noindent {\it Outline proof.}
This follows essentially as a highly specialized case of a general expression for the twisted equivariant Chern character which may be extracted from
\cite[Def. 3.10, Thm. 1.1]{TuXu06}\cite[Def. 3.6, Thm. 3.9]{FreedHopkinsTeleman02ComplexCoefficients};
in fact it is that special case which isolates exactly the peculiar extra $B \CharacterGroup{\CyclicGroup{\ShiftedLevel}}$-twisting of equivariant K-theory
that appears on $\CyclicGroup{\ShiftedLevel}$-fixed loci.
Below in \cref{OneTwistedCohomologyAsTwistedEquivariantKTheory} we outline a transparent proof,
the formalities of which are relegated to \cite{SS22TEC}.
\hfill $\square$

\begin{remark}[Twisted de Rham cohomology over complex Stein manifolds]
\label{TwistedCohomologyOverSteinManifolds}
$\,$

\noindent {\bf (i)} When $\SmoothManifold$ in Prop. \ref{TheBareBProfiniteIntegersTwistOfComplexRationalizedAEquivariantKTheory} happens to be a complex manifold which is Stein (see \cite{GrauertRemmert04})
and $\FlatConnectionForm$ is a holomorphic form representative, then the twisted cohomology groups are equivalently twisted {\it holomorphic} de Rham cohomology groups:

\vspace{-.4cm}
\begin{align}
  \label{HolomorphicDeRhamComplexOverSteinManifolds}
  \mbox{\small $\SmoothManifold$ is complex Stein domain}
  \;\;\;
  \Rightarrow
  \;\;\;
  H^{\NumberOfProbeBranes + [\omega_1]}
  \big(
    \SmoothManifold
    ;\,
    \ComplexNumbers
  \big)
  \,
 & =
  \,
  H^{\NumberOfProbeBranes}
  \big(
    \DeRhamComplex{\big}
      { \SmoothManifold ;\, \ComplexNumbers }
    ,\,
    \DeRhamDifferential +
    \Denominator
      \cdot
    \FlatConnectionForm
  \big)
 \qquad   \qquad
 \raisebox{2pt}{
      \tiny
      \color{darkblue}
      \bf
      twisted
      ordinary/de Rham
      cohomology
    }
    \nonumber
  \\
&  \simeq
  \;\;
       H^{\NumberOfProbeBranes}
  \big(
    \HolomorphicDeRhamComplex{\big}
      { \SmoothManifold;\, \ComplexNumbers }
    ,\,
    \DolbeaultDifferential +
    \Denominator
      \cdot
    \FlatConnectionForm
    \wedge
  \big)
  \quad
   \raisebox{2pt}{
      \tiny
      \color{darkblue}
      \bf
      twisted
      {\color{orangeii}holomorphic}
      de Rham cohomology.
    }
  \end{align}
\noindent {\bf (ii)} This follows, as in the case over affines
  \cite[Cor. 6.3]{Deligne70} (review in \cite[p. 171]{LibgoberYuzvinsky00}\cite[Cor. 5.4]{Voisin03}),
  using that sheaves of positive-degree holomorphic forms are still acyclic over Stein manifolds.

\noindent {\bf (iii)} Examples of Stein manifolds include punctured Riemann surfaces \cite{BehnkeStein47} and their configuration spaces of points \cite{DocquierGrauert60}. These are the domains to which we turn next.
%and the universal covers of these \cite{Stein56}.
\end{remark}

\medskip

\noindent
{\bf The punctured Riemann surface.}

\vspace{-.3cm}
\begin{itemize}[leftmargin=*]
\setlength\itemsep{-1pt}

\item
$\NumberOfPunctures \,\in\, \NaturalNumbers$ denotes a natural number, counting punctures in the complex plane,
eventually interpreted as the number of {\it defect branes}.

\item Denoted by
$$
\Sigma^2
  \,:=\,
\ComplexPlane
  \mysetminus
\{ z_1, \cdots, z_{N} \}
\;\simeq\;
\RiemannSphere
  \mysetminus
\{  z_1, \cdots, z_{N}, \infty \}
$$
this $\NumberOfPunctures$-punctured complex plane, equivalently the $(\NumberOfPunctures+1)$-punctured Riemann sphere,
equipped with its canonical holomorphic coordinate function $z : \Sigma^2 \xhookrightarrow{\;} \ComplexPlane$.

The ``incoming'' punctures $\vec z \,:=\, (z_1, \cdots, z_{\NumberOfPunctures})$ are taken to be ordered (and will carry labels $\weight_I$ in a moment) and the ``outgoing'' puncture of number $N+1$ is taken to be at $z = \infty$.

\vspace{-.5cm}
\begin{center}
\hypertarget{ThePuncturedPlane}{}
\begin{tikzpicture}

  \shade[right color=lightgray, left color=white]
    (3,-3)
      --
      node[above, yshift=-1pt, sloped]{
        \scalebox{.7}{
          \color{darkblue}
          \bf
          transverse plane
        }
      }
    (-1,-1)
      --
        (-1.21,1)
      --
    (2.3,3);

  \draw[dashed]
    (3,-3)
      --
    (-1,-1)
      --
    (-1.21,1)
      --
    (2.3,3)
      --
    (3,-3);

  \begin{scope}[rotate=(+8)]
   \draw[dashed]
     (1.5,-1)
     ellipse
     ({.2*1.85} and {.37*1.85});
   \begin{scope}[
     shift={(1.5-.2,{-1+.37*1.85-.1})}
   ]
     \draw[->, -Latex]
       (0,0)
       to
       (180+37:0.01);
   \end{scope}
   \begin{scope}[
     shift={(1.5+.2,{-1-.37*1.85+.1})}
   ]
     \draw[->, -Latex]
       (0,0)
       to
       (+37:0.01);
   \end{scope}
   \begin{scope}[shift={(1.5,-1)}]
     \draw (.43,.65) node
     { \scalebox{.8}{$\FlatConnectionForm$} };
  \end{scope}
  \draw[fill=white, draw=gray]
    (1.5,-1)
    ellipse
    ({.2*.3} and {.37*.3});
  \draw[line width=1.1]
   (1.5,-1)
   to node[above, yshift=-3pt, pos=.85]{
     \;\;\;\;\;\;\;\;\;\;\;\;\;
     \rotatebox[origin=c]{7}
     {
     \scalebox{.7}{
     \color{orangeii}
     \bf
     \colorbox{white}{defect} brane
     }
     }
   }
   (-2.2,-1);
  \draw[
    line width=1.1
  ]
   (1.5+1.2,-1)
   to
   (3.5,-1);
  \draw[
    line width=1.1,
    densely dashed
  ]
   (3.5,-1)
   to
   (4,-1);
  \end{scope}

  \begin{scope}[shift={(-.2,1.4)}, scale=(.96)]
  \begin{scope}[rotate=(+8)]
  \draw[dashed]
    (1.5,-1)
    ellipse
    (.2 and .37);
  \draw[fill=white, draw=gray]
    (1.5,-1)
    ellipse
    ({.2*.3} and {.37*.3});
  \draw[line width=1.1]
   (1.5,-1)
   to
   (-2.3,-1);
  \draw[line width=1.1]
   (1.5+1.35,-1)
   to
   (3.6,-1);
  \draw[
    line width=1.1,
    densely dashed
  ]
   (3.6,-1)
   to
   (4.1,-1);
  \end{scope}
  \end{scope}

  \begin{scope}[shift={(-1,.5)}, scale=(.7)]
  \begin{scope}[rotate=(+8)]
  \draw[dashed]
    (1.5,-1)
    ellipse
    (.2 and .32);
  \draw[fill=white, draw=gray]
    (1.5,-1)
    ellipse
    ({.2*.3} and {.32*.3});
  \draw
   (1.5,-1)
   to
   (-1.8,-1);
  \draw
    (5.23,-1)
    to
    (6.4-.6,-1);
  \draw[densely dashed]
    (6.4-.6,-1)
    to
    (6.4,-1);
  \end{scope}
  \end{scope}

\draw (2.41,-2.3) node
 {
  \scalebox{1}{
    $\Sigma^2$
  }
 };

\draw (3.05,-1.4) node
 {
  \scalebox{1}{
    $\infty$
  }
 };

\draw (1.73,-1.06) node
 {
  \scalebox{.8}{
    $z_{{}_{I}}$
  }
 };

\begin{scope}
[ shift={(-2,-.55)}, rotate=-82.2  ]

 \begin{scope}[shift={(0,-.15)}]

  \draw
    (-.2,.4)
    to
    (-.2,-2);

  \draw[
    white,
    line width=1.1+1.9
  ]
    (-.73,0)
    .. controls (-.73,-.5) and (+.73-.4,-.5) ..
    (+.73-.4,-1);
  \draw[
    line width=1.1
  ]
    (-.73+.01,0)
    .. controls (-.73+.01,-.5) and (+.73-.4,-.5) ..
    (+.73-.4,-1);

  \draw[
    white,
    line width=1.1+1.9
  ]
    (+.73-.1,0)
    .. controls (+.73,-.5) and (-.73+.4,-.5) ..
    (-.73+.4,-1);
  \draw[
    line width=1.1
  ]
    (+.73,0+.03)
    .. controls (+.73,-.5) and (-.73+.4,-.5) ..
    (-.73+.4,-1);

  \draw[
    line width=1.1+1.9,
    white
  ]
    (-.73+.4,-1)
    .. controls (-.73+.4,-1.5) and (+.73,-1.5) ..
    (+.73,-2);
  \draw[
    line width=1.1
  ]
    (-.73+.4,-1)
    .. controls (-.73+.4,-1.5) and (+.73,-1.5) ..
    (+.73,-2);

  \draw[
    white,
    line width=1.1+1.9
  ]
    (+.73-.4,-1)
    .. controls (+.73-.4,-1.5) and (-.73,-1.5) ..
    (-.73,-2);
  \draw[
    line width=1.1
  ]
    (+.73-.4,-1)
    .. controls (+.73-.4,-1.5) and (-.73,-1.5) ..
    (-.73,-2);

 \draw[
   densely dashed
 ]
   (-.2,-2)
   to
   (-.2,-2.5);
 \draw[
   line width=1.1,
   densely dashed
 ]
   (-.73,-2)
   to
   (-.73,-2.5);
 \draw[
   line width=1.1,
   densely dashed
 ]
   (+.73,-2)
   to
   (+.73,-2.5);

  \end{scope}
\end{scope}

\end{tikzpicture}

{\footnotesize
  \begin{minipage}{14cm}
  {\bf Figure 1.}
  The punctured complex plane.
  Compare
  the defect brane configurations
  in \hyperlink{FigureD7BraneConfiguration}{\it Figure 2}
  and \hyperlink{MBraneConfigurations}{\it Figure 3},
  where the solid lines correspond to brane worldvolumes.
  Concerning the physical meaning of the braiding that is indicated on the left above, see
  Rem. \ref{ExoticDefectBranesAndAnyonStatistics}
  and
  Footn. \ref{PhysicsOfBraiding} below.
 \end{minipage}
}

\end{center}

\vspace{-.1cm}
\begin{itemize}

\item
Since $H^2\big( \Sigma^2;\, \Integers\big) \,=\, 0$, all complex line bundles on $\Sigma^2$ are trivializable, and hence every element in $H^1\big( \Sigma^2;\, \CyclicGroup{\kappa} \big)$ is represented by a class $[\FlatConnectionForm]$
as in \eqref{ClassofTheFlatConnectionOneForm}.

\vspace{-.1cm}
\item Since $\Sigma^2$ is connected and  the abelianization of its fundamental group
is
$
\big(
  \pi_1
  (
    \Sigma^2
  )
  \big)^{\mathrm{ab}}
  \,\simeq\,
  \underset{
    \scalebox{0.5}{$ 1 \leq i \leq N$}
  }{\prod}
  \,
  \Integers
$, the Hurewicz theorem gives that
\vspace{-1mm}
\begin{equation}
  \label{AbelianizationOfFundamentalGroupOfSigma}
 H^1
 \big(
   \Sigma^2;
   \,
   \CharacterGroup{\CyclicGroup{\kappa}}
 \big)
 \;\simeq\;
 \Homs{\Big}
   { \;
     \underset{
      \scalebox{0.6}{$ 1 \leq i \leq \NumberOfPunctures $}
     }{\prod}
     \Integers
   }
   { \CharacterGroup{\CyclicGroup{\kappa}} }
 \;\simeq\;
  \underset{ \;
   \scalebox{0.6}{$ 1 \leq i \leq N$}
  }{\prod}
  \,
   \CharacterGroup{\CyclicGroup{\kappa}}
 \,.
\end{equation}

\vspace{-2mm}
\item
Therefore, flat connections $[\FlatConnectionForm]$
(see \eqref{ClassofTheFlatConnectionOneForm})
on $\Sigma^2$
may
be specified
by $N$-tuples of integers in
$\underset{ \scalebox{0.6}{$1 \leq i \leq N$}}{\prod} \, \Integers
\; \twoheadrightarrow \underset{ \scalebox{0.6}{$1 \leq i \leq N$}}{\prod} \, \CharacterGroup{\CyclicGroup{\kappa}}$.

These are going to be called {\it weights}:

\end{itemize}

\end{itemize}

\newpage

\noindent
{\bf The weights.}

\vspace{-.2cm}
\begin{itemize}[leftmargin=*]
\setlength\itemsep{-1pt}

\item
Such an $N$-tuple of integers is denoted
$$
\vec{\weight}
:=
(\weight_{{}_I})_{{}_{1 \leq I \leq N}} \;\; \in \;\;\; \underset{\mathclap{\scalebox{0.6}{$1 \leq I \leq N$}}}{\prod} \;\; \Integers
$$
and will label the $N$ punctures in the complex plane and eventually be identified with
{\it highest weights} that label $\mathfrak{sl}(2)$-representations.
Regarding these tuples as encoding connections $[\FlatConnectionForm]$
(see \eqref{ClassofTheFlatConnectionOneForm})
for any shifted level $\kappa$
means,
by \eqref{AbelianizationOfFundamentalGroupOfSigma},
to regard them as
$N$-tuples of profinite integers:
\vspace{-3mm}
\begin{equation}
  \label{WeightsMatterModuloShiftedLevel}
  \begin{tikzcd}[row sep=-5pt]
    \overset{
      \mathclap{
      \raisebox{3pt}{
        \tiny
        \color{darkblue}
        \bf
        weights
      }
      }
    }{
      \Integers
    }
    \;
    \ar[r, hook]
    &
    \quad
    \overset{
      \mathclap{
      \raisebox{1pt}{
        \tiny
        \color{darkblue}
        \bf
        \begin{tabular}{c}
          profinite
          weights
        \end{tabular}
      }
      }
    }{
      \ProfiniteIntegers
    }
    \quad
    \ar[r, hook]
    &
    \overset{
      \raisebox{+2pt}{
        \tiny
        \color{darkblue}
        \bf
        levels
      }
    }{
      \underset{
        \kappa
        \,\in\,
        \NaturalNumbers_+
      }
      {
        \prod
      }
    }
    \;
    \overset{
      \mathclap{
      \raisebox{2pt}{
        \tiny
        \color{darkblue}
        \bf
        phases
      }
      }
    }{
      \CharacterGroup
        {\CyclicGroup{\kappa}}
    }\;.
    \\
 \scalebox{0.8}{$  \weight  $}
      &\longmapsto&
  \scalebox{0.8}{$    \big(
      [\weight \modulo \kappa]
    \big)_{
      \kappa
      \,\in\,
      \NaturalNumbers_+
    }
    $}
  \end{tikzcd}
\end{equation}

\vspace{-.2cm}
\item
$
  \IncomingWeight
   \;:=\;\;\;
   \underset{
     \mathclap{
      \scalebox{0.5}{$
        1 \leq I \leq N
      $}
     }}
     {\sum}
  \;\;
  \weight_{{}_I}
$
denotes the sum of the ``incoming'' weights.

\vspace{-.0cm}
\item
$
  \FlatConnectionForm
  (
     \vec{\weight}
    ,
    \kappa
  )
    \;\in\;
  \HolomorphicDifferentialForms{\big}{1}
  {
    \ComplexPlane
    \setminus
    \{ z_1, \cdots, z_N \}
  }
$
denotes the corresponding canonical holomorphic
flat connection 1-form
with holonomy
$
  [ \weight_{{}_I} / \kappa ]$
  around the $I$th
  incoming puncture:
\begin{equation}
  \label{TheMasterForm}
  \overset{
    \mathclap{
    \raisebox{4pt}{
      \tiny
      \color{darkblue}
      \bf
      \def\arraystretch{.9}
      \begin{tabular}{c}
        flat holomorphic
        \\
        connection 1-form
      \end{tabular}
    }
    }
  }{
   \FlatConnectionForm
   (
      \vec{\weight}
     ,
     \kappa
   )
  }
  \quad:=\;
  \underset{
    \scalebox{0.6}{$
      1 \,\leq\, I \,\leq\, N
    $}
  }{\sum}
  \!\!
  -
  \frac{\weight_{{}_I}}{\kappa}
  \frac{ \Differential z }{(z-z_{{}_I})}
  \;\;\;\;
  =
  \;\;\;\;
  \underset{
   \scalebox{0.6}{$
     1 \,\leq\, I \,\leq\, N
   $}
  }{\sum}
  \!\!
  -
  \frac{\weight_{{}_I}}{\kappa}
  \,
  \Differential \log(z - z_{{}_I})
  \;\;\;\;\;
  \in
  \;
  \HolomorphicDifferentialForms{\big}{1}
  {
    \ComplexPlane
    \setminus
    \{ z_1, \cdots, z_N \}
  }
  \,.
\end{equation}

\vspace{-.2cm}
When
regarded as a connection on the $(\NumberOfPunctures+1)$-punctured Riemann sphere,
this has holonomy $-\sum_I -  {\weight_{{}_I}} \,=\, + {\IncomingWeight}$ around the outgoing puncture
at $\infty$, which we may notationally make explicit by writing (using $1/(z - \infty) = 0$):
\vspace{-2mm}
\begin{equation}
  \label{FlatConnectionWithExplicitOutgoingTerm}
  \FlatConnectionForm( \vec{\weight}, \ShiftedLevel)
  \;\;
  \;=\;
  \overset{
    \mathclap{
    \raisebox{2pt}{
      \tiny
      \color{darkblue}
      \bf
      \begin{tabular}{c}
        holonomy contributions
        \\
        around
        {\color{orangeii}incoming}
        punctures
      \end{tabular}
    }
    }
  }{
  \underset{
  \scalebox{0.6}{$ 1 \leq I \leq \NumberOfPunctures $}
  }{\sum}
  \!\!
  -
  \frac{\weight_{{}_I}}{\ShiftedLevel}
  \frac{ \Differential z }{(z - z_{{}_I})}
  }
  \;\;\;+\;\;\;
  \overset{
    \mathclap{
    \raisebox{2pt}{
      \tiny
      \color{darkblue}
      \bf
      \begin{tabular}{c}
       holonomy contribution
       \\
       around
       {\color{orangeii}outgoing}
       puncture
      \end{tabular}
    }
    }
  }{
  \frac{\IncomingWeight}{\ShiftedLevel}
  \frac{\Differential z}{(z - \infty)}
  }
  \;\;\;
  \in
  \;
  \HolomorphicDifferentialForms{\big}{1}
  {
    \RiemannSphere
    \setminus
    \{
      z_1, \cdots, z_{\NumberOfPunctures},
      \infty
    \}
  }
  \,.
\end{equation}

\vspace{-2mm}
\item $\widehat{\Sigma^2} \xrightarrow{p_\Sigma} \Sigma^2$ denotes the universal cover of $\Sigma^2$, i.e.,
the complex manifold isomorphic to the space of homotopy classes $[\gamma]$ of continuous paths
$\widehat{z}_{\gamma} : [0,1] \to \Sigma^2$, all based at a fixed start point $\gamma(0) \,\in\, \Sigma^2$
and modulo homotopies that fix this and their respective endpoint $\gamma(1)$.

Picking any open disk $U \subset \Sigma^2$ and a lift $\widehat{U} \subset \widehat{\Sigma^2}$ through $p$,
we write $\hat z \,:\, \widehat{\Sigma^2} \xrightarrow{\;} \ComplexNumbers$ for the unique holomorphic
function which agrees with $z := p^\ast z$ on $\widehat{U}$. This yields:

\item $\MasterFunction : \widehat{\Sigma^2} \xrightarrow{\;} \ComplexNumbers$ denotes a holomorphic function
on the universal cover $\widehat{\Sigma^2}$ given by any branch of the following expression
(cf. \cite[(2.1)]{SchechtmanVarchenko89}:
\begin{equation}
  \label{TheMasterFunction}
  \MasterFunction(
    \widehat{z}\,
  )
  \;:=\;
  \MasterFunction
  ( \vec{\weight}, \kappa)
  (
    \widehat{z} \,
  )
  \;:=\quad
  \underset{
    \mathclap{
     \scalebox{0.6}{$
       1 \leq I \leq \NumberOfPunctures
     $}
    }
  }{\prod}
  \;
  \big(
    \hat{z} - z_{{}_I}
  \big)^{ -\weight_{{}_I}/\kappa }
  \,.
\end{equation}
This has come to be called the ``master function'' \cite[\S 2.1]{SlinkinVarchenko18},
in appreciation of the fact that any branch of the logarithm of it trivializes the
pullback of $\FlatConnectionForm$ to the universal cover
$$
  p^\ast
  \FlatConnectionForm
  \;=\;
  \frac{
    \DeRhamDifferential
    \MasterFunction
  }{\MasterFunction}
  \;=\;
  \DeRhamDifferential
  \,
  \log \MasterFunction
  \;\;\;
  \in
  \;
  \HolomorphicForms{\Big}{1}
  {\;
    \widehat{\Sigma^2}
  \;
  }
  \,,
$$
so that
multiplication by this function transforms the $\FlatConnectionForm$-twisted
holomorphic de Rham cohomology on $\Sigma^2$
(Rem. \ref{TwistedCohomologyOverSteinManifolds})
into the
Deck-transformation-equivariant de Rham cohomology on $\widehat{\Sigma^2}$:
\vspace{-2mm}
\begin{equation}
  \label{TwistedDeRhamIsomorphicToEquivariantDeRhamOnCoveringSpace}
  \begin{tikzcd}[row sep=-3pt]
    \overset{
     \mathclap{
     \raisebox{2pt}{
       \tiny
       \color{darkblue}
       \bf
       \begin{tabular}{c}
         {\color{orangeii}twisted }
         holomorphic de Rham cohomology
         \\
         of the complex curve
       \end{tabular}
     }
     }
    }{
    H^{\NumberOfProbeBranes}
    \Big(
      \HolomorphicDeRhamComplex{\big}
        {\Sigma^2;\,\ComplexNumbers}
      ,\,
      \DeRhamDifferential
      +
      \FlatConnectionForm
    \Big)
    }
    \ar[rr, "{\sim}"]
    &&
    \overset{
      \mathclap{
      \raisebox{2pt}{
        \tiny
        \color{darkblue}
        \bf
        \begin{tabular}{c}
        {\color{orangeii}equivariant}
        holomorphic de Rham cohomology
        \\
        of its universal cover
        \end{tabular}
      }
      }
    }{
    H^{\NumberOfProbeBranes}
    \bigg(\!\!\!
    \Big(
      \HolomorphicDeRhamComplex{\Big}
      {\,
        \widehat{\Sigma^2}
        ;\,
        \IrrepOfCyclicGroup
          {\Denominator}
          {\ShiftedLevel}
      }
      \Big)
      ^{{}^{
        \scalebox{.75}{$
          \pi_1(\Sigma^2)
        $}
      }}
    \bigg).
    }
    \\
    \scalebox{0.8}{$
      [\alpha]
    $}
    &\longmapsto&
    \scalebox{0.8}{$
      \big[
      \MasterFunction
      \cdot
      p_{{}_{\Sigma}}^\ast \alpha
      \big]
    $}
  \end{tikzcd}
\end{equation}

Here the action of $\pi_1(\Sigma^2)$ on the right is by pullback of differential forms along its canonical operation on the universal covering space and by action on their coefficients through $\pi_1(\Sigma^2) \xrightarrow{ [\FlatConnectionForm] } \CharacterGroup{\CyclicGroup{\ShiftedLevel}}$.

\end{itemize}

\begin{remark}[\bf Admissible rational levels seen in \TED-K-theory]
  \label{RationalLevelsInKTheory}
  $\,$

\noindent
{\bf (i)}  It follows from expression \eqref{TheMasterForm}
  for $\FlatConnectionForm$ on $\Sigma^2$
  that
  $$
    \Denominator
      \cdot
    \FlatConnectionForm
    (
       \vec{\weight}
      ,
      \kappa )
    \;=\;
    \FlatConnectionForm
    (
       \vec{\weight},
      \ShiftedLevel/\Denominator
    )
    \,,
  $$
and hence that the direct sum
  appearing in the $[\omega_1]$-twisted equivariant K-theory of $\Sigma^2$, according to Prop. \ref{TheBareBProfiniteIntegersTwistOfComplexRationalizedAEquivariantKTheory}, is the direct sum
  of 1-twisted holomorphic cohomology groups
  over all the {\it rational levels}  (cf. \eqref{FractionLevelInCastOfCharacters})
  \begin{equation}
    \label{FractionalLevel}
    \Level
      =
      -2
        +
      \ShiftedLevel/\Denominator
    \;\;\;\;
    \in
    \;\;
    \RationalNumbers
    \,;
  \end{equation}

  \vspace{-2mm}
\noindent
this means that the secondary Chern character becomes:
\vspace{-1mm}
  \begin{equation}
    \label{TEdKTheoryOfPuncturedPlaneAsDirectSum}
    \begin{tikzcd}[column sep=large]
    \underset{
      \mathclap{
        \scalebox{.65}{$
          \begin{array}{c}
            { d \,\in\, \Integers }
            \\
            { 1 \,\leq\, {\color{purple}r} \,<\, \kappa  }
          \end{array}
        $}
      }
    }{\bigoplus}
    \;\;\;
    \overset{
      \mathclap{
      \hspace{-0pt}
      \raisebox{3pt}{
        \tiny
        \color{darkblue}
        \bf
        \def\arraystretch{.9}
        \begin{tabular}{c}
          direct sum of
          {holomorphic de Rham}
          cohomology groups
          \\
          \phantom{a}
        \end{tabular}
      }
      }
    }{
    H^{
      \NumberOfProbeBranes
      +
      2d
    }
    \Big(
      \HolomorphicForms{}{\bullet}
        {\ComplexManifold}
      ,\,
      \DolbeaultDifferential
    }
    \underset{
      \mathclap{
      \raisebox{-6pt}{
        \tiny
        \color{darkblue}
        \bf
        \begin{tabular}{c}
          twisted by
          the flat holomorphic 1-form
          \\
          at
          {
            \color{orangeii}
            any
            $\ShiftedLevel$-fractional
            level
          }
      \end{tabular}
      }
      }
    }{
      +
      \FlatConnectionForm( \vec{\weight}, \ShiftedLevel/{\color{purple}r}) \wedge
    }
    \Big)
    \ar[
      rr,
      "{
        \mbox{
          \tiny
          \color{greenii}
          \bf
          \begin{tabular}{c}
            secondary
            \\
            Chern character
          \end{tabular}
        }
      }"
    ]
    &&
    \overset{
      \mathclap{
      \raisebox{4pt}{
        \tiny
        \color{darkblue}
        \bf
        \TED-K-theory of
        punctured plane
        inside $\mathbb{A}_{\ShiftedLevel-1}$-singularity
      }
      }
    }{
    \mathrm{KU}
      ^{
        1 +
        \NumberOfProbeBranes
        +
        [\FlatConnectionForm( \vec{w}, \kappa)]
      }_{\diff}
    \Big(
      \big(
        \ComplexPlane
        \setminus
        \{\vec z\}
      \big)
      \times
      \HomotopyQuotient
        {\ast}
        {\CyclicGroup{\ShiftedLevel}}
    \Big).
    }
    \end{tikzcd}
  \end{equation}
  \vspace{-2mm}
 \item {\bf (ii)} The  fractional level \eqref{FractionalLevel} is redundant
  when $\Denominator$ and $\ShiftedLevel$ contain a common factor,
  in that the same twisted cohomology group that it labels in \eqref{TEdKTheoryOfPuncturedPlaneAsDirectSum} then appears already for smaller values of $\ShiftedLevel$ and $\Denominator$. When this is not the case, hence when
 \vspace{-1mm}
  $$
    \mathrm{gcd}(\ShiftedLevel, \Denominator)
    \;=\;
    1
    \,,
  $$

  \vspace{-1mm}
\noindent  then precisely the formula
  \eqref{FractionalLevel} with precisely our condition $\ShiftedLevel \geq 2$ from \eqref{FractionLevelInCastOfCharacters}
  characterizes
  {\it admissible} fractional levels in conformal field theory.
  We come back to this in
  Rem. \ref{FractionalLevelWZWModels}
  and
  Rem. \ref{FurtherContentSeenInTheTEdKTheory} below.
\end{remark}

\begin{proposition}[Hypergeometric forms span twisted de Rham cohomology of punctured plane]
\label{OSBasisForTwistedCohomologyOfPuncturedPlane}
For $\vec \weight \,\in\, \{0, \cdots, \ShiftedLevel - 1\}^{\NumberOfPunctures}$ such that $\frac{\IncomingWeight}{\ShiftedLevel} \,\not\in\, \NaturalNumbers_+$,
the $[\FlatConnectionForm(\vec \weight, \ShiftedLevel)]$-twisted holomorphic de Rham cohomology of $\ComplexPlane \setminus \{\vec z\}$  in degree 1 is spanned by the ``hypergeometric'' 1-forms $\Differential \log(z - z_{{}_I})$ subject to the single relation
$
\big[
  \FlatConnectionForm(\vec \weight, \ShiftedLevel)
\big]
\,=\, 0
$.
That is, there is a natural isomorphism:
\begin{equation}
  \label{GeneratorsForTwistedDeRhamCohomologyOfPuncturedPlane}
  H^1
  \Big(
    \HolomorphicDeRhamComplex{\big}
      {
        \ComplexPlane
        \setminus
        \{\vec z\}
      }
      ,
      \,
      \DeRhamDifferential
        +
      \FlatConnectionForm(\vec \weight, \ShiftedLevel)
  \Big)
  \;\;
  \simeq
  \;\;
  \frac{
  \big\langle
    \Differential
    \log(z - z_{{}_1})
    ,
    \cdots,
    \Differential
    \log(z - z_{{}_{\NumberOfPunctures}})
  \big\rangle
  }
  {
    \Big\langle
    \underset{
     \scalebox{0.5}{$ 1 \leq I \leq \NumberOfPunctures $}
    }{\sum}
    \;
    \frac{\weight_I}{\ShiftedLevel}
    \Differential
    \log(z - z_{{}_I})
    \Big\rangle
  }.
\end{equation}
\end{proposition}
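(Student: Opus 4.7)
The approach combines the Stein structure (Rem.\ \ref{TwistedCohomologyOverSteinManifolds}) with Deligne's regular-singular comparison theorem, whose non-resonance hypothesis at $\infty$ is \emph{exactly} $\IncomingWeight/\ShiftedLevel \not\in \NaturalNumbers_+$ (the residues at the finite punctures $-\weight_I/\ShiftedLevel \in (-1,0]$ are automatically non-resonant). Under this hypothesis, the twisted holomorphic de Rham cohomology of $\ComplexPlane \setminus \{\vec z\}$ is computed by the two-term complex of \emph{algebraic} holomorphic forms (rational, with poles only at $\{z_I\}$)
\[
  \Omega^0_{\mathrm{alg}} \xrightarrow{\; \DeRhamDifferential + \FlatConnectionForm \wedge \;} \Omega^1_{\mathrm{alg}}.
\]
The relation $[\FlatConnectionForm] = 0$ is automatic, since $(\DeRhamDifferential + \FlatConnectionForm \wedge)(1) = \FlatConnectionForm$, so there is a well-defined natural map from the right-hand side of \eqref{GeneratorsForTwistedDeRhamCohomologyOfPuncturedPlane} to $H^1$, sending each generator $\DeRhamDifferential \log(z - z_I)$ to its class.

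\textbf{Surjectivity via explicit reduction.} Partial fractions decomposes any element of $\Omega^1_{\mathrm{alg}}$ as
\[
  \alpha \;=\; P(z)\, \DeRhamDifferential z \;+\; \sum_{I = 1}^{\NumberOfPunctures} \sum_{m \geq 1} a_{I,m}\, (z - z_I)^{-m}\, \DeRhamDifferential z,
\]
with $P$ a polynomial. For the higher poles ($m \geq 2$), a direct computation gives
\[
  \bigl(\DeRhamDifferential + \FlatConnectionForm \wedge\bigr)\!\bigl(-\tfrac{1}{m-1}(z - z_I)^{-(m-1)}\bigr) \;=\; \bigl(1 + \tfrac{\weight_I}{(m-1)\ShiftedLevel}\bigr)(z - z_I)^{-m}\, \DeRhamDifferential z \;+\; R_{I,m},
\]
with $R_{I,m}$ consisting of terms of strictly smaller total pole order (from partial fractions of $(z - z_I)^{-(m-1)}(z - z_J)^{-1}$ with $J \neq I$). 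The leading coefficient is nonzero because $\weight_I \geq 0$, so iteration reduces every higher-order pole to the logarithmic generators $\DeRhamDifferential \log(z - z_J)$. For the polynomial part,
\[
  \bigl(\DeRhamDifferential + \FlatConnectionForm \wedge\bigr)(z^n) \;=\; \bigl(n - \tfrac{\IncomingWeight}{\ShiftedLevel}\bigr) z^{n-1}\, \DeRhamDifferential z \;+\; (\text{strictly lower polynomial degree and pole terms}),
\]
the leading coefficient coming from $z^n(z - z_I)^{-1} = (z - z_I)^{n-1} + \cdots$ summed against the residues of $\FlatConnectionForm$. The non-resonance hypothesis ensures $n - \IncomingWeight/\ShiftedLevel \neq 0$ for every $n \geq 1$, enabling an inductive reduction of the polynomial part down to degree zero, i.e.\ to a scalar multiple of $\FlatConnectionForm = (\DeRhamDifferential + \FlatConnectionForm \wedge)(1)$ modulo logarithmic forms. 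Hence the map is surjective.

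\textbf{Injectivity via Euler characteristic.} The Euler characteristic of the twisted de Rham complex equals the topological one $\chi(\ComplexPlane \setminus \{\vec z\}) = 1 - \NumberOfPunctures$. If $\vec \weight \neq 0$, then some monodromy $\exp(2\pi \ImaginaryUnit\, \weight_I/\ShiftedLevel)$ is nontrivial (as $0 < \weight_I < \ShiftedLevel$ for at least one $I$), so $H^0 = 0$, forcing $\dim H^1 = \NumberOfPunctures - 1$; this matches the right-hand side of \eqref{GeneratorsForTwistedDeRhamCohomologyOfPuncturedPlane} ($\NumberOfPunctures$ generators modulo the nonzero relation $\FlatConnectionForm$). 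If $\vec \weight = 0$, the twist is trivial and the relation is vacuous, so the statement reduces to the classical $H^1(\ComplexPlane \setminus \{\vec z\};\, \ComplexNumbers) = \bigoplus_I \ComplexNumbers \cdot \DeRhamDifferential \log(z - z_I)$. In either case the surjection is an isomorphism.

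\textbf{Main obstacle.} The principal technical point is the polynomial reduction, where the non-resonance hypothesis enters essentially and where one must track how $z^n \FlatConnectionForm$ contributes simultaneously to polynomial and pole parts at each inductive step. A more conceptual alternative, which avoids this bookkeeping, is to invoke directly the Esnault-Schechtman-Viehweg comparison theorem, identifying under non-resonance the inclusion of the logarithmic Orlik-Solomon (Brieskorn) subcomplex $\bigl(\ComplexNumbers \xrightarrow{\FlatConnectionForm \wedge} \bigoplus_I \ComplexNumbers \cdot \DeRhamDifferential \log(z - z_I)\bigr)$ into the full twisted holomorphic de Rham complex as a quasi-isomorphism, specialized here to the one-dimensional affine hyperplane arrangement in $\ComplexPlane$.
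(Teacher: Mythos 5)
Your proof is correct, but it takes a genuinely different route from the paper's. The paper's entire proof is a citation: it invokes the Esnault--Schechtman--Viehweg comparison theorem for rank-one local systems on hyperplane-arrangement complements and merely checks that their non-resonance condition \textbf{(Mon)} specializes (the finite residues $-\weight_I/\ShiftedLevel \in (-1,0]$ being automatically non-resonant, exactly as you observe) to the hypothesis $\IncomingWeight/\ShiftedLevel \notin \NaturalNumbers_+$ --- i.e.\ the paper takes the ``conceptual alternative'' you relegate to your final remark. Your main argument instead gives a self-contained one-variable proof: partial-fraction reduction of algebraic $1$-forms modulo twisted-exact forms for surjectivity, and an Euler-characteristic count ($\chi = 1 - \NumberOfPunctures$, with $H^0 = 0$ whenever some $\weight_I \neq 0$) for the dimension match. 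Both computations check out: the leading coefficients $1 + \tfrac{\weight_I}{(m-1)\ShiftedLevel}$ and $n - \IncomingWeight/\ShiftedLevel$ are exactly where the two non-resonance conditions enter, and your excess-pole-order induction terminates. What your route buys is transparency and independence from the arrangement literature in the one-dimensional case; what the paper's route buys is that the same citation machinery extends to the configuration-space generalization needed for Prop.\ \ref{ConformalBlocksInTwistedCohomology}, where a bare-hands partial-fraction argument would be much more painful. One small imprecision: the algebraic--analytic comparison (Deligne/Grothendieck for regular-singular connections on a smooth affine curve) is unconditional and does not require the non-resonance hypothesis at $\infty$; that hypothesis is only needed for your subsequent reduction of the polynomial part (and for the further reduction to the logarithmic subcomplex), so you should not attach it to the first reduction step. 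Also, the degree-zero endpoint of your polynomial reduction is $c\,\Differential z$, which is killed by $(\DeRhamDifferential + \FlatConnectionForm\wedge)(z)$ using $1 - \IncomingWeight/\ShiftedLevel \neq 0$, rather than being literally a multiple of $\FlatConnectionForm = (\DeRhamDifferential + \FlatConnectionForm\wedge)(1)$; the hypothesis covers this, but the sentence as written conflates the two steps.
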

\begin{proof}
This is essentially a special case of \cite{ESV92}.\footnote{
For vanishing twist ($\vec \weight = 0$) the result
of Prop. \ref{OSBasisForTwistedCohomologyOfPuncturedPlane}
is due to  \cite{OrlikSolomon80}, whence the subalgebra of the de Rham algebra generated by the $\Differential \log(z - z_{{}_{I}})$ is also known as an {\it Orlik-Solomon algebra}.}
In extracting this statement from \cite{ESV92}, beware that their expression for $\omega$ is of the form \eqref{FlatConnectionWithExplicitOutgoingTerm}, cf. \cite[\S 4]{SchechtmanTeraoVarchenko95}: With this understood, the sufficient condition ``{\bf (Mon)}'' in \cite[p. 2]{ESV92} is here equivalently the condition $\frac{\IncomingWeight}{\ShiftedLevel} \,\not\in\, \NaturalNumbers_+$.
\end{proof}

 \medskip

These ``weights'' $\weight_{{}_I}$ of holonomies over $\Sigma^2$ are now going to be identified with weights in the sense of Lie theory:

\medskip

\noindent
{\bf The $\slTwo$-Modules.} (see e.g. \cite[\S 13]{DMS97})

\vspace{-.2cm}
\begin{itemize}[leftmargin=*]
\setlength\itemsep{-1pt}

\item
$
  \slTwo
  \,:=\,
  \mathfrak{sl}(2,\ComplexNumbers)
$
denotes the complex Lie algebra of traceless complex $2 \times 2$ matrices.

The usual Chevalley generators of this Lie algebra are denoted:
\begin{equation}
  \label{CanonicalGeneratorsOfslTwo}
  \RaisingOperator
  \,:=\,
\begin{bmatrix}
    0 & 1
    \\
    0 & 0
  \end{bmatrix}
  \,,
  \;\;\;\;\;\;\;\;
  \WeightOperator
  \,:=\,
\begin{bmatrix}
    1 & 0
    \\
    0 & -1
\end{bmatrix}
  ,\,
  \;\;\;\;\;\;\;\;
  \LoweringOperator
  \,:=\,
\begin{bmatrix}
    0 & 0
    \\
    1 & 0
\end{bmatrix}
  \;\;\;\;\;\;
  \in
  \;\;
  \slTwo
\end{equation}
subject to these Lie bracket relations:
\vspace{-2mm}
$$
  [
    \WeightOperator
    ,
    \RaisingOperator
  ]
  \,=\,
  +
  2 \RaisingOperator
  \,,
  \;\;\;\;\;\;\;
  [
    \WeightOperator
    ,
    \LoweringOperator
  ]
  \,=\,
  - 2 \LoweringOperator
  \,,
  \;\;\;\;\;\;\;
  [
    \RaisingOperator
    ,
    \LoweringOperator
  ]
  \,=\,
  \WeightOperator
  \,.
$$
We may and will understand $\slTwo$ as the complexification of the special unitary Lie algebra $\suTwo$, hence as a convenience for speaking about the unitary Lie representations of $\suTwo$ in terms of complex-linear Lie representations of $\slTwo$:

\vspace{-.3cm}
\begin{equation}
  \label{ComplexificationOfsuTwo}
  \slTwo
  \;\simeq\;
  (\suTwo)_{{}_{\ComplexNumbers}}
  \,,
  \hspace{1.6cm}
  \overset{
    \mathclap{
    \raisebox{3pt}{
      \tiny
      \color{darkblue}
      \bf
      \def\arraystretch{.9}
      \begin{tabular}{c}
        unitary Lie representations
        \\
        of real Lie algebra
      \end{tabular}
    }
    }
  }{
  \big\{
    \suTwo
    \xrightarrow[\RealNumbers]{\;}
    \UnitaryLieAlgebra{d}
  \big\}
  }
  \;\;
  \leftrightarrow
  \;\;
  \big\{
    \!\!\!
    \begin{tikzcd}[column sep=6pt]
      \slTwo
      \ar[r, shorten=-2pt]
      \ar
        [
          rr,
          rounded corners,
          to path={
            ([yshift=-0pt]\tikztostart.south)
            --
            ([yshift=-4pt]\tikztostart.south)
            --
            node[below]{
              \scalebox{.7}{$
                \ComplexNumbers
              $}
            }
            ([yshift=-3pt]\tikztotarget.south)
            --
            ([yshift=+3pt]\tikztotarget.south)
          },
          "{\ComplexNumbers}{swap}"
        ]
      \ar[
        rr,
        phantom,
        shift left=13pt,
        "{
         \mathclap{
          \raisebox{3pt}{
            \tiny
            \color{darkblue}
            \bf
            \def\arraystretch{.9}
            \begin{tabular}{c}
              complex-linear Lie representation
              \\
              of complexified Lie algebra
            \end{tabular}
          }
          }
        }"
      ]
      &
      \UnitaryLieAlgebra{d}
      \ar[r, hook, shorten=-2pt]
      &
      \mathfrak{gl}(\mathbb{C}^d)
    \end{tikzcd}
    \!\!\!
  \big\}
\end{equation}
\vspace{-.4cm}

In particular, we never regard $\slTwo$ as a real Lie algebra and are {\it not} concerned with what is called $\SpecialLinearGroup(2,\ComplexNumbers)$-Chern-Simons theory. Instead, the complex-linear representations of $\slTwo$ serve as a convenient way of speaking about the complex representations of the real Lie algebra $\suTwo$ which appear in the discussion of the usual $\SUTwo$ CS/WZW theory (eg. \cite{Witten82}).
%
%
% This is of course completely standard, in fact so standard that it's rarely mentioned when people discuss $\slTwo$-conformal blocks -- which can actually make it a little confusing. It would be good to cite a textbook reference that makes this completely manifest, just for completeness, but I haven't yet found one that seems satisfactory. A good account is in
% https://www.fuw.edu.pl/~pkasp/Lectures/StudentProjectJacekMichal06-09-2020.pdf
% but this is a student thesis and hence hard to cite...
%
%

\vspace{-1mm}
\item
$
  \HighestWeightIrrep{\weight}
  \,\in\,
  \slTwo \Modules
$ denotes the highest weight $\slTwo$-irrep
of highest weight $\weight$, meaning that
(it is irreducible and) there is a highest eigenvalue of $\WeightOperator$ \eqref{CanonicalGeneratorsOfslTwo}
equal to $\weight$.

\item
$\HighestWeightVector \,\in\, \HighestWeightIrrep{\weight}$ denotes the corresponding highest weight eigenvector,
characterized by
$\WeightOperator
  \cdot \HighestWeightVector
  =
  \weight \cdot \HighestWeightVector$ and
$\RaisingOperator \cdot \HighestWeightVector = 0$.

\item
$
\IncomingHighestWeightReps
\;:=\;
\HighestWeightIrrep{\weight_1}
  \otimes
  \cdots
  \otimes
  \HighestWeightIrrep{\weight_{\NumberOfPunctures}}
$
denotes the tensor product of ``incoming'' highest weight representations'' of $\slTwo$.

For $v = v_1 \otimes \cdots \otimes v_{\NumberOfPunctures} \,\in\, \IncomingHighestWeightReps$
and $1 \leq i \leq \NumberOfPunctures$,
we write
$$
  \LoweringOperator_i
    \cdot
  v
  \;:=\;
  v_1
    \otimes
    \cdots
    \otimes
  v_{i-1}
    \otimes
  (\LoweringOperator\cdot v_i)
    \otimes
  v_{i + 1}
    \otimes
    \cdots
    \otimes
  v_{\NumberOfPunctures}
  \;\;\;
  \in
  \;
  \IncomingHighestWeightReps
  \,.
$$
We define the action of
$\WeightOperator_i$ and $\RaisingOperator_i$ similarly.

\item
$
  \IncomingHighestWeightReps(\OutgoingWeight)
  \;\xhookrightarrow{\;\;}\;
  \IncomingHighestWeightReps
$
denotes the linear subspace of incoming representation vectors of total weight $\OutgoingWeight$:
\begin{equation}
  \label{IncomingRepresentationVectorsOfOutgoingWeight}
  \IncomingHighestWeightReps(\OutgoingWeight)
  \;:=\;
  \Big\{
    v \in
  \HighestWeightIrrep{\weight_1}
  \otimes
  \cdots
  \otimes
  \HighestWeightIrrep{\weight_{\NumberOfPunctures}}
    \;\big\vert\;
    \sum_{i} \WeightOperator_i
    \cdot
    v
    \;=\;
    \OutgoingWeight \cdot v
  \Big\}.
\end{equation}

\vspace{-2mm}
\item $V_{\mathfrak{g}} \,:=\, V/(\mathfrak{g}\cdot V)$ (for any Lie algebra $\mathfrak{g}$ and $\mathfrak{g}$-module $V$) denotes the quotient space of {\it co-invariants}, i.e., the space such that linear maps out of it are the $\mathfrak{g}$-{\it invariant} linear maps out of $V$.

Specifically:

\item
$
\big(
  \HighestWeightIrrep
    {\weight_1}
  \otimes
    \cdots
  \otimes
  \HighestWeightIrrep
    {\weight_{\NumberOfPunctures + 1}}
\big)_{\slTwo}
$
denotes the space of co-invariants of the diagonal $\slTwo$-action on the tensor products of its highest weight
irreps for the given weights $( \vec{\weight}, \OutgoingWeight)$.

For example, if $v_i \in \HighestWeightIrrep{\weight_i}$ denote the highest weight vectors, and $f \in \slTwo$
according to \eqref{CanonicalGeneratorsOfslTwo}, then the element
$$
  f
  \cdot \big(
    v_1 \otimes v_2 \otimes \cdots \otimes
    v_{\NumberOfPunctures + 1}
  \big)
  \;:=\;
  \big(
    (f \cdot v_1)
      \otimes
    v_2
      \otimes
      \cdots
      \otimes
    v_{\NumberOfPunctures + 1}
  \big)
  \,+\,
  \big(
    v_1
      \otimes
    (f \cdot v_2)
      \otimes
      \cdots
      \otimes
    v_{\NumberOfPunctures + 1}
  \big)
  \,+\,
  \cdots
  \,+\,
  \big(
    v_1
      \otimes
    v_2
      \otimes
      \cdots
      \otimes
    f \cdot v_{\NumberOfPunctures + 1}
  \big)
$$
becomes null in the space of co-invariants.

\end{itemize}

\begin{lemma}[$\slTwo$-Coinvariants in terms of incoming representation vectors {\cite[Lem. 2.3.3]{FeiginSchechtmanVarchenko94}}]
  \label{slTwoCoinvariantsInTermsOfIncomingRepresentationVectors}
  Adjoining the highest weight(=$\OutgoingWeight$)-vector
  $\HighestWeightVector_{\NumberOfPunctures + 1}$
  of the outgoing $\slTwo$-irrep
  $\HighestWeightIrrep{\OutgoingWeight}$
  to a tensor product of incoming $\slTwo$-representation vectors
  constitutes a linear identification
  of the space of $\slTwo$-coinvariants
  in the tensor product of incoming and outgoing
  irreps
  with the quotient of
  the weight=$\OutgoingWeight$-subspace of
  just the tensor product of incoming irreps
  \eqref{IncomingRepresentationVectorsOfOutgoingWeight}
  by the image of the diagonal lowering operator:
  \vspace{-2mm}
  \begin{equation}
    \label{IsomorphismFromIncomingRepVectorsToslTwoCoinvariants}
    \begin{tikzcd}[row sep=-3pt]
      \IncomingHighestWeightReps(\OutgoingWeight)
      \big/
      \big(
        \mathrm{im}
        (\sum_i f_i)
      \big)
      \ar[
        rr,
        "{ \sim }"{swap}
      ]
      &&
      \big(
        \HighestWeightIrrep{\weight_1}
        \otimes
        \cdots
        \otimes
        \HighestWeightIrrep{\weight_{\NumberOfPunctures}}
        \otimes
        \HighestWeightIrrep{\weight_{\NumberOfPunctures + 1}}
      \big)_{\slTwo}
      \\
   \scalebox{0.8}{$   \big[
        v_1
          \otimes
          \cdots
          \otimes
        v_{\NumberOfPunctures}
      \big]
      $}
      &\longmapsto&
   \scalebox{0.8}{$   \big[
        v_1
          \otimes
          \cdots
          \otimes
        v_{\NumberOfPunctures}
          \otimes
        \HighestWeightVector_{\NumberOfPunctures + 1}
      \big]
      $}
    \end{tikzcd}
  \end{equation}
\end{lemma}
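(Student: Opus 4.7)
I will factor the stated map through canonical identifications from standard finite-dimensional $\slTwo$ representation theory. Write $V := \HighestWeightIrrep{\weight_1} \otimes \cdots \otimes \HighestWeightIrrep{\weight_{\NumberOfPunctures+1}}$. Since $V$ is completely reducible, projection onto the trivial isotypic component yields a linear isomorphism $V_{\slTwo} \xrightarrow{\sim} V^{\slTwo}$; and, using self-duality of finite-dimensional $\slTwo$-irreps (a consequence of $-1$ lying in the Weyl group) together with the tensor-Hom adjunction,
\[
V_{\slTwo} \;\simeq\; V^{\slTwo} \;\simeq\; \mathrm{Hom}_{\slTwo}\bigl(\HighestWeightIrrep{\weight_{\NumberOfPunctures+1}},\; \IncomingHighestWeightReps\bigr).
\]

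Any $\slTwo$-linear map out of the highest-weight irrep $\HighestWeightIrrep{\weight_{\NumberOfPunctures+1}}$ is determined by the image of its highest-weight vector $\HighestWeightVector_{\NumberOfPunctures+1}$, which must be a singular vector of weight $\OutgoingWeight$ in $\IncomingHighestWeightReps$; this identifies the Hom-space above with $\ker\bigl(e\vert_{\IncomingHighestWeightReps(\OutgoingWeight)}\bigr)$, where $e := \sum_i \RaisingOperator_i$ and $f := \sum_i \LoweringOperator_i$ act diagonally. Finally, in any finite-dimensional $\slTwo$-module $M$ the weight-$\mu$ subspace decomposes as
\[
M(\mu) \;=\; \ker\bigl(e\vert_{M(\mu)}\bigr) \;\oplus\; f \cdot M(\mu+2),
\]
which one verifies by reducing to irreducible summands $\HighestWeightIrrep{w}$, inside each of which $\LoweringOperator$ restricts to an isomorphism from the weight-$(\mu+2)$ line onto the weight-$\mu$ line whenever both are nonzero. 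Applied to $M = \IncomingHighestWeightReps$ at $\mu = \OutgoingWeight$, this produces
\[
\IncomingHighestWeightReps(\OutgoingWeight)\big/\operatorname{im}\bigl({\textstyle\sum_i} \LoweringOperator_i\bigr) \;\xrightarrow{\sim}\; \ker\bigl(e\vert_{\IncomingHighestWeightReps(\OutgoingWeight)}\bigr).
\]

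Composing the three identifications above should yield an isomorphism of the required source and target. The main obstacle will be the last verification: confirming that the canonical composite literally equals the explicit map $[v] \mapsto [v \otimes \HighestWeightVector_{\NumberOfPunctures+1}]$ of the lemma requires fixing a self-duality pairing on $\HighestWeightIrrep{\weight_{\NumberOfPunctures+1}}$ compatibly with the convention of \cite{FeiginSchechtmanVarchenko94}. Once that normalization is pinned down, the comparison reduces to the standard maneuver of repeatedly using coinvariance under $f = \sum_i \LoweringOperator_i \in \slTwo$ to transfer powers of $\LoweringOperator$ acting on the last tensor factor onto the first $\NumberOfPunctures$ factors, thereby exhibiting every $\slTwo$-coinvariant class in the normal form $[v \otimes \HighestWeightVector_{\NumberOfPunctures+1}]$ with $v \in \IncomingHighestWeightReps(\OutgoingWeight)$, and simultaneously certifying well-definedness of the assignment on the quotient $\IncomingHighestWeightReps(\OutgoingWeight)/\operatorname{im}(\sum_i \LoweringOperator_i)$.
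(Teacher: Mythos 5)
The paper gives no proof of this Lemma at all --- it is quoted verbatim from Feigin--Schechtman--Varchenko with a bare citation --- so your argument has to stand on its own. Its main line does: the chain
$V_{\slTwo} \simeq V^{\slTwo} \simeq \mathrm{Hom}_{\slTwo}\big(\HighestWeightIrrep{\OutgoingWeight},\, \IncomingHighestWeightReps\big) \simeq \ker\big(\RaisingOperator\vert_{\IncomingHighestWeightReps(\OutgoingWeight)}\big) \simeq \IncomingHighestWeightReps(\OutgoingWeight)\big/\mathrm{im}\big(\textstyle\sum_i \LoweringOperator_i\big)$
is correct and standard for finite-dimensional modules, and it does produce an abstract isomorphism between the two sides of \eqref{IsomorphismFromIncomingRepVectorsToslTwoCoinvariants}.

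The gap is precisely in the step you defer as ``pinning down the self-duality normalization'': that is not bookkeeping but the point where the literal statement fails. With the plain diagonal action and with $\HighestWeightVector_{\NumberOfPunctures+1}$ the genuine highest-weight vector of $\HighestWeightIrrep{\OutgoingWeight}$ (of $\WeightOperator$-weight $+\OutgoingWeight$), the vector $v \otimes \HighestWeightVector_{\NumberOfPunctures+1}$ with $v \in \IncomingHighestWeightReps(\OutgoingWeight)$ has total $\WeightOperator$-weight $2\OutgoingWeight$; since $\WeightOperator \cdot x = \mu x$ forces $[x]=0$ in coinvariants for $\mu \neq 0$, the displayed map is \emph{identically zero} whenever $\OutgoingWeight \neq 0$. (Test case $\NumberOfPunctures = 1$, $\weight_1 = \OutgoingWeight = 1$: the left-hand side is $\ComplexNumbers$, but $\HighestWeightVector_1 \otimes \HighestWeightVector_2$ spans the weight-$2$ line of $\HighestWeightIrrep{2} \subset \HighestWeightIrrep{1}\otimes\HighestWeightIrrep{1}$ and so dies in the coinvariants.) The statement is only true under the FSV convention that the module at $\infty$ carries the action twisted by the Chevalley involution $\RaisingOperator \leftrightarrow \LoweringOperator$, $\WeightOperator \mapsto -\WeightOperator$ (from the coordinate $1/z$), equivalently that one adjoins the \emph{lowest}-weight vector of $\HighestWeightIrrep{\OutgoingWeight}$. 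Once this is made explicit your proof closes, and in fact more cheaply than via the duality pairing: well-definedness is immediate because the diagonal lowering operator applied to $v \otimes \HighestWeightVector_{\NumberOfPunctures+1}$ equals $\big(\sum_{i \leq \NumberOfPunctures}\LoweringOperator_i v\big)\otimes \HighestWeightVector_{\NumberOfPunctures+1}$ on the nose (the extremal vector in the last slot is killed), surjectivity is exactly your transfer maneuver combined with the weight-zero constraint, and injectivity then follows from your dimension count against $\ker\big(\RaisingOperator\vert_{\IncomingHighestWeightReps(\OutgoingWeight)}\big)$. So: right approach, but the twist at infinity must be named, not deferred.
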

%\begin{proof}
%{\color{red} With good notation the proof will look like this (...):
%$$
%  \big\langle
%    \HighestWeightVector
%      _{\NumberOfPunctures + 1}
%  \big\vert
%  \,
%  \LoweringOperator
%  \,
%  \big\vert
%    v_1,
%      \cdots,
%    v_{\NumberOfPunctures}
%  \big\rangle
%  \;\;
%  =
%  \;\;
%  \big\langle
%    \RaisingOperator
%    \cdot
%    \HighestWeightVector
%      _{\NumberOfPunctures + 1}
%  \big\vert
%    v_1,
%      \cdots,
%    v_{\NumberOfPunctures}
%  \big\rangle
%  \;\;
%  =
%  \;\;
%  0
%$$
%}
%\end{proof}

\medskip

\noindent
{\bf The $\slTwoAffine{\Level}$-Modules.} (see e.g. \cite[\S 14]{DMS97})

\vspace{-.2cm}
\begin{itemize}[leftmargin=*]
\setlength\itemsep{-1pt}

\item $\slTwoAffine{\Level}$ denotes its ``affine'' version, i.e., the universal central extension
at level $k$ of the infinite-dimensional algebra of loops in $\slTwo$ (\cite{Kac83}).

\item
$\AffineHighestWeightIrrep{\weight}{\Level} \,\in\, \slTwoAffine{\Level}\Modules$ denotes the highest weight $\slTwoAffine{\Level}$-irrep.

\end{itemize}

\begin{remark}[\bf Integrable and admissible weights]
\label{IntegrableAndAdmissibleWeights}
$\,$

\noindent
{\bf (i)} One says that a weight $\weight \in \NaturalNumbers$ satisfies the
{\it integrability condition} (cf. \cite[\S 2.3.2, 3.2]{SchechtmanVarchenko90}) at a given level $\Level$
if and only if
\begin{equation}
  \label{IntegrabilityConditionOnWeights}
  \begin{array}{ll}
    &
    0
    \;\leq\;
    \weight
    \;\leq\;
    \level\,,
    \\
    \mbox{i.e.,}
    &
    0
    \;\leq\;
    \weight
    \;\leq\;
    \ShiftedLevel - 2
    \,.
  \end{array}
\end{equation}
The highest weight irreps
$\AffineHighestWeightIrrep{\weight}{\Level}$ whose highest weight satisfies the integrability condition
\eqref{IntegrabilityConditionOnWeights}:

\vspace{-.3cm}
\begin{itemize}[leftmargin=*]
\setlength\itemsep{-1pt}

\item  are indeed
{\it integrable}
(\cite[p. 176]{SchechtmanVarchenko90}\footnote{Beware that in \cite{FeiginSchechtmanVarchenko94} these integrable irreps are denoted $\overline{L}(\weight)$, while $\widehat L(\weight)$ there denotes a larger reducible rep, in the integrable case.}) in the sense of Lie theory,
in that their $\big(\slTwo \xhookrightarrow{\;} \slTwoAffine{\Level}\,\big)$-action extends  to a representation of the group $\SLTwo$, \cite[Prop. 3.6]{Kac83};

\item  label the {\it primary fields} of the corresponding WZW conformal field theory (e.g. \cite[(2.69)]{Walton00}).

\end{itemize}

\vspace{-2mm}
\noindent
{\bf (ii)}
More generally, for
admissible fractional levels $k = - 2 + \ShiftedLevel/\Denominator$
(Rem. \ref{RationalLevelsInKTheory}),
a weight $\weight$ is called {\it admissible} (e.g. \cite[(3.3)]{Rasmussen20}, following \cite{KacWakimoto88ModularInvariantRepresentations}\cite{KacWakimoto89})
if and only if
\begin{equation}
  \label{AdmissibilityCondition}
  \weight
  \;\in\;
  \left\{
  \!\!\!
  \def\arraystretch{1.3}
  \begin{array}{r}
    (1+a)
    \cdot
    \frac{\ShiftedLevel}{\Denominator}
    -
    b
    -
    1
    \mathrlap{\,,}
    \\
    -
    a
    \cdot
    \frac{\ShiftedLevel}{\Denominator}
    +
    b
    -
    1
  \end{array}
  \;
  \middle\vert
  \mbox{
    $
      \arraycolsep=4pt
      \def\arraystretch{1.3}
      \begin{array}{lll}
        \mbox{for} &
        a & \in \{0, \cdots, \Denominator - 1\}
        \\
        \mbox{and}
        &
        b & \in \{1, \cdots, \ShiftedLevel - 1\}
      \end{array}
    $
  }
  \!\!\!
  \right\}
  \;\;\;
  \supset
  \;\;\;
  \big\{
    0 , 1, \cdots, \ShiftedLevel - 2
  \big\}
  \,.
\end{equation}
On  the right we have highlighted that this set of admissible weights always contains
(using  $a = \Denominator-1$ in the first line or $a = 0$ in the second)
the subset of integrable weights \eqref{IntegrabilityConditionOnWeights},
which is always the largest subset of admissible weights that are also integers.
\end{remark}

\begin{remark}[\bf $\SLTwoZ$-Action on span of characters of $\slTwoAffine{\Level}$-Modules]
\label{SLTwoZActionOnAffineCharacters}
  Associated with an integrable highest weight
  $\slTwoAffine{\Level}$-irrep $\AffineHighestWeightIrrep{\weight}{\Level}$
  (Rem. \ref{IntegrableAndAdmissibleWeights})
  is its {\it affine character}
  $\AffineCharacter{\weight}{\Level}$ (e.g., \cite[\S]{DMS97}\cite[(3.43)]{Walton00}),
%   \begin{equation}
%     \label{AffineCharacters}
%     \AffineCharacter{\weight}{\Level}
%     \;:\;
%     \tau
%     \;\longmapsto\
%     \mathrm{tr}
%       _{
%         \scalebox{.7}{$
%         \AffineHighestWeightIrrep{\, \weight}{\Level}
%         $}
%       }
%     \big(
%       \exp( \tau .... )
%     \big)
%     \,,
%   \end{equation}
which may be thought of as the partition function of the chiral $\slTwoAffine{\Level}$-WZW-model over the complex torus
$
\Quotient
  { \ComplexNumbers }
  { (\Integers \times \tau\cdot \Integers) }
$.
Under the modular transformation action of $\SLTwoZ$
on the characters
(e.g. \cite[(4.6)]{Walton00})
via the canonical action on the modulus
$\tau \in \ComplexNumbers \vert_{ \mathrm{Re} > 0}$, given by
\begin{equation}
  \label{MoebiusTransformations}
  \begin{pmatrix}
       a & b
    \\
    c & d
  \end{pmatrix}
  \cdot \tau
  \;:=\;
  \frac{
    a \tau + b
  }{
    c \tau + d
  }
  \,,
  \;\;\;\;\;\;\;\;
  \mbox{e.g.}
  \;\;\;\;\;\;\;\;
  \underset{
    =: \TOperator
  }{
  \underbrace{\small
  \begin{pmatrix}
    1 & 1
    \\
    0 & 1
\end{pmatrix}
  }
  }
  \cdot \tau
  \;:=\;
  \tau + 1
  \,,
  \;\;\;\;\;\;\;
  \underset{
    =: \SOperator
  }{
  \underbrace{ \small
 \begin{pmatrix}
    0 & -1
    \\
    1 & 0
\end{pmatrix}
  }
  }
  \cdot \tau
  \;:=\;
  -1/\tau
  \,,
\end{equation}

\vspace{-1mm}
\noindent
these characters span a $\Level+1$-dimensional unitary representation
  $
    \big\langle
      \AffineCharacter{\weight}{\Level}
    \big\rangle_{ 0 \leq \weight \leq \Level }
  $
  of $\SLTwoZ$,
(\cite[p. 159]{KacWakimoto88}, based on \cite{KacPeterson84})  according to the following concrete formulas\footnote{The linear combination on the right in the second line of \eqref{SLTwoZTransformationOfAffineCharacters} is overdetermined, hence reduces
to a linear combination of just the first $\Level+1$ characters.}
  (\cite[(2.19) with (2.13)]{COZ87a}\cite[(26) with (24)]{COZ87b}):
  \vspace{-2mm}
\begin{equation}
    \label{SLTwoZTransformationOfAffineCharacters}
    \def\arraystretch{2.5}
  \begin{array}{rl}
    (T
    \cdot
    \AffineCharacter{\weight}{\Level}
    )(\tau)
    \;:=\;
    \AffineCharacter{\weight}{\Level}
    (\tau + 1)
    &
    =\;
    \exp
    \bigg(
      2\pi \ImaginaryUnit
      \Big(
        \frac{
          (\weight+1)^2
        }{
          4 \cdot
          (\Level + 2)
        }
        -
        \frac{1}{8}
      \Big)
    \!\!\bigg)
    \cdot
    \AffineCharacter{\weight}{\Level}(\tau)
    \,,
    \\
    (S
    \cdot
    \AffineCharacter{\weight}{\Level}
    )(\tau)
    \;:=\;
    \AffineCharacter{\weight}{\Level}
    (-1/\tau)
    &
    =\;
    \frac{-\ImaginaryUnit}{
      \sqrt{2(\Level+2)}
    }
    \;
   {\displaystyle \sum
      _{\weight' = 1}
      ^{ 2\Level + 2 }
      }
    \exp
    \bigg(
      2\pi\ImaginaryUnit
      \Big(
        \frac{
          (\weight + 1)
          (\weight' + 1)
        }{
          2(\Level + 2)
        }
      \Big)
    \!\! \bigg)
    \cdot
    \AffineCharacter{\weight}{\Level}(\tau)
    \,.
    \end{array}
  \end{equation}
\end{remark}

\begin{example}[The 2-dimensional $\SLTwoZ$-Representation on $\slTwoAffine{1}$-Characters]
\label{TheTwoDimensionalSLTwoZRepOnAffineCharacters}
For unit level $\Level = 1$, the
$\SLTwoZ$-representation \eqref{SLTwoZTransformationOfAffineCharacters} is 2-dimensional, spanned by the affine characters $\AffineCharacter{0}{1}$ and $\AffineCharacter{1}{1}$ at weight 0 and weight 1, respectively.
Since
$$
  \tfrac{1}{12}
  -
  \tfrac{1}{8}
  \;=\;
  -
  \tfrac{1}{24}
  \,,
  \;\;\;\;\;\;\;\;\;\;
  \tfrac{4}{12}
  -
  \tfrac{1}{8}
  \;=\;
  +
  \tfrac{5}{24}\,,
$$
the action of $\TOperator$ from \eqref{MoebiusTransformations}
on the characters
is given in this case by
\begin{equation}
  \TOperator
  \cdot
  \AffineCharacter{0}{1}
  \;=\;
  \exp
  (
    - 2 \pi \ImaginaryUnit
    /
    24
  )
  \cdot
  \AffineCharacter{0}{1}
  \,,
  \;\;\;\;\;\;\;\;\;\;\;\;\;
  \TOperator
  \cdot
  \AffineCharacter{1}{1}
  \;=\;
  \exp
  (
     2 \pi \ImaginaryUnit
    \cdot 5/
    24
  )
  \cdot
  \AffineCharacter{1}{1}
  \,.
\end{equation}
In particular, the action of $\TOperator$
on tensor powers of this
representation becomes trivial
exactly for
$$
  \TOperator
  \cdot
  \big(
    \AffineCharacter{\weight}{\Level}
  \big)^{\!\otimes^{24}}
  \;\;
  =
  \;\;
  1
  \cdot
  \big(
    \AffineCharacter{\weight}{\Level}
  \big)^{\!\otimes^{24}}
  \,.
$$

\end{example}
\begin{definition}[Truncation condition]
\label{TruncationCondition}
Following \cite[(14)]{FeiginSchechtmanVarchenko94},
we say that an $\NumberOfPunctures$-tuple of weights $ \vec{\weight}$
satisfies the {\it truncation condition} for a given number $\NumberOfProbeBranes$ of insertions
if and only if
\vspace{-2mm}
  \begin{align}
    \label{TruncationConditionFormula}
        \IncomingWeight
    \;+\;
    \OutgoingWeight
    &\;>\;
    2 \Level
    \nonumber
    \\
    \Leftrightarrow \quad
        \IncomingWeight
    \;+\;
    (\IncomingWeight
      -
    2 \cdot \NumberOfProbeBranes)
    &\;>\;
    2 (\ShiftedLevel - 2 )
    \\
    \nonumber
    \Leftrightarrow \hspace{2.55cm}
        \IncomingWeight
    & \;>\;
    \ShiftedLevel
      +
    ( \NumberOfProbeBranes - 2 )
    \,.
  \end{align}
\end{definition}
\begin{remark}[\bf Truncation condition combined with integrability]
\label{TruncationConditionCombinedWithIntegrability}
Assuming
the {\it integrability condition}
\eqref{IntegrabilityConditionOnWeights}
on the outgoing weight $\OutgoingWeight$,
which says that
\begin{align}
  \label{IntegrabilityConditionOnOutGoingWeight}
        \OutgoingWeight
    &\;\leq\;
    \Level
    \nonumber
    \\
    \Leftrightarrow
    \quad
        \IncomingWeight
      -
    2 \cdot \NumberOfProbeBranes
    &\;\leq\;
    \ShiftedLevel - 2
    \\
    \nonumber
    \Leftrightarrow
    \hspace{1.25cm}
    \IncomingWeight
    &\;\leq\;
    \ShiftedLevel
    \;+\;
    2( \NumberOfProbeBranes - 1 )
    \,,
\end{align}
the truncation condition is equivalent to
\begin{equation}
  \label{TheTruncationConditionCombinedWithIntegrability}
  \ShiftedLevel
    \,+\,
  (\NumberOfProbeBranes - 1)
  \;\;\leq\;\;
  \IncomingWeight
  \;\;\leq\;\;
  \ShiftedLevel
    \,+\,
  2(\NumberOfProbeBranes - 1)
  \,.
\end{equation}
\end{remark}

\begin{example}[Truncation condition for $\NumberOfProbeBranes = 1$]
\label{TruncationConditionForSingleProbeBrane}
The truncation condition for
$\NumberOfProbeBranes = 1$, and
using the integrability condition
(Rem. \ref{TruncationConditionCombinedWithIntegrability}),
is equivalently the condition
\begin{equation}
  \label{TruncationConditionForOneProbe}
  \IncomingWeight \;=\; \ShiftedLevel
  \,,
\end{equation}
while its failure is the condition $\IncomingWeight < \ShiftedLevel$.
\end{example}

\newpage

\noindent
{\bf The $\slTwoAffine{\Level}$-conformal blocks.}

\vspace{-.2cm}
\begin{itemize}[leftmargin=*]
\setlength\itemsep{-1pt}

\vspace{-0mm}
\item $\NumberOfProbeBranes \,\in\, \NaturalNumbers_+$ denotes a positive integer which, in the next bullet point, measures (half) the difference between incoming and outgoing weights, to be called the {\it degree} of the corresponding conformal block (see below \eqref{ActionOfAlgebraicFunctionsOverSigmaOnModulesAtThePunctures}), and eventually identified also with the number of probe branes, hence of points in a configuration of points in complex curve $\Sigma^2$.

\vspace{-.1cm}
\item
$
\OutgoingWeight
\;:=\;
\weight_{\NumberOfPunctures + 1}
\,:=\,
\IncomingWeight
\,-\, 2 \cdot \NumberOfProbeBranes$ denotes the
``outgoing weight at infinity'', constrained to be the sum of incoming weights minus twice the number of field
insertions (cf. \cite[above (12)]{SchechtmanVarchenko90}).

\item $\slTwo(\Sigma^2)$ denotes the ring of $\slTwo$-valued holomorphic functions on $\Sigma^2$
(hence meromorphic
functions on $\RiemannSphere$ with possible poles at $\{\vec z, \infty\}$), regarded as a Lie algebra under
the inherited pointwise Lie bracket.

 \item
 $\slTwo(\Sigma^2) \xrightarrow{ (-)\vert_{z_i} } \slTwoAffine{0} $ denotes the operation of forming the Laurent
 expansion of such an algebraic function around the puncture at $z_i$.
 The direct product of these around all punctures lifts uniquely to the diagonal central extension
 (\cite[(8)]{FeiginSchechtmanVarchenko94}):
 \vspace{-2mm}
$$
  \begin{tikzcd}[column sep=40pt]
    &&
    \;\;\;\;
    \widehat{
    \underset{
      \mathclap{
        \scalebox{.6}{$
          1 \!\leq\! i \!\leq\! \NumberOfPunctures \!+\! 1
        $}
      }
    }{\bigoplus}
    \;\;
    \slTwo
    }^{{}_{\Level}}
    \;\;\;
    \ar[d]
    \\
    \slTwo(\Sigma^2)
    \ar[
      rr,
      "{
        \underset{
          1
            \leq
          i
            \leq
          \NumberOfPunctures + 1
        }{\sum}
          (-)\vert_{z_i}
      }"{swap}
    ]
    \ar[urr, dashed, "{ \exists ! }"]
    && \;\;
    \underset{
      \mathclap{
        \scalebox{.6}{$
          1 \!\leq\! i \!\leq\! \NumberOfPunctures \!+\! 1
        $}
      }
    }{\bigoplus}
    \;\;
    \slTwoAffine{\,0}
  \end{tikzcd}
$$

\vspace{-2mm}
\noindent Through this Lie homomorphism, $\slTwo(\Sigma^2)$
acts on any $(\NumberOfPunctures + 1)$-fold tensor product
of $\slTwoAffine{\Level}$-modules, notably
on the tensor product of integrable highest weight irreps \eqref{IntegrabilityConditionOnWeights}
for the given weights $( \vec{\weight}, \OutgoingWeight)$:
\begin{equation}
  \label{ActionOfAlgebraicFunctionsOverSigmaOnModulesAtThePunctures}
  \slTwo(\Sigma^2)
  \;\acts\;
  \big(
  \AffineHighestWeightIrrep{\weight_1}{\Level}
  \otimes
  \cdots
  \otimes
  \AffineHighestWeightIrrep{\weight_{\NumberOfPunctures}}{\Level}
  \otimes
  \AffineHighestWeightIrrep{\OutgoingWeight}{\Level}
  \big)
  \,.
\end{equation}

\item
$
  \ConformalBlocks^{n}_{
    \slTwoAffine{\Level}
  }
  (
     \vec{\weight}
    ,
     \vec{z}
  )
  \;:=\;
  \big(
  \AffineHighestWeightIrrep{\weight_1}{\Level}
  \otimes
  \cdots
  \otimes
  \AffineHighestWeightIrrep{\weight_{\NumberOfPunctures}}{\Level}
  \otimes
  \AffineHighestWeightIrrep{\OutgoingWeight}{\Level}
  \big)_{\slTwo(\Sigma^2)}
  $
  denotes the co-invariants of the action
  \eqref{ActionOfAlgebraicFunctionsOverSigmaOnModulesAtThePunctures},
  called the {\it space of conformal blocks} (e.g. \cite{Beauville94}\cite[\S 1.4]{Kohno02}):

\vspace{-.2cm}
\begin{itemize}

\item
of the $\slTwoAffine{\Level}$-WZW-model

\item
on $\RiemannSphere \setminus \{z_1, \cdots, z_n, \infty\}$;

\item at level $\Level$,

\item with the $i$th puncture $z_i$ labeled by the integrable $\slTwoAffine{\Level}$-irrep of highest weight $\weight_i$, for all $i \,\in\, \{1, \cdots, N+1\}$,

\item with
$n
  \,=\,
\tfrac{1}{2}(\IncomingWeight - \OutgoingWeight)
  \;:=\;
\tfrac{1}{2} \Big(
  \Big(\;
    \underset{1 \leq i \leq \NumberOfPunctures}{\sum}
    \,
    \weight_i
  \Big)
  -
  \weight_{\NumberOfPunctures + 1}
\Big)
$
primary field insertions.

\end{itemize}

\end{itemize}

We are going to show that these spaces of conformal blocks are naturally transformed into the \TED-K-theory of $\HomotopyQuotient{\Sigma^2}{\CyclicGroup{\kappa}}$. The first step towards this reformulation is the fact that -- despite
the superficial appearance of the above definition -- conformal blocks may be expressed in terms of just the irreps of the
underlying finite-dimensional Lie algebra $\slTwo$:

\begin{proposition}[Conformal blocks as quotient of space of incoming $\slTwo$-reps  {\cite[Lem. 2.3.3, Thm. 2.3.6]{FeiginSchechtmanVarchenko94}}]
\label{ConformalBlocksAsQuotientsOfCoInvariantOfSLTwo}
$\,$

\noindent
  The above spaces of
  $\slTwoAffine{\Level}$-conformal blocks
  on $\RiemannSphere \setminus \{\vec z, z_{\NumberOfPunctures + 1}\}$
  are isomorphic to the
  plain $\slTwo$-coinvariants, hence to the
  quotient \eqref{IsomorphismFromIncomingRepVectorsToslTwoCoinvariants}
  of the incoming $\slTwo$-reps --
  except when the truncation condition holds (Def. \ref{TruncationCondition}),
  in which case they are even smaller:
  \begin{equation}
    \label{slTwoConformalBlocksAsQuotientOfSlTwoCoinvariants}
    \ConformalBlocks^{n}_{
      \slTwoAffine{\Level}
    }
    (
       \vec{\weight}
      ,
       \vec{z}
    )
    \;\;
    \simeq
    \;\;
    \left\{
    \!\!\!
    \def\arraystretch{2}
    \begin{array}{ll}
    \IncomingHighestWeightReps(\OutgoingWeight)
    \big/
    \mathrm{im}
    \big(
      \sum_i \LoweringOperator_i
    \big)
    \Big/
    \mathrm{im}
    \Big(
    \big(
      \sum_i (z_i \LoweringOperator_i)
    \big)
    ^{
      \ShiftedLevel  - 1 -  \OutgoingWeight
    }
    \Big)
    &
    \mbox{\rm
      if
      \eqref{TruncationConditionFormula}
      holds
    }
    \\
    \IncomingHighestWeightReps(\OutgoingWeight)
    \big/
    \mathrm{im}
    \big(
      \sum_i \LoweringOperator_i
    \big)
    \underset{
      \raisebox{-2pt}{
        \tiny
        \rm
        Lem. \ref{slTwoCoinvariantsInTermsOfIncomingRepresentationVectors}
      }
    }{
      \;\simeq\;
    }
    \big(
      \HighestWeightIrrep{\weight_1}
      \otimes
      \cdots
      \otimes
      \HighestWeightIrrep{\weight_{\NumberOfPunctures}}
      \otimes
      \HighestWeightIrrep{\OutgoingWeight}
    \big)_{\slTwo}
    &
    \mbox{\rm otherwise.}
    \end{array}
    \right.
  \end{equation}
\end{proposition}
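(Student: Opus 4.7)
The strategy is to reduce the a priori infinite-dimensional affine coinvariants problem to the finite-dimensional $\slTwo$-problem of Lem. \ref{slTwoCoinvariantsInTermsOfIncomingRepresentationVectors}, and then to identify the one additional relation that appears from non-constant elements of $\slTwo(\Sigma^2)$ when the level-$\Level$ integrability of $\AffineHighestWeightIrrep{\OutgoingWeight}{\Level}$ becomes felt.

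First I would use the inclusion $\slTwo \hookrightarrow \slTwo(\Sigma^2)$ as constant functions, so that the conformal block is a quotient of the plain $\slTwo$-coinvariants of the tensor product of affine irreps. A filtration argument using the Laurent expansions at the punctures of meromorphic $\slTwo$-valued functions on $\RiemannSphere$ with a single pole at one chosen puncture shows, inductively in pole order, that coinvariance kills every state outside the top $\slTwo$-isotypic component $\HighestWeightIrrep{\weight_i} \subset \AffineHighestWeightIrrep{\weight_i}{\Level}$ at each incoming puncture $z_i$. Combining this with Lem. \ref{slTwoCoinvariantsInTermsOfIncomingRepresentationVectors} applied at the outgoing puncture (to absorb $\HighestWeightIrrep{\OutgoingWeight}$ via the insertion of $\HighestWeightVector_{\NumberOfPunctures + 1}$) produces a canonical surjection
\[
  \IncomingHighestWeightReps(\OutgoingWeight)\big/\mathrm{im}\bigl(\textstyle\sum_i \LoweringOperator_i\bigr)
  \;\twoheadrightarrow\;
  \ConformalBlocks^{\NumberOfProbeBranes}_{\slTwoAffine{\Level}}(\vec\weight, \vec z),
\]
where the weight constraint $\IncomingWeight - 2\NumberOfProbeBranes = \OutgoingWeight$ comes from coinvariance under the constant Cartan generator $\WeightOperator$.

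The main technical step is to characterize the kernel of this surjection. The only potential extra relations come from the defining singular vector $(\LoweringOperator_{-1})^{\Level + 1 - \OutgoingWeight} \cdot \HighestWeightVector_{\NumberOfPunctures + 1} = 0$ of the integrable irrep $\AffineHighestWeightIrrep{\OutgoingWeight}{\Level}$ at the outgoing puncture. To translate this into a relation on the incoming factors, I would exploit the element $\LoweringOperator \cdot z \in \slTwo(\Sigma^2)$: in the local coordinate at $z_{\NumberOfPunctures + 1} = \infty$ its Laurent expansion contributes $\LoweringOperator_{-1}$, while at each incoming puncture $z_i$ it contributes $z_i \LoweringOperator_0$ modulo positive modes. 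Coinvariance then identifies $\LoweringOperator_{-1}^{(\NumberOfPunctures+1)}$ with $-\sum_i z_i \LoweringOperator_i$, so that the integrability singular vector at $\infty$ translates, after absorbing $\HighestWeightVector_{\NumberOfPunctures+1}$ on the outgoing side, into the relation $\bigl(\sum_i z_i \LoweringOperator_i\bigr)^{\ShiftedLevel - 1 - \OutgoingWeight} = 0$ on $\IncomingHighestWeightReps(\OutgoingWeight)$, using $\ShiftedLevel = \Level + 2$.

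This extra relation is non-vacuous precisely when the operator $(\sum_i z_i \LoweringOperator_i)^{\ShiftedLevel - 1 - \OutgoingWeight}$ still lands inside the weight-$\OutgoingWeight$ subspace of $\IncomingHighestWeightReps$; the resulting numerical condition is exactly the truncation condition \eqref{TruncationConditionFormula}, $\IncomingWeight + \OutgoingWeight > 2\Level$. In the complementary regime the lowering overshoots the available weight range and the relation is automatic. The final step -- and the genuinely hard part -- is to verify that no further relations appear beyond these two; I expect to handle this by a PBW-type induction on pole order at the punctures, reducing every candidate relation coming from a non-constant element of $\slTwo(\Sigma^2)$ to a combination of $\sum_i \LoweringOperator_i$ and the singular-vector relation, paralleling the argument of \cite[Thm. 2.3.6]{FeiginSchechtmanVarchenko94}.
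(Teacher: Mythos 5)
The paper offers no proof of this proposition at all: it is imported verbatim from Feigin--Schechtman--Varchenko (the cited Lem.\ 2.3.3 and Thm.\ 2.3.6), and the only in-text commentary is the remark after the statement noting that the image of $\big(\sum_i z_i\LoweringOperator_i\big)^{\ShiftedLevel-1-\OutgoingWeight}$ is null when truncation fails. So there is nothing in the paper to compare your argument against; what you have written is a reconstruction of the FSV proof, and it is correct in outline. Your three moves --- (a) restriction to the constant functions $\slTwo\hookrightarrow\slTwo(\Sigma^2)$ to get a surjection from the plain $\slTwo$-coinvariants, (b) propagation of the affine null vector at the outgoing puncture through the global element $z\LoweringOperator\in\slTwo(\Sigma^2)$, whose expansion at $z_I$ contributes $z_I\LoweringOperator_I$ modulo positive modes and at $\infty$ contributes the $(-1)$-mode, yielding the extra relation $\big(\sum_i z_i\LoweringOperator_i\big)^{\ShiftedLevel-1-\OutgoingWeight}$, and (c) the observation that this operator can only reach $\IncomingHighestWeightReps(\OutgoingWeight)$ from a nonzero weight space when $\IncomingWeight+\OutgoingWeight\geq 2\Level+2$, i.e.\ exactly under \eqref{TruncationConditionFormula} --- are precisely the mechanism of the cited theorem.

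Two caveats. First, your null vector $(\LoweringOperator\otimes t^{-1})^{\Level+1-\OutgoingWeight}\cdot\HighestWeightVector_{\NumberOfPunctures+1}=0$ is only correct once you fix the convention that the module at $\infty$ carries the action twisted by the coordinate inversion composed with the Chevalley involution (equivalently, is the contragredient module): in the untwisted highest-weight module the locally nilpotent power of $\LoweringOperator\otimes t^{-1}$ on $\HighestWeightVector$ is $\Level+\OutgoingWeight+1$, and it is $(\RaisingOperator\otimes t^{-1})^{\Level+1-\OutgoingWeight}\HighestWeightVector$ that vanishes. The same twist is already implicitly needed for Lemma \ref{slTwoCoinvariantsInTermsOfIncomingRepresentationVectors} to be consistent with $\WeightOperator$-coinvariance, so this is bookkeeping rather than a gap, but it should be made explicit. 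Second, the completeness step --- that the constant functions together with this one propagated null vector generate \emph{all} relations --- is the actual content of FSV's Thm.\ 2.3.6; you name the strategy (induction on pole order) but do not carry it out. Since the paper itself outsources exactly this step to the same reference, that is a defensible place to stop, but be aware that it is where all the real work lives.
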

\begin{remark}[Truncation]
  If the truncation condition does not hold then the
  image of the second operator
  in the first line of
  \eqref{slTwoConformalBlocksAsQuotientOfSlTwoCoinvariants}
  is null
  (\cite[(13)]{FeiginSchechtmanVarchenko94}).
  While, therefore, the first line
    already subsumes the second,
  we have split up the statement
  to highlight that, more often than not,
  the conformal blocks
  are just plain $\slTwo$-coinvariants
  (cf. item (ii) in Prop. \ref{Degree1ConformalBlocksInTwistedCohomology} below).
\end{remark}

\begin{example}[Conformal blocks in terms of incoming representation vectors.]
Recalling that
$
\NumberOfProbeBranes
=
\tfrac{1}{2}(\IncomingWeight - \OutgoingWeight)
$, we have:

\noindent
{\bf(i)} For
$
{\bf
\NumberOfProbeBranes
=
0}$,
the space
of incoming representation vectors
of total weight $\OutgoingWeight$
is spanned by the the single vector
\begin{equation}
  \label{AbsoluteVacuumVector}
  \big\vert
    \HighestWeightVector_1
    ,
    \cdots
    ,
    \HighestWeightVector_{\NumberOfPunctures}
  \big\rangle
  \;\;
  :=
  \;\;
    \HighestWeightVector_1
    \otimes
    \cdots
    \otimes
    \HighestWeightVector_{\NumberOfPunctures}
  \;\;\;
  \in
  \;
  \IncomingHighestWeightReps(\OutgoingWeight)
  \,.
\end{equation}

\noindent
{\bf(ii)} For
$
{\bf
\NumberOfProbeBranes
=
1}
$,
the space
of incoming representation vectors
of total weight $\OutgoingWeight$
is spanned by the $\NumberOfPunctures$ elements
\begin{equation}
  \label{fOperatorInithPosition}
  \LoweringOperator_{i}
  \big\vert
    \HighestWeightVector_1,
    \cdots,
    \HighestWeightVector_{\NumberOfPunctures}
  \big\rangle
  \;\;
  :=
  \;\;
    \HighestWeightVector_1
    \otimes
    \cdots
    \otimes
    \HighestWeightVector_{i-1}
    \otimes
    (\LoweringOperator \cdot \HighestWeightVector_{i})
    \otimes
    \HighestWeightVector_{i + 1}
    \otimes
    \cdots
    \otimes
    \HighestWeightVector_{\NumberOfPunctures}
  \;\;\;
  \in
  \;\;
  \IncomingHighestWeightReps(\OutgoingWeight)
  \,,
  \;\;\;\;\;
  i \,\in\, \{1, \cdots, \NumberOfPunctures\}
  \,,
\end{equation}
which, as representatives of $\slTwo$-coinvariants,
are subject to one linear relation
(Prop. \ref{slTwoCoinvariantsInTermsOfIncomingRepresentationVectors}):
\begin{equation}
  \label{fCoinvariantForSlTwoRe}
  \underset{
    1 \leq i \leq \NumberOfPunctures
  }{\sum}
  \;
  \LoweringOperator_i
  \cdot
  \vert
    \HighestWeightVector_1
    ,
    \cdots
    ,
    \HighestWeightVector_{\NumberOfPunctures}
  \rangle
  \;=\;
  0
  \;\;\;
  \in
  \;
  \big(
    \HighestWeightIrrep{\weight_1}
    \otimes
    \cdots
    \otimes
    \HighestWeightIrrep{\weight_{\NumberOfPunctures+1}}
  \big)_{\slTwo}
  \,.
\end{equation}

\vspace{-2mm}
\noindent By Prop. \ref{ConformalBlocksAsQuotientsOfCoInvariantOfSLTwo},
this already characterizes the conformal blocks,
unless $\IncomingWeight \,=\, \ShiftedLevel$
as in \eqref{TruncationConditionForOneProbe},
i.e., $\ShiftedLevel - 1 - \OutgoingWeight = 1$,
in which case there is one further linear relation:
\begin{equation}
  \label{TheTruncationConditionForOneProbe}
  \underset{
    1 \leq i \leq \NumberOfPunctures
  }{\sum}
  \,
  z_i
  \cdot
  \LoweringOperator_i
  \vert
    \HighestWeightVector_1
    , \cdots ,
    \HighestWeightVector_{\NumberOfPunctures}
  \rangle
  \;=\;
  0
  \;\;\;
  \in
  \;
  \ConformalBlocks^1_{\slTwoAffine{\Level}}
  (\vec \weight, \vec z)
  \,.
\end{equation}

\noindent
{\bf(iii)} For
$
{\bf
\NumberOfProbeBranes
=
2
}$,
the space
of incoming representation vectors
of total weight $\OutgoingWeight$
is spanned by the $\NumberOfPunctures^2$ elements
\begin{equation}
  \label{fOperatorInithAndjthPosition}
  \LoweringOperator_{i}
  \LoweringOperator_{j}
  \big\vert
    \HighestWeightVector_1
    ,
    \cdots
    ,
    \HighestWeightVector_{\NumberOfPunctures}
  \big\rangle
  \;\;\;
  \in
  \;\;
  \IncomingHighestWeightReps(\OutgoingWeight)
  \,,
  \;\;\;\;\;
  i, j \,\in\, \{1, \cdots, \NumberOfPunctures\}
  \,,
\end{equation}
which, as representatives of
$\slTwo$-coinvariants, are
subject to $\NumberOfPunctures$
relations:
\vspace{-2mm}
$$
  \sum_i \LoweringOperator_i
  \Big(
  \LoweringOperator_j
  \big\vert
    \HighestWeightVector_1
    ,
    \cdots
    ,
    \HighestWeightVector_{\NumberOfPunctures}
  \big\rangle
  \Big)
  \;\;
  =
  \;\;
  0
  \;\;\;\;
  \in
  \;
  \big(
    \HighestWeightIrrep{\weight_1}
    \otimes
    \cdots
    \otimes
    \HighestWeightIrrep{\weight_{\NumberOfPunctures + 1}}
  \big)_{\slTwo}
  \,,
  \;\;\;\;\;\;
  j \,\in\, \{1, \cdots, \NumberOfPunctures\}
  \,.
$$
\end{example}

The key observation now is that this Lie-algebraic definition of $\slTwoAffine{\Level}$-conformal blocks has a re-formulation purely in terms of the twisted cohomology of configurations of
$\NumberOfProbeBranes$
points in the punctured plane. For $\NumberOfProbeBranes = 1$, this is the following statement:

\begin{proposition}
[\phantom{.}$\slTwoAffine{\Level}$-Conformal blocks
in degree 1
as twisted holomorphic 1-cohomology
of punctured plane]
\label{Degree1ConformalBlocksInTwistedCohomology}
$\,$

\noindent
{\bf (i)} For integrable weights
$\vec \weight \,\in\, \{0,\cdots, \Level \}^\NumberOfPunctures$ \eqref{IntegrabilityConditionOnWeights},
the $[\FlatConnectionForm]$-twisted holomorphic de Rham cohomology \eqref{TwistedDeRhamIsomorphicToEquivariantDeRhamOnCoveringSpace}
of $\ComplexPlane \setminus \{z_1, \cdots, z_N\}$
with
$
  \FlatConnectionForm
  \,=\,
  \FlatConnectionForm(\vec w, \vec z)$
\eqref{TheMasterForm}
is concentrated in degree 1,
where it receives a natural transformation from the
$\widehat{\mathfrak{sl}(2)}_{\Level}$-conformal blocks
\eqref{slTwoConformalBlocksAsQuotientOfSlTwoCoinvariants}
in degree 1:
\vspace{-3mm}
\begin{equation}
  \label{IdentifyingDegreeOneConformalBlocksWithTwistedCohomology}
  \hspace{-1cm}
  \begin{tikzcd}[row sep=-2pt]
  \ConformalBlocks^1_{\slTwoAffine{\Level}}
    ( \vec{\weight},  \vec{z})
    \ar[rr]
    &&
    H^{1}
    \Big(
      \HolomorphicForms{\big}{\bullet}
      {
        \ComplexPlane
        \setminus
        \{ \vec z \}
      }
      ,\,
      \DolbeaultDifferential
      +
      \FlatConnectionForm
      ( \vec{\weight},  \vec{z})
      \wedge
    \Big)
    \\
    \hspace{37pt}
   \scalebox{0.86}{$
   f_i
   \vert
     \HighestWeightVector_{1}
     ,\cdots,
     \HighestWeightVector_{\NumberOfPunctures}
   \rangle
   \underset{
     \raisebox{-2pt}{
       \scalebox{.7}{
         \eqref{fOperatorInithPosition}
       }
     }
   }{
     \;\;
     =
     \;\;}
   \big[
      \HighestWeightVector_1
      ,\,
      \cdots
      ,\,
      %\HighestWeightVector_{i-1}
      %,\,
      ( f \cdot \HighestWeightVector_{{}_I} )
      ,\,
      %\HighestWeightVector_{i+1},\,
      \cdots
      ,\,
      \HighestWeightVector_{\NumberOfPunctures}
    \big]
    $}
    &
      \overset{
        \mathclap{
        \raisebox{2pt}{
          \tiny
          \color{greenii}
          \bf
          generators
        }
        }
      }{
        \longmapsto
      }
    &
    \scalebox{0.86}{$
      \big[
        -
        \frac{\weight_{{}_I}}{\ShiftedLevel}
        \frac{\Differential z}{(z - z_{{}_I})}
      \big]
      \hspace{60pt}
    $}
    \\
    \hspace{-56pt}
    \scalebox{0.86}{$
    { \displaystyle \sum_i}
      \,
      \LoweringOperator_{{}_I}
      \vert
        \HighestWeightVector_1
        ,\,
        \cdots
        ,\,
        \HighestWeightVector_{\NumberOfPunctures}
      \rangle
      \underset{
        \raisebox{1pt}{
          \scalebox{.7}{
            \eqref{fCoinvariantForSlTwoRe}
          }
        }
      }{
        \;=\;
      }
      0
    $}
    &
     \underset{
       \mathclap{
        \raisebox{-5pt}{
          \tiny
          \color{greenii}
          \bf
          relations
        }
        }
     }{
       \longmapsto
     }
    &
   \scalebox{0.86}{$
   {\displaystyle   \sum_I }
      \big[
        -
        \frac{\weight_{{}_I}}{\ShiftedLevel}
        \frac{\Differential z}{(z - z_{{}_I})}
      \big]
     \;=\;
     \big[
       (
         \DolbeaultDifferential
         +
         \FlatConnectionForm \wedge
       )
       1
     \big]
    $}
    \\
    \scalebox{.86}{$
     {\displaystyle \sum_i }
      z_{{}_I} \cdot \LoweringOperator_{{}_I}
      \vert
        \HighestWeightVector_1
        , \cdots ,
        \HighestWeightVector_{\NumberOfPunctures}
      \rangle
      \underset{
        \mathclap{
        \mbox{
          \tiny
          \rm
          \eqref{TheTruncationConditionForOneProbe}
        }
        }
      }{
        \;=\;
      }
      0
      \;\;
      \;
      \mbox{\rm (when $\IncomingWeight = \ShiftedLevel$)}
      \hspace{4pt}
    $}
    &\longmapsto&
    \scalebox{.86}{$
  { \displaystyle  \sum_I}
      \big[
        -
        z_{{}_I}
        \frac{\weight_{{}_I}}{\ShiftedLevel}
        \frac{\Differential z}{(z - z_{{}_I})}
      \big]
      \;=\;
      \big[
        (\DolbeaultDifferential
          +
        \FlatConnectionForm\wedge)
        z
      \big]
      \,.
    $}
  \end{tikzcd}
\end{equation}

\vspace{-3mm}
\noindent {\bf (ii)} This transformation is an {\it isomorphism},
at least when $\IncomingWeight < \ShiftedLevel$
(i.e., away from the truncation condition of Ex. \ref{TruncationConditionForSingleProbeBrane}).
\footnote{Even at the truncation condition $\IncomingWeight = \ShiftedLevel$ the map \eqref{IdentifyingDegreeOneConformalBlocksWithTwistedCohomology}
is claimed to be an injection in \cite[Rem. 3.4.3]{FeiginSchechtmanVarchenko94}, repeated in \cite[p. 107]{Kohno12};  but we are not aware of a proof.}
\end{proposition}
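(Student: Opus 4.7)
The plan is to combine the ``Orlik--Solomon-type'' presentation of the target twisted cohomology from Prop.~\ref{OSBasisForTwistedCohomologyOfPuncturedPlane} with the Lie-algebraic presentation of the source conformal blocks from Prop.~\ref{ConformalBlocksAsQuotientsOfCoInvariantOfSLTwo} (rewritten via Lem.~\ref{slTwoCoinvariantsInTermsOfIncomingRepresentationVectors}), and then to verify that the candidate map $\Phi \colon \LoweringOperator_I \vert \HighestWeightVector_1,\ldots,\HighestWeightVector_{\NumberOfPunctures}\rangle \mapsto \bigl[-\tfrac{\weight_I}{\ShiftedLevel}\,\Differential\log(z-z_I)\bigr]$ on generators identifies the two sets of relations, with the isomorphism claim reducing to a dimension count.

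\textbf{Concentration in degree 1.} Since $\Sigma^2$ is Stein of complex dimension $1$, Rem.~\ref{TwistedCohomologyOverSteinManifolds} reduces the target to a two-term holomorphic de Rham complex, so cohomology vanishes automatically in degrees $\geq 2$. For degree $0$, any nonzero solution of $(\DolbeaultDifferential + \FlatConnectionForm \wedge) f = 0$ on $\Sigma^2$ would be a single-valued holomorphic $f$ with $\DolbeaultDifferential \log f = -\FlatConnectionForm$, forcing the monodromies of $\FlatConnectionForm$ to lie in $2\pi\ImaginaryUnit\Integers$; but for a nontrivial integrable $\vec{\weight}$ the residues $-\weight_I/\ShiftedLevel$ of $\FlatConnectionForm$ are strictly fractional, ruling this out.

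\textbf{Matching generators and relations.} For $\NumberOfProbeBranes = 1$, the weight-$\OutgoingWeight$ subspace $\IncomingHighestWeightReps(\OutgoingWeight)$ is spanned by the $\NumberOfPunctures$ vectors $\LoweringOperator_I\vert\HighestWeightVector_1,\ldots,\HighestWeightVector_{\NumberOfPunctures}\rangle$, and the coinvariant quotient imposes the single relation $\sum_I \LoweringOperator_I\vert\cdots\rangle = 0$, plus, under truncation \eqref{TruncationConditionForOneProbe}, the additional relation $\sum_I z_I\LoweringOperator_I\vert\cdots\rangle = 0$. On the target side, Prop.~\ref{OSBasisForTwistedCohomologyOfPuncturedPlane} presents $H^1$ as the $\NumberOfPunctures$-dimensional span of the $\Differential\log(z-z_I)$ modulo the single relation $[\FlatConnectionForm(\vec{\weight},\ShiftedLevel)] = 0$, valid because $(\DolbeaultDifferential + \FlatConnectionForm\wedge)(1) = \FlatConnectionForm$ exhibits $\FlatConnectionForm$ as a coboundary. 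Under the truncation condition $\IncomingWeight/\ShiftedLevel = 1$, a short computation using the partial-fraction identity $z/(z-z_I) = 1 + z_I/(z-z_I)$ yields $(\DolbeaultDifferential + \FlatConnectionForm\wedge)(z) = -\sum_I \tfrac{\weight_I z_I}{\ShiftedLevel}\,\Differential\log(z-z_I)$, matching the image of the second source relation under $\Phi$; hence $\Phi$ is well-defined in both regimes.

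\textbf{Isomorphism and main obstacle.} For part (ii), in the non-truncated range $\IncomingWeight < \ShiftedLevel$ both presentations are free on $\NumberOfPunctures$ generators modulo one nontrivial linear relation of the same shape $\sum_I (\weight_I/\ShiftedLevel)\cdot(\text{$I$-th generator}) = 0$, so matching dimensions together with the manifest nonvanishing of $\Phi$ on each generator forces $\Phi$ to be a linear bijection. I expect the main obstacle not to be any single step but rather the book-keeping around degenerate configurations: when some $\weight_I$ vanishes, the vector $\LoweringOperator_I\vert\cdots\rangle$ is null in the source while $\Differential\log(z-z_I)$ survives in the target, so the clean bijection statement requires either interpreting such trivial punctures as decoupled or restricting to configurations with all $\weight_I \geq 1$. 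All remaining steps reduce to linear algebra once the two presentations have been aligned.
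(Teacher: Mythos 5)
Your proposal is correct and follows essentially the same route as the paper's proof: well-definedness is checked on the two relations exactly as in the paper (the truncation relation via the same partial-fraction computation exhibiting $-\sum_I \tfrac{\weight_I z_I}{\ShiftedLevel}\,\Differential\log(z-z_{{}_I})$ as the coboundary $(\DolbeaultDifferential+\FlatConnectionForm\wedge)z$ under $\IncomingWeight=\ShiftedLevel$), and the isomorphism in the range $\IncomingWeight<\ShiftedLevel$ is read off from the Orlik--Solomon/ESV presentation of Prop.~\ref{OSBasisForTwistedCohomologyOfPuncturedPlane}, which is precisely what the paper cites. Your two additions --- the explicit degree-$0$ and degree-$\geq 2$ vanishing, and the observation that punctures with $\weight_{{}_I}=0$ break the generator count (the source loses $\LoweringOperator_{{}_I}\vert\HighestWeightVector_1,\cdots,\HighestWeightVector_{\NumberOfPunctures}\rangle$ while the target keeps $[\Differential\log(z-z_{{}_I})]$) --- are both sound, and the latter correctly flags a degenerate case that the paper's statement and proof leave implicit.
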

\begin{proof}
That we have a linear map
is the special case
$\NumberOfProbeBranes = 1$
of
{\cite[Cor. 3.4.2]{FeiginSchechtmanVarchenko94}}\footnote{
  Our $\NumberOfProbeBranes$ is their $q$.
  Our $\NumberOfPunctures$ is their $n$.
  Our $\weight_{{}_I}$ is their $m_i$.
  Our $\FlatConnectionForm$ is their $\alpha$.
}; we spell it out:
Noticing that all 1-forms here are twisted-closed already by degree reasons, to have a linear map as shown in \eqref{IdentifyingDegreeOneConformalBlocksWithTwistedCohomology} it suffices to see that the relations on the left are respected.
For the first relation in
\eqref{IdentifyingDegreeOneConformalBlocksWithTwistedCohomology}
this is immediate from \eqref{TheMasterForm}, as shown above.
To see that the more subtle second
relation is respected,
we may dramatically shortcut the
general argument in
\cite[\S 3.5]{FeiginSchechtmanVarchenko94}
as follows (cf. \cite[Ex. 12.4.6 (iii)]{Varchenko95}).
Using the assumption from \eqref{TruncationConditionForOneProbe} that
$\IncomingWeight = \ShiftedLevel$,
i.e., that
\begin{equation}
  \label{TruncationConditionForOneProbeAsTrivializationOfOutgoingPhase}
  \underset{
  \scalebox{0.6}{$  1 \leq I \leq \NumberOfPunctures $}
  }{\sum}
  \,
  \frac{\weight_{{}_I}}{\ShiftedLevel}
  \;=\;
  1
  \,,
\end{equation}

\vspace{-2mm}
\noindent we directly compute:
$$
  \def\arraystretch{1.5}
  \begin{array}{lll}
    (\DolbeaultDifferential
      +
     \FlatConnectionForm
     \wedge
    )z
    &
    \;=\;
    {\displaystyle   \sum_I }
    - z
    \frac{\weight_{{}_I}}{\ShiftedLevel}
    \frac{\Differential z}{(z-z_{{}_I})}
    \;+\;
    \Differential z
    &
    \proofstep{
      by \eqref{TheMasterForm}
    }
    \\
    & \;=\;
  {\displaystyle   \sum_I }
    - z
    \frac{\weight_{{}_I}}{\ShiftedLevel}
    \frac{\Differential z}{(z-z_{{}_I})}
    \;+\;
   {\displaystyle   \sum_i }
    \frac{\weight_{{}_I}}{\ShiftedLevel}
    \Differential z
    &
    \proofstep{
      by
      \eqref{TruncationConditionForOneProbeAsTrivializationOfOutgoingPhase}
    }
    \\
    & \;=\;
   {\displaystyle   \sum_I }
    - z
    \frac{\weight_{{}_I}}{\ShiftedLevel}
    \frac{\Differential z}{(z-z_{{}_I})}
    \;+\;
    {\displaystyle   \sum_I }
    (z - z_{{}_I})
    \frac{\weight_i}{\ShiftedLevel}
    \frac{\Differential z}
    { (z - z_{{}_I}) }
    \\
    & \;=\;
    {\displaystyle   \sum_I }
    - z_{{}_I}
    \frac{\weight_{{}_I}}{\ShiftedLevel}
    \frac{\Differential z}{(z-z_{{}_I})}
    \,.
  \end{array}
$$
This shows that
\eqref{IdentifyingDegreeOneConformalBlocksWithTwistedCohomology}
does define a natural linear transformation for all values of $\IncomingWeight$.
Finally, that this linear map is an isomorphism when $\IncomingWeight < \ShiftedLevel$ follows by Prop. \ref{OSBasisForTwistedCohomologyOfPuncturedPlane}.
\end{proof}

This result now has a natural re-formulation:

\begin{proposition}
[\phantom{.}$\slTwoAffine{\Level}$-Conformal blocks as secondary Chern classes in \TED-K-theory of
Riemann surface inside $\mathbb{A}$-type singularities]
\label{Degree1ConformalBlockInTEdKTheory}
For
integrable weights
$\vec \weight \,\in\, \{0,\cdots, \Level \}^\NumberOfPunctures$ \eqref{IntegrabilityConditionOnWeights},
the secondary
\TED-K-theory from Prop. \ref{TheBareBProfiniteIntegersTwistOfComplexRationalizedAEquivariantKTheory}
naturally receives
the space of
$\widehat{\mathfrak{sl}(2)}_{\Level}$-conformal blocks
at level $\Level = -2 + \ShiftedLevel$
in degree $n = 1$:
\begin{equation}
  \label{DegreeOneConformalBlocksInsideTEdKTheory}
  \begin{tikzcd}[column sep=7pt]
    \overset{
      \mathclap{
      \raisebox{2pt}{
        \tiny
        \color{darkblue}
        \bf
        \begin{tabular}{c}
          coformal blocks of the
          $\slTwo$-WZW model
          \\
          on the sphere,
          at level $\ShiftedLevel-2$,
          in degree 1
        \end{tabular}
      }
      }
    }{
    \ConformalBlocks^1
      _{\slTwoAffine{-2 + \ShiftedLevel}}
      (\vec \weight, \ShiftedLevel)
    }
    \ar[
      rr,
      "{
        \mbox{
          \tiny
          \color{greenii}
          \bf
          \def\arraystretch{.9}
          \begin{tabular}{c}
            natural
            \\
            transformation
          \end{tabular}
        }
      }"{swap, yshift=-2pt}
    ]
    &&
    H^{1}
    \Big(
      \HolomorphicForms{\big}{\bullet}
      {
        \ComplexPlane
        \setminus
        \{ \vec z \}
      }
      ,\,
      \DolbeaultDifferential
      +
      \FlatConnectionForm
      ( \vec{\weight},  \vec{z})
      \wedge
    \Big)
    \ar[
      rr,
      "{
        \mbox{
          \tiny
          \color{greenii}
          \bf
          \def\arraystretch{.9}
          \begin{tabular}{c}
            secondary
            \\
            Chern character
          \end{tabular}
        }
      }"{swap, yshift=-2pt}
    ]
    &&
    \overset{
      \mathclap{
      \raisebox{3pt}{
        \tiny
        \color{darkblue}
        \bf
        \begin{tabular}{c}
          secondary
          \TED-K-theory of
          punctured sphere
          inside $\mathbb{A}_{\ShiftedLevel}$-orbi-singularity
        \end{tabular}
      }
      }
    }{
    \mathrm{KU}
      ^{
        0
          +
        [
          \FlatConnectionForm
          (\vec \weight, \ShiftedLevel)
        ]
      }
      _{\diff}
    \Big(
      (
        \RiemannSphere
        \setminus
        \{\vec z, \infty\}
      )
      \times
      \HomotopyQuotient
        { \ast }
        { \CyclicGroup{\kappa} }
    \Big)
    }
    \,.
  \end{tikzcd}
\end{equation}
\end{proposition}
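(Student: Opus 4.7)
The plan is to obtain the claimed natural transformation as a three-step composite, in which all the real work has already been carried out in Prop.~\ref{Degree1ConformalBlocksInTwistedCohomology} (the FSV theorem) and Prop.~\ref{TheBareBProfiniteIntegersTwistOfComplexRationalizedAEquivariantKTheory} (the differential cohomology hexagon for TED-K-theory); the remaining content is essentially the bookkeeping of matching discrete data on the two sides.

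In the first step, I invoke Prop.~\ref{Degree1ConformalBlocksInTwistedCohomology} together with the biholomorphism $\RiemannSphere \setminus \{\vec z, \infty\} \simeq \ComplexPlane \setminus \{\vec z\}$ to obtain the natural linear map from the space of conformal blocks into the degree-$1$ cohomology of the $\FlatConnectionForm(\vec \weight, \ShiftedLevel)$-twisted holomorphic de Rham complex of the $\NumberOfPunctures$-punctured plane. In the second step, since the punctured plane is a complex Stein manifold, Rem.~\ref{TwistedCohomologyOverSteinManifolds} identifies this target with the twisted complex-valued smooth de Rham cohomology, hence with the $d = 0$, $\Denominator = 1$ summand of the leftmost entry of the hexagon of Prop.~\ref{TheBareBProfiniteIntegersTwistOfComplexRationalizedAEquivariantKTheory}. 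In the third step, applying the secondary twisted equivariant Chern character of that hexagon sends this group into $\mathrm{KU}^{0 + [\FlatConnectionForm(\vec \weight, \ShiftedLevel)]}_{\diff}\big((\ComplexPlane \setminus \{\vec z\}) \times \HomotopyQuotient{\ast}{\CyclicGroup{\kappa}}\big)$, which is precisely the right-hand group of \eqref{DegreeOneConformalBlocksInsideTEdKTheory}. Naturality in $(\vec z, \vec \weight)$ is inherited directly from the naturality statements built into the two input propositions.

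The only genuine point to check is that the twist classes used on the FSV and K-theory sides agree. On the FSV side, $\FlatConnectionForm(\vec \weight, \ShiftedLevel)$ has holonomy $\exp(-2\pi \ImaginaryUnit\, \weight_{{}_I}/\ShiftedLevel)$ around the $I$th puncture, while on the K-theory side the $B\CharacterGroup{\CyclicGroup{\ShiftedLevel}}$-twists over $\ComplexPlane \setminus \{\vec z\}$ are classified, via \eqref{ClassofTheFlatConnectionOneForm} and \eqref{AbelianizationOfFundamentalGroupOfSigma}, by $\NumberOfPunctures$-tuples of characters of $\CyclicGroup{\ShiftedLevel}$. Under $\kappa = \ShiftedLevel$, the integrability range $\weight_{{}_I} \in \{0, \ldots, \ShiftedLevel - 2\}$ places each of these holonomies inside $\CyclicGroup{\ShiftedLevel} \subset \CircleGroup$, and the assignment $\weight_{{}_I} \mapsto [\weight_{{}_I} \,\mathrm{mod}\, \ShiftedLevel]$ is exactly the reduction described in \eqref{WeightsMatterModuloShiftedLevel} at the shifted level. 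With this identification in place the composite is well-defined and furnishes the claimed natural transformation; no further computation is required, which is why one expects the author's own proof to amount to little more than a citation of Prop.~\ref{Degree1ConformalBlocksInTwistedCohomology} and Prop.~\ref{TheBareBProfiniteIntegersTwistOfComplexRationalizedAEquivariantKTheory}. The main conceptual obstacle is not analytic but notational: keeping straight the distinction between the Lie-theoretic level $\Level$, the shift $\ShiftedLevel = \Level + 2$, the order $\kappa$ of the isotropy, and the fractional level denominator $\Denominator$, all of which collapse into a single parameter in the specialisation relevant for this degree-$1$, integrable-weight case.
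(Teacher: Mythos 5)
Your proof is correct and follows exactly the route the paper takes: the paper's own proof is the one-line observation that, in view of Rem.~\ref{TwistedCohomologyOverSteinManifolds}, the statement follows by combining Prop.~\ref{Degree1ConformalBlocksInTwistedCohomology} with Prop.~\ref{TheBareBProfiniteIntegersTwistOfComplexRationalizedAEquivariantKTheory}. Your additional check that the holonomy data of $\FlatConnectionForm(\vec\weight,\ShiftedLevel)$ matches the $B\CharacterGroup{\CyclicGroup{\ShiftedLevel}}$-twist classification is a sensible elaboration of what the paper leaves implicit.
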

\begin{proof}
In view of Rem. \ref{TwistedCohomologyOverSteinManifolds},
this is the result of combining
Prop. \ref{Degree1ConformalBlocksInTwistedCohomology}
 with
 Prop. \ref{TheBareBProfiniteIntegersTwistOfComplexRationalizedAEquivariantKTheory}.
\end{proof}

We proceed to the generalization of this
situation away from the special case where conformal blocks are just in degree $1$.

\medskip

\noindent
{\bf The configuration spaces.}

\vspace{-3mm}
\begin{itemize}[leftmargin=*]
\setlength\itemsep{-1pt}

\item
The
complex manifold of configurations of $\NumberOfProbeBranes$ ordered points in some complex manifold $\ComplexManifold$
is denoted
(following \cite[\S 2.2]{SS19ConfigurationSpaces})
\vspace{-1mm}
\begin{equation}
  \label{ConfigurationSpaceOfPointsInThePlane}
  \ConfigurationSpace
    {\NumberOfProbeBranes}
  \big(
    \ComplexManifold
  \big)
  \;:=\;
  \Big\{
    z^1,
    \cdots,
    z^{\NumberOfProbeBranes}
    \,\in\,
    \ComplexManifold
    \;\big\vert\;
    \underset{i < j}{\forall} \;\;
    z^i \neq z^j
  \Big\}.
\end{equation}

\vspace{-3mm}
\noindent The punctured plane $\Sigma^2 \,=\, \ComplexPlane \setminus \{z_1, \cdots z_{\NumberOfPunctures}\}$
itself may be understood as one such configuration of $\NumberOfPunctures$ points in the complex plane.
We are interested in configurations of
(further)
$\NumberOfProbeBranes$ points inside this $\NumberOfPunctures$-punctured plane, which is usefully understood as the following fiber product of configuration spaces (the top right entry denotes the singleton set containing the given configuration of punctures):
\vspace{-4mm}
\begin{equation}
  \label{ConfigurationSpaceAsFiberProduct}
  \begin{tikzcd}[row sep=18pt]
    \overset{
      \mathclap{
      \raisebox{2pt}{
        \tiny
        \color{darkblue}
        \bf
        \def\arraystretch{.9}
        \begin{tabular}{c}
          configurations of
          $\NumberOfProbeBranes$
          points
          in the
          $\NumberOfPunctures$-punctured
          plane
        \end{tabular}
      }
      }
    }{
    \ConfigurationSpace
      {\NumberOfProbeBranes}
    \big(
      \ComplexPlane
      \setminus
      \{z_1, \cdots, z_{\NumberOfPunctures}\}
    \big)
    }
    \ar[d]
    \ar[rr]
    \ar[
      drr,
      phantom,
      "{\mbox{\tiny\rm(pb)}}"{pos=.35}
    ]
    &&
    \big\{
      \ComplexPlane
        \setminus
      \{z_1, \cdots, z_{\NumberOfPunctures}\}
    \big\}
    \ar[
      d,
      "{
        \mbox{
          \tiny
          \color{greenii}
          \bf
          \def\arraystretch{.9}
          \begin{tabular}{c}
           pick
           \\
           the configuration
           \\
           of puncturres
          \end{tabular}
        }
      }"{xshift=-8pt}
    ]
    \\
    \ConfigurationSpace
      {
        \NumberOfPunctures
        +
        \NumberOfProbeBranes
      }
    \big(
      \ComplexPlane
    \big)
    \ar[
      rr,
      "{
        \mbox{
          \tiny
          \color{greenii}
          \bf
          forget the last
          $\NumberOfProbeBranes$
          points
        }
      }"{swap}
    ]
    &&
    \ConfigurationSpace
      { \NumberOfPunctures }
    \big(
      \ComplexPlane
    \big)
  \end{tikzcd}
\end{equation}

\vspace{-3mm}
The previous discussion may be understood as concerning the degenerate case of configurations of a single point:
$$
  \underset{
    \;\,\scalebox{.7}{$\{1\}$}
  }{
    \mathrm{Conf}
  }
  \big(
    \ComplexPlane
    \setminus
    \{z_1, \cdots, z_{\NumberOfPunctures}\}
  \big)
  \;\;\;\simeq\;\;\;
  \ComplexPlane
  \setminus
  \{z_1, \cdots, z_{\NumberOfPunctures}\}
  \,.
$$

\vspace{-2mm}
In this vein, much of the previous discussion has evident generalizations, such as:

\item
The flat holomorphic connection 1-form
\eqref{TheMasterForm} is generalized to the
configuration space by setting:
\begin{equation}
  \label{TheMasterFormForSeveralProbes}
   \FlatConnectionForm
   (
     \vec \weight
     ,
     \kappa
   )
   \;:=\;
  \underset{
    \scalebox{0.6}{$
      \def\arraystretch{.9}
      \begin{array}{c}
      1
        \,\leq\,
      I
        \,\leq\,
      \NumberOfPunctures
      \\
      1
        \,\leq\,
      i
        \,\leq\,
      \NumberOfProbeBranes
      \end{array}
    $}
  }{\sum}
  -
  \frac
    {\weight_{{}_{I}}}
    {\kappa}
  \frac{
    \Differential z
  }{
    (
      z^i - z_{{}_I}
    )
  }
  \;+\;
  \underset{
    \scalebox{0.6}{$
      1
        \leq
      i
        <
      j
        \leq
      \NumberOfProbeBranes
    $}
  }{\sum}
  \frac{2}{\kappa}
  \frac
    { \Differential z }
    {
      (
        z^i
        -
        z^j
      )
    }
  \;\;\;\;\;
  \in
  \;
  \HolomorphicForms{\Big}{1}
  {
    \ConfigurationSpace
      {\NumberOfProbeBranes}
    \big(
      \ComplexPlane
      \setminus
      \{ z_1, \cdots, z_N \}
    \big)
  }
  \,.
\end{equation}

\vspace{-2mm}
\item Accordingly, the master function \eqref{TheMasterFunction} is generalized to the
following holomorphic function on the universal cover of $\ConfigurationSpace{\NumberOfProbeBranes}\big(\ComplexPlane \setminus \{1, \cdots, z_{\NumberOfPunctures}\}\big)$:
\vspace{-3mm}
\begin{equation}
  \label{TheMasterFunction2}
  \MasterFunction(
    \hat{z}^1,
    \cdots,
    \hat{z}^{\, \NumberOfProbeBranes}
    \,
  )
  \;:=\;
  \MasterFunction
  (\vec \weight, \kappa)
  (
    \hat{z}^1,
    \cdots,
    \hat{z}^{\, \NumberOfProbeBranes}
  )
  \;:=\quad
  \underset{
    \mathclap{
     \scalebox{0.6}{$
       \def\arraystretch{.9}
       \begin{array}{l}
       1
         \leq
       I
         \leq
       \NumberOfPunctures
       \\
       1
         \leq
       i
         \leq
       \NumberOfProbeBranes
       \end{array}
    $}
    }
  }{\prod}
  \;\;\;
  \big(
    \hat{z}^{\,i} - z_{{}_I}
  \big)^{ -\weight_{{}_I}/\kappa }
  \quad
  \underset{
    \mathclap{
     \scalebox{0.6}{$
       \def\arraystretch{.9}
       \begin{array}{l}
       1
         \leq
       i
         <
       j
         \leq
       \NumberOfPunctures
       \end{array}
    $}
    }
  }{\prod}
  \quad
  \big(
    \hat{z}^{\, i} - \hat{z}^j
  \big)^{ 2/\kappa }
  \,.
\end{equation}

\end{itemize}

\vspace{-.2cm}

Now the general form of Prop. \ref{Degree1ConformalBlocksInTwistedCohomology} is:
\begin{proposition}
[\phantom{.}$\slTwoAffine{\Level}$-Conformal blocks
as twisted holomorphic cohomology
of configuration space of punctured plane
{\cite[Cor. 3.4.2, Rem. 3.4.3]{FeiginSchechtmanVarchenko94}}]
\label{ConformalBlocksInTwistedCohomology}
For
$\Level \,\in\, \NaturalNumbers$
and
integrable weights
$\vec \weight \,\in\, \{0,\cdots, \Level \}^\NumberOfPunctures$
\eqref{IntegrabilityConditionOnWeights},
the $[\FlatConnectionForm(\vec w, \ShiftedLevel)]$-twisted
holomorphic cohomology
of the configuration space
$\ConfigurationSpace{\NumberOfProbeBranes}
\big(\ComplexPlane \setminus \{z_1, \cdots, z_N\}\big)$
\eqref{ConfigurationSpaceAsFiberProduct}
is concentrated in degree
$\NumberOfProbeBranes$,
where it naturally contains
the space of
degree=$\NumberOfProbeBranes$
$\widehat{\mathfrak{sl}(2)}_{\Level}$-conformal blocks
\eqref{slTwoConformalBlocksAsQuotientOfSlTwoCoinvariants}:
\vspace{-3mm}
\begin{equation}
  \label{IdentifyingConformalBlocksWithTwistedCohomologyForGeneralNumberOfProbes}
  \begin{tikzcd}[column sep=10pt, row sep=-2pt]
    \overset{
    }{
      \ConformalBlocks
        ^{\NumberOfProbeBranes}
        _{\slTwoAffine{\Level}}
      (\vec \weight, \vec z)
    }
    \;\simeq\;
    \IncomingHighestWeightReps
      (\IncomingWeight - 2\NumberOfProbeBranes)
    \big/(\cdots)
    \ar[rr, hook]
    &&
    H^{\NumberOfProbeBranes}
    \bigg(
      \HolomorphicForms{\Big}{\bullet}
      {
        \ConfigurationSpace
          {\NumberOfProbeBranes}
        \big(
          \ComplexPlane
          \setminus
          \{ \vec z \}
        \big)
      }
      ,\,
      \DolbeaultDifferential
      +
      \FlatConnectionForm
      (\vec \weight, \ShiftedLevel)
      \wedge
    \bigg)
    \,.
    \\
    \mathllap{
      \mbox{ e.g.}
      \hspace{10pt}
    }
    \hspace{10pt}
   \scalebox{0.86}{$
   f_{{}_I}
   f_{{}_J}
   \vert
     \HighestWeightVector_{1}
     \cdots,
     \HighestWeightVector_{\NumberOfPunctures}
   \rangle
   \underset{
     \raisebox{-2pt}{
       \scalebox{.7}{
         \eqref{fOperatorInithAndjthPosition}
       }
     }
   }{
     \;\;
     =
     \;\;}
   \big[
      \cdots
      ,\,
      ( f \cdot \HighestWeightVector_{{}_I} )
      ,\,
      \cdots
      ,\,
      ( f \cdot \HighestWeightVector_{{}_J} )
      ,\,
      \cdots
    \big]
    $}
    &
      \overset{
        \mathclap{
        \raisebox{2pt}{
          \tiny
          \color{greenii}
          \bf
          generators
        }
        }
      }{
        \longmapsto
      }
    &
    \scalebox{0.86}{$
      \Big[
        \frac{\weight_{{}_I}}{\ShiftedLevel}
        \frac{\Differential z}{(z^1 - z_{{}_I})}
        \wedge
        \frac{\weight_{{}_J}}{\ShiftedLevel}
        \frac{\Differential z}{(z^2 - z_{{}_J})}
      \Big]
      \hspace{30pt}
    $}
  \end{tikzcd}
\end{equation}
\end{proposition}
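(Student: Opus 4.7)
The plan is to extend the natural transformation from Prop. \ref{Degree1ConformalBlocksInTwistedCohomology} to general degree $\NumberOfProbeBranes$ by sending a representative vector $\LoweringOperator_{I_1} \cdots \LoweringOperator_{I_{\NumberOfProbeBranes}} \big\vert \HighestWeightVector_1, \ldots, \HighestWeightVector_{\NumberOfPunctures}\big\rangle$ of $\IncomingHighestWeightReps(\OutgoingWeight)$ to an antisymmetrized wedge product of the hypergeometric $1$-forms $-\frac{\weight_{I_k}}{\ShiftedLevel} \frac{\mathrm{d} z^k}{z^k - z_{I_k}}$, summed over assignments of probe coordinates $z^k$ to lowering operators (the precise combinatorial formula generalizing \eqref{IdentifyingDegreeOneConformalBlocksWithTwistedCohomology} is spelled out in \cite[(17)]{FeiginSchechtmanVarchenko94}). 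These wedge products lie in top form-degree, hence are automatically twisted-closed once the complex is concentrated in degree $\NumberOfProbeBranes$.

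Well-definedness on the quotient by $\mathrm{im}(\sum_i \LoweringOperator_i)$ follows by the same exactness argument as in the proof of Prop. \ref{Degree1ConformalBlocksInTwistedCohomology}, applied fiberwise in each probe slot: the key ingredient is the residue identity for the 1-form \eqref{TheMasterFormForSeveralProbes} which expresses $\sum_I \frac{\weight_I}{\ShiftedLevel}\frac{\mathrm{d} z^k}{z^k - z_{I}}$, together with the probe-probe corrections $\frac{2}{\ShiftedLevel}\frac{\mathrm{d} z^k}{z^k - z^\ell}$, as the twisted coboundary of the constant form $1$ in the $k$-th probe factor. Under the truncation condition \eqref{TruncationConditionFormula}, the extra relation indexed by powers of $\sum_i z_i \LoweringOperator_i$ is handled analogously after multiplying the identity above by the holomorphic coordinates $z^k$, in direct generalization of the computation at the end of the proof of Prop. \ref{Degree1ConformalBlocksInTwistedCohomology}.

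Concentration of the twisted cohomology in degree $\NumberOfProbeBranes$ is then obtained from the Esnault--Schechtman--Viehweg vanishing theorem \cite{ESV92}, applied to the discriminant hyperplane arrangement defining $\ConfigurationSpace{\NumberOfProbeBranes}(\ComplexPlane \setminus \{\vec z\})$: the integrability condition \eqref{IntegrabilityConditionOnWeights} keeps the monodromy residues $-\weight_I/\ShiftedLevel$ at the puncture hyperplanes $\{z^k = z_I\}$ inside $(-1,0]$, while the big-diagonal residues $2/\ShiftedLevel$ lie in $(0,1)$; together these satisfy the sufficient condition ``{\bf (Mon)}'' of \cite{ESV92} along every boundary stratum of a normal-crossings compactification, so the logarithmic twisted de Rham complex computes the cohomology and Orlik--Solomon type reduction concentrates it in top degree. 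This simultaneously generalizes Prop. \ref{OSBasisForTwistedCohomologyOfPuncturedPlane} from the one-probe case to arbitrary $\NumberOfProbeBranes$.

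The main obstacle will be showing that the resulting map is injective on the full quotient \eqref{slTwoConformalBlocksAsQuotientOfSlTwoCoinvariants}, i.e., that no hypergeometric form representing a nonzero conformal block is already a twisted coboundary. Following \cite[\S 3.5]{FeiginSchechtmanVarchenko94}, my strategy is to pair the cohomology against the Schechtman--Varchenko hypergeometric cycles of \cite{SchechtmanVarchenko89,SchechtmanVarchenko90,SchechtmanVarchenko91} and reduce non-degeneracy of the resulting period pairing to non-vanishing of Selberg-type integrals. Equivalently, the transformation can be recognized as the hypergeometric solution map of the Knizhnik--Zamolodchikov equation restricted to the weight-$\OutgoingWeight$ subspace of $\IncomingHighestWeightReps$, whose injectivity on integrable weights follows from semisimplicity of the KZ monodromy representation (Kohno--Drinfeld) together with the standard dimension count matching conformal blocks with the space of KZ solutions at integrable level.
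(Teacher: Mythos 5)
The paper does not prove this proposition: it is imported as a citation to [FSV94, Cor.\ 3.4.2, Rem.\ 3.4.3], and the surrounding text describes it explicitly as ``a digest'' of that reference. The only case argued in any detail is $\NumberOfProbeBranes = 1$ (Prop.\ \ref{Degree1ConformalBlocksInTwistedCohomology}), where the paper ``dramatically shortcuts'' the general argument of [FSV94, \S 3.5] by a direct residue computation, with the concentration statement supplied by Prop.\ \ref{OSBasisForTwistedCohomologyOfPuncturedPlane} under the explicit non-resonance hypothesis $\IncomingWeight/\ShiftedLevel \notin \NaturalNumbers_+$. Your outline follows the same overall strategy as the cited source, so it is not a different route --- but as a self-contained proof it has three gaps worth naming.

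First, well-definedness for $\NumberOfProbeBranes \geq 2$ does not reduce to the one-probe computation ``applied fiberwise in each probe slot'': the relations defining the coinvariants involve the diagonal operator $\sum_i \LoweringOperator_i$ and, under truncation, powers of $\sum_i z_i \LoweringOperator_i$, and verifying that their images become twisted-exact requires identities that couple the probe coordinates through the terms $\tfrac{2}{\ShiftedLevel}\tfrac{\Differential z}{(z^i - z^j)}$ in \eqref{TheMasterFormForSeveralProbes}; this coupling is exactly the content of [FSV94, \S 3.5] that the paper states it can shortcut only when $\NumberOfProbeBranes = 1$. Second, your concentration argument via [ESV92] is stated too loosely: condition (Mon) is not that the individual residues lie in certain intervals, but that the sums of exponents over all dense edges of a compactified arrangement --- including edges at infinity --- avoid $\NaturalNumbers_{\geq 0}$; the paper's one-probe version makes the needed hypothesis explicit and warns that extracting the statement from [ESV92] requires normalizing $\FlatConnectionForm$ as in \eqref{FlatConnectionWithExplicitOutgoingTerm}. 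Third, and most seriously, your final paragraph undertakes to prove injectivity in all cases via Selberg-integral nondegeneracy or KZ-monodromy semisimplicity, but the paper deliberately does not claim this: its footnote to Prop.\ \ref{Degree1ConformalBlocksInTwistedCohomology} records that even for $\NumberOfProbeBranes = 1$ at the truncation locus, injectivity is ``claimed'' in [FSV94, Rem.\ 3.4.3] ``but we are not aware of a proof''. The period pairing degenerates precisely at such resonant parameter values, so the Selberg-integral route cannot be expected to close that gap without additional input.
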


In the evident generalization of Prop. \ref{Degree1ConformalBlockInTEdKTheory},
using the above, we have the following:

 \newpage

\begin{theorem}
[\phantom{.}$\slTwoAffine{\Level}$-Conformal blocks as secondary Chern classes in
\TED-K-theory of
configurations inside
$\mathbb{A}$-type singularities]
\label{ConformalBlockInTEdKTheory}
%  $\,$
 % \noindent
For integrable weights
$\vec \weight \,\in\, \{0,\cdots, \Level \}^\NumberOfPunctures$
\eqref{IntegrabilityConditionOnWeights}
and for any $\NumberOfProbeBranes \,\in\, \NaturalNumbers_+$,
the secondary Chern classes in \TED-K-theory of Prop. \ref{TheBareBProfiniteIntegersTwistOfComplexRationalizedAEquivariantKTheory}
of the orbifold configuration space
$
\ConfigurationSpace{\NumberOfProbeBranes}
\big(\ComplexPlane \setminus \vec z\big) \times \HomotopyQuotient{\ast}{\CyclicGroup{\kappa}}
$
naturally receives
the space of $\slTwoAffine{\Level}$-conformal blocks
at level $\Level = -2 + \ShiftedLevel$
in degree $\NumberOfProbeBranes$:
\vspace{-4mm}
\begin{equation}
  \label{ConformalBlocksInsideTEdKTheory}
  \begin{tikzcd}[column sep=26pt]
  \overset{
    \mathclap{
      \raisebox{3pt}{
        \tiny
        \color{darkblue}
        \bf
        \begin{tabular}{c}
          conformal blocks of
          the $\suTwo$
          WZW model
        \end{tabular}
      }
    }
  }{
    \ConformalBlocks
      ^\NumberOfProbeBranes
      _{\slTwoAffine{-2 + \ShiftedLevel}}
    (\vec \weight, \vec z)
    }
  \ar[
    rr,
    "{
      \mbox{
        \tiny
        \color{greenii}
        \bf
        \begin{tabular}{c}
          natural
          \\
          transformation
        \end{tabular}
      }
    }"{swap}
  ]
  &&
    \overset{
      \mathclap{
      \raisebox{3pt}{
        \tiny
        \color{darkblue}
        \bf
        \begin{tabular}{c}
          \TED-K-theory
          of configurations of points
          in complex curve inside
          $\mathbb{A}$-type singularity
        \end{tabular}
      }
      }
    }{
    \mathrm{KU}
      ^{
        \NumberOfProbeBranes - 1
        +
        [
          \FlatConnectionForm
          (\vec \weight, \ShiftedLevel)
        ]
      }
      _{\diff}
    \Big(
      \ConfigurationSpace
        {\NumberOfProbeBranes}
      (
        \RiemannSphere
        \setminus
        \{\vec z, \infty\}
      )
      \times
      \HomotopyQuotient
        { \ast }
        { \CyclicGroup{\kappa} }
    \Big)
    }
    \,.
  \end{tikzcd}
\end{equation}
\end{theorem}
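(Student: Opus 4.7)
The plan is to obtain the natural transformation in \eqref{ConformalBlocksInsideTEdKTheory} as the composite of two maps already established in the preceding material, in direct generalization of the argument for Prop. \ref{Degree1ConformalBlockInTEdKTheory}. First I invoke Prop. \ref{ConformalBlocksInTwistedCohomology} to obtain the natural injection from $\ConformalBlocks^{\NumberOfProbeBranes}_{\slTwoAffine{-2+\ShiftedLevel}}(\vec\weight,\vec z)$ into the $\FlatConnectionForm(\vec\weight,\ShiftedLevel)$-twisted holomorphic de Rham cohomology, in degree $\NumberOfProbeBranes$, of the configuration space $\ConfigurationSpace{\NumberOfProbeBranes}\big(\ComplexPlane\setminus\{\vec z\}\big)$. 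Then I invoke Prop. \ref{TheBareBProfiniteIntegersTwistOfComplexRationalizedAEquivariantKTheory} with the smooth manifold $\SmoothManifold$ taken to be this configuration space and the twist taken to be $[\FlatConnectionForm(\vec\weight,\ShiftedLevel)]$, whose secondary Chern character lands in the claimed \TED-K-group.

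For this composition to be defined, two compatibility checks need to be carried out. The first is that the twist $[\FlatConnectionForm(\vec\weight,\ShiftedLevel)]$ of \eqref{TheMasterFormForSeveralProbes} really is a class of the type allowed in Prop. \ref{TheBareBProfiniteIntegersTwistOfComplexRationalizedAEquivariantKTheory}, i.e., that it represents an element of $H^1\big(\SmoothManifold;\CyclicGroup{\ShiftedLevel}\big)$ via \eqref{ClassofTheFlatConnectionOneForm}. This reduces to computing residues/periods of the explicit form in \eqref{TheMasterFormForSeveralProbes}: around the ``puncture'' hyperplane $z^i=z_{{}_I}$ the residue is $-\weight_{{}_I}/\ShiftedLevel\in \tfrac{1}{\ShiftedLevel}\Integers$, and around the diagonal hyperplane $z^i=z^j$ the residue is $2/\ShiftedLevel\in \tfrac{1}{\ShiftedLevel}\Integers$, so all holonomies around generating loops are in $\CyclicGroup{\ShiftedLevel}\subset\CircleGroup$, as required.

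The second compatibility check is that the target cohomology of Prop. \ref{ConformalBlocksInTwistedCohomology}, namely the \emph{holomorphic} twisted de Rham cohomology, agrees with the twisted (complex-coefficient, $\DeRhamDifferential$-)de Rham cohomology appearing in the secondary Chern character map of Prop. \ref{TheBareBProfiniteIntegersTwistOfComplexRationalizedAEquivariantKTheory}. By Rem. \ref{TwistedCohomologyOverSteinManifolds} it is enough to know that $\ConfigurationSpace{\NumberOfProbeBranes}\big(\ComplexPlane\setminus\{\vec z\}\big)$ is a Stein manifold, which follows from \cite{DocquierGrauert60} since the configuration space is an open complex submanifold (complement of a divisor) of the Stein space $(\ComplexPlane\setminus\{\vec z\})^{\NumberOfProbeBranes}$. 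With both checks in place, the composite of the two maps is the asserted natural transformation, and it lands in the summand of \eqref{TEdKTheoryOfPuncturedPlaneAsDirectSum} corresponding to $\Denominator = 1$ (i.e., integer level $\Level=-2+\ShiftedLevel$).

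The only non-routine point I anticipate is bookkeeping of the K-theoretic degree: Prop. \ref{TheBareBProfiniteIntegersTwistOfComplexRationalizedAEquivariantKTheory} converts $H^{\NumberOfProbeBranes}$ of the twisted de Rham complex into \TED-K-theory of degree $\NumberOfProbeBranes-1+[\FlatConnectionForm]$ (mod $2$), which is exactly the degree appearing in \eqref{ConformalBlocksInsideTEdKTheory}; this fixes the $d=0$ summand of the even-periodic direct sum in \eqref{TheSpecializedTwistedEquivariantCharacterMap} and confirms that the embedding of the conformal block sits in the expected degree. No further ingredients beyond those already stated in the excerpt are needed.
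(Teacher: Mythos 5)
Your proposal is correct and takes essentially the same route as the paper: the theorem is obtained there exactly by composing Prop.~\ref{ConformalBlocksInTwistedCohomology} with Prop.~\ref{TheBareBProfiniteIntegersTwistOfComplexRationalizedAEquivariantKTheory} via Rem.~\ref{TwistedCohomologyOverSteinManifolds}, in ``evident generalization'' of the proof of Prop.~\ref{Degree1ConformalBlockInTEdKTheory}. Your two compatibility checks (holonomy of $\FlatConnectionForm(\vec\weight,\ShiftedLevel)$ valued in $\CyclicGroup{\ShiftedLevel}$, and Steinness of the configuration space, the latter being precisely what the paper cites \cite{DocquierGrauert60} for in Rem.~\ref{TwistedCohomologyOverSteinManifolds}\,(iii)) and the degree bookkeeping match what the paper leaves implicit.
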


\noindent
{\bf $\slTwoAffine{\Level}$-CFT beyond admissible weights at integral level.}
It is clear that the \TED-K-theory group on the right of \eqref{ConformalBlocksInsideTEdKTheory} is larger than the space of standard conformal blocks included on the left, which suggests that the \TED-theory of configurations inside $\mathbb{A}$-type singularities
reflects yet further aspects of conformal field theory.
This is argued in Rem. \ref{FurtherContentSeenInTheTEdKTheory} below, for which we now briefly review some relevant background.

\begin{remark}[\bf Fractional-level WZW models]
\label{FractionalLevelWZWModels}
$\,$

\noindent
{\bf (i)}
The twisted cohomology on the right of \eqref{IdentifyingConformalBlocksWithTwistedCohomologyForGeneralNumberOfProbes} certainly makes sense for more general choices of parameters than assumed in Prop. \ref{ConformalBlocksInTwistedCohomology} (discussion in this generality goes back to \cite{SchechtmanVarchenko90}\cite{SchechtmanVarchenko91}, review is in \cite{Varchenko95}\cite[\S 7]{EtingofFrenkelKirillov98}) --  notably it makes sense at {\it admissible fractional levels}
$\Level = -2 + \ShiftedLevel/\Denominator$ (cf. Rem. \ref{RationalLevelsInKTheory}).

\noindent
{\bf (ii)}
Also the $\slTwoAffine{\Level}$-CFT model is known to make sense for
admissible fractional levels
(e.g. \cite[(1.4)]{FGPP93}\cite[(1)]{PetersenRasmussenYu96}\cite[\S 3.1]{Rasmussen20})
if also the weights are ``admissible'', which is the case \eqref{AdmissibilityCondition} for the integrable integer weights considered above. In fact, these admissible fractional-level models have been argued to be building blocks of logarithmic CFTs
(\cite[(3.1)]{CreutzigRidoutII13}\cite[(5.1)]{KawasetsuRidout19}\cite[(1.2)]{KawesetsuRidout22}, review in \cite{Ridout10}\cite{Ridout20}).

\noindent
{\bf (iii)} It is is expected \footnote{We thank
D. Ridout for discussion of this point.}
but remains unproven (cf. \cite[bottom of p. 35]{FGPP93})
that a suitable notion of conformal blocks for
these fractional-level WZW models
exists
and that Prop. \ref{ConformalBlocksInTwistedCohomology} holds true in this greater generality.
\end{remark}
We record this as:
\begin{conjecture}
[\phantom{.}$\slTwoAffine{\Level/r}$-Conformal blocks
as twisted holomorphic cohomology
of configuration space of punctured plane]
\label{FractionalLevelConformalBlocksInTwistedCohomology}
For
$\ShiftedLevel \,\in\, \NaturalNumbers_{\geq 2}$,
$r \,\in\, \{1, \cdots, \ShiftedLevel\}$
and
$\vec \weight \,\in\, \{0,\cdots, \Level \}^\NumberOfPunctures$,
the $[\FlatConnectionForm]$-twisted
holomorphic de Rham cohomology
of the configuration space
$\ConfigurationSpace{\NumberOfProbeBranes}
\big(\ComplexPlane \setminus \{z_1, \cdots, z_N\}\big)$
\eqref{ConfigurationSpaceAsFiberProduct}
with
$
  \FlatConnectionForm
  \,=\,
  \FlatConnectionForm(\vec w, \ShiftedLevel)$
\eqref{TheMasterFormForSeveralProbes}
is concentrated in degree
$\NumberOfProbeBranes$,
where it naturally contains
the space of
degree=$\NumberOfProbeBranes$
$\widehat{\mathfrak{sl}(2)}_{-2 + \ShiftedLevel/r}$-conformal blocks
\eqref{slTwoConformalBlocksAsQuotientOfSlTwoCoinvariants}:
\vspace{-4mm}
\begin{equation}
  \label{IdentifyingConformalBlocksWithTwistedCohomologyForFractionalLevel}
  \begin{tikzcd}[column sep=20pt, row sep=-2pt]
    \ConformalBlocks
      ^{\NumberOfProbeBranes}
      _{\slTwoAffine{-2+\ShiftedLevel/{\color{purple}r}}}
    (\vec \weight, \vec z)
    \; \ar[rr, hook]
    &&
    \;
    H^{\NumberOfProbeBranes}
    \bigg(
      \HolomorphicForms{\big}{\bullet}
      {
        \ConfigurationSpace
          {\NumberOfProbeBranes}
        \big(
          \ComplexPlane
          \setminus
          \{ \vec z \}
        \big)
      }
      ,\,
      \DolbeaultDifferential
      +
      \FlatConnectionForm
      (
        \vec \weight,
        \ShiftedLevel/{\color{purple}r}
      )
      \wedge
    \bigg)
    \,.
  \end{tikzcd}
\end{equation}
\end{conjecture}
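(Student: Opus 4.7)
The plan is to reproduce, mutatis mutandis, the hypergeometric construction of \cite{FeiginSchechtmanVarchenko94} recalled as Prop.~\ref{ConformalBlocksInTwistedCohomology}, but with the shifted level $\ShiftedLevel$ replaced throughout by $\ShiftedLevel/\Denominator$. Concretely, I would define the candidate map on generators by the same formula as in \eqref{IdentifyingConformalBlocksWithTwistedCohomologyForGeneralNumberOfProbes}, namely
\begin{equation*}
\LoweringOperator_{{}_{I_1}}\!\!\cdots \LoweringOperator_{{}_{I_{\NumberOfProbeBranes}}}
\vert \HighestWeightVector_1, \ldots, \HighestWeightVector_{\NumberOfPunctures}\rangle
\;\longmapsto\;
\Big[\prod_{a=1}^{\NumberOfProbeBranes} \tfrac{\weight_{{}_{I_a}}\Denominator}{\ShiftedLevel}\tfrac{\Differential z}{(z^a-z_{{}_{I_a}})}\Big]\,,
\end{equation*}
symmetrized in the probe labels to account for coincidences, and then extended by the diagonal $\slTwo$-action exactly as in \cite[\S3]{FeiginSchechtmanVarchenko94}. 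Because the explicit formula \eqref{TheMasterFormForSeveralProbes} for $\FlatConnectionForm(\vec \weight, \ShiftedLevel/\Denominator)$ is literally the same as $\FlatConnectionForm(\vec \weight, \ShiftedLevel)$ under the substitution $\ShiftedLevel \leadsto \ShiftedLevel/\Denominator$, the verification that the images are twist-closed and that the $\slTwo$-coinvariant relation $\sum_i \LoweringOperator_i\!\mid\!\cdots\rangle = 0$ is respected proceeds verbatim as in Prop.~\ref{Degree1ConformalBlocksInTwistedCohomology}.

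The first substantive step is then to formulate the right-hand source of the map, i.e.\ a precise notion of $\slTwoAffine{-2+\ShiftedLevel/\Denominator}$-conformal blocks on $\Sigma^2$ with insertions at $\vec z$ labelled by the integer weights $\vec \weight \in \{0,\ldots,\ShiftedLevel-2\}^{\NumberOfPunctures}$. Since these weights lie in the admissible set \eqref{AdmissibilityCondition} (Rem.~\ref{IntegrableAndAdmissibleWeights}), I would work with the admissible highest-weight modules $\AffineHighestWeightIrrep{\weight_i}{\ShiftedLevel/\Denominator - 2}$ in the sense of \cite{KacWakimoto88ModularInvariantRepresentations,KacWakimoto89} and define $\ConformalBlocks^{\NumberOfProbeBranes}_{\slTwoAffine{-2+\ShiftedLevel/\Denominator}}(\vec \weight, \vec z)$ as the $\slTwo(\Sigma^2)$-coinvariants of the corresponding tensor product, mimicking \eqref{ActionOfAlgebraicFunctionsOverSigmaOnModulesAtThePunctures}. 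The second step would be to establish an analog of Lem.~\ref{slTwoCoinvariantsInTermsOfIncomingRepresentationVectors} and Prop.~\ref{ConformalBlocksAsQuotientsOfCoInvariantOfSLTwo}, expressing these coinvariants as a quotient of $\IncomingHighestWeightReps(\OutgoingWeight)$ by the image of $\sum_i \LoweringOperator_i$ together with additional null-vector relations originating from singular vectors in the admissible modules at fractional level.

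The third step -- and here lies the crux -- is to verify that these additional null-vector relations are sent to zero by the hypergeometric map. At integer level, the Malikov--Feigin--Fuchs singular vector underlying the truncation condition \eqref{TruncationConditionFormula} has a finite polynomial expression in the $\LoweringOperator_i$'s, and the proof in \cite[\S3.5]{FeiginSchechtmanVarchenko94} exploits precisely this to produce an exact primitive for its image under $\DolbeaultDifferential + \FlatConnectionForm \wedge$. For admissible fractional levels, the relevant MFF singular vectors exist but involve fractional (symbolic) powers of $\LoweringOperator$ and of the currents, so the needed primitives must be constructed via a careful analytic continuation in the weight parameters, in the spirit of the hypergeometric pairing of \cite{SchechtmanVarchenko91}\cite[\S7]{EtingofFrenkelKirillov98}. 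I expect this analytic-continuation argument for the fractional MFF vectors to be the main obstacle; all the surrounding formal and homological input (closedness, degree concentration via acyclicity of holomorphic sheaves on the Stein domain $\ConfigurationSpace{\NumberOfProbeBranes}(\Sigma^2)$, and the $\slTwo$-equivariance check) carries over without essential change from the integer-level case.

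Finally, the asserted \emph{concentration in degree $\NumberOfProbeBranes$} of the twisted cohomology depends only on the form $\FlatConnectionForm(\vec \weight, \ShiftedLevel/\Denominator)$ and on the Stein property of $\ConfigurationSpace{\NumberOfProbeBranes}\big(\ComplexPlane \setminus \{\vec z\}\big)$ \cite{DocquierGrauert60}; it should follow from the same Esnault--Schechtman--Viehweg-type argument behind Prop.~\ref{OSBasisForTwistedCohomologyOfPuncturedPlane} (extended to configuration spaces, as in \cite{ESV92}), provided the genericity condition on $\IncomingWeight/\ShiftedLevel$ is replaced by its fractional-level counterpart on $\IncomingWeight \Denominator/\ShiftedLevel$. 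Once these three pieces -- fractional-level coinvariants, compatibility with admissible null vectors, and degree concentration -- are in place, the injectivity of the hypergeometric map in \eqref{IdentifyingConformalBlocksWithTwistedCohomologyForFractionalLevel} follows from the non-degeneracy of the Shapovalov-type pairing on admissible modules, again in direct analogy with \cite[Thm.~3.4.1]{FeiginSchechtmanVarchenko94}.
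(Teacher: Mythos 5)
The statement you are asked to prove is recorded in the paper as a \emph{Conjecture}, and the paper supplies no proof of it: Remark \ref{FractionalLevelWZWModels} (iii) states explicitly that it ``is expected but remains unproven (cf.\ \cite[bottom of p.~35]{FGPP93}) that a suitable notion of conformal blocks for these fractional-level WZW models exists and that Prop.~\ref{ConformalBlocksInTwistedCohomology} holds true in this greater generality,'' and the Conjecture is simply the formal record of that expectation. So there is no proof in the paper to compare yours against; the relevant question is whether your proposal actually closes the conjecture, and it does not.

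Your outline is the natural one --- it is exactly the extension of the Feigin--Schechtman--Varchenko machinery that the authors implicitly have in mind --- and your division into (i) defining fractional-level blocks as $\slTwo(\Sigma^2)$-coinvariants of admissible modules, (ii) reducing to a quotient of $\IncomingHighestWeightReps(\OutgoingWeight)$ by $\mathrm{im}(\sum_i \LoweringOperator_i)$ plus null-vector relations, and (iii) checking those relations die in twisted cohomology, is the right skeleton. But the two steps you defer are precisely the open content. First, step (i) is not merely a matter of writing down coinvariants: at admissible fractional level the modules $\AffineHighestWeightIrrep{\weight}{\ShiftedLevel/\Denominator-2}$ are not integrable, the analog of Prop.~\ref{ConformalBlocksAsQuotientsOfCoInvariantOfSLTwo} (which rests on \cite[Thm.~2.3.6]{FeiginSchechtmanVarchenko94} and uses integrability) is not available, and the resulting ``blocks'' are known to misbehave (this is the source of the logarithmic-CFT phenomena cited in Rem.~\ref{FractionalLevelWZWModels} (ii)); the paper's point is that even a \emph{correct definition} of the left-hand side of \eqref{IdentifyingConformalBlocksWithTwistedCohomologyForFractionalLevel} is not settled. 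Second, step (iii) --- producing twisted-exact primitives for the images of the fractional MFF singular vectors via analytic continuation --- is exactly the part you flag as ``the main obstacle'' and for which you offer no argument. A proof proposal whose crux step is an acknowledged obstacle is a research plan, not a proof; as written, your text establishes only the parts that already follow verbatim from the integer-level case (the well-definedness of the map on the $\slTwo$-coinvariant relations and, modulo the genericity condition you correctly adjust, the degree concentration via \cite{ESV92}), which is consistent with the authors' decision to leave the statement as a conjecture.
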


\begin{remark}[\bf The logarithmic-admissible highest-weight irrep]
  \label{TheLogarithmicAdmissibleHighestWeightRep}
  Since the hypergeometric SV-construction of Prop. \ref{Degree1ConformalBlocksInTwistedCohomology} makes manifest that the conformal blocks depend on the weights only modulo the shifted level $\ShiftedLevel$ (because they enter the construction solely through the twisting holonomy $[\weight/\ShiftedLevel] \,\in\, \CyclicGroup{\ShiftedLevel}$ in \eqref{WeightsMatterModuloShiftedLevel}, \eqref{TheMasterForm}, see also \cite[(3.2)]{Rasmussen20}), it is curious that the admissible weights
  $\weight = 0, 1, \cdots, \Level$
  \eqref{AdmissibilityCondition}
  only
  {\it almost}
  exhaust the non-redundant available range:
  One single non-admissible class of weights
  remains:
  $
    [\Level + 1] \,=\,  [\ShiftedLevel - 1]
    \;\;
    \in
    \;
    \CyclicGroup{\ShiftedLevel}\;.
  $
  It has been argued
  in \cite{Rasmussen20}\footnote{In \cite{Rasmussen20} this non-admissible highest-weight irrep, or its character, corresponds to the case $r = p,\, s = 0$ in (4.17),  hence
  $\ell=0, s_0 = 0$ in (3.55), see also (4.2), hence to the boxes marked by a red upper corner in Figure 1 there. }
  that this single non-admissible weight does correspond to an irreducible highest-weight
  {\it affine Kac module} which is to be included in the context of logarithmic CFT (cf. Rem. \ref{FractionalLevelWZWModels}).
  In conclusion:
\end{remark}
\begin{remark}[\bf Fractional-level CFT and
logarithmic-admissible modules
possibly
seen
in \TED-K-cohomology]
\label{FurtherContentSeenInTheTEdKTheory}
Apart from the cokernel of the map \eqref{ConformalBlocksInsideTEdKTheory},
the differential \TED-K-cohomology group
in
Theorem \ref{ConformalBlockInTEdKTheory}
is larger than just one copy of the conformal blocks of the given integrable weights at the integral level $\Level = \ShiftedLevel - 2$:

\item {\bf (i) Fractional-levels.} By Rem. \ref{RationalLevelsInKTheory}, the secondary \TED-K-cohomology
groups are direct sums indexed by $r \,\in\, \{1, \cdots, \ShiftedLevel \}$
of the twisted cohomology, as on the right of \eqref{IdentifyingDegreeOneConformalBlocksWithTwistedCohomology},
but at all fractional levels $k = -2 + \ShiftedLevel/r$
\eqref{FractionalLevel}
which are {\it admissible} (in that $\ShiftedLevel \geq 2$, see Rem. \ref{RationalLevelsInKTheory}).
Hence if Conjecture \ref{FractionalLevelConformalBlocksInTwistedCohomology} holds,
then the statement of Prop. \ref{Degree1ConformalBlockInTEdKTheory} enhances to the stronger statement that secondary \TED-K-theory naturally contains  the conformal blocks of all admissible fractional-level WZW models at once:
\vspace{-.1cm}
\begin{equation}
  \label{FractionalLevekDegreeNConformalBlocksInsideTEdKTheory}
    \overset{
      \mathclap{
      \raisebox{6pt}{
        \tiny
        \color{darkblue}
        \bf
        \begin{tabular}{c}
          direct sum of
          conformal blocks
          \\
          of
          all
          $\ShiftedLevel$-fractional-level
          $\slTwo$-WZW models
        \end{tabular}
      }
      }
    }{
    \underset{
      1 \leq {\color{purple}r} \leq \ShiftedLevel
    }{\bigoplus}
    \ConformalBlocks
      ^\NumberOfProbeBranes
      _{\slTwoAffine{- 2 + \ShiftedLevel/{\color{purple}r}}}
    (\vec \weight, \vec z)
    }
    \;\;\;
    \xrightarrow
      {
        \phantom{--}
        \mbox{
          \tiny
          \color{greenii}
        \bf   natural transformation
        }
        \phantom{--}
      }
    \;\;\;
    \overset{
      \mathclap{
      \raisebox{3pt}{
        \tiny
        \color{darkblue}
        \bf
        \TED-K-theory of
        configurations of points in
        complex curve inside
        $\mathbb{A}$-type singularity
      }
      }
    }{
    \mathrm{KU}
      ^{
        \NumberOfProbeBranes
        -
        1
        +
        [
          \FlatConnectionForm
          (\vec \weight, \ShiftedLevel)
        ]
      }
      _{\diff}
    \Big(
      \ConfigurationSpace
        {\NumberOfProbeBranes}
      \big(
        \RiemannSphere
        \setminus
        \{\vec z, \infty\}
      \big)
      \times
      \HomotopyQuotient
        { \ast }
        { \CyclicGroup{\kappa} }
      ;\,
      \ComplexNumbers
    \Big)
    }
    \,.
\end{equation}

\item {\bf (ii) The logarithmic-admissible module.} In addition,
the
\TED-K-theory is defined also for twists that  correspond to  ``non-admissible'' weights
$\weight_i \,=\, \ShiftedLevel - 1$, which by Rem. \ref{TheLogarithmicAdmissibleHighestWeightRep} corresponds to a sector that one does want to include in the context of logarithmic CFT.
\end{remark}

\medskip

%%%%%%%%%%%%%%%%%%%%%%%%%%%%%%%%%%%%%%%%%%%%%%%%%%%%%
\section{TED-K-Theory of $\mathbb{A}$-type singularities}
\label{OneTwistedCohomologyAsTwistedEquivariantKTheory}
%%%%%%%%%%%%%%%%%%%%%%%%%%%%%%%%%%%%%%%%%%%%%%%%%%%%%

The aim of this section is to give an accurate but leisurely informal description of twisted equivariant differential (\TED) K-theory (full definitions and proofs are relegated to \cite{SS22TEC}\cite{SS22TED}),
with goal and focus on transparently bringing out its peculiar twisting, appearing on $\CyclicGroup{\ShiftedLevel}$-fixed loci,
by  $\CharacterGroup{\CyclicGroup{\ShiftedLevel}}$-cohomology in degree 1.

\medskip
In themselves, these twists are known, as ``inner local systems'' \cite{Ruan00},
but available descriptions
of how these act
(\cite{TuXu06}\cite{FreedHopkinsTeleman02ComplexCoefficients})
on equivariant K-theory have been somewhat indirect. To the end of making the phenomenon fully transparent, we use the higher geometric language laid out in \cite{FSS20CharacterMap}\cite{SS20OrbifoldCohomology}\cite{SS22EPB}.

\medskip

The upshot of the discussion here is to explain the simple but archetypical example of how \TED-K-theory of $\CyclicGroup{\ShiftedLevel}$-singularities with vanishing 3-twist
is, in its secondary Chern characters, equivalent to direct sums of ordinary (holomorphic) cohomology groups twisted by any tensor power of a given flat holomorphic connection -- which is the statement used as Prop. \ref{TheBareBProfiniteIntegersTwistOfComplexRationalizedAEquivariantKTheory} in \cref{ConformalBlocksAsTEdKTheory} in order to see the natural appearance of conformal blocks into \TED-K-theory.

\medskip

\noindent
{\bf Twisted $\mathrm{KU}$-Theory} (following \cite[Ex. 1.3.19]{SS22EPB}).

\vspace{-2mm}
\begin{itemize}[leftmargin=*]
\setlength\itemsep{-1pt}

\item
$\HilbertSpace$ denotes any countably-infinite dimensional complex Hilbert space, concretely that of square-summable sequences of complex numbers.

%\item
%$\EquivariantHilbertSpace$ denotes the Hilbert space of square-integrable functions from the underlying set of $\CyclicGroup{\ShiftedLevel}$ to $\HilbertSpace$, equipped with the canonical left $\CyclicGroup{\ShiftedLevel}$-action by pullback along the right multiplication action of $\CyclicGroup{\ShiftedLevel}$.

\item
$\FredholmOperators \,:=\, \FredholmOperators\big( \CyclicGroup{2} \times \HilbertSpace \big)$ denotes the space of Fredholm operators on this Hilbert space, meaning the set of odd-graded self-adjoint bounded operators on $\CyclicGroup{2} \times \HilbertSpace \,\simeq\, \HilbertSpace \oplus \HilbertSpace$
\begin{equation}
  \label{AFredholmOperator}
  \left(
  \!\!\!
  \begin{array}{cc}
    0 & \FredholmOperator
    \\
    \FredholmOperator^\dagger
    & 0
  \end{array}
  \!\!\!
  \right)
  \;\;\;
  \in
  \;\;
  \BoundedOperators
  \big(
    \CyclicGroup{2}
    \times
    \HilbertSpace
  \big)
  \qquad
  \mbox{hence}
  \qquad
  \begin{tikzcd}
    \HilbertSpace
    \ar[
      rr,
      shift left=2.6pt,
      "{\FredholmOperator}"
    ]
    \ar[
      from=rr,
      shift left=2.6pt,
      "{\FredholmOperator^\dagger}"
    ]
    &&
    \HilbertSpace
    \,,
  \end{tikzcd}
\end{equation}

\vspace{-2mm}
\noindent whose square differs from the identity by a compact operator:
$\FredholmOperator \circ \FredholmOperator^\dagger - 1,\,\; \FredholmOperator^\dagger \circ \FredholmOperator - 1 \;\;\;\in\; \CompactOperators(\HilbertSpace)$.

This implies that $\FredholmOperator$ and $\FredholmOperator^\dagger$ are isomorphisms up to a finite-dimensional kernel
and finite-dimensional co-kernel
$$
  \FredholmOperator
  \;\;
  :
  \;\;
  \HilbertSpace
  \;\simeq
  \!\!\!
  \begin{tikzcd}[row sep=-3pt, column sep=large]
    \HilbertSpace
    &&
  \;\;  \HilbertSpace
    \\
 \quad   \oplus \quad
    \ar[
      rr,
      "{
        \left(\!\!\!
        \hspace{-2pt}
        \begin{array}{cc}
          \mathrm{id}
          \\
          0
        \end{array}
        \hspace{-6pt}
        \begin{array}{cc}
          0
          \\
          0
        \end{array}
        \hspace{-2pt}
       \!\!\! \right)
      }"{description}
    ]
    &&
  \qquad   \oplus \qquad
    \\
  \;\;  \mathrm{ker}(\FredholmOperator)
    &&
    \mathrm{coker}(\FredholmOperator)
  \end{tikzcd}
  \!\!\!\!\!
  \simeq
  \;
  \HilbertSpace
  \,,
$$
which thus constitute a finite-dimensional  $\CyclicGroup{2}$-graded virtual vector space:
\begin{equation}
  \label{VirtualKernelOfFredholmOperator}
  \big(
    \mathrm{ker}(\FredholmOperator)
    \oplus
    \mathrm{ker}(\FredholmOperator^\dagger)
  \big)
  \;\ominus\;
  \big(
    \mathrm{coker}(\FredholmOperator)
    \oplus
    \mathrm{coker}(\FredholmOperator^\dagger)
  \big)
  \,.
\end{equation}

\item
$\GradedUH \,:=\, \UnitaryGroup\big( \CyclicGroup{2} \times \HilbertSpace \big)$ denotes the
corresponding unitary group,

\item $\GradedPUH \,:=\, \GradedUH/\CircleGroup$ denotes the corresponding projective unitary group, and

\item
$\PUH \,\subset\, \GradedPUH$ denotes its subgroup of 0-graded projective unitary operators.

\item
$\GradedUH \,\acts\, \FredholmOperators$
denotes the conjugation action $\FredholmOperator \,\mapsto\, \UnitaryOperator \circ \FredholmOperator \circ \UnitaryOperator^{\dagger} $,  and

\item
$\GradedPUH \acts \, \FredholmOperators$ denotes its passage to the projective quotient (since conjugation by $\ComplexNumber \in \CircleGroup \hookrightarrow \GradedUH$ is trivial).

\item
$\HomotopyQuotient{\FredholmOperators}{\GradedPUH}$ denotes the homotopy quotient of this action, formed in topological stacks.

Equipped with the canonical projection
\begin{equation}
  \label{UniversalFredFiberBundle}
  \begin{tikzcd}[column sep=large]
    \FredholmOperators
    \ar[r]
    &
    \HomotopyQuotient
      { \FredholmOperators }
      { \PUH }
    \ar[r]
    \ar[
      d,
      "{
        \HomotopyQuotient
          {\mathrm{pr}}
          {\PUH}
      }"{left}
    ]
    \ar[
      dr,
      phantom,
      "{\mbox{\tiny(pb)}}"
    ]
    &
    \HomotopyQuotient
      { \FredholmOperators }
      { \GradedPUH }
    \ar[
      d,
      "{
        \HomotopyQuotient
          {\mathrm{pr}}
          {\GradedPUH}
      }"
    ]
    \\
    &
    \mathbf{B} \PUH
    \ar[r]
    &
    \mathbf{B} \GradedPUH
  \end{tikzcd}
\end{equation}
this is the universal $\FredholmOperators$-fiber bundle over the topological {\it moduli stack}
$\mathbf{B}\GradedPUH \,:=\, \HomotopyQuotient{\ast}{\GradedPUH}$ of $\GradedPUH$-principal bundles.

\item
$\mathrm{KU}^{[\tau]}(\TopologicalSpace)
\,:=\,
\Truncation{0}
\,
\shape
\,
\SliceMaps{\big}{\mathbf{B}\PUH}
  {  (\TopologicalSpace, \tau) }
  {
    \HomotopyQuotient
      { \FredholmOperators }
      { \GradedPUH }
  }
$
denotes -- for a given twist
$
  \tau
  \,\in\,
  \Maps{}
    { \TopologicalSpace }
    { \mathbf{B} \GradedPUH }
$
-- the $[\tau]$-twisted complex K-cohomology of the topological space $\TopologicalSpace$, being the set homotopy classes of sections of the $\tau$-associated $\FredholmOperators$-bundle, see \eqref{TwistedEquivariantKTheoryAsHomotopyClassesOfSections} below.

\end{itemize}

Towards the equivariant generalization of twisted K-theory, we first consider the notion of ``stable'' projective unitary representations of the equivariance group:

\newpage

\noindent
{\bf Stable projective representations} (following \cite[Ex. 4.1.56]{SS22EPB}).

\vspace{-2mm}
\begin{itemize}[leftmargin=*]
\setlength\itemsep{-1pt}

\item
$\EquivarianceGroup$ denotes a finite group, to serve as the equivariance group and eventually to be specialized to a cyclic group $\EquivarianceGroup \,:=\, \CyclicGroup{\ShiftedLevel}$.

\item
$\Irreps{\EquivarianceGroup}$ denotes the set of isomorphism classes of its irreducible unitary representations.

\item
$
\ComplexRepresentationRing{\EquivarianceGroup}\;\simeq\;
  \Integers
  \big[
    \Irreps{\EquivarianceGroup}
  \big]
$ denotes the complex representation ring, identified with the ring of integer multiples of irreps under tensor product of representations:
\vspace{-2mm}
\begin{equation}
  \label{MultiplicationInTheRepresentationRing}
  [\rho_i],
  [\rho_j]
  \,\in\,
  \Irreps{\EquivarianceGroup}
  \,\subset\,
  \ComplexRepresentationRing{\EquivarianceGroup}
  \qquad
  \vdash
  \qquad
  [\rho_i] \cdot [\rho_j]
  \;:=\;
  \big[
    \rho_i
    \otimes
    \rho_j
  \big]
  \;=\;
  \bigg[\;\;
    \underset{
      \scalebox{.65}{$
      \begin{array}{c}
        { [\rho_k] \,\in\,}
        \\
        { \Irreps{\EquivarianceGroup} }
      \end{array}
      $}
    }{\bigoplus}
    \;
    \underset{
      \scalebox{.6}{$
        \big\{
          1, \cdots, \rho^k_{i j}
        \big\}
      $}
    }{\bigoplus}
    \rho_k
  \bigg]
  \;=\;
  \underset{
    \scalebox{.65}{$
      \begin{array}{c}
        { [\rho_k] \,\in\,}
        \\
        { \Irreps{\EquivarianceGroup} }
      \end{array}
    $}
  }{\sum}
    \rho^{k}_{i j}
  \cdot
  [\rho_k]\;.
\end{equation}

\vspace{-2mm}
\noindent For example, when $\EquivarianceGroup \,=\, \CyclicGroup{\ShiftedLevel}$ as in \eqref{IrrepsOfCyclicGroup}, then this product operation is addition of labels modulo $\ShiftedLevel$:
\vspace{-1mm}
\begin{equation}
  \label{MultiplicationOfIrrepsOfCyclicGroup}
  [\IrrepOfCyclicGroup{\DualDenominator}{\ShiftedLevel}], \;
  [\IrrepOfCyclicGroup{\DualDenominator'}{\ShiftedLevel}]
  \,\in\,
  \Irreps{\CyclicGroup{\ShiftedLevel}}
  \;\;\;\;\;\;\;\;\;\;\;
  \vdash
  \;\;\;\;\;\;\;\;\;\;\;
  [\IrrepOfCyclicGroup{\DualDenominator}{\ShiftedLevel}]
  \cdot
  [\IrrepOfCyclicGroup{\DualDenominator'}{\ShiftedLevel}]
  \;=\;
  [
    \IrrepOfCyclicGroup
    {(
      \DualDenominator
        +
      \DualDenominator'
      \,\mathrm{mod}\, \ShiftedLevel
    )}
    {\ShiftedLevel}
  ]
  \,.
\end{equation}

\vspace{-1mm}
\item
$
  \CharacterGroup{\EquivarianceGroup}
  \,:=\,
  \Homs{}
    { \EquivarianceGroup }
    { \CircleGroup }
  \;\subset\;
  \ComplexRepresentationRing{\EquivarianceGroup}
$
denotes the character group, regarded as the
subgroup of the group of units of the representation ring given by the $\EquivarianceGroup$-representations on $\ComplexNumbers$, and

\item
$
  1_{\rho}
    \,\in \!
  \begin{tikzcd}[column sep=small]
    \ComplexNumbers
    \ar[r, "{\sim}"]
    &
    {\rho} \!\!
  \end{tikzcd}
$ denotes the unit complex number as an element of a 1d irrep
$[\rho] \,\in\, \CharacterGroup{\EquivarianceGroup}$.

For example,  when $\EquivarianceGroup \,=\, \CyclicGroup{\ShiftedLevel}$ as in \eqref{IrrepsOfCyclicGroup}, then
$$
  \CharacterGroup{\CyclicGroup{\ShiftedLevel}}
  \;=\;
  \big(
    \big\{
      [\rho_1]
      ,\,
      \cdots
      ,\,
      [\rho_{\ShiftedLevel}]
    \big\}
    ,\,
   \cdot \;
  \big),
$$
with the group operation
\eqref{MultiplicationOfIrrepsOfCyclicGroup}.

\item
$
\CharacterGroup{\EquivarianceGroup}
\acts  \,
\ComplexRepresentationRing
  {\EquivarianceGroup}
$
denotes the action of the character group on (the abelian group underlying) the representation ring
by tensoring, hence by the formula \eqref{MultiplicationInTheRepresentationRing} for 1-dimensional $\rho_i$.

\item
$
\CharacterGroup{\EquivarianceGroup}
\,\acts\,
\big(
\ComplexNumbers
  \otimes_{{}_{\Integers}}
  \ComplexRepresentationRing{\EquivarianceGroup}
\big)
$
denotes the complexified complex representation ring equipped with the permutation representation of the above action of the character group.

For $\EquivarianceGroup \,=\, \CyclicGroup{\ShiftedLevel}$ this is the {\it regular representation} of
the character group
$\CharacterGroup{\CyclicGroup{\ShiftedLevel}}$,
isomorphic to the direct sum of
its 1d irreps:
\begin{equation}
  \label{RepresentationRingOfCyclicGroupAsRegRepOfCharacterGroup}
  \CharacterGroup{\CyclicGroup{\ShiftedLevel}}
  \,\acts\,
  \big(
    \ComplexNumbers
    \otimes_{{}_{\Integers}}
    \ComplexRepresentationRing{\CyclicGroup{\ShiftedLevel}}
  \big)
  \;\;
  \simeq
  \;\;
  \CharacterGroup{\CyclicGroup{\ShiftedLevel}}
  \,\acts\,
  \underset{
    \Irreps{\CyclicGroup{\ShiftedLevel}}
  }{\bigoplus}
  \,
  \ComplexNumbers
  \;\;
    \simeq
  \;\;
  \CharacterGroup{\CyclicGroup{\ShiftedLevel}}
  \,\acts
  \underset{
    1
      \leq \,
      \Denominator
     \, \leq
    \ShiftedLevel
  }{\bigoplus}
  \bigg(\;
    \underset{
      1
      \leq \,
      \DualDenominator
      \,
      \leq
      \ShiftedLevel
    }{\sum}
    e^{
      2 \pi \ImaginaryUnit
      \,
      \DualDenominator
      \Denominator
      /\ShiftedLevel
    }
    \cdot
    [\rho_{s}]
  \bigg)
  \;\;
    =:
  \;\;
  \underset{
    1
      \leq  \,
      \Denominator
     \, \leq
    \ShiftedLevel
  }{\bigoplus}
  \,
  \IrrepOfCyclicGroup
    {-\Denominator}
    {\ShiftedLevel}
  \,.
\end{equation}

\item
$
\Maps{}
  {\mathbf{B}\EquivarianceGroup}
  {\mathbf{B}\GradedPUH}
\,\simeq\,
\HomotopyQuotient
  {
    \Homs{}
      {\EquivarianceGroup}
      {\GradedPUH}
  }
  { \GradedPUH }
$
denotes the mapping stack, which is equivalently the homotopy quotient of the space of group homomorphisms
$\EquivarianceGroup \xrightarrow{\;} \GradedPUH$, hence of {\it projective unitary $\EquivarianceGroup$-representations}, by the $\GradedPUH$-conjugation action.

\item
$\Maps{}{\mathbf{B}\EquivarianceGroup}{\mathbf{B}\GradedPUH}^{\stable} \xhookrightarrow{\;\;} \Maps{}{\mathbf{B}\EquivarianceGroup}{\mathbf{B}\GradedPUH}$ denotes the sub-object of the ``stable'' projective unitary representations, which are indexed by a level of projectivity $[c_2] \,\in\,  H^2(\EquivarianceGroup;\, \CircleGroup)$ and give by the direct sums of a countably infinite number of copies of all $[c_2]$-projective irreps $\rho^{[c_2]}_i$ of $\EquivarianceGroup$.

\item
$
  \mathbf{B}
  \CharacterGroup{\EquivarianceGroup}
  \;\simeq\;
  \HomotopyQuotient
    { \big\{ \stable_0 \big\}  }
    { \CharacterGroup{\EquivarianceGroup} }
  \xrightarrow{\;\iota\;}
  \Maps{}
    { \mathbf{B}\EquivarianceGroup }
    { \mathbf{B} \PUH }
  ^\stable
$
denotes the further inclusion of the
stable projective representation at
$[c_2] = 0$
\begin{equation}
  \label{TheStableGRepresentation}
  \EquivarianceGroup
  \underset{
    \scalebox{.6}{$\stable_0$}
  }{\acts}
  \HilbertSpace
  \;\simeq\;
  \underset{
    \scalebox{.65}{$
      \begin{array}{c}
        { [\rho_i] \,\in\,  }
        \\
        { \Irreps{\EquivarianceGroup} }
      \end{array}
    $}
  }{\bigoplus}
  \rho_i
  \otimes
  \HilbertSpace
  \,,
\end{equation}

\vspace{-4mm}
\noindent together with its automorphisms
given by the following action of group characters
$[\rho] \,\in\, \CharacterGroup{\EquivarianceGroup}\xhookrightarrow{\; \Omega\iota \;}
\Omega_{\stable_0}
\Maps{}
  { \mathbf{B}\EquivarianceGroup }
  { \mathbf{B} \PUH }
$
by projective auto-intertwiners:
\vspace{-2mm}
\begin{equation}
  \label{CharacterGroupActionByAutoIntertwinersOnStableProjectiveRepresentation}
  \begin{tikzcd}[row sep=small, column sep=large]
    \mathbf{B}\EquivarianceGroup
    \ar[
      rrr,
      bend left=10,
      "{ \stable_0 }",
      "{}"{name=s, swap}
    ]
    \ar[
      rrr,
      bend right=10,
      "{
        \stable_0
      }"{swap},
      "{}"{name=t}
    ]
    &&&
    \mathbf{B}\PUH
    \ar[
      from=s,
      to=t,
      Rightarrow,
      "{
        \rho
        \,\in\,
        \CharacterGroup
          { \EquivarianceGroup }
      }"{xshift=1pt}
    ]
    \\
    {\phantom{A}}
    \\
\small
\bullet
    \ar[
      dd,
      "{g}"
    ]
    &&
    \underset{
      \scalebox{.65}{$
        \begin{array}{c}
          { [\rho_i] \,\in\,  }
          \\
          { \Irreps{\EquivarianceGroup} }
        \end{array}
      $}
    }{\bigoplus}
    \rho_i
    \otimes
    \HilbertSpace
    \ar[
      rr,
      "{ \scalebox{0.8}{$
        v
        \,\mapsto\,
        1_{\rho}
          \otimes
        v $}
       }",
       "{}"{pos=.9, swap, name=s}
    ]
    \ar[
      dd,
      start anchor={[yshift=14pt]},
      "{
     \scalebox{0.8}{$   \underset{[\rho_i]}{\oplus}
        \left(
          \rho_i(g)
          \otimes
          \mathrm{id}
        \right)
        $}
      }"{description}
    ]
    &&
    \underset{
      { [\rho_i] \,\in\,  }
      \atop
      { \Irreps{\EquivarianceGroup} }
    }{\bigoplus}
    \rho_i
    \otimes
    \HilbertSpace
    \ar[
      dd,
      start anchor={[yshift=14pt]},
      "{
    \scalebox{0.8}{$    \underset{[\rho_i]}{\oplus}
        \left(
          \rho_i(g)
          \otimes
          \mathrm{id}
        \right)
        $}
      }"{description}
    ]
    \\
    &
    \longmapsto
    &
    {}
    \\
    \bullet
    &&
    \underset{
      { [\rho_i] \,\in\,  }
      \atop
      { \Irreps{\EquivarianceGroup} }
    }{\bigoplus}
    \rho_i
    \otimes
    \HilbertSpace
    \ar[
      rr,
      "{
    \scalebox{0.8}{$    v
          \,\mapsto\,
        1_{\rho}
          \otimes
        v
        $}
      }"{swap},
       "{}"{pos=.1, name=t}
    ]
    &&
    \underset{
      { [\rho_i] \,\in\,  }
      \atop
      { \Irreps{\EquivarianceGroup} }
    }{\bigoplus}
    \rho_i
    \otimes
    \HilbertSpace
    \ar[
      from=t,
      to=s,
      Rightarrow,
      "{
     \scalebox{0.8}{$   \rho(g)(1_{\rho}) $}
      }"{description}
    ]
  \end{tikzcd}
\end{equation}
Here the diagram on the right shows that the operation of tensoring with the unit element $1_{\rho} \,\in\, \ComplexNumbers\,\simeq\, \rho$ intertwines the stable $\EquivarianceGroup$-representation with itself, up to (as befits a projective intertwiner) a coherent phase factor, here given by the values $\rho(g)(1_{\rho}) \,\in\, \CircleGroup \xhookrightarrow{\;} \ComplexNumbers$:
$$
  v
    \,\in\,
  \rho_j
    \,\subset\,
  \underset{i}{\oplus} \rho_i
  \;\;\;\;\;\;\;\;\;
  \vdash
  \;\;\;\;\;\;\;\;\;
  \big(\underset{i}{\oplus} \rho_i\big)(g)
  \big(
    1_{\rho}
      \otimes
    v
  \big)
  \;=\;
  \big(
    \rho(g)(1_{\rho})
  \big)
  \otimes
  \big(
    \rho_j(g)(v)
  \big)
  \;=\;
  \rho(g)(1_{\rho})
  \cdot
  \Big(
    1_\rho
    \otimes
    \big(\underset{i}{\oplus} \rho_i\big)(g)
    (
      v
    )
  \Big)
  \,.
$$
\end{itemize}

\medskip

\noindent
{\bf Twisted Equivariant $\mathrm{KU}$-Theory} (following \cite[Ex. 4.3.19]{SS22EPB}).

\vspace{-2mm}
\begin{itemize}[leftmargin=*]
\setlength\itemsep{-1pt}

\item
$\EquivarianceGroup \acts \,  \TopologicalSpace$ denotes a
topological space
$\TopologicalSpace$
equipped with a
continuous $G$-action  (a ``$\EquivarianceGroup$-space'').

\item
$
  \Maps{}{
  \HomotopyQuotient
    { \TopologicalSpace }
    { G }
  }
  { \mathbf{B}\GradedPUH }
  ^{\stable}
  \xhookrightarrow{\;}
  \Maps{}{
  \HomotopyQuotient
    { \TopologicalSpace }
    { G }
  }
  { \mathbf{B} \GradedPUH }
$
denotes the sub-object of the mapping stack on the maps that restrict to stable projective $\EquivarianceGroup_x$-representations (in the above sense) of the isotropy group $\EquivarianceGroup_x \,\subset\, \EquivarianceGroup$ of all $x \,\in\, \TopologicalSpace$.

\item
$
  \mathrm{KU}^{[\tau]}
  \big(
    \HomotopyQuotient
      { \TopologicalSpace }
      { \EquivarianceGroup }
  \big)
  \;:=\;
  \Truncation{0}
  \,\shape\,
  \SliceMaps{}{\mathbf{B}\GradedPUH}
    {
      \HomotopyQuotient
        { \TopologicalSpace }
        { \EquivarianceGroup }
    }
    {
      \HomotopyQuotient
        { \FredholmOperators }
        { \GradedPUH }
    }
$
denotes -- for a given stable twist
$
  \tau
  \,\in\,
  \Maps{}
    {
      \HomotopyQuotient
        {\TopologicalSpace}
        {\EquivarianceGroup}
    }
    { \mathbf{B} \GradedPUH }
  ^{\stable}
$
--
the
$[\tau]$-twisted
$\EquivarianceGroup$-equivariant
complex K-cohomology of $\TopologicalSpace$, being the set of homotopy classes of sections of the $\tau$-associated $\FredholmOperators$-fiber bundle over the orbifold $\HomotopyQuotient{\TopologicalSpace}{\EquivarianceGroup}$:
\vspace{-2mm}
\begin{equation}
\label{TwistedEquivariantKTheoryAsHomotopyClassesOfSections}
  \overset{
    \mathclap{
    \raisebox{2pt}{
      \tiny
      \color{orangeii}
      \bf
      \def\arraystretch{.9}
      \begin{tabular}{c}
        twisted equivariant
        \\
        complex K-cohomology
      \end{tabular}
    }
    }
  }{
  \mathrm{KU}^{[\tau]}
  \big(
    \HomotopyQuotient
      { \TopologicalSpace }
      { \EquivarianceGroup }
  \big)
  }
  \;\;
  :=
  \;\;
  \left\{
  \!\!\!
  \adjustbox{raise=2pt}{
  \begin{tikzcd}[column sep=large]
    &&
    \overset{
      \mathclap{
      \raisebox{2pt}{
        \tiny
        \color{darkblue}
        \bf
        \def\arraystretch{.9}
        \begin{tabular}{c}
          universal
          \\
          $\FredholmOperators$-bundle
        \end{tabular}
      }
      }
    }{
    \HomotopyQuotient
      { \FredholmOperators }
      { \GradedPUH }
    }
    \ar[d]
    \\
    \underset{
      \mathclap{
      \raisebox{-0pt}{
        \tiny
        \color{darkblue}
        \bf
        orbifold
      }
      }
    }{
    \HomotopyQuotient
      { \TopologicalSpace }
      { \EquivarianceGroup }
    }
    \ar[
      rr,
      "{
        \mbox{
          \tiny
          \color{greenii}
          \bf
          equivariant
          twist
        }
      }"{swap},
      "{ \tau }"{}
    ]
    \ar[
      urr,
      dashed,
      "{
        \mbox{
          \tiny
          \color{orangeii}
          \bf
          \def\arraystretch{.9}
          \begin{tabular}{c}
          twisted equivariant
          \\
          cocycle
          \end{tabular}
        }
      }"{sloped}
    ]
    &&
    \underset{
      \mathclap{
      \raisebox{-2pt}{
        \tiny
        \color{darkblue}
        \bf
        \def\arraystretch{.9}
        \begin{tabular}{c}
          moduli
          \\
          stack
        \end{tabular}
      }
      }
    }{
      \mathbf{B}\GradedPUH
    }
  \end{tikzcd}
  }
  \!\!\!
  \right\}_{
   \!\!\big/_{\homotopy}
  }
\end{equation}

\end{itemize}

We now specialize this general situation to the  of case of interest here, namely (1) to $\EquivarianceGroup$-fixed loci and (2) to $\EquivarianceGroup = \CyclicGroup{\ShiftedLevel}$:

\medskip
\noindent
{\bf Twisted equivariant K-theory of singularities.}

\vspace{-2mm}
\begin{itemize}[leftmargin=*]
\setlength\itemsep{-1pt}

\item
$\FredholmOperators^{\EquivarianceGroup} \xhookrightarrow{\;} \FredholmOperators$ denotes the subspace of Fredholm operators which are fixed under the action of the unique stable $\EquivarianceGroup$-representation \eqref{TheStableGRepresentation}, hence the following stacky homotopy fiber space:
\begin{equation}
  \label{EquivariantFredholmOperators}
  \begin{tikzcd}[column sep=huge]
    \FredholmOperators^{\EquivarianceGroup}
    \ar[r]
    \ar[d]
    \ar[
      dr,
      phantom,
      "{
        \mbox{
          \tiny
          \rm
          (pb)
        }
      }"
    ]
    &
    \Maps{\big}
      { \mathbf{B} \EquivarianceGroup }
      {
        \HomotopyQuotient
          { \FredholmOperators }
          { \PUH }
      }
    \ar[d]
    \\
    \ast
    \ar[
      r,
      "{
        \vdash \;
        \underset{i}{\oplus} \rho_i
        \otimes \HilbertSpace
      }"{swap}
    ]
    &
    \Maps{}
      { \mathbf{B}\EquivarianceGroup }
      { \mathbf{B} \PUH }\,.
  \end{tikzcd}
\end{equation}
Since the $\PUH$-action on $\FredholmOperators$ is by conjugation, these $\EquivarianceGroup$-fixed Fredholm operators are precisely the $\EquivarianceGroup$-equivariant Fredholm operators.
Therefore,
Schur's Lemma implies that $\EquivarianceGroup$-fixed Fredholm operators are direct sums of plain Fredholm operators $\FredholmOperator_i \,:\, \HilbertSpace \xrightarrow{\;} \HilbertSpace$ indexed by the $\EquivarianceGroup$-irreps:
\begin{equation}
  \label{EquivariantFredholmOperatorAsDirectSumOverIrreps}
  \FredholmOperator
  \,\in\,
  \FredholmOperators^{\EquivarianceGroup}
  \;\;\;\;\;\;\;\;\;\;\;\;\;
  \vdash
  \;\;\;\;\;\;\;\;\;\;\;\;\;
  \FredholmOperator
  \;\;:
  \begin{tikzcd}
    \underset{
      { [\rho_i] \in }
      \atop
      { \Irreps{\EquivarianceGroup} }
    }{\bigoplus}
    \rho_i \otimes \HilbertSpace
    \ar[
      rr,
      "{
        \bigoplus_i
        \,
        \mathrm{id}
        \otimes
        \FredholmOperator_i
      }"
    ]
    &&
    \underset{
      { [\rho_i] \in }
      \atop
      { \Irreps{\EquivarianceGroup} }
    }{\bigoplus}
    \rho_i \otimes \HilbertSpace
    \,.
  \end{tikzcd}
\end{equation}
Hence, in particular, the shape of the space
of $\EquivarianceGroup$-fixed Fredholm operators is the
$\Irreps{\EquivarianceGroup}$-indexed product
of copies of the shape of the space all Fredholm operators:
\begin{equation}
  \label{ShapeOfSpaceOfEquivariantFredholmOperators}
  \shape
  \big(
    \FredholmOperators^{\EquivarianceGroup}
  \big)
  \;\simeq\;
  \underset{
    \Irreps{\EquivarianceGroup}
  }{\prod}
  \big(
    \shape
    \,
    \FredholmOperators
  \big)
  \,.
\end{equation}
Consequently,
the
(co)kernels
\eqref{VirtualKernelOfFredholmOperator}
of $\EquivarianceGroup$-fixed Fredholm operators
are
 $\EquivarianceGroup$-subspaces
 of the stable
 $\EquivarianceGroup$-representation,
 hence are any finite-dimensional
 virtual $\EquivarianceGroup$-representations:
 \vspace{-2mm}
\begin{equation}
  \label{COKernelOfEquivariantFredholmOperator}
  \hspace{-2mm}
  {
    \FredholmOperator
    \,\in\,
    \FredholmOperators^{\EquivarianceGroup}
   \qquad
   \vdash
   \qquad
  }
  \FredholmOperator
  \;\;
  :
  \;\;
  \bigoplus_i
  \,
  \rho_i
  \otimes
  \HilbertSpace
  \;\simeq
  \!\!\!
  \begin{tikzcd}[row sep=-3pt, column sep=large]
    \bigoplus_i \rho_i
    \otimes
    \HilbertSpace
    &&
    \bigoplus_i \rho_i
    \otimes
    \HilbertSpace
    \\
   \qquad  \quad
   \oplus
    \qquad
    \ar[
      rr,
      "{
        \left(\!\!\!
        \hspace{-2pt}
        \begin{array}{cc}
          \mathrm{id}
          \\
          0
        \end{array}
        \hspace{-6pt}
        \begin{array}{cc}
          0
          \\
          0
        \end{array}
        \hspace{-2pt}
       \!\!\! \right)
      }"{description}
    ]
    &&
    \qquad
    \oplus \qquad \quad
    \\
    \quad \mathrm{ker}(\FredholmOperator)
    &&
    \mathrm{coker}(\FredholmOperator)
  \end{tikzcd}
  \!\!\!\!\!
  \simeq
  \;
  \bigoplus_i
  \rho_i
  \otimes
  \HilbertSpace
  \,.
\end{equation}

\vspace{-2mm}
\noindent Moreover, on the fixed locus
$\FredholmOperators^{\EquivarianceGroup}$
there is still an action
$
  \FredholmOperator
  \,\mapsto\,
  \color{purple}
  [\rho] \cdot \FredholmOperator
$
by group characters $[\rho] \,\in\, \CharacterGroup{\EquivarianceGroup}$,
induced by their projective intertwining
action
\eqref{CharacterGroupActionByAutoIntertwinersOnStableProjectiveRepresentation} on the stable $\EquivarianceGroup$-representation:
\vspace{-2mm}
\begin{equation}
  \label{ActionOfGroupCharacterOnEquivariantFRedholmOperator}
  \hspace{-4mm}
  \begin{tikzcd}[column sep=70pt]
    \overset{
      \mathclap{
      \raisebox{2pt}{
        \tiny
        \color{darkblue}
        \bf
        \def\arraystretch{.9}
        \begin{tabular}{c}
          stable
          $\EquivarianceGroup$-representation
        \end{tabular}
      }
      }
    }{
      \bigoplus_i
      \rho_i
      \otimes
      \HilbertSpace
    }
    \ar[
      rrr,
      phantom,
      shift right=18pt,
      "{
        \mbox{
          \tiny
          \color{orangeii}
          \bf
          \begin{tabular}{c}
            action of group character on
            equivariant Fredholm operator
          \end{tabular}
        }
      }"{pos=.45}
    ]
    \ar[
      dr,
      "{
        \FredholmOperator
      }",
      "{
        \mbox{
          \tiny
          \color{greenii}
          Fredholm operator
        }
      }"{swap, sloped, pos=.4}
    ]
    \ar[
      rr,
      "{
        v
        \,\mapsto\,
        1_{\rho} \otimes v
      }"
    ]
    \ar[
      dd,
      "{
        \bigoplus_i
        \rho_i
        \otimes
        \HilbertSpace
      }"{description}
    ]
    &[-10pt]
    &
    \bigoplus_i
    \rho_i
    \otimes
    \HilbertSpace
    \ar[
      dr,
      "{
        \color{purple}
        [\rho]
        \cdot
        \FredholmOperator
      }"
    ]
    &
    {}
    \\
    \ar[
      dr,
      phantom,
      "{
        \mbox{
          \tiny
          \color{orangeii}
          \def\arraystretch{.9}
          \begin{tabular}{c}
            equivariance of
            \\
            Fredholm operator
          \end{tabular}
        }
      }"{sloped}
    ]
    &
    \bigoplus_i
    \rho_i
    \otimes
    \HilbertSpace
    \ar[
      rr,
      "{
        v
        \,\mapsto\,
        1_{\rho} \otimes v
      }",
      "{
        \mbox{
          \tiny
          \color{greenii}
          \bf
          tensoring with unit of
          group character
        }
      }"{swap}
    ]
    \ar[
      dd,
      "{
        \bigoplus_s
        \rho_i(g)
        \otimes
        \mathrm{id}
      }"{description}
    ]
    &&
    \bigoplus_i
    \rho_i
    \otimes
    \HilbertSpace
    \ar[
      dd,
      "{
        \bigoplus_s
        \rho_i(g)
        \otimes
        \mathrm{id}
      }"{description}
    ]
    \\
    \bigoplus_i
    \rho_i
    \otimes
    \HilbertSpace
    \ar[
      dr,
      "{
        \FredholmOperator
      }"{swap}
    ]
    &
    {}
    \ar[
      rr,
      phantom,
      "{
        \mbox{
          \tiny
          \color{orangeii}
          \bf
          \def\arraystretch{.9}
          \begin{tabular}{c}
            projective intertwining action
            \\
            of group character
          \end{tabular}
        }
      }"
    ]
    &&
    {}
    \\
    &
    \bigoplus_i
    \rho_i
    \otimes
    \HilbertSpace
    \ar[
      rr,
      "{
        v
        \,\mapsto\,
        1_{\rho} \otimes v
      }"{swap}
    ]
    &&
    \bigoplus_i
    \rho_i
    \otimes
    \HilbertSpace
  \end{tikzcd}
\end{equation}

\vspace{0mm}
\noindent From the top square it is manifest that this $\CharacterGroup{\EquivarianceGroup}$-action on the
(co)kernels \eqref{COKernelOfEquivariantFredholmOperator}
is again by tensoring with unit elements in the group characters:
\vspace{-1mm}
$$
  \mathrm{ker}
  \big(
    [\rho]
    \cdot
    \FredholmOperator
  \big)
  \;=\;
  1_{\rho} \otimes
  \mathrm{ker}(\FredholmOperator)
  \,,
  \;\;
  \;\;\;\;\;
  \mathrm{coker}
  \big(
    [\rho]
    \cdot
    \FredholmOperator
  \big)
  \;\simeq\;
  1_{\rho}
    \otimes
  \mathrm{coker}(\FredholmOperator)
  \,.
$$
Therefore, the $\CharacterGroup{\EquivarianceGroup}$-action on the
geometric $\EquivarianceGroup$-fixed locus
\eqref{ShapeOfSpaceOfEquivariantFredholmOperators}
of the
classifying
$\EquivarianceGroup$-space
is by permuting the irrep labels as given by their (inverse) tensor product with the group characters:
\begin{equation}
  \label{CharacterGroupActionOnShapeOfGFixedFredholmOperators}
  \begin{tikzcd}[row sep=2pt]
    \CharacterGroup{\EquivarianceGroup}
    \times
    \shape
    \big(
      \FredholmOperators^{\EquivarianceGroup}
    \big)
    \ar[rr]
    \ar[d, phantom, "{\simeq}"{rotate=-90}]
    &&
    \shape
    \big(
      \FredholmOperators^{\EquivarianceGroup}
    \big)
    \ar[d, phantom, "{\simeq}"{rotate=-90}]
    \\[+4pt]
    \CharacterGroup{\EquivarianceGroup}
    \times
    \underset{
      \Irreps{\EquivarianceGroup}
    }{\prod}
    \shape\,\FredholmOperators
    &&
    \underset{
      \Irreps{\EquivarianceGroup}
    }{\prod}
    \shape\,\FredholmOperators
    \\[-3pt]
 \scalebox{0.8}{$
 \Big(
      {\color{purple}[\rho]},
      \,
      \big(
        f_{[\rho_i]}
      \big)_{[\rho_i] \in \Irreps{\EquivarianceGroup}}
    \Big)
    $}
    &\longmapsto&
\scalebox{0.8}{$
\big(
      f_{
        {\color{purple}[\rho]^{-1}}
        \cdot
        [\rho_i]
      }
    \big)
    _{
      [\rho_i] \in \Irreps{\EquivarianceGroup}
      \,.
    }
    $}
  \end{tikzcd}
\end{equation}

\vspace{-2mm}
\item
$
\HomotopyQuotient
  {\TopologicalSpace}
  {\EquivarianceGroup}
\;\simeq\;
\TopologicalSpace
 \times
\HomotopyQuotient
  {\ast}
  {\EquivarianceGroup}
$
denotes the special case of the domain orbifold
where $\TopologicalSpace$ is fixed by the $\EquivarianceGroup$-action, hence where $\TopologicalSpace$ is entirely ``inside a $\EquivarianceGroup$-orbi-singularity''.

In this case, the mapping stack adjunction
gives
the following natural equivalence of the moduli stack of stable twists:
\begin{equation}
  \label{ModuliOfStableTwistsOnGOrbiSingularity}
  \begin{tikzcd}[column sep=large]
  \Maps{}
    {
      \HomotopyQuotient
        { \TopologicalSpace }
        { \EquivarianceGroup }
    }
    { \mathbf{B} \PUH }
  ^{\stable}
  \ar[
    r,
    "{
      \underset{
        \mathclap{
        \raisebox{-1pt}{
          \tiny
          assmpt.
        }
        }
      }{
        \sim
      }
    }"{swap}
  ]
  &
  \Maps{}
    {
      \TopologicalSpace
      \times
      \HomotopyQuotient
        { \ast }
        { \EquivarianceGroup }
    }
    { \mathbf{B} \PUH }
  ^{\stable}
    \ar[
      r,
      "{ \widetilde{(-)} }",
      "{
        \underset{
          \raisebox{-1pt}{
            \tiny
            adj.
          }
        }{
          \sim
        }
      } "{swap}
    ]
    &
  \Maps{\big}
    { \TopologicalSpace }
    {
      \Maps{}
        {
          \mathbf{B} \EquivarianceGroup
        }
        { \mathbf{B} \PUH }
      ^\stable
    }.
  \end{tikzcd}
  \end{equation}

\vspace{-2mm}
If here $\SmoothManifold$ is a smooth manifold, then the maps on the right are classified by (\cite[Ex. 4.3.19]{SS22EPB}):
$$
  \shape
  \,
  \Maps{}
    { \mathbf{B}\EquivarianceGroup }
    { \mathbf{B} \PUH }
  ^\stable
  \;\;
  \simeq
  \;\;
  \Maps{}
    { B G }
    { B^3 \Integers }
  \,.
$$
Furthermore, when  $\EquivarianceGroup \,\simeq\, \CyclicGroup{\ShiftedLevel}$, where
the cohomology groups are given by
$$
  \begin{aligned}
    H^3
    \big(
      B \CyclicGroup{\ShiftedLevel}
      ;\,
      \Integers
    \big)
    &
    \;\simeq\;
    H^3_{\mathrm{Grp}}
    \big(
      \CyclicGroup{\ShiftedLevel}
      ;\,
      \Integers
    \big)
    \\
    & \;\simeq\;
    0
    \\
    {\phantom{A}}
  \end{aligned}
  \qquad  \quad
  \mbox{and}
  \qquad  \quad
  \begin{aligned}
    H^2
    \big(
      B \CyclicGroup{\ShiftedLevel};
      \, \Integers
    \big)
    &
    \;\simeq\;
    H^1
    \big(
      B \CyclicGroup{\ShiftedLevel}
      ;\,
      B \CircleGroup
    \big)
    \\
    &
    \;\simeq\;
    \Homs{}
      { \CyclicGroup{\ShiftedLevel} }
      { \CircleGroup }
    \\
    &
    \;=\;
    \CharacterGroup{\CyclicGroup{\ShiftedLevel}}
    \,,
  \end{aligned}
$$
this classifying space of twists becomes
$
  \Maps{}
    { B \CyclicGroup{\ShiftedLevel} }
    { B^3 \Integers }
  \;\simeq\;
  B \CharacterGroup{\CyclicGroup{\ShiftedLevel}}
  \,\times\,
  B^3 \Integers
  \,,
$
so that the twists \eqref{ModuliOfStableTwistsOnGOrbiSingularity} on an $\mathbb{A}_{\ShiftedLevel-1}$-singularity
$\SmoothManifold$ are classified by
\vspace{-2mm}
\begin{equation}
  \label{ClassesOfTwistsOnATypeSingularity}
  \Truncation{0}
  \,
  \shape
  \,
  \Maps{\big}
    { \SmoothManifold }
    {
      \Maps{}
        { \mathbf{B} \CyclicGroup{\ShiftedLevel} }
        { \mathbf{B} \PUH }
    }
  \;\;
  \simeq
  \;\;
  \overset{
    \mathclap{
    \raisebox{2pt}{
      \tiny
      \color{darkblue}
      \bf
      \begin{tabular}{c}
        special
        \\
        deg=1 twists
      \end{tabular}
    }
    }
  }{
  H^1\big(
    \SmoothManifold
    ;\,
    \CharacterGroup{\CyclicGroup{\ShiftedLevel}}
  \big)
  }
  \,\times\,
  \overset{
    \raisebox{2pt}{
      \tiny
      \color{darkblue}
      \bf
      \def\arraystretch{.9}
      \begin{tabular}{c}
        ordinary
        \\
        deg=3 twists
      \end{tabular}
    }
  }{
  H^3\big(
    \SmoothManifold
    ;\,
    \Integers
  \big)
  }
  \,.
\end{equation}
The second factor is the familiar degree=3 twist of complex K-theory, while the first factor is the twist
by flat complex line bundles with holonomy in the character group of the singularity, that appear in Prop. \ref{TheBareBProfiniteIntegersTwistOfComplexRationalizedAEquivariantKTheory}.

\item
Denote by
\vspace{-3mm}
$$
  \widetilde{\tau}
  \;\in\!
  \begin{tikzcd}[column sep=large]
  \Maps{}
    { \TopologicalSpace }
    {
      \mathbf{B} \CharacterGroup{\EquivarianceGroup}
    }
  \ar[
    r,
    "{
      \Maps{}
        { \TopologicalSpace }
        { \iota }
    }",
    "{
      \mbox{
        \tiny
        \eqref{TheStableGRepresentation}
     }
    }"{swap}
  ]
  &[+9pt]
  \Maps{\Big}
    { \TopologicalSpace }
    {
      \Maps{}
        { \mathbf{B}\EquivarianceGroup }
        { \mathbf{B} \PUH }
      ^\stable
    }
  \underset{
    \mathclap{
    \raisebox{-0pt}{
      \tiny
      \eqref{ModuliOfStableTwistsOnGOrbiSingularity}
    }
    }
  }{
    \;\simeq\;
  }
  \Maps{\big}
    {
      \TopologicalSpace
      \times
      \HomotopyQuotient{\ast}{\EquivarianceGroup}
    }
    { \mathbf{B} \PUH }
  ^{\stable}
  \end{tikzcd}
$$

\vspace{-3mm}
\noindent the assumption that the given stable twist
$\tau$ on $\TopologicalSpace \times \HomotopyQuotient{\ast}{\EquivarianceGroup}$ is on $\TopologicalSpace$
a cocycle with coefficients in $B \CharacterGroup{\EquivarianceGroup}$
\eqref{CharacterGroupActionByAutoIntertwinersOnStableProjectiveRepresentation}
(i.e., in the first factor of \eqref{ClassesOfTwistsOnATypeSingularity}).
This means that, under the equivalence
\eqref{ModuliOfStableTwistsOnGOrbiSingularity},
the
TE-K-theory cocycles \eqref{TwistedEquivariantKTheoryAsHomotopyClassesOfSections} now take values
in $\CharacterGroup{\EquivarianceGroup}$-associated
\eqref{ActionOfGroupCharacterOnEquivariantFRedholmOperator}
$\FredholmOperators^{\EquivarianceGroup}$-fiber
\eqref{EquivariantFredholmOperators}
bundles over $\TopologicalSpace$:
\vspace{-2mm}
\begin{equation}
  \label{TheAdjunctKCocycle}
  \begin{tikzcd}[column sep=large]
    &
    \HomotopyQuotient
      { \FredholmOperators }
      { \PUH }
    \ar[d]
    \\
    \ComplexManifold
    \times
    \HomotopyQuotient
      { \ast }
      { \EquivarianceGroup }
    \ar[
      r,
      "{
        \tau
      }",
      "{
        \mbox{
          \tiny
          \color{greenii}
          \bf
          \def\arraystretch{.9}
          \begin{tabular}{c}
            equivariant
            \\
            twist
          \end{tabular}
        }
      }"{swap}
    ]
    \ar[
      ur,
      dashed,
      "{
        \mbox{
          \tiny \bf
          \color{greenii}
          K-cocycle
        }
      }"{sloped}
    ]
    &
    \mathbf{B}\PUH
  \end{tikzcd}
  \hspace{.7cm}
    \leftrightarrow
  \hspace{.7cm}
  \begin{tikzcd}[column sep=large]
    &
    \HomotopyQuotient
      {
        \FredholmOperators^{
          \scalebox{.7}{$\EquivarianceGroup$}
        }
      }
      { \CharacterGroup{\EquivarianceGroup} }
    \ar[d]
    \ar[
      r
    ]
    \ar[dr, phantom, "{\mbox{\tiny(pb)}}"]
    &
    \Maps{}
      { \mathbf{B}\EquivarianceGroup }
      {
        \HomotopyQuotient
          { \FredholmOperators }
          { \PUH }
      }
      ^{\stable}
    \ar[d]
    \\
    \ComplexManifold
    \ar[
      r,
      "{
        \mbox{
          \tiny
          \color{orangeii}
          \bf
          \def\arraystretch{.9}
          \begin{tabular}{c}
            1-twist on
            \\
            singularity
          \end{tabular}
        }
      }"{swap}
    ]
    \ar[
      ur,
      dashed,
      "{
        \mbox{
          \tiny
          \color{greenii}
          \bf
          \def\arraystretch{.9}
          \begin{tabular}{c}
            adjunct
            \\
            K-cocycle
          \end{tabular}
        }
      }"{sloped, pos=.40}
    ]
    \ar[
      rr,
      rounded corners,
      to path={
        (\tikztostart.south)
        -- ([yshift=-12.8pt]\tikztostart.south)
        -- node[above]{\scalebox{.78}{$\widetilde{\tau}$}}
           ([yshift=-10pt]\tikztotarget.south)
        -- ([yshift=-00pt]\tikztotarget.south)
      }    ]
    &
    \mathbf{B}
    \CharacterGroup{\EquivarianceGroup}
    \ar[
      r,
      "{\iota}",
      "{
        \mbox{
          \tiny
          \eqref{TheStableGRepresentation}
        }
      }"{swap}
    ]
    &
    \Maps{}
      { \mathbf{B}\EquivarianceGroup }
      { \mathbf{B}\PUH }
    ^{\stable}
  \end{tikzcd}
\end{equation}

\end{itemize}

 This brings out the special 1-twist that appears on singularities. Next we see how
this becomes the twist  a flat complex line bundle, by entering the differential setting,
i.e., into twisted equivariant differential K-theory.

\medskip

\noindent
{\bf \TED-K-Theory of $\mathbb{A}$-type singularities} (following \cite[Def. 4.38]{FSS20CharacterMap}).

\vspace{-1mm}
\begin{itemize}[leftmargin=*]
\setlength\itemsep{-1pt}

\vspace{-.1cm}
\item
$\SmoothInfinityGroupoids := \Localization{\LocalWeakEquivalences} \SimplicialPresheaves(\CartesianSpaces)$ denotes the homotopy theory of $\infty$-stacks over the site of smooth manifolds (e.g.  \cite[Ex. A.49]{FSS20CharacterMap}).

For the most part here we just need that, in addition to plain homotopy types, this contains the sheaves of De Rham complexes:

\vspace{-.0cm}
\item
$ B^{d} \DeRhamComplex{}{-; \ComplexNumbers} \,\in\, \SmoothInfinityGroupoids $ denotes the image under the Dold-Kan embedding $\DoldKanConstruction$ (cf \cite[Ex. A.66]{FSS20CharacterMap}) of the sheaf of de Rham complexes of $\ComplexNumbers$-valued differential forms, regarded as chain complexes in non-positive degrees, shifted up in degrees by $d$ and then cohomologically truncated in degree 0:
$$
  B^{d}
  \DeRhamComplex{}{-; \ComplexNumbers}
  \;:=\;
  \DoldKanConstruction
  \Big(
    \cdots
    \xrightarrow{0}
    0
    \xrightarrow{0}
    \DifferentialForms{}{0}
      {-; \ComplexNumbers}
    \xrightarrow{\DeRhamDifferential}
    \DifferentialForms{}{1}
      {-; \ComplexNumbers}
    \xrightarrow{\DeRhamDifferential}
    \cdots
    \xrightarrow{\DeRhamDifferential}
    \DifferentialForms{}{d}
      {-; \ComplexNumbers}
    \vert_{ \DeRhamDifferential = 0}
  \Big)
  \;\;\;
  \in
  \;
  \SmoothInfinityGroupoids
  \,.
$$
By the Poincar{\'e} Lemma and as an incarnation of the De Rham theorem, this  $\infty$-stack is equivalent
to the classifying space for ordinary cohomology with complex coefficients in degree $d$:
\begin{equation}
  \label{PoincareLemma}
  B^d \ComplexNumbers
  \;\simeq\;
  B^d
 \DeRhamComplex{}{-; \ComplexNumbers}
  \;\;\;
  \in
  \;
  \SmoothInfinityGroupoids
  \,.
\end{equation}

\vspace{-2mm}
\item
$
B^d
\DeRhamComplex{}
  {-; \IrrepOfCyclicGroup{\Denominator}{\ShiftedLevel}}
$
denotes the de Rham complex as above,
now regarded with an action \eqref{IrrepsOfCyclicGroup} of $\CharacterGroup{\CyclicGroup{\ShiftedLevel}}$ on its coefficients.

\item
$
  \HomotopyQuotient
    {
      B^d
      \DeRhamComplex{}
      {-;\IrrepOfCyclicGroup{\Denominator}{\ShiftedLevel}}
    }
    {
      \CharacterGroup
        {\CyclicGroup{\ShiftedLevel}}
    }
  \;\;\;
  \in
  \;
  \Slice
    { (\SmoothInfinityGroupoids) }
    {
      \mathbf{B}
      \CharacterGroup{\CyclicGroup{\ShiftedLevel}}
    }
$
denotes the corresponding homotopy quotients.

\vspace{-.0cm}
\item
$X \xrightarrow{\eta^{\ComplexNumbers}_{X}} \CRationalization X$ denotes the {\it $\ComplexNumbers$-rationalization} of a
simply-connected homotopy type $X$, hence its $\RationalNumbers$-rationalization followed by derived extension of scalars along $\RationalNumbers \xhookrightarrow{\;} \ComplexNumbers$ (cf. \cite[Rem. 3.64]{FSS20CharacterMap}).

In particular, the $\ComplexNumbers$-rational homotopy type of the space of Fredholm operators
\eqref{AFredholmOperator} is the map representing the usual Chern character (cf. \cite[Ex. 4.13]{FSS20CharacterMap}):
\begin{equation}
  \label{CRationalizationOfSpaceOfFredholmOperators}
  \begin{tikzcd}
  \shape \,
  \FredholmOperators
  \ar[r]
  &
  \CRationalization
  \big(
    \shape
    \,
    \FredholmOperators
  \big)
  \;\simeq\;
  \underset{
    d \in \NaturalNumbers
  }{\prod}
  \,
  B^{2d} \ComplexNumbers
  \end{tikzcd}
\end{equation}
and that of its $\EquivarianceGroup$-fixed locus
with its $\CharacterGroup{\CyclicGroup{\ShiftedLevel}}$-action \eqref{CharacterGroupActionOnShapeOfGFixedFredholmOperators} is as follows:
\begin{equation}
  \label{CRationalizationOfGFixedFRedholmOperators}
  \def\arraystretch{1.7}
  \begin{array}{lll}
  \CharacterGroup{\CyclicGroup{\ShiftedLevel}}
  \acts \;
  \Big(
    \Localization{\ComplexNumbers}
    \,
    \shape
    \big(
      \FredholmOperators^{\EquivarianceGroup}
    \big)
  \Big)
  &
  \;\simeq\;
  \CharacterGroup{\CyclicGroup{\ShiftedLevel}}
  \acts \;
  \Big(
    \Localization{\ComplexNumbers}
    \,
    \underset{
      \Irreps{\EquivarianceGroup}
    }{\prod}
    \,
    \shape
    \,
    \FredholmOperators
  \Big)
  &
  \proofstep{
    by \eqref{ShapeOfSpaceOfEquivariantFredholmOperators}
  }
  \\
  &
  \;\simeq\;
  \CharacterGroup{\CyclicGroup{\ShiftedLevel}}
  \acts \;
  \Big(\;
    \underset{
      \Irreps{\EquivarianceGroup}
    }{\prod}
    \Localization{\ComplexNumbers}
    \,
    \shape
    \,
    \FredholmOperators
  \Big)
  \\
  &
  \;\simeq\;
  \CharacterGroup{\CyclicGroup{\ShiftedLevel}}
  \acts \;
  \Big(\;
    \underset{
      \Irreps{\EquivarianceGroup}
    }{\prod}
    \;
    \underset{
      d \,\in\, \NaturalNumbers
    }{\prod}
    B^{2d} \ComplexNumbers
  \Big)
  &
  \proofstep{
    by \eqref{CRationalizationOfSpaceOfFredholmOperators}
  }
  \\
  &
  \;\simeq\;
  \CharacterGroup{\CyclicGroup{\ShiftedLevel}}
  \acts \;
  \Big(\;
  \underset{
    \Irreps{\EquivarianceGroup}
  }{\bigoplus}
  \;
  \underset{
    d \in \NaturalNumbers
  }{\bigoplus}
   B^{2d}
   \DeRhamComplex{}
     { -;\,\ComplexNumbers }
  \Big)
  &
  \proofstep{
    by
    \eqref{PoincareLemma}
  }
  \\
  &
  \;\simeq\;
  \underset{
    1 \leq \, \Denominator \, \leq \ShiftedLevel
  }{\bigoplus}
  \;\,
  \underset{
    d \in \NaturalNumbers
  }{\bigoplus}
\,   B^{2d}
   \DeRhamComplex{}
     {
     -
     ;\,
     \IrrepOfCyclicGroup{\Denominator}{\ShiftedLevel}
     }
   &
   \proofstep{
     by
     \eqref{CharacterGroupActionOnShapeOfGFixedFredholmOperators}
     \&
     \eqref{RepresentationRingOfCyclicGroupAsRegRepOfCharacterGroup}
   }.
  \end{array}
\end{equation}
(In the last step, notice that the sign on the far right of \eqref{RepresentationRingOfCyclicGroupAsRegRepOfCharacterGroup} cancels against the inversion in the action \eqref{CharacterGroupActionOnShapeOfGFixedFredholmOperators}.)

\item
$
\DifferentialKU
\,\in\,
\SmoothInfinityGroupoids$
denotes the coefficient stack for {\it differential complex K-theory}, namely the following
homotopy fiber product (as in the general setting of \cite{HopkinsSinger05}\cite{BNV13}, we follow \cite[Def. 4.38]{FSS20CharacterMap}):
\begin{equation}
  \label{FiberProductDefiningHolomorphicDifferentialKTheory}
  \begin{tikzcd}[column sep=65pt]
    \overset{
    }{
      \DifferentialKU
    }
    \ar[d]
    \ar[
      rr,
      "{
      }"
    ]
    \ar[
      drr,
      phantom,
      "{\mbox{\tiny (pb)}}"{pos=.58}
    ]
    &
    &
    \underset{
      d \in \NaturalNumbers
    }{\bigoplus}
    \DifferentialForms{}{2d}
      {-;\, \ComplexNumbers}
      \vert_{ \DeRhamDifferential = 0}
    \ar[
      d,
      start anchor={[yshift=5pt]},
      "{
        \mbox{
          \tiny
        }
      }"{xshift=-2pt}
    ]
    \\
    \shape \, \FredholmOperators
    \ar[
      r,
      "{
        \mbox{
          \tiny
          \eqref{CRationalizationOfSpaceOfFredholmOperators}
        }
      }"{swap}
    ]
    \ar[
      rr,
      rounded corners,
      to path={
           ([yshift=-00pt]\tikztostart.south)
        -- ([yshift=-11pt]\tikztostart.south)
        -- node[above]{\scalebox{.7}{$
             \mathbf{ch}
           $}}
           ([yshift=-2pt]\tikztotarget.south)
        -- ([yshift=+07pt]\tikztotarget.south)
      }
    ]
    &
    \CRationalization \,
    \shape \, \FredholmOperators
    \ar[
      r,
      "{\sim}",
      "{
        \mbox{
        \tiny
          \eqref{CRationalizationOfSpaceOfFredholmOperators}
          \&
          \eqref{PoincareLemma}
                  }
      }"{swap}
    ]
    &
    \underset{
      d \in \NaturalNumbers
    }{\bigoplus}
    \,
    B^{2d}
    \DeRhamComplex{}
      {-;\ComplexNumbers}
  \end{tikzcd}
\end{equation}

The full twisted equivariant version of this holomorphic-differential K-theory is defined in the analogous way for each fixed locus $\FredholmOperators^{H}$, as  $\EquivarianceGroup/H$ for $H \subset \EquivarianceGroup$ ranges through the category of orbits of the group $\EquivarianceGroup$. Here we only need and
hence only display the
adjunct situation \eqref{TheAdjunctKCocycle} over the $\EquivarianceGroup$-singularity:

\item
$
\HomotopyQuotient
{
\DifferentialKU
  ^{\CyclicGroup{\ShiftedLevel}}
}
{ \CharacterGroup{\CyclicGroup{\ShiftedLevel}} }
\;\in\;
\ComplexInfinityGroupoids
$
denotes the homotopy quotient of the
$\CharacterGroup{\CyclicGroup{\ShiftedLevel}}$-action
induced by the respective actions
\eqref{CRationalizationOfGFixedFRedholmOperators}
on the $\CyclicGroup{\ShiftedLevel}$-fixed loci
in the factors of the fiber product,
hence the following homotopy pullback:
\begin{equation}
  \label{FiberProductDefiningTEdKThepryOnSingularity}
  \begin{tikzcd}[column sep=huge]
    \overset{
    }{
      \HomotopyQuotient
        { \DifferentialKU^{\CyclicGroup{\ShiftedLevel}} }
        { \CharacterGroup{\CyclicGroup{\ShiftedLevel}} }
    }
    \ar[d]
    \ar[
      rr,
      "{
      }"
    ]
    \ar[
      drr,
      phantom,
      "{\mbox{\tiny (pb)}}"{pos=.58}
    ]
    &
    &
    \HomotopyQuotient
    {
    \underset{
      1 \leq \, \Denominator \, \leq \NumberOfPunctures
    }{\bigoplus}
    \;
    \underset{
      d \in \NaturalNumbers
    }{\bigoplus}
    \DifferentialForms{}{2d}
      {
        -
        \,;
        \IrrepOfCyclicGroup{\Denominator}{\ShiftedLevel}
      }
      \vert_{\DeRhamDifferential = 0}
    }
    { \CharacterGroup{\CyclicGroup{\ShiftedLevel}} }
    \ar[
      d,
      start anchor={[yshift=5pt]},
      "{
        \mbox{
          \tiny
        }
      }"{xshift=-2pt}
    ]
    \\
    \HomotopyQuotient
    {
      \shape
      \big(
        \FredholmOperators^{\CyclicGroup{\ShiftedLevel}}
      \big)
    }
    { \CharacterGroup{\CyclicGroup{\ShiftedLevel}} }
    \ar[
      r,
      "{
        \mbox{
          \tiny
          \eqref{CRationalizationOfSpaceOfFredholmOperators}
        }
      }"{swap}
    ]
    \ar[
      rr,
      rounded corners,
      to path={
           ([yshift=-00pt]\tikztostart.south)
        -- ([yshift=-14pt]\tikztostart.south)
        -- node[above]{\scalebox{.7}{$
             \HomotopyQuotient
               { \mathbf{ch}^{\CyclicGroup{\ShiftedLevel}} }
               { \CharacterGroup{\CyclicGroup{\ShiftedLevel}} }
           $}}
           node[below]{
             \tiny
             {\color{greenii} \bf
             twisted equivariant Chern character}
             on $\mathbb{A}$-type singularity
           }
           ([yshift=-7pt]\tikztotarget.south)
        -- ([yshift=+07pt]\tikztotarget.south)
      }
    ]
    &
    \HomotopyQuotient
    {
      \Big(
        \CRationalization \,
        \shape
        \big(
          \FredholmOperators^{\CyclicGroup{\ShiftedLevel}}
        \big)
      \Big)
    }
    { \CharacterGroup{\CyclicGroup{\ShiftedLevel}} }
    \ar[
      r,
      "{\sim}",
      "{
        \mbox{
          \tiny
          \eqref{CRationalizationOfGFixedFRedholmOperators}
        }
      }"{swap}
    ]
    &
    \HomotopyQuotient
    {
      \underset{
        1 \leq \, \Denominator \, \leq \NumberOfPunctures
      }{\bigoplus}
      \;
      \underset{
        d \in \NaturalNumbers
      }{\bigoplus}
      \,
      B^{2d}
      \DeRhamComplex{}
        {
          -
          \,;
          \IrrepOfCyclicGroup{\Denominator}{\ShiftedLevel}
        }
    }
    { \CharacterGroup{\CyclicGroup{\ShiftedLevel}} }
  \end{tikzcd}
\end{equation}

This is the moduli stack for
\TED-K-theory of $\CyclicGroup{\ShiftedLevel}$-singularities
(as appearing in Prop. \ref{TheBareBProfiniteIntegersTwistOfComplexRationalizedAEquivariantKTheory}), which is now defined,
analogously to \eqref{TwistedEquivariantKTheoryAsHomotopyClassesOfSections}, as:
$$
  \mathrm{KU}
    ^{1 + \NumberOfProbeBranes + [\FlatConnectionForm]}
    _{\diff}
  \big(
    \SmoothManifold
    \times
    \HomotopyQuotient
      {\ast}
      { \CyclicGroup{\ShiftedLevel} }
  \big)
  \;\;
  =
  \;\;
  \left\{\!\!\!\!
  \adjustbox{raise=4pt}{
  \begin{tikzcd}
    &&
    \HomotopyQuotient
      {
        \DifferentialKU
          ^{\CyclicGroup{\ShiftedLevel}}
      }
      { \CharacterGroup{\CyclicGroup{\ShiftedLevel}} }
    \ar[d]
    \\
    \ComplexManifold
    \ar[
      rr,
      "{ [\FlatConnectionForm] }"{swap}
    ]
    \ar[
      urr,
      dashed
    ]
    &&
    \mathbf{B}
    \CharacterGroup{\CyclicGroup{\ShiftedLevel}}
  \end{tikzcd}
  }
  \!\!\!\! \right\}_{\big/\homotopy}
  .
$$

Forming the homotopy fiber sequence of the twisted equivariant Chern character map \eqref{FiberProductDefiningTEdKThepryOnSingularity}
we obtain first the
moduli stack for the flat version
and then the secondary Chern character in TED-K-theory of $\mathbb{A}$-type singularities, yielding the stacky avatar of a
twisted and equivariant version of the differential cohomology hexagon \eqref{TheDifferentialCohomologyHexagonForKTheory}:
$$
\hspace{-2mm}
  \begin{tikzcd}[column sep=8pt]
    \HomotopyQuotient
    {
      \underset{
        1 \leq \, \Denominator \, \leq \NumberOfPunctures
      }{\bigoplus}
      \;
      \underset{
        d \in \NaturalNumbers
      }{\bigoplus}
      \,
      \mathbf{B}^{1 + 2d}
      \DeRhamComplex{}
        {
          -
          \,;
          \IrrepOfCyclicGroup{\Denominator}{\ShiftedLevel}
        }
    }
    { \CharacterGroup{\CyclicGroup{\ShiftedLevel}} }
    \ar[dr]
    \ar[rr]
    &
    &
    \HomotopyQuotient
    {
      \DifferentialKU^{\CyclicGroup{\ShiftedLevel}}
    }{ \CharacterGroup{\CyclicGroup{\ShiftedLevel}} }
    \ar[
      rr,
      " { \scalebox{0.8}{$
        \HomotopyQuotient
        {
        \mathbf{ch}
          ^{\CyclicGroup{\ShiftedLevel}}
          _{\diff}
        \;}
        {\; \CharacterGroup{\CyclicGroup{\ShiftedLevel}} }
      $}}"
    ]
    \ar[dr]
    &
    &
    \HomotopyQuotient
    {
      \underset{
        1 \leq \, \Denominator \, \leq \NumberOfPunctures
      }{\bigoplus}
      \;
      \underset{
        d \in \NaturalNumbers
      }{\bigoplus}
      \,
      \mathbf{B}^{2d}
      \DeRhamComplex{}
        {
          -
          \,;
          \IrrepOfCyclicGroup{\Denominator}{\ShiftedLevel}
        }
    }
    { \CharacterGroup{\CyclicGroup{\ShiftedLevel}} }\;.
    \\
    &
    \HomotopyQuotient{
      \mathrm{KU}
        ^{ \CyclicGroup{\ShiftedLevel} }
        _{\flat}
    }{ \CharacterGroup{\CyclicGroup{\ShiftedLevel}} }
    \ar[rr]
    \ar[ur]
    &&
    \HomotopyQuotient
    {
      \shape
      \big(
        \FredholmOperators
          ^{\CyclicGroup{\ShiftedLevel}}
      \big)
    }{ \CyclicGroup{\ShiftedLevel} }
    \ar[
      ur,
      " { \scalebox{0.8}{$
        \HomotopyQuotient
        {
          \mathbf{ch}
            ^{\CyclicGroup{\ShiftedLevel}}
        }
        { \CharacterGroup{\ShiftedLevel} }
        $}
      }"{sloped, swap}
    ]
  \end{tikzcd}
$$
Finally,
the equivalence classes of maps into these smooth
stacks, sliced over $\mathbf{B}\CyclicGroup{\ShiftedLevel}$,
gives the cohomology groups:

\vspace{-.4cm}
\begin{equation}
  \label{ObtainingTEdKTheoryAndItsChernCharacter}
  \hspace{-4mm}
\begin{array}{c}
  \raisebox{-11pt}{$
  \underset{
    {d \in \NaturalNumbers}
    \atop
    {1 \leq \, \Denominator \, \leq \ShiftedLevel}
  }{\bigoplus}
  H^{1 + 2d }
  \big(
    \DeRhamComplex{}
      {\SmoothManifold ;\, \ComplexNumbers}
    ,
    \DeRhamDifferential
      +
    \Denominator
      \cdot
    \FlatConnectionForm
  \big)
  $}
  \\
    \rotatebox{-90}{
    \hspace{-12pt}
    $\simeq$
  }
  \\
  \\
  \left\{
  \;\;
  \adjustbox{raise=4pt}{
  \begin{tikzcd}[column sep=6pt]
    &&
    \HomotopyQuotient
      {
        \mathllap{
        \underset
          {
            \mathclap{
            { d \in \NaturalNumbers }
            \atop
            { {1 \leq \, \Denominator \, \leq \ShiftedLevel}  }
            }
          }
          {\bigoplus}
          \,
          \mathbf{B}^{1+2d}
          }
        \;
        \DeRhamComplex{}
          {
            -;
            \IrrepOfCyclicGroup{\Denominator}{\ShiftedLevel}
          }
      }
      { \CharacterGroup{\CyclicGroup{\ShiftedLevel}} }
    \ar[
      d,
      shorten <=-11pt
    ]
    \\
    \ComplexManifold
    \ar[
      rr,
      "{ [\FlatConnectionForm] }"{swap}
    ]
    \ar[
      urr,
      dashed,
      shorten >=-14pt
    ]
    &&
    \mathbf{B}
    \CharacterGroup{\CyclicGroup{\ShiftedLevel}}
  \end{tikzcd}
  }
  \!\!\!\!\!\!
  \right\}_{\!\!\!\!\!\!\big/\homotopy}
\end{array}
\hspace{-0.4cm}
\begin{tikzcd}[column sep=14pt]
  {}
  \ar[
    rr,
    shift left=42pt,
    "{
    }"
  ]
  &&
  {}
\end{tikzcd}
\hspace{-.6cm}
  \begin{array}{c}
  \DifferentialKU^{ 0 + [\FlatConnectionForm] }
  \big(
    \ComplexManifold
    \times
    \HomotopyQuotient
      {\ast}
      { \CyclicGroup{\ShiftedLevel} }
  \big)
  \\
  \rotatebox{-90}{
    $:=$
  }
  \\
  \\
  \left\{
  \!\!\!\!\!
  \adjustbox{raise=4pt}{
  \begin{tikzcd}[column sep=4pt]
    &&
    \HomotopyQuotient
      {
        \DifferentialKU^{\CyclicGroup{\ShiftedLevel}}
      }
      { \CharacterGroup{\CyclicGroup{\ShiftedLevel}} }
    \ar[d]
    \\
    \ComplexManifold
    \ar[
      rr,
      "{ [\FlatConnectionForm] }"{swap}
    ]
    \ar[
      urr,
      dashed
    ]
    &&
    \mathbf{B}
    \CharacterGroup{\CyclicGroup{\ShiftedLevel}}
  \end{tikzcd}
  }
  \!\!\!\!\!
  \right\}_{\!\!\!\!\!\big/\homotopy}
\end{array}
\hspace{-0.6cm}
\begin{tikzcd}[column sep=19pt]
  {}
  \ar[
    rr,
    shift left=42pt,
    "{ \scalebox{0.8}{$
      \big(
        \HomotopyQuotient
        {
        \mathbf{ch}
          ^{\CyclicGroup{\ShiftedLevel}}
          _{\diff}
        \;}
        {\; \CharacterGroup{\CyclicGroup{\ShiftedLevel}} }
      \big)_{\!\ast}
      $}\;\;
    }"
  ]
  &&
  {}
\end{tikzcd}
\hspace{-.9cm}
\begin{array}{c}
  \raisebox{-11pt}{$
  \underset{
    {d \in \NaturalNumbers}
    \atop
    {1 \leq \, \Denominator \, \leq \ShiftedLevel}
  }{\bigoplus}
  H^{2d }
  \big(
    \DeRhamComplex{}
      {\SmoothManifold}
    ,
    \DeRhamDifferential
      +
    \Denominator
      \cdot
    \FlatConnectionForm
  \big)
  $}
  \\
  \rotatebox{-90}{
    \hspace{-12pt}
    $\simeq$
  }
  \\
  \\
  \left\{
  \!\!\!\!
  \adjustbox{raise=4pt}{
  \begin{tikzcd}[column sep=6pt]
    &&
    \HomotopyQuotient
      {
        \mathllap{
        \underset
          {
            \mathclap{
            { d \in \NaturalNumbers }
            \atop
            { {1 \leq \, \Denominator \, \leq \ShiftedLevel}  }
            }
          }
          {\bigoplus}
          \,
          \mathbf{B}^{2d}
          }
        \;
        \DeRhamComplex{}
          {
            -;
            \IrrepOfCyclicGroup{\Denominator}{\ShiftedLevel}
          }
      }
      { \CharacterGroup{\CyclicGroup{\ShiftedLevel}} }
    \ar[
      d,
      shorten <=-11pt
    ]
    \\
    \ComplexManifold
    \ar[
      rr,
      "{ [\FlatConnectionForm] }"{swap}
    ]
    \ar[
      urr,
      dashed,
      shorten >=-14pt
    ]
    &&
    \mathbf{B}
    \CharacterGroup{\CyclicGroup{\ShiftedLevel}}
  \end{tikzcd}
  }
  \!\!\!\!\!\!
  \right\}_{\!\!\!\!\!\!\big/\homotopy}
  \end{array}
\end{equation}
(Here the equivalences on the left and right follow by
\eqref{HolomorphicDeRhamComplexOverSteinManifolds}
and
\eqref{TwistedDeRhamIsomorphicToEquivariantDeRhamOnCoveringSpace},
after
using that the simplicial presheaf $\mathbf{B}^{2d}\DeRhamComplex{}{-}$ satisfies descent over smooth manifolds, so that the sliced hom-space above is equivalently that of $\pi_1(\SmoothManifold) \xrightarrow{\;} \CharacterGroup{\CyclicGroup{\ShiftedLevel}}$-equivariant de Rham cohomology on the universal cover of $\SmoothManifold$.)

This presentation
\eqref{ObtainingTEdKTheoryAndItsChernCharacter}
now makes manifest
the claim of Prop. \ref{TheBareBProfiniteIntegersTwistOfComplexRationalizedAEquivariantKTheory}.

\end{itemize}

\newpage

%%%%%%%%%%%%%%%%%%%%%%%%%%%%%%%%%%%%%%%%%%%%%
\section{Anyonic defect branes in TED-K-theory}
\label{QuantumStatesAndObservables}
%%%%%%%%%%%%%%%%%%%%%%%%%%%%%%%%%%%%%%%%%%%%%

In this section we match
the mathematical results in \cref{ConformalBlocksAsTEdKTheory}
to
computations and expectations in the string theory literature, concerning the nature of codimension=2 defect branes.
The result  supports the notion that the expected exotic defect brane charges are well reflected in K-theory -- if this is understood in its refined incarnation as \TED-K-theory (\cref{OneTwistedCohomologyAsTwistedEquivariantKTheory}). Of course, it is generally expected that all of  equivariant \& twisted \& differential K-theory is necessary for measuring D-brane charge
(see Rem. \ref{HypothesesAboutBraneChargeQuantization} below), but the relevant sector according to Prop. \ref{TheBareBProfiniteIntegersTwistOfComplexRationalizedAEquivariantKTheory} has not found attention before.

\medskip
We conclude
below by highlighting that a full match requires considering
\TED-K-theory not just of spacetime manifolds transversal to defect branes, but also of the configuration spaces of points inside these transversal spaces; and we close,
around Rem. \ref{MThreeBraneModuliViaHypothesisH},
by explaining how this squares with our previously discussed
{\HypothesisH}
about the nature of brane charges in M-theory. First to recall this and related hypotheses:

\noindent
\begin{remark}
[{\bf Hypotheses about brane charge quantization.}]
\label{HypothesesAboutBraneChargeQuantization}
Despite all existing discussion of brane physics in non-perturbative string theory, an complete theory of branes has been missing, and the rules followed by stable brane charges have remained hypothetical. We briefly indicate three such hypotheses and some of their interrelations (the first two of these are well-known but do not have established names, so
we introduce the following terms for ease of reference):

\vspace{-2mm}
\begin{itemize}[leftmargin=*]
\setlength\itemsep{-1pt}

\item
\hypertarget{HypothesisD}{}
{\bf Hypothesis D} -- the hypothesis
(\cite{AspinwallDouglas}\cite{DouglasFiolRomelsberger}, review includes \cite{Aspinwall+}\cite{Pietromonaco17}\cite[p. 3-4]{CollinucciSavelli14})
that stable D-brane charge, at least on complex-analytic spaces such as Calabi-Yau manifolds, is reflected in stability conditions \cite{Bridgeland}
on the homotopy category of chain complexes (``derived category'') of coherent sheaves on these complex spaces, hence essentially of (possibly degenerate) holomorphic vector bundles (\cite{Block05}).

The $\Integers$-grading $V_\bullet$ on these chain complexes has been motivated, somewhat vaguely, from a ghost degree seen in topological strings, while the differential $V_\bullet \xrightarrow{\;} V_{\bullet + 1}$ models the
tachyon field between brane/anti-brane Chan-Paton bundles. This latter aspect is the compelling one, but what it really motivates is not so much a differential on a chain complex, but a Fredholm operator on a $\Integers/2$-graded vector bundle -- as in \eqref{AFredholmOperator} -- whose (co)kernel is the stable (anti-)brane charge that remains after tachyon condensation.

\smallskip
Since Fredholm operators represent topological K-theory, this motivates  instead (see \cite[\S 3]{Witten98}):

\item
\hypertarget{HypothesisK}{}
{\bf Hypothesis K} -- the hypothesis that D-brane charge is measured in some (twisted, equivariant, differential) version of topological K-theory of spacetime (\cite{Witten98}, see \cite{SzaboValentino07}\cite[\S 1]{BMSS19}\cite{GS-RR} for further pointers and details).

But plain topological K-theory -- as the name suggests -- is insensitive to the complex-analytic structure that is so crucial in motivating {\it Hypothesis D} above. This striking disconnect between the two proposals for D-brane charge quantization
seems to have received little systematic attention, though it clearly suggests to consider some form of K-theory of holomorphic vector bundles (a point made in \cite{Sharpe99}, see also \cite[\S 5.3.3]{Scheidegger01}).

We observe that the holomorphic vector bundle structure must be thought of as part of the {\it differential} enhancement of K-theory:

\vspace{-2mm}
\begin{itemize}[leftmargin=*]
\setlength\itemsep{-1pt}

\item[{\bf 1\rlap{.}}]
{\bf Holomorphic structure.}
The Koszul-Malgrange Theorem
(e.g. \cite[Thm. 2.1.53]{DonaldsonKronheimer97})
says that holomorphic vector bundles
are equivalently complex vector bundles equipped with a flat anti-holomorphic covariant derivative.
This already implies a purely differential-geometric incarnation of
the derived category of coherent sheaves over a complex manifold
(\cite[\S 4]{Block05}),
and hence of the categories of D-branes according to Hypothesis D (as highlighted in \cite{Bergman08}).

\item[{\bf 2\rlap{.}}]
{\bf Holomorphic connections.}
Completing this to a
{\it holomorphic connection} on a holomorphic vector bundle implies that the Chern character
{\it vanishes} and, under mild conditions, that the connection is in fact {\it flat}
(e.g. \cite{Biswas98}). Therefore, Chern character invariants  of holomorphic connections on
holomorphic vector bundles are necessarily {\it secondary} invariants
(e.g. \cite{DHZ92},
i.e., of Cheeger-Simons-type, see \cite[\S 4.3]{FSS20CharacterMap}),
and thus
(cf. Rem. \ref{TwistedCohomologyOverSteinManifolds})
``holomorphic K-theory'' over complex manifolds is largely a topic in {\it flat differential} K-theory
\eqref{FiberProductDefiningHolomorphicDifferentialKTheory}.

\end{itemize}

\item
\hypertarget{HypothesisH}{}
{\bf Hypothesis H} is the hypothesis
(\cite{FSS19b}\cite{FSS19HopfWZ}\cite{SS20M5Anomaly}\cite{FSS20TwistedString}\cite{SS19TadpoleCancellation}\cite{SS19ConfigurationSpaces}\cite{SS21MF}, initiated in \cite[\S 2.5]{Sati13})
that: \footnote{Following \cite{FSS20CharacterMap}, we say ``abelian cohomology'' for Whitehead-generalized cohomology theories represented by spectra, and ``non-abelian cohomology'' for cohomology theories represented by unstable spaces, such as those classifying principal bundles, Giraud gerbes or unstable Cobordism/Cohomotopy.}

\vspace{-2mm}
\begin{itemize}[leftmargin=*]
\setlength\itemsep{-1pt}

\item M-brane charge is in (tangentially twisted,  equivariant, differential) non-abelian {\it Cohomotopy} theory;

\item the full Cohomotopy {\it cocycle spaces} (of which Cohomotopy classes are just the connected components) constitute the moduli spaces of M-brane configurations;

\item the abelian cohomology {\it of} these Cohomotopy cocycle spaces/moduli spaces reflects the quantum states of M-branes.

\end{itemize}
\vspace{-.1cm}

For example (see \cite{FSS16}\cite{BMSS19}): the Cohomotopy cocycle space of the M-theory circle (hence
the moduli of M-branes in their ``double dimensional reduction'' to type IIA string theory) is rationally equivalent to essentially the classifying space of twisted complex K-theory, and hence itself carries a
canonical class in twisted K-theory, the pullback of which to 10d spacetime yields the usual K-theory
classes of D-branes (all at least in rational approximation).

Similarly (see \cite{SS19ConfigurationSpaces}\cite{CSS21}), the non-abelian Cohomotopy cocycle space of the
space transverse to intersecting D6/D8-branes, properly formulated, is the configuration space of points
in the 3-space transverse to the D6s inside the D8s (as befits the space of moduli  of $\mathrm{D6}\!\!\perp\!\!\mathrm{D9}$ -intersections) and the abelian cohomology of this configuration
space reflects much of the quantum structure expected of such brane intersections.

In short, according to {\it Hypothesis H}, the brane charges and their moduli are fundamentally
in non-abelian Cohomotopy theory,
while their quantum states are reflected in the abelian cohomology {\it of}, in turn,
the Cohomotopy cocycle spaces (cf. \cite{GS-Postnikov}), which tend to be related to configuration spaces;
and here twisted K-theory is singled out as being the abelian cohomology theory which canonically observes,
on the Cohomotopy cocycle spaces, the double dimensional reduction from 11d to type II string theory,
thereby connecting back to
\hyperlink{HypothesisK}{\it Hypothesis K} above.

\end{itemize}
\end{remark}

\noindent
We now explain what the discussion in
\cref{ConformalBlocksAsTEdKTheory} says about brane charges,
under these hypotheses:

\noindent
\begin{remark}
[{\bf Exotic defect branes and Anyon statistics.}]
\label{ExoticDefectBranesAndAnyonStatistics}
It has been argued
\cite[p. 12]{deBoerShigemori12}
that all ``exotic branes'' in string theory (e.g. \cite{BermanMusaevOtsuki19}) are equivalently codim=2 defect branes with U-duality group monodromy around the point where they puncture their transverse 2-dimensional space;
 and it has been speculated \cite[p. 65]{deBoerShigemori12} that this may be understood as realizing anyon statistics in string theory, though any details have remained open.

But
under \hyperlink{HypothesisK}{\it Hypothesis K}, the statement of
Prop. \ref{Degree1ConformalBlockInTEdKTheory} and
Thm. \ref{ConformalBlockInTEdKTheory}
is that defect brane charges subsume conformal blocks -- these, in their dependence on the punctured transverse space, solve the Knizhnik-Zamolodchikov (KZ) equation and thus constitute {\it monodromy braid representations} (e.g. \cite{Kohno87}\cite{GuHaghighatLiu21}, review in \cite{TodorovHadjiivanov01} \cite[\S 1]{GuHaghighatLiu21}).
This is exactly what reflects anyon statistics (e.g., \cite{Lerda92}\cite{Rao16}), here now of defect branes as they are braided around each other in their transverse space (\hyperlink{ThePuncturedPlane}{\it Figure 1}).

Indeed, it is well-understood that anyons in 2 spatial dimensions may be identified with primary fields (Rem. \ref{IntegrableAndAdmissibleWeights}) of a (rational chiral 2d) CFT, such that their wavefunctions are identified with the corresponding conformal blocks  (this is due to \cite{MooreRead91}, further developments in \cite{GuHaghighatLiu21}\cite{ZWXT21}, review in \cite[\S 9]{Lerda92}\cite[\S 8.3]{Wang10}\cite{Su18}).
In fact, realistic anyon species are described by the $\suTwoAffine{k}$-WZW model CFT, see p. \pageref{SUTwoANyonSpecies} below.

Following this identification, it has become customary to take the unitary modular tensor categories (UMTC) formed by primary CFT fields and their fusion rules
to be the very definition of the given ``topological order'' exhibited by a topological (i.e. gapped) phase of matter (e.g. \cite[Def. 1.3]{Delaney19}).

\end{remark}
We come back to this anyonic aspect of defect branes in \cref{Outlook}. Here we continue with discussing the evidence that the \TED-K-theory of the punctured plane reflects further expected properties of defect branes in F/M-theory:

\medskip

\noindent
\begin{remark}
[{\bf D7-brane charges expected in type IIB/F-theory.}]
\label{DSevenBraneChargesExpeectedInFTheory}
The most evident defect branes in string theory are D7-branes in type IIB string theory \cite{Bershoeffetal96}.
Their naive charge lattice, under \hyperlink{HypothesisK}{\it Hypothesis K}, is the (reduced) K-theory of their compact(ified) transverse space (see Rem. \ref{DBraneCHargesAndRRFields} below for background), which in the basic situation of a transverse plane means:
\begin{equation}
  \label{NaiveD7BraneCharge}
  \left\{
  \!\!\!\!\!\!\!\!
  \mbox{
    \small
    \def\arraystretch{1}
    \begin{tabular}{c}
      D7-brane charges as
      \\
      seen in plain K-theory
    \end{tabular}
  }
  \!\!\!\!\!\!
  \right\}
  \;\;\;=\;\;\;
  {\mathrm{KU}}^{0}
  \big(
    \mathbb{R}^{7,1}_+
    \wedge
    \mathbb{R}^2_{\mathrm{cpt}}
  \big)
  \;\;\simeq\;\;
  {\mathrm{KU}}^{0}
  \big(
    S^2
  \big)
  \;\;\simeq\;\;
  \Integers
  \,.
\end{equation}
This says (or would say) that there is a single species of D7-branes which comes in integer multiples (of ``coincident branes''). While routinely stated
(e.g. \cite[\S 2.1]{BergmanGimonHorava99}\cite[Tab. 3]{OlsenSzabo99}\cite[\S 4.2]{Schwarz01}, following \cite[\S 4.1]{Witten98}\cite[\S 2]{Horava98}), this is not actually expected to be even close to the full answer:
In the hypothetical non-perturbative completion of type IIB string theory that goes by the working title {\it F-theory} \cite{Vafa96}\cite{Sen96},
one instead expects all of the following
(see also analogous earlier discussion of
codim=2
``stringy cosmic strings'' in \cite[\S 3]{GSVY90}), compare \hyperlink{FigureD7BraneConfiguration}{\it Figure 2} below:

\vspace{-1mm}
\begin{itemize}
[leftmargin=15pt]
\setlength\itemsep{-1pt}

\item[\bf (i\rlap{)}]
{\bf Complex structure.}
The space transverse to D7-branes carries the structure
of a complex curve $\Sigma^2$
(e.g. \cite[(2.4)]{Sen96} \cite[(9.190)]{BeckerBeckerSchwarz06}, nowadays traditionally denoted ``$B$'', e.g. \cite[(2.40)]{Weigand18}), being the base of an elliptic fibration \eqref{MFEllipticFibration}
whose fibers degenerate at the locus of each $\mathrm{D7}_{I}$, which hence appears as a puncture $z_{{}_{I}}$ in $\Sigma^2$ (e.g. \cite[(2.5)]{Sen96}\cite[(9.206)]{BeckerBeckerSchwarz06}\cite[\S 3.2]{Weigand18}).

\item[\bf (ii\rlap{)}]
{\bf Axio-Dilaton field.}
With respect to this complex structure,
 the RR-field $F_1$, which measures the flux through 1-spheres around the D7-branes,
 in its S-duality covariant incarnation where its potential is the {\it axio-dilaton field}
 (cf. \cite[(2.4)]{Sen96}\cite[\S 3.1]{BayntonBurgessNierop09}\cite[(2.5)]{Weigand18})

\vspace{-.6cm}
\begin{equation}
   \label{TheAxioDilaton}
   \tau
   \;:=\;
   \overset{
     \mathclap{
     \raisebox{3.6pt}{
       \tiny
       \color{darkblue}
       \bf
       axion
     }
     }
   }{
     C_0
   }
   +
   \ImaginaryUnit
   \overset{
     \mathclap{
     \raisebox{1pt}{
       \tiny
       \color{darkblue}
       \bf
       dilaton
     }
     }
   }{
     e^{\phi}
   }
   \;\;\;
   \in
   \;
   \Maps{\big}
     {\, \widehat{\Sigma^2} }
     {
       \ComplexNumbers \vert_{ \mathrm{Re} > 0}
     }
   \,,
 \end{equation}
 is constrained (by supersymmetry)
 in the vicinity of the $I$-th D7 puncturing at $z_{{}_I}$,
 to have the following
 ``profile''
 (\cite[p. 2]{Vafa96}\cite[(2.7)]{Sen96}\cite[(2.7)]{Weigand18}):
 \begin{equation}
   \label{AxioDilatonProfile}
   F_1
   \;:=\;
   \DeRhamDifferential \tau
   \;:=\;
   \DeRhamDifferential
   \big(
     C_0
       +
     \ImaginaryUnit e^{\phi}
   \big)
   \;\;
   \overset{!}{=}
   \;\;
   \tfrac{1}{2\pi\ImaginaryUnit}
   \DeRhamDifferential
   \log(z - z_{{}_I})
   \,+\,
   \Differential
   \big(
   \mbox{{\small terms regular in} $(z - z_{{}_I})$}
   \big)
   \,.
 \end{equation}

\item[\bf (iii\rlap{)}]
{\bf $\mathbf\SLTwoZ$-Multiplets.}
The S-duality group $\SLTwoZ$ is meant to act
%by Moebius transformations on the fibers of this elliptic fibration over $\Sigma^2$, and
on the possible charges carried by each $\mathrm{D7}_{I}$,
making them transform in $\SLTwoZ$ representations,
often assumed to be 2-dimensional (``doublets'' or ``$(p,q)$''-branes, e.g. \cite[\S 2.1]{Weigand18}), but possibly also higher dimensional (``multiplets'', such as triplets \cite{MeessenOrtin98}).

In particular, the shift $\tau \mapsto \tau + 1$ which is picked up by the axio-dilaton \eqref{TheAxioDilaton} as a probe encircles one of the D7-branes, according to \eqref{AxioDilatonProfile}, is meant to correspond to the action of the corresponding element $\TOperator \,\in\, \SLTwoZ$,
from \eqref{MoebiusTransformations},
on the D7-brane charges
 (e.g. \cite[(2.12)]{Weigand18}).

\item[{\bf (iv\rlap{)}}]
{\bf 24 D7-branes to cancel tadpoles.}
Consistency of the {\it total} D7-brane charge under the shift action $\tau \mapsto \tau + 1$ is meant to require the presence of exactly 24 D7-branes (\cite[(2.5)]{Sen96}).

\item[\bf (v\rlap{)}]
{\bf $\SpecialUnitaryGroup(\ShiftedLevel)$-gauge fields
on D3-branes
at $\mathbb{A}_{\ShiftedLevel - 1}$.}
Type IIB string theory exists over ADE-orbifolds, notably over $\mathbb{A}_{\ShiftedLevel-1}$-type orbifolds  $\HomotopyQuotient{\Quaternions}{\CyclicGroup{\ShiftedLevel}}$
(\cite{DouglasMoore96}\cite{JohnsonMyers96})
and D7-branes extending over such orbifolds
contain bound D3-branes, transverse to the singular locus, which carry $\mathcal{N}=2$, $D=4$ super Yang-Mills theory with gauge group $\SpecialUnitaryGroup(\ShiftedLevel)$
(\cite{BanksDouglasSeiberg96}\cite[\S 3]{BlumIntriligator97}\cite[p. 7]{BDVFLM02}
%
% \cite{DSW11}
%
%\cite[\S 2.2]{AsselSciarappa19}
% This reference seems to discuss something a little different:
% The gauge group SU(N) in this article does not come from
% the C_N singularity but from coincident D3 branes.
% while for us here it's the opposite case.
%
%
%\cite[\S 3.1.2.1]{ParameswaranTonioni20}
% This reference seems to discuss anti D7-branes in a KKLT
% setup. Maybe this kind of phenomenology could better go to the
% outlook section at the very end?
%
): This is {\it Seiberg-Witten theory} in its incarnation via {\it geometric engineering} (\cite{KKV96}) in type IIB string theory (review in \cite[\S 4]{Lerche96}).

\item[\bf (vi\rlap{)}]
{\bf Transversal $\suTwoAffine{\Level}$ D-geometry.}
The conformal field theory of open superstrings
on a worldsheet $\Sigma^2$
stretching between the D7-branes
and probing the ambient $\mathbb{A}_{\ShiftedLevel-1}$-singularity
(as in \cite{Kutasov96})
is controlled by the $\suTwoAffine{\Level}$-WZW model at
just the shifted level
$k = \ShiftedLevel - 2$
\eqref{ShiftedLevel}:

\vspace{-.4cm}
\begin{equation}
  \label{ShiftOfLevelViaStringsOnALE}
  \mbox{
    LG-CFT with target
    $
    \HomotopyQuotient{\Quaternions}{\CyclicGroup{\ShiftedLevel}}
    $
  }
    \;=\;
    \suTwoAffine{\ShiftedLevel - 2}
    \;\oplus\;
    \mbox{4 free fermions}
    \;\oplus\;
    \mbox{1 free scalar}
    \,.
\end{equation}
\vspace{-.6cm}

\noindent
This crucial statement is due to \cite[p. 10-12]{OoguriVafa96}\cite[p. 4]{LLS00}, but may not to have found due attention; see also Rem. \ref{LevelAndRankInTheAGTCorrespondence} below.

In the sense of non-commutative D-brane geometry (as in \cite[\S 7.4]{FGR97}\cite[\S 2]{Douglas99}),
this $\suTwoAffine{\ShiftedLevel-2}$-algebra may be understood
(\cite[bottom of p. 11]{OoguriVafa96})
as encoding the quantum geometry of the complement 3-sphere $S^3 \simeq \SpecialUnitaryGroup(2)$
around \eqref{ComplementOfBraneInSpacetime}
the $\mathbb{A}_{\ShiftedLevel-1}$-singularity.

\item[\bf (vii\rlap{)}]
{\bf $\SpecialUnitaryGroup(\ShiftedLevel)$-BPS states on D3.}
The boundary states of this CFT
\eqref{ShiftOfLevelViaStringsOnALE}
are labeled
(\cite[p. 5]{Lerche00}\cite[\S 2]{LLS00})
by the primary fields  of the $\suTwoAffine{\Level}$-WZW model,
hence
(Rem. \ref{IntegrableAndAdmissibleWeights})
by integrable weights $\weight_{{}_I}$
and encode,
in a stringy realization of the MacKay correspondence,
the BPS states of the $\mathcal{N} =2$ $\SpecialUnitaryGroup(\ShiftedLevel)$ SYM-theory on the worldvolume of the D3-branes
(\cite{Lerche00}\cite{LLS00}).

\vspace{-.2cm}

\end{itemize}

\vspace{-.6cm}
\begin{center}
\hypertarget{FigureD7BraneConfiguration}{}
\hspace{4cm}
\begin{tikzpicture}

  \draw (0,0) node
    {$
      \overset{
      }{
        \mathbb{R}^{3,1}
      }
    $};

  \draw (.7,.56) node
    {$
      \mathrlap{
        \raisebox{8pt}{
        \hspace{-10pt}
        \rotatebox{36}{
        \hspace{-5pt}
        \tiny
        \color{orangeii}
        \bf
        \def\arraystretch{.9}
        \begin{tabular}{l}
          transverse
          \\
          complex curve
        \end{tabular}
        }
      }
      }
      {
        \Sigma^2
      }
    $};

  \draw (1.6+.1,+.05) node
    {
      $
        \HomotopyQuotient
          { \Quaternions }
          { \CyclicGroup{\ShiftedLevel} }
        \mathrlap{
          \hspace{-17pt}
          \raisebox{6pt}{
          \rotatebox{36}{$
            \mathrlap{
            \mbox{
            \tiny
            \color{darkblue}
            \bf
            \def\arraystretch{.9}
            \begin{tabular}{c}
              $\mathbb{A}_{\ShiftedLevel-1}$-type
              \\
              singularity
            \end{tabular}
            }
            }
          $}
          }
        }
      $
    };

  \begin{scope}[shift={(0,-.5)}]

  \draw
    (-1, 0) node
    {$
      \color{orangeii}
      \mathbb{A}_{\ShiftedLevel-2}
    $};

  \draw[
    line width=8pt,
    orangeii
  ]
    (-.4,0) to (.8+.4,0);
 \draw
   (-.4,0) to (.8+.4,0);

 \draw (1.6,0) node
 {$
   \{0\}
 $};

 \end{scope}

  \begin{scope}[shift={(0,-1)}]

  \draw
    (-1, 0) node
    {
      $\mathllap{\mathrm{SL}(2,\Integers)\mbox{-\normalfont{multiplet}}}$
      \color{darkblue}
      $\mathrm{D7}_{{}_I}$
    };

  \draw[
    line width=8pt,
    darkblue
  ]
    (-.4,0) to (0+.4,0);
 \draw
   (-.4,0) to (0+.4,0);

 \draw (.8,0) node
   {$
     \{z_{{}_I}\}
   $};

  \draw[
    line width=8pt,
    darkblue
  ]
    (1.6-.4,0) to (1.6+.4,0);
 \draw
   (1.6-.4,0) to (1.6+.4,0);

 \end{scope}

  \begin{scope}[shift={(0,-1.5)}]

  \draw
    (-1, 0) node
    {
      \color{gray}
      $\mathllap{\mbox{
        $\SpecialUnitaryGroup(\ShiftedLevel)$-SYM
        on
       }}
      \;
      \mathrm{D3}^i$
    };

  \draw[
    line width=8pt,
    gray
  ]
    (-.4,0) to (0+.4,0);
 \draw
   (-.4,0) to (0+.4,0);

 \draw (.8,0) node
   {$
     \{z^i\}
   $};

 \draw (1.6,0) node
   {$
     \{0\}
   $};

 \end{scope}

 \begin{scope}[shift={(6,-.6)}]

  \draw
    (0,-.5+.2)
    .. controls (1,-.5+.2) and (1,+.5+.15) ..
    (2,+.5+.15);
  \draw[gray]
    (0,-.5+.2+.1)
    .. controls (1,-.5+.2+.1) and (1,+.5+.15-.1) ..
    (2,+.5+.15-.1);
  \draw[gray]
    (0,-.5+.2-.1)
    .. controls (1,-.5+.2-.1) and (1,+.5+.15+.1) ..
    (2,+.5+.15+.1);

  \draw[line width=2pt]
    (0,-.7) to node[below] {$\Sigma^2$} (2,-.7);

  \draw[darkblue, line width=3]
    (0,1) to (0,-1);
  \draw[dashed, darkblue, line width=3]
    (0,1.4) to (0,1);

  \draw[darkblue, line width=3]
  (2,1) to
    node[right]
    {$
      \HomotopyQuotient
        { \color{darkblue} \mathrm{D7} }
        { \color{black} \CyclicGroup{\ShiftedLevel} }
    $}
  (2,-1);
  \draw[dashed, darkblue, line width=3]
    (2,1.4) to (2,1);

  \draw (0, 1.56) node
   {\scalebox{.8}{$
     \weight_{{}_I}
   $}};
  \draw (2, 1.56) node
   {\scalebox{.8}{$
     \weight_{{}_J}
   $}};

  \draw (.8,.5)
    node
    {\scalebox{.8}{$
      \suTwoAffine{\ShiftedLevel-2}
    $}};

  \end{scope}

\end{tikzpicture}

\vspace{0mm}
$\,$
\hspace{1cm} {\footnotesize {\bf Figure 2.} Defect brane configurations in F-theory.
Compare
\hyperlink{ThePuncturedPlane}{\it Figure 1}
and
compare
the M-theoretic configurations in \hyperlink{MBraneConfigurations}{\it Figure 3}.}
\end{center}
\end{remark}

In conclusion, it used to be an open question (maybe never explicitly stated as such) how \hyperlink{HypothesisK}{\it Hypothesis K}, which naively seems to predict the simple answer \eqref{NaiveD7BraneCharge}, can be compatible with this rich charge structure (Rem. \ref{DSevenBraneChargesExpeectedInFTheory}, \hyperlink{FigureD7BraneConfiguration}{\it Figure 2}) actually expected for D7-branes.\footnote{The suggestion to include at least the complex structure on the transverse space $\Sigma^2$ in the charge quantization law for D7-branes is implicit
in discussions of D7-brane charges
under \hyperlink{HypothesisD}{\it Hypothesis D},
e.g. in  \cite[p. 4]{CollinucciSavelli14}\cite[\S 4.1.1]{Schwieger19},
but this alone still does not yield $\mathrm{SL}(2)$-charges.}
We now point out,
culminating in \eqref{ConcludingD7BraneCharges} below,
that when K-theory is understood in the full beauty of  \TED-K-theory according to
\cref{OneTwistedCohomologyAsTwistedEquivariantKTheory}, then the discussion in \cref{ConformalBlocksAsTEdKTheory} provides at least part of the answer to this question:

\begin{remark}
[{\bf D7-Brane Charges seen in \TED-K-Theory.}]
\label{ListOfExoticD7BraneChargeAspects}
According to \cref{ConformalBlocksAsTEdKTheory},
\TED-K-theory of punctured planes transverse to D7-branes exhibits all of the following effects:

\vspace{-.2cm}
\begin{itemize}[leftmargin=*]
\setlength\itemsep{-1pt}

\item
A further field $\FlatConnectionForm(\vec \weight; \ShiftedLevel)$ appears
around D7-branes at $\mathbb{A}$-type singularities,
of which the discussion in \cref{OneTwistedCohomologyAsTwistedEquivariantKTheory} shows that it is a peculiar ``twisted sector'' of the Kalb-Ramond B-field, which may not have found due attention yet.

\item
Its  effect is to attach new charges
$\weight_I \,\in\, \{0, \cdots \ShiftedLevel-1\}$
from \eqref{TheMasterForm}
to the D7-brane punctures $z_{{}_I}$,
naturally identified with
integrable/admissible highest weight representations
of  $\slTwoAffine{\Level}$
(Rem. \ref{IntegrableAndAdmissibleWeights})
and hence with affine characters $\AffineCharacter{\weight_{{}_I}}{\Level}$, from Rem. \ref{SLTwoZActionOnAffineCharacters}.

\item
As such, these new charges canonically transform under an action of $\SLTwoZ$ (Rem. \ref{SLTwoZActionOnAffineCharacters}).

This applies even when the twisted sector B-field $\FlatConnectionForm(\vec v, \ShiftedLevel)$ actually {\it vanishes} -- which is, implicitly, the case considered in previous discussions of F-theory --, because this is just the case where all weights take the (admissible) value $\weight_{{}_I} = 0$.

\item
The affine character that is naturally associated with these weights in generality transforms under the monodromy shift $\tau \mapsto \tau + 1$ \eqref{MoebiusTransformations} by picking up a phase that is a primitive 24th root of unity \eqref{SLTwoZTransformationOfAffineCharacters}. Therefore,  it requires the tensor product of 24 of these charges to get an invariant under the shift operation $\tau \mapsto \tau + 1$
(Ex. \ref{TheTwoDimensionalSLTwoZRepOnAffineCharacters}).

\item
This discrete set of charges parameterizes the full space of secondary charges constituting the degree-1 conformal blocks on $\Sigma^2$ of the $\slTwoAffine{\Level}$-WZW model at level $\Level = \ShiftedLevel -2$ (Prop. \ref{Degree1ConformalBlockInTEdKTheory}).

\end{itemize}
\end{remark}

This \TED-K-theory data (Rem. \ref{ListOfExoticD7BraneChargeAspects}) favorably compares with the expected list of exotic brane charges in Rem. \ref{DSevenBraneChargesExpeectedInFTheory} (\hyperlink{FigureD7BraneConfiguration}{\it Figure 2}).
To fully appreciate this match, it may be worthwhile to step back and discuss general aspects of D-brane charge seen in K-theory, some subtleties of which may not have received due appreciation:

\begin{remark}
[{\bf D-Brane charges and RR-fields in K-Theory}]
\label{DBraneCHargesAndRRFields}
While the topic has a somewhat long history, the following general aspect is crucial but a little subtle and
may need more amplification (cf. \cite[\S 2]{MooreWitten00}).
Given a ``flat brane'', i.e., one  with worldvolume $\mathbb{R}^{p,1}$ linearly embedded into a Minkowski spacetime $\mathbb{R}^{d,1}$, there are {\it two different} spherical domains on which to measure the brane charge:

\begin{itemize}[leftmargin=*]
\setlength\itemsep{-1pt}

\item[{\bf (\rlap{i)}}]
\hspace{2pt}
The {\it complement} of the worldvolume
\vspace{-1mm}
\begin{equation}
  \label{ComplementOfBraneInSpacetime}
  \mathbb{R}^{d,1}
  \setminus
  \mathbb{R}^{p,1}
  \;\;
  \underset{\tiny
    \mathclap{
      \homotopy
    }
  }{
    \simeq
  }
  \;\;
  \mathbb{R}^{d-p} \setminus \{0\}
  \;\;
  \underset{ \tiny
    \mathclap{ \homotopy}
  }{
    \;\simeq\;
  }
  S^{d-p-1}
  \,.
\end{equation}

\vspace{-2mm}
\noindent This models the case of  gravitationally back-reacted {\it black} branes whose actual locus is,
or would be, a singularity. Just like the singularity of a charged black hole -- which is the special
case $d= 3$, $p = 1$, to which the original and archetypical electromagnetic charge quantization argument
due to Dirac applies; this singular brane worldvolume is not actually part of spacetime.

Still, the charge hidden in the singularity is reflected in the
integrated  {\it field line flux} through any sphere enclosing it -- a concept going back all the way to Faraday, which in terms of de Rham cohomology says:
$$
  \begin{tikzcd}[row sep=-2pt, column sep=2pt]
    H^{d-p-1}_{\mathrm{dR}}
    \big(
      \mathbb{R}^{d,1}
      \setminus
      \mathbb{R}^{p,1}
    \big)
    &
    \simeq
    &
    H^{d-p-1}_{\mathrm{dR}}
    \big(
      S^{d-p-1}
    \big)
  \ar[rr, "\sim"]
  &&
  \RealNumbers \;.
  \\
  &&
  \underset{
    \raisebox{-2pt}{
      \tiny
      \color{darkblue}
      \bf
      \def\arraystretch{.9}
      \begin{tabular}{c}
        field flux density
        \\
        on spacetime
      \end{tabular}
    }
  }{
  \scalebox{0.85}{$
    F_{d-p-1}
  $}
  }
  &\longmapsto&
  \underset{
    \raisebox{-2pt}{
      \tiny
      \color{darkblue}
      \bf
      \begin{tabular}{c}
        total flux through sphere
        \\
        around singular brane
      \end{tabular}
    }
  }{
  \scalebox{0.85}{$
  \underset{
    {S^{d - p - 1}}
  }
  {
    \int
  }
  F_{d-p-1}
  $}
  }
  \end{tikzcd}
$$

\vspace{-3mm}
\item[{\bf (\rlap{ii)}}]
\hspace{4pt}
The (Aleksandrov-){\it compactification} of the space to the worldvolume, by adjoining a single ``point at infinity'':
\begin{equation}
  \label{CompactifiedTransverseSpace}
  \mathbb{R}^{p,1}_+
  \wedge
  \big(
    \mathbb{R}^{d,1}
    \!/\,
    \mathbb{R}^{p,1}
  \big)_{\mathrm{cpt}}
  \;\;
  \simeq
  \;\;
  \mathbb{R}^{p,1}_+
  \wedge
  \mathbb{R}^{d-p}_{\mathrm{cpt}}
  \;\;
  \underset{\tiny
    \mathclap{ \homotopy}
  }{
    \simeq
  }
  \;\;
  \mathbb{R}^{d - p }_{\mathrm{cpt}}
  \;\simeq\;
  S^{d-p}
  \,.
\end{equation}

\vspace{-2mm}
\noindent This corresponds to the case of {\it solitonic} branes with respect to non-gravitational
fields: The one-point compactification (together with the use of reduced cohomology) expresses that the topologically non-trivial field configurations which support/constitute the brane vanish (trivialize)
far away from the brane locus, and the fact that the brane locus is not removed from spacetime expresses
that the field is well-behaved everywhere in between, in that its charge density $J$ is finite (i.e.,
well-defined) everywhere:

\vspace{-4mm}
$$
  \begin{tikzcd}[row sep=-4pt, column sep=2pt]
    \mathllap
    H^{d-p}_{\mathrm{dR}}
    \Big(
      \mathbb{R}^{p,1}
      \wedge
      \big(
        \mathbb{R}^{d,1}
        /
        \mathbb{R}^{p,1}
      \big)
    \Big)
    &
    \simeq
    &
    H^{d-p}_{\mathrm{dR}}
    \big(
      S^{d-p}
    \big)
    \ar[rr]
    &&
    \mathbb{R}\;.
    \\
    &&
    \underset{
      \raisebox{-2pt}{
        \tiny
        \color{darkblue}
        \bf
        \def\arraystretch{.9}
        \begin{tabular}{c}
          charge density
          \\
          on spacetime
        \end{tabular}
      }
    }{
        \scalebox{0.85}{$  J_{d-p} $}
    }
    &\longmapsto&
    \underset{
      \raisebox{-2pt}{
        \tiny
        \color{darkblue}
        \bf
        \def\arraystretch{.9}
        \begin{tabular}{c}
          total charge of
          \\
          solitonic brane
        \end{tabular}
      }
    }{
      \scalebox{0.85}{$
    \underset{
      {S^{d-p}}
    }
    {
      \int
    }
    J_{d-p}
    $}
    }
  \end{tikzcd}
$$

\end{itemize}

\vspace{-3mm}
In both cases, the effective space on which to measure flat brane charge is a (higher-dimensional) sphere, but the dimension of that sphere for the case of singular branes \eqref{ComplementOfBraneInSpacetime} differs from that for non-gravitational solitons \eqref{CompactifiedTransverseSpace} by one. Accordingly, the degree of the cohomology theory measuring charges in the corresponding situations differs by one.

\noindent
This reasoning is equivalent to the one that leads to the traditional form of {\HypothesisK},  according to which (\cite[\S 2]{MooreWitten00}):

\vspace{1mm}
-- Type IIB D-brane charge is measured in $\mathrm{KU}^1$.

-- Type IIB RR-field flux is measured in $\mathrm{KU}^0$.
\end{remark}

\noindent
{\bf How to measure D7-brane charge.}
However, the D7-branes of F-theory are manifestly of the singular form \eqref{ComplementOfBraneInSpacetime},
since the axio-dilaton field \eqref{AxioDilatonProfile} diverges on the would-be brane locus, the removal of
which is the pre-requisite for non-trivial monodromy around the brane.
Hence the spacetime manifold on which to measure D7-brane charge is the complement
\eqref{ComplementOfBraneInSpacetime}
of their singular worldvolumes,
On these, if one demands with {\HypothesisK} that D7-brane charge is still measured by complex K-theory in
degree 0, then,
in contrast to \eqref{NaiveD7BraneCharge},
the naive underlying brane charge actually vanishes
\begin{equation}
  \label{KUZeroChargeOfCircle}
  \mathrm{KU}^0
  \big(
    \mathbb{R}^{9,1}
    \setminus
    \mathbb{R}^{7,1}
  \big)
  \;=\;
  \mathrm{KU}^0
  \big(
    \mathbb{C} \setminus \{z_0\}
  \big)
  \;=\;
  \mathrm{KU}^0
  \big(
    S^1
  \big)
  \;=\;
  0
  \,,
\end{equation}
while non-trivial RR-field fluxes show up:
\begin{equation}
  \label{KUOneChargeOfCircle}
  \mathrm{KU}^1
  \big(
    \mathbb{R}^{9,1}
    \setminus
    \mathbb{R}^{7,1}
  \big)
  \;=\;
  \mathrm{KU}^1
  \big(
    \mathbb{C} \setminus \{z_0\}
  \big)
  \;=\;
  \mathrm{KU}^1
  \big(
    S^1
  \big)
  \;=\;
  \Integers
  \,.
\end{equation}

This is not a contradiction, just a subtlety when applying {\HypothesisK} to F-theory; in fact it reflects the hallmark effect in F-theory, where the presence of D7-branes is all witnessed by the non-triviality of the axio-dilaton RR-field $F_1$ \eqref{AxioDilatonProfile} around them.
Mathematically, the resolution is that the cohomology theory to use is not plain but, in particular, {\it differential} K-theory \eqref{TheDifferentialCohomologyHexagonForKTheory}, which here is {\it flat} differential K-theory (since also the even-degree Chern character forms necessarily vanish on $S^1$):
Its long exact sequence \eqref{TheDifferentialCohomologyHexagonForKTheory} collapses to a short exact sequence,
since the group of RR-field configurations $\mathrm{KU}^1(S^1) \simeq \Integers$,
from \eqref{KUOneChargeOfCircle},
has no torsion and since the group of underlying D7-brane charges vanishes $\mathrm{KU}^0(S^1) \,\simeq\, 0$, from \eqref{KUZeroChargeOfCircle}:
\vspace{-2mm}
$$
  \begin{tikzcd}[row sep=0pt]
    0
    \,\simeq\,
    \mathrm{ker}(\mathrm{ch}^1)
    \ar[r]
    &
    \overset{
      \mathclap{
      \raisebox{2pt}{
        \tiny
        \color{darkblue}
        \bf
        \def\arraystretch{.9}
        \begin{tabular}{c}
          axio-dilaton
          \\
          RR-fields near
          \\
          $I$-th $\mathrm{D7}$-brane
        \end{tabular}
      }
      }
    }{
      \mathrm{KU}^1
      (S^1)
    }
    \ar[
      rr,
      "{
        \mathrm{ch}^1(S^1)
      }"{swap},
      "{
        \mbox{
          \tiny
          \color{greenii}
          \bf
          Chern character
        }
      }"{yshift=1pt}
    ]
    &&
    H^1_{\mathrm{dR}}
    \big(
      \ComplexPlane
      \setminus \{z_{{}_I}\}
      ;\, \ComplexNumbers
    \big)
    \ar[
      rr,
      "{
        \mbox{
          \tiny
          \color{greenii}
          \bf
          \def\arraystretch{.9}
          \begin{tabular}{c}
            secondary
            \\
            Chern character
          \end{tabular}
        }
      }"
    ]
    &&
    \overset{
      \raisebox{2pt}{
        \tiny
        \color{darkblue}
        \bf
        \begin{tabular}{c}
          charge lattice of
          \\
          $I$-th D7-brane
        \end{tabular}
      }
    }{
      \mathrm{KU}^{0}_{\flat}
      (S^1)
    }
    \ar[r]
    &
    \mathrm{KU}^0
    (S^1)
    \,\simeq\,
    0\;.
    \\
    &
  \scalebox{0.8}{$  1 $}
    &
    \underset{
      \raisebox{+1pt}{
        \tiny
        \eqref{HolomorphicDeRhamComplexOverSteinManifolds}
        \&
        \eqref{GeneratorsForTwistedDeRhamCohomologyOfPuncturedPlane}
      }
    }{
      \longmapsto
    }
    &
    \underset{
      \raisebox{2pt}{
        \tiny
        \color{orangeii}
        \bf
        \begin{tabular}{c}
          \color{darkblue}
          RR-field strength of
          \\
          axio-dilaton field
        \end{tabular}
      }
    }{
    \scalebox{0.86}{$
      \big[
      \DeRhamDifferential
      \log(z - z_1)
      \big]
    $}
    }
  \end{tikzcd}
$$

\vspace{-3mm}
\noindent As indicated in the bottom line, this sequence witnesses, under the natural identifications of Rem. \ref{TwistedCohomologyOverSteinManifolds} and Prop. \ref{OSBasisForTwistedCohomologyOfPuncturedPlane}, the precise form of the axio-dilaton field  \eqref{AxioDilatonProfile} as expected in F-theory.
Moreover,
this state of affairs generalizes to any number $\NumberOfPunctures$ of parallel D7-branes. Indeed,
since
$$
\Sigma^2
\;:=\;
\ComplexPlane \setminus
\{z_1, \cdots, z_{\NumberOfPunctures}\}
\;
\underset{\tiny \homotopy}{\,\simeq\,}
\;
\underset{\NumberOfPunctures}{\bigvee}
\,
S^1
\,,
$$
we have
$$
  \mathrm{KU}^0(\Sigma^2)
  \;\simeq\;
  \underset{
  \scalebox{0.6}{$  1 \leq I \leq \NumberOfPunctures $}
  }{\bigoplus}
  \,
  \mathrm{KU}^0(S^1)
  \;\simeq\;
  0
  \,,
  \;\;\;\;\;\;\;\;\;\;\;\;\;\;
  \mathrm{KU}^1
  \big(
    \Sigma^2
  \big)
  \;\simeq\;
  \underset{
  \scalebox{0.6}{$  1 \leq I \leq \NumberOfPunctures $}
  }{\bigoplus}
  \,
  \mathrm{KU}^{1}
  (S^1)
  \;\simeq\;
  \underset{
 \scalebox{0.6}{$   1 \leq I \leq
    \NumberOfPunctures $}
  }{\bigoplus}
  \;
  \Integers
  \;\;\simeq\;\;
  \big\langle
    1_{I}
  \big\rangle
    _{I = 1}
    ^{\NumberOfPunctures}
  \,.
$$
and hence the long exact sequence \eqref{TheDifferentialCohomologyHexagonForKTheory} now truncates to:
\vspace{-2mm}
\begin{equation}
  \label{SESForD7BraneChargeOnSigmaTwo}
  \begin{tikzcd}[row sep=3pt]
    0
    \ar[r]
    &
    \overset{
      \raisebox{2pt}{
        \tiny
        \color{darkblue}
        \bf
        \begin{tabular}{c}
          axio-dilaton fields
          \\
          around D7-branes
        \end{tabular}
      }
    }{
    {\mathrm{KU}}^1
    \big(
      \Sigma^2
    \big)
    }
    \ar[
      rr,
      "{\mathrm{ch}^1}"{swap},
      "{
        \mbox{
          \tiny
          \color{greenii}
          \bf
          Chern character
        }
      }",
      hook
    ]
    &&
    \overset{
      \raisebox{2pt}{
        \tiny
        \color{orangeii}
        \bf
        \begin{tabular}{c}
          axio-dilaton fields with
          \\
          secondary D7-brane charges
        \end{tabular}
      }
    }{
      H^1_{\mathrm{dR}}
      \big(
        \Sigma^2
        ;\,
        \ComplexNumbers
      \big)
    }
    \ar[
      rr,
      ->>,
      "{
        \mbox{
          \tiny
          \color{greenii}
          \bf
          \def\arraystretch{.9}
          \begin{tabular}{c}
          secondary
          \\
          Chern character
          \end{tabular}
        }
      }"
    ]
    &&
    \overset{
      \mathclap{
      \raisebox{2pt}{
        \tiny
        \color{darkblue}
        \bf
        \begin{tabular}{c}
          secondary charges of
          \\
          singular D7-branes
        \end{tabular}
      }
      }
    }{
      {\mathrm{KU}}^0_{\flat}
      \big(
        \Sigma^2
      \big)
    }
    \ar[r]
    &
    0\;.
    \\[-5pt]
    &
    \scalebox{0.8}{$
      1_{{}_I}
    $}
    &
    \underset{
      \raisebox{+1pt}{
        \tiny
        \eqref{HolomorphicDeRhamComplexOverSteinManifolds}
        \&
        \eqref{GeneratorsForTwistedDeRhamCohomologyOfPuncturedPlane}
      }
    }{
      \longmapsto
    }
    &
    \underset{
    }{  \scalebox{0.8}{$
    \big[
      \DeRhamDifferential
      \log(z-z_{{}_I})
    \big]
    $}
    }
    &&
  \end{tikzcd}
\end{equation}

\vspace{-3mm}
\noindent The left part of this sequence formalizes the expectation that the integral charge of the singular D7-branes of F-theory is reflected not in their charge densities, as in the naive equation \eqref{NaiveD7BraneCharge}, but in the monodromy of the axio-dilaton field around them (as it should be). On the other hand, the right part of this sequence says that, in addition to this traditional expectation, a {\it secondary} charge density is carried by these singular D7-branes after all, which combines with the integral charges to form the complex cohomology of the transverse space.

\medskip

\noindent
{\bf Measuring D7-brane charge at $\mathbb{A}$-type singularities.}
This analysis now reveals a remarkably rich charge structure as we consider these D7-branes located on an $\mathbb{A}$-type singularity
(cf. \hyperlink{FigureD7BraneConfiguration}{\it Figure 2})
and measure their charge in the full \TED-K-theory of \cref{OneTwistedCohomologyAsTwistedEquivariantKTheory}, as then the plain cohomology group of axio-dilaton/D7-charges in \eqref{SESForD7BraneChargeOnSigmaTwo} is generalized to the twisted holomorphic cohomology groups discussed in \cref{ConformalBlocksAsTEdKTheory}:
\vspace{-2mm}
$$
\hspace{-1mm}
  \begin{tikzcd}[column sep=30pt]
    \overset{
      \mathclap{
      \raisebox{+2pt}{
        \tiny
        \color{darkblue}
        \bf
        \begin{tabular}{c}
          axio-dilaton fields
          around D7-branes
          \\
          at $\mathbb{A}$-type singularity
        \end{tabular}
      }
      }
    }{
      \mathrm{KU}
        ^{1 + [\FlatConnectionForm(\vec \weight, \ShiftedLevel)]}
       _{\flat}
      \big(
        \Sigma^2
        \times
        \HomotopyQuotient
          { \ast }
          { \CyclicGroup{\ShiftedLevel} }
      \big)
    }
    \ar[
      rr,
      "{
        \mbox{
          \tiny
          \color{greenii}
          \bf
          \TED-Chern character
        }
      }"
    ]
    &&
    \overset{
      \mathclap{
      \raisebox{2pt}{
        \tiny
        \color{orangeii}
        \bf
        \begin{tabular}{c}
          axio-dilaton fields with
          secondary D7-brane charges
          \\
          at $\mathbb{A}$-type singularity
        \end{tabular}
      }
      }
    }{
    H^1
    \Big(
      \HolomorphicDeRhamComplex{\big}
        { \Sigma^2 }
      ,\,
      \DeRhamDifferential
      +
      \FlatConnectionForm(\vec \weight, \ShiftedLevel)
    \Big)
    }
    \ar[
      rr,
      "{
        \mbox{
          \tiny
          \color{greenii}
          \bf
          \def\arraystretch{.9}
          \begin{tabular}{c}
            secondary
            \\
            \TED-Chern character
          \end{tabular}
        }
      }"
    ]
    &&
    \overset{
      \raisebox{2pt}{
        \tiny
        \color{darkblue}
        \bf
        \def\arraystretch{.9}
        \begin{tabular}{c}
          secondary charges of
          singular D7-branes
          \\
          at $\mathbb{A}$-type singularity
        \end{tabular}
      }
    }{
    \mathrm{KU}
      ^{0 + [\FlatConnectionForm(\vec \weight, \ShiftedLevel)]}
      _{\flat}
    \big(
      \Sigma^2
      \times
      \HomotopyQuotient
        { \ast }
        { \CyclicGroup{\ShiftedLevel} }
    \big).
    }
  \end{tikzcd}
$$

\noindent
By Prop. \ref{Degree1ConformalBlockInTEdKTheory}, this implies the emergence of the exotic charge structure listed on p.
\pageref{ExoticDefectBranesAndAnyonStatistics}-\pageref{ListOfExoticD7BraneChargeAspects}
(where we the direct sum over the
background orbifold B-field which gives the weight labels, according to \eqref{TheAdjunctKCocycle} and \eqref{TheMasterForm}):
\begin{equation}
  \label{ConcludingD7BraneCharges}
  \hspace{-4mm}
  \left\{
  \!\!\!\!\!\!\!\!\!
  \mbox{ \small
    \def\arraystretch{1}
    \begin{tabular}{c}
      Axio-dilaton fields with
      \\
      secondary D7-brane charges
      \\
      as seen in \TED-K-theory
    \end{tabular}
  }
  \!\!\!\!\!\!
  \right\}
  \;\;
  \simeq
  \;\;
  \underset{
    \mathclap{
    \raisebox{-2pt}{
      \tiny
      \color{darkblue}
      \bf
      \def\arraystretch{1}
      \begin{tabular}{c}
      \end{tabular}
    }
    }
  }{
    \underset{
      (\vec \weight, \ShiftedLevel)
    }{\bigoplus}
  }
  \,
  H^1
  \Big(
    \HolomorphicDeRhamComplex{\big}
      { \Sigma^2 }
    ,\,
    \DeRhamDifferential
      +
    \FlatConnectionForm
      (\vec\weight, \ShiftedLevel)
  \Big)
  \begin{tikzcd}[column sep=30pt]
    {}
    \ar[
      from=r,
      "{
        \mbox{
          \tiny
          \eqref{DegreeOneConformalBlocksInsideTEdKTheory}
        }
      }"
    ]
    &
    {}
  \end{tikzcd}
  \;\;
  \underset{
    \mathclap{
    \raisebox{-2pt}{
      \tiny
      \color{darkblue}
      \bf
      \def\arraystretch{1}
      \begin{tabular}{c}
        $\SLTwoZ$-charges
        \\
        via KR B-field
      \end{tabular}
    }
    }
  }{
    \underset{
      (\vec \weight, \ShiftedLevel)
    }{\bigoplus}
  }
  \;\;
  \underset{
    \mathclap{
    \raisebox{-3pt}{
      \tiny
      \color{darkblue}
      \bf
      anyonic structure
    }
    }
  }{
  \ConformalBlocks
    ^{1}
    _{\slTwoAffine{\ShiftedLevel-2}}
  (\vec \weight, \vec z)
  }
\;.
\end{equation}

It remains to discuss how the conformal blocks in degrees larger than 1 may analogously be identified with brane charges. For this purpose, we turn to M-theory and then invoke {\HypothesisH}:

\medskip

\noindent
{\bf M3 = M5$\perp$M5-Defect Branes in M-theory.}
In M-theory, codim=2 defect branes appear in the guise of 3-branes inside M5-brane worldvolumes
(\cite{HughesLiuPolchinski86}\cite{RocekTseytlin99}\cite{HoweLambertWest97a}, see also \cite{BayntonBurgessNierop09}), which may be understood as the loci where a transversal M5-brane intersects the ambient M5-brane
(\cite{PapadopoulosTownsend96}\cite{Tseytlin96}\cite{KachruOzYin98}\cite{GMST98}).
These 3-branes do not have an established name, but following terminology such as {\it M-strings}
(\cite{HIKLV13})
for the analogous 1-branes inside M5-brane worldvolumes,
it makes sense to refer to them as {\it M3-branes}, for short, see \hyperlink{MBraneConfigurations}{\it Figure 3}:
\vspace{-3mm}
\begin{center}
\hypertarget{MBraneConfigurations}{}
\begin{tikzpicture}

\begin{scope}[shift={(0,0)}]

  \draw (0,0) node
    {$
      \overset{
      }{
        \RealNumbers^{3,1}_{\phantom{\mathclap{A}}}
      }
    $};

  \draw (.8, 0) node
    {$
      \Sigma^2_{\phantom{\mathclap{A}}}
    $};

  \draw (1.6,+.0) node
    {
      $
        \MFTorus
      $
    };

  \draw (2.4,+.0) node
    {$
      \ComplexNumbers^{\phantom{1}}_{\phantom{A}}
    $};

  \draw (3.2,+.0) node
    {$
      \RealNumbers^1_{\phantom{A}}
    $};

  \begin{scope}[shift={(0,-.5)}]

  \draw
    (-1, 0) node
    {
      \color{orangeii}
      $\mathrm{MK6}$
    };

  \draw[
    line width=8pt,
    orangeii
  ]
    (-.4,0) to (.8+.4,0);
 \draw
   (-.4,0) to (.8+.4,0);

 \draw (2.4, -0) node
  {$
    %\{0\}
  $};

  \draw[
    line width=8pt,
    orangeii
  ]
    (3.2-.4,0) to (3.2+.4,0);
 \draw
    (3.2-.4,0) to (3.2+.4,0);

 \end{scope}

 \begin{scope}[shift={(0,-1)}]

  \draw
    (-1, 0) node
    {
      \color{greenii}
      $\mathrm{M5}$
    };

  \draw[
    line width=8pt,
    greenii
  ]
    (-.4,0) to (.8+.4,0);
  \draw
   (-.4,0) to (.8+.4,0);

  \end{scope}

 \begin{scope}[shift={(0,-1.5)}]

  \draw
    (-1, 0) node
    {
      \color{greenii}
      $\mathrm{M5}^i$
    };

  \draw[
    line width=8pt,
    greenii
  ]
    (-.4,0) to (.0+.4,0);
  \draw
   (-.4,0) to (.0+.4,0);

  \draw[
    line width=8pt,
    greenii
  ]
    (1.6-.4,0) to (1.6+.4,0);
  \draw
   (1.6-.4,0) to (1.6+.4,0);

  \end{scope}

  \begin{scope}[shift={(0,-2)}]

  \draw
    (-1,0) node
    {
      \color{gray}
      $\mathrm{M3}^i$
    };

  \draw[
    line width=8pt,
    gray
  ]
    (-.4,0) to (0+.4,0);
 \draw
   (-.4,0) to (0+.4,0);

 \draw (.8,0) node
  {$
    \{z^i\}
  $};

  \end{scope}

\end{scope}

\end{tikzpicture}

\vspace{0mm}

\hspace{0cm}
{\footnotesize {\bf Figure 3.} Defect brane configurations in M-theory. Compare the F-theoretic configurations in \hyperlink{FigureD7BraneConfiguration}{\it Figure 2} and their duality in  \hyperlink{StringDualitiesTurningD7D3BranesAtAIntoMTheory}{\it Figure 4}, \hyperlink{DualityBraneNumberOrbifoldOrder}{\it Figure 5}.}
\end{center}
\noindent

These $\mathrm{M3} \subset \mathrm{M5}$-defect branes in M-theory are thought (\cite{HoweLambertWest97b}\cite{LambertWest97}\cite{LambertWest98a}\cite{LambertWest98b}) to engineer the same SW-gauge theory as the $\mathrm{D3} \subset \mathbb{A}$-defect branes in F-theory (as per Rem. \ref{DSevenBraneChargesExpeectedInFTheory} (v) above); and they are thought to be controlled by WZW-model CFTs on their transversal complex curve, vaguely akin to Rem. \ref{DSevenBraneChargesExpeectedInFTheory} (vi) above:

\begin{remark}[{\bf AGT-correspondence and $\SpecialLinearLieAlgebra{}$-WZW theory}]
\label{AGTCorrespondenceAndWZW}
When the M5-brane worldvolume transverse to the M3-brane(s) is wrapped on a Riemann surface as
in \hyperlink{MBraneConfigurations}{\it Figure 3}, then the {\it AGT correspondence} (\cite{AGT10}, review in \cite{Tachikawa16}\cite{LeFloch20}\cite[\S 3]{Akhond21}) suggests that the M3-branes appear as punctures in this transversal space, labeled by vertex operators of a conformal field theory on $\Sigma^2$ which dually encodes at least some aspects of the (super-)Yang-Mills theory on the M3-brane worldvolume
(\cite[p. 2]{AldayTachikawa10}\cite[\S 3]{GaiottoMaldacena12}\cite[Table 1]{OSTY15}, see also \cite{CDT12}).
Specifically has been argued
that this CFT may be expressed via (cosets of) the $\SpecialLinearLieAlgebra{}$-WZW CFT
(\cite[\S 3]{Giribet10}\cite{AldayTachikawa10}\cite{NishiokaTachikawa11}, see in particular \cite[\S 1.3]{FMMW20}\cite{Manabe20}).
\end{remark}

Together, this suggests that defect branes in F-theory are equivalent (``dual'') to defect branes in M-theory:

\medskip

\noindent
{\bf M/F-Duality of defect branes.}
We describe sequences of T-dualities and M/IIA-dualities that relate defect brane configurations in F-theory (\hyperlink{FigureD7BraneConfiguration}{\it Figure 2}) with defect brane configurations in M-theory (\hyperlink{MBraneConfigurations}{\it Figure 3}), either at $\mathbb{A}_{\ShiftedLevel-1}$-singularities. This is essentially a recap of string theory folklore (e.g. \cite{Johnson97}\cite[\S 3]{Cherkis99}\cite[\S 6.3.3]{Smith02}\cite[\S 2.2]{Tachikawa14}\cite[\S 3]{DHTV14}) but may be worth spelling out some more.

\medskip

\noindent
{\it Duality rules}
for branes under T-duality (pointers in \cite{FSS16TDuality}) and M/IIA-duality (pointers in \cite[\S 1]{BMSS19}):

\medskip
\hspace{-.6cm}
{\small
\def\arraystretch{3}
\begin{tabular}{|c|l|}
\hline
  \rowcolor{lightgray}
\!\!{$\mathbb{A}_{\ShiftedLevel-1}/\ShiftedLevel\mathrm{NS5}/\ShiftedLevel\mathrm{M5}$}\!\!
&
\begin{minipage}{14.4cm}
The T-dual of an $\mathbb{A}_{\ShiftedLevel-1}$-singularity in IIA/B-theory along the $S^1$-fiber of the blowup of the transverse orbifold to an ALE-space is $\ShiftedLevel$ NS5-branes in IIB/A-theory
(\cite[\S 3]{OoguriVafa96}\cite{Kutasov96}\cite{GHM97}\cite[\S 4]{ACL98}).

In turn, the  M/IIA-dual of $\ShiftedLevel$ coincident NS5-branes in IIA-theory are $\ShiftedLevel$ coincident M5-branes in M-theory.
$\mathclap{\phantom{\vert_{\vert_{\vert}}}}$
\end{minipage}
\\
\hline
{{$\mathrm{D}3/\mathrm{D}4/\mathrm{M5}$}}
&
\begin{minipage}{14.4cm}
The T-dual of a D3-brane in IIB-theory along a transverse circle is of course a D4-brane. The further M/IIA dual is an
M5-brane whose worldvolume completes that of the M5-branes dual to NS5-branes (as in the previous item) to the Seiberg-Witten (SW) curve (\cite[\S 2.3]{Witten97Solutions}, see also \cite{FSS19SuperExceptionalGeometry}).
$\mathclap{\phantom{\vert_{\vert_{\vert}}}}$
\end{minipage}
\\
\hline
  \rowcolor{lightgray}
{{$\mathrm{D}7/\mathrm{D6}/\mathrm{MK6}$}}
&
\begin{minipage}{14.4cm}
The T-dual of D7-branes along a parallel circle are D6-branes in IIA-theory, whose M/IIA-dual are KK-monopoles in M-theory (``$\mathrm{MK5}$-branes'').
As long as the D7-branes do not coincide, neither do these KK-monopoles, so that each separately does not induce an orbi-singularity.
$\mathclap{\phantom{\vert_{\vert_{\vert}}}}$
\end{minipage}
\\
\hline
{{$\rho\mathrm{D}5/\rho\mathrm{D6}/\mathbb{A}_{\rho-1}$}}
&
\begin{minipage}{14.4cm}
The T-dual of $\rho$ D5-branes along a transverse circle are $\rho$ coincident D6-branes in IIA-theory, whose
M/IIA dual are $\rho$ coincident KK-monopoles ($\mathrm{MK6}$), whose far-horizon geometry (or ``blowdown'') is an $\mathbb{A}_{\ShiftedLevel-1}$-singularity (e.g.
\cite[around (18)]{Asano00}\cite[\S 2.2.5]{HSS18}).
$\mathclap{\phantom{\vert_{\vert_{\vert}}}}$
\end{minipage}
\\
\hline

\end{tabular}
}

\bigskip

\noindent
These duality rules predict for instance that the F-brane configuration from \hyperlink{FigureD7BraneConfiguration}{\it Figure 2} is equivalent to the following M-brane configuration shown in the following \hyperlink{StringDualitiesTurningD7D3BranesAtAIntoMTheory}{\it Figure 3}
(essentially discussed this way in, e.g., \cite[\S 2]{Tachikawa14}\cite[Table 1]{OSTY15}):

\begin{center}
\hypertarget{StringDualitiesTurningD7D3BranesAtAIntoMTheory}{}
\begin{tikzpicture}

\begin{scope}
\hspace{-.5em}
  \draw (0,.1) node
    {$
      \overset{
      }{
        \RealNumbers^{3,1}
      }
    $};

  \draw (.8,.7) node
    {$
      \overset{
      \mathllap{
        \raisebox{+0pt}{
        \hspace{-23pt}
        \rotatebox{-36}{
        \tiny
        \color{darkblue}
        \bf
        \def\arraystretch{.9}
        \begin{tabular}{l}
          transverse
          \\
          complex curve
        \end{tabular}
        \hspace{-6pt}
        }
      }
      }
      }
      {
        \Sigma^2
      }
    $};

  \draw (2,.1) node
    {
      $
        \HomotopyQuotient
          { \Quaternions }
          {\CyclicGroup{\ShiftedLevel}}
        \mathrlap{
          \hspace{-17pt}
          \raisebox{6pt}{
          \rotatebox{36}{$
            \mathrlap{
            \mbox{
            \tiny
            \color{darkblue}
            \bf
            \def\arraystretch{.9}
            \begin{tabular}{c}
              $\mathbb{A}$-type
              \\
              singularity
            \end{tabular}
            }
            }
          $}
          }
        }
      $
    };

  \draw (2,+1.4) node
    {
      $
        \overset{
          \mathclap{
          \raisebox{3pt}{
            \tiny
            \color{darkblue}
            \bf
            ALE space
          }
          }
        }{
          S^1_{\mathrm{f}}
          \,\times\,
          \RealNumbers^3
        }
      $
    };

  \draw[->]
    (2, 1)
    to
    node[left]{ \mbox{\tiny\color{greenii} \bf blowup} }
    (2, .4);

 \begin{scope}[shift={(0,-.5)}]

  \draw
    (-1, -.05) node
    {
      \color{gray}
      $\mathrm{D7}_{{}_I}$
    };

  \draw[
    line width=8pt,
    gray
  ]
    (0-.4,0) to (0+.4,0);
 \draw
   (-.4,0) to (0+.4,0);

 \draw (.8,0) node
   {$
     \{z_{{}_I}\}
   $};

  \draw[
    line width=8pt,
    gray
  ]
    (1.6-.4,0) to (2.4+.4,0);
 \draw
   (1.6-.4,0) to (2.4+.4,0);

 \end{scope}

 \begin{scope}[shift={(0,-1)}]

  \draw
    (-1, 0) node
    {
      \color{greenii}
      $\mathbb{A}_{\ShiftedLevel-1}$
    };

  \draw[
    line width=8pt,
    greenii
  ]
    (-.4,0) to (.8+.4,0);
 \draw
   (-.4,0) to (.8+.4,0);

 \draw (2, -0) node
  {$
    \{0\}
  $};

 \end{scope}

 \begin{scope}[shift={(0,-1.5)}]
  \draw
    (-1, 0) node
    {
      \color{gray}
      $\mathrm{D3}^i$
    };

  \draw[
    line width=8pt,
    gray
  ]
    (-.4,0) to (0+.4,0);
 \draw
   (-.4,0) to (0+.4,0);

 \draw (.8,0) node
   {$
     \{z^i\}
   $};

 \draw (2,0) node
   {$
     \{0\}
   $};

 \end{scope}

\end{scope}

\draw (3.8,0) node
  {
    $\xleftrightarrow
      [\scalebox{.7}{along $S^1_{\mathrm{f}}$}]
      {\scalebox{.7}{T-duality}}$
  };

\begin{scope}[shift={(6,0)}]
\hspace{.5em}
  \draw (0,.1) node
    {$
      \overset{
      }{
        \RealNumbers^{3,1}
      }
    $};

  \draw (.8,.05) node
    {
      $\Sigma^2$
    };

  \draw (1.6,.55) node
    {$
      \overset{
      \mathllap{
        \hspace{-32pt}
        \rotatebox{-36}{
        \tiny
        \color{darkblue}
        \bf
        \def\arraystretch{.9}
        \begin{tabular}{l}
          T-dual circle
        \end{tabular}
        \hspace{-6pt}
        }
      }
      }
      {
        S^1_{\mathrm{f}}
      }
    $};

  \draw (2.4,.04) node
    {
      $\RealNumbers^3$
    };

  \begin{scope}[shift={(0,-.5)}]

  \draw
    (-1, -.05) node
    {
      \color{gray}
      $\mathrm{D6}_{{}_I}$
    };

  \draw[
    line width=8pt,
    gray
  ]
    (-.4,0) to (0+.4,0);
 \draw
   (-.4,0) to (0+.4,0);

 \draw (.8,0) node
   {$
     \{z_{{}_I}\}
   $};

 \draw (1.6,0) node
   {$
     \{x\}
   $};

  \draw[
    line width=8pt,
    gray
  ]
    (2.4-.4,0) to (2.4+.4,0);
 \draw
   (2.4-.4,0) to (2.4+.4,0);

 \end{scope}

  \begin{scope}[shift={(0,-1)}]

  \draw
    (-1, 0) node
    {
      \color{greenii}
      $\ShiftedLevel \!\cdot\! \mathrm{NS5}$
    };

  \draw[
    line width=8pt,
    greenii
  ]
    (-.4,0) to (.8+.4,0);
 \draw
   (-.4,0) to (.8+.4,0);

 \draw (1.6, 0) node
  {$
    \{x\}
  $};

 \draw (2.4, 0) node
  {$
    \{0\}
  $};

 \end{scope}

 \begin{scope}[shift={(0,-1.5)}]

  \draw
    (-1, 0) node
    {
      \color{gray}
      $\mathrm{D4}^i$
    };

  \draw[
    line width=8pt,
    gray
  ]
    (-.4,0) to (0+.4,0);
 \draw
   (-.4,0) to (0+.4,0);

 \draw (.8,0) node
   {$
     \{z^i\}
   $};

 \draw (2.4,0) node
   {$
     \{0\}
   $};

  \draw[
    line width=8pt,
    gray
  ]
    (1.6-.4,0) to (1.6+.4,0);
 \draw
   (1.6-.4,0) to (1.6+.4,0);

 \end{scope}

\end{scope}

\hspace{.7em}
\draw (9.6,0) node
  {
    $\xleftrightarrow[\mbox{\small duality}]{\mbox{\small M/IIA}}$
  };

\hspace{.5em}
\begin{scope}[shift={(11.8,0)}]

  \draw (0,.1) node
    {$
      \overset{
      }{
        \RealNumbers^{3,1}
      }
    $};

  \draw (.8,.05) node
    {$
      {
        \Sigma^2
      }
    $};

  \draw (1.6,.0) node
    {
      $S^1_{\mathrm{f}}$
    };

  \draw (2.4,.0) node
    {
      $S^1_{\mathrm{m}}$
    };

  \draw (2,.35) node
    {
      $
        \overset{
          \mathclap{
          \raisebox{-1pt}{
            \tiny
            \color{darkblue}
            M-F torus $\MFTorus$
          }
          }
        }{
        \overbrace{
          \phantom{-----}
        }
        }
      $
    };

  \draw (1.7,.8) node
    {
      $
        \overset{
          \mathclap{
          \raisebox{-1pt}{
            \tiny
            \color{darkblue}
            elliptic fibration
          }
          }
        }{
        \overbrace{
          \phantom{-------}
        }
        }
      $
    };

  \draw (3.2,.04) node
    {
      $\RealNumbers^3$
    };

 \begin{scope}[shift={(0,-.5)}]

  \draw
    (-1, -.05) node
    {
      \color{gray}
      $\mathrm{MK6}_{{}_I}$
    };

  \draw[
    line width=8pt,
    gray
  ]
    (-.4,0) to (0+.4,0);
 \draw
   (-.4,0) to (0+.4,0);

 \draw (.8,0) node
   {$
     \{z_{{}_I}\}
   $};

 \draw (2, 0) node
  {$
    \{x \;+\; \ImaginaryUnit y\}
  $};

  \draw[
    line width=8pt,
    gray
  ]
    (3.2-.4,0) to (3.2+.4,0);
 \draw
   (3.2-.4,0) to (3.2+.4,0);

 \end{scope}

 \begin{scope}[shift={(0,-1)}]

  \draw
    (-1, 0) node
    {
      \color{greenii}
      $\ShiftedLevel \!\cdot\! \mathrm{M5}$
    };

  \draw[
    line width=8pt,
    greenii
  ]
    (-.4,0) to (.8+.4,0);
 \draw
   (-.4,0) to (.8+.4,0);

 \draw (2,0) node
   {$
     \{x \;+\; \ImaginaryUnit y\}
   $};

 \draw (3.2, 0) node
  {$
    \{0\}
  $};

 \end{scope}

 \begin{scope}[shift={(0,-1.5)}]

  \draw
    (-1, 0) node
    {
      \color{gray}
      $\mathrm{M5}^i$
    };

  \draw[
    line width=8pt,
    gray
  ]
    (-.4,0) to (0+.4,0);
 \draw
   (-.4,0) to (0+.4,0);

 \draw (.8,0) node
   {$
     \{z^i\}
   $};

  \draw[
    line width=8pt,
    gray
  ]
    (1.6-.4,0) to (2.4+.4,0);
 \draw
   (1.6-.4,0) to (2.4+.4,0);

 \draw (3.2,0) node
   {$
     \{0\}
   $};

 \draw[opacity=.6]
   (.8-.3, -.75+1.5) rectangle (2.4+.5, -1.78+1.5);

 \draw
   (1.8,-1.9+1.5) node
   {
     \tiny
     \color{darkblue}
     \bf SW-curve
   };

 \end{scope}

\end{scope}

\end{tikzpicture}
{
  \footnotesize
  {\bf Figure 4.}
  A sequence of stringy dualities
  turning the F-theoretic configuration from
  \hyperlink{FigureD7BraneConfiguration}{\it Figure 2}
  into an M-theoretic configuration
  as in \hyperlink{MBraneConfigurations}{\it Figure 3}.
}

\end{center}

\medskip

\noindent
{\bf Fiber/Base duality and brane-number/orbifold-order duality.}
Notice that \cite[\S 2.2.1]{Tachikawa14} considers the worldvolume theory of what above is called $\mathrm{M5}^i$, while \cite[p. 23]{OSTY15} also consider the worldvolume theory of the $\mathrm{M5}$ above. The latter wraps the base of the elliptic fibration, while the former wraps its fiber: \footnote{The elliptic fibration \eqref{MFEllipticFibration} is a trivial fiber product only topologically, while the complex structure on the fibers depends non-trivially on the base via the axio-dilaton field $\tau$ \eqref{TheAxioDilaton}.}
\begin{equation}
  \label{MFEllipticFibration}
  \begin{tikzcd}[row sep=small]
    \MFTorus
    \ar[r]
    &
    \Sigma^2 \times \MFTorus
    \ar[d]
    \\
    &
    \Sigma^2
  \end{tikzcd}
\end{equation}
Accordingly, switching perspective between regarding the M3-brane defects from within one or the other intersecting ambient M5-brane species corresponds
(e.g. \cite[p. 5]{HaghighatSun18})
to what is known in F-theory as {\it fiber-base duality} (e.g. \cite{HKYY18}). In the quiver gauge theories on the compactified M5-branes, this is a duality interchanging rank and multiplicity of the gauge group $\mathrm{SU}(\rho)^{\kappa-1}$
(e.g \cite{BPTY11}); while in terms of the $\SpecialLinearLieAlgebra{}$-WZW theory on the (either) transverse complex curve, this is
(see \cite[(1.1) with (1.4)]{FMMW20} and \cite[p. 2, 11]{Manabe20})
the {\it level-rank duality} (e.g. \cite{NakanishiTsuchiya}). Explicitly, the above duality rules yield the brane-number/orbifold-order duality $\kappa \leftrightarrow \rho$ as shown in the following \hyperlink{DualityBraneNumberOrbifoldOrder}{\it Figure 5}:\footnote{Including the D7-branes from
\hyperlink{FigureD7BraneConfiguration}{\it Figure 2}
on the left of
\hyperlink{DualityBraneNumberOrbifoldOrder}{\it Figure 5} shows that under $\mathrm{T}_{\mathrm{b}}$ they turn into $\mathrm{D}8$-branes, whose IIA/M-dual is subtle (cf. pointers and discussion in \cite{BMSS19}) beyond the scope of the present discussion. Therefore we disregard the D7-branes at this point, focusing on the remaining data constituted by the D3-branes of codimension=2 inside the $\mathbb{A}_{\ShiftedLevel-1}$-singularity.}

\begin{center}
\hypertarget{DualityBraneNumberOrbifoldOrder}{}
\begin{tikzpicture}

\draw[<->]
  (8.3, 0)
  node[above]
  { \hspace{19pt}\scalebox{.7}{M/IIA} }
  to (8.3+.7,0);

\draw[<->]
  (8.3, -3)
  node[above]
  { \hspace{19pt}\scalebox{.7}{M/IIA} }
  to (8.3+.7,-3);

\draw[<->]
  (2.5, -1.2)
  to (3.2,-.2);
\draw
({2.5 + (3.2-2.5)/4-.35}, {-1.2 + (-.2+1.2)/4 +.35})
node
 {
   \hspace{19pt}\scalebox{.7}{$\mathrm{T}_{\mathrm{f}}$}
 };

\draw[<->]
  (2.5, -1.9)
  to (3.2,-3);
\draw
({2.5+(3.2-2.5)/4-.3}, {-1.9+(-3+1.9)/4-.4})
 node
 {
   \hspace{19pt}\scalebox{.7}{$\mathrm{T}_{\mathrm{b}}$}
 };

\draw (0,-1.5) node {
\begin{tikzpicture}[framed, xscale=.8]

  \begin{scope}

  \draw (0,.1) node
    {$
      \overset{
      }{
        \RealNumbers^{3,1}
      }
    $};

  \draw (.8,.1) node
    {$
      \RealNumbers^1
    $};

  \draw (1.6,.1) node
    {$
      S^1_{\mathrm{b}}
    $};

  \draw (2.4,.1) node
    {$
      S^1_{\mathrm{f}}
    $};

  \draw (3.2,.1) node
    {$
      \RealNumbers^3
    $};

\end{scope}

 \begin{scope}[shift={(0,-.5)}]

  \draw (-1,0) node
    {$
      \rho \mathrm{D5}
    $};

  \draw[line width=8pt, gray]
    (0-.4,0) to (1.6+.4,0);

 \end{scope}

 \begin{scope}[shift={(0,-1)}]

  \draw (-1,0) node
    {$
      \mathbb{A}_{\ShiftedLevel-1}
    $};

  \draw[line width=8pt, gray]
    (0-.4,0) to (1.6+.4,0);

 \end{scope}

 \begin{scope}[shift={(0,-1.5)}]

  \draw (-1,0) node
    {$
      \mathrm{D3}
    $};

  \draw[line width=8pt, gray]
    (0-.4,0) to (0+.4,0);

 \end{scope}

\end{tikzpicture}
};

\draw (5.7,0) node {
\begin{tikzpicture}[framed, xscale=.8]

  \begin{scope}

  \draw (0,.1) node
    {$
      \overset{
      }{
        \RealNumbers^{3,1}
      }
    $};

  \draw (.8,.1) node
    {$
      \RealNumbers^1
    $};

  \draw (1.6,.1) node
    {$
      S^1_{\mathrm{b}}
    $};

  \draw (2.4,.1) node
    {$
      S^1_{\mathrm{f}}
    $};

  \draw (3.2,.1) node
    {$
      \RealNumbers^3
    $};

  \end{scope}

 \begin{scope}[shift={(0,-.5)}]

  \draw (-1,0) node
    {$
      \rho \mathrm{D6}
    $};

  \draw[line width=8pt, gray]
    (0-.4,0) to (2.4+.4,0);

 \end{scope}

 \begin{scope}[shift={(0,-1)}]

  \draw (-1,0) node
    {$
      \ShiftedLevel \mathrm{NS5}
    $};

  \draw[line width=8pt, gray]
    (0-.4,0) to (1.6+.4,0);

 \end{scope}

 \begin{scope}[shift={(0,-1.5)}]

  \draw (-1,0) node
    {$
      \mathrm{D4}
    $};

  \draw[line width=8pt, gray]
    (0-.4,0) to (0+.4,0);

  \draw[line width=8pt, gray]
    (2.4-.4,0) to (2.4+.4,0);

 \end{scope}

\end{tikzpicture}
};

\draw (12,0) node {
\begin{tikzpicture}[framed, xscale=.8]
  \begin{scope}

  \draw (0,.1) node
    {$
      \overset{
      }{
        \RealNumbers^{3,1}
      }
    $};

  \draw (.8,.1) node
    {$
      \RealNumbers^1
    $};

  \draw (1.6,.1) node
    {$
      S^1_{\mathrm{b}}
    $};

  \draw (2.4,.1) node
    {$
      S^1_{\mathrm{f}}
    $};

  \draw (3.2,.1) node
    {$
      S^1_{\mathrm{m}}
    $};

  \draw (4,.1) node
    {$
      \RealNumbers^3
    $};

  \end{scope}

 \begin{scope}[shift={(0,-.5)}]

  \draw (-1,0) node
    {$
      \color{greenii}
      \mathbb{A}_{\rho-1}
    $};

  \draw[line width=8pt, greenii]
    (0-.4,0) to (2.4+.4,0);
  \draw
    (0-.4,0) to (2.4+.4,0);

 \end{scope}

 \begin{scope}[shift={(0,-1)}]

  \draw (-1,0) node
    {$
      \color{orangeii}
      \ShiftedLevel \mathrm{M5}
    $};

  \draw[line width=8pt, orangeii]
    (0-.4,0) to (1.6+.4,0);
  \draw
    (0-.4,0) to (1.6+.4,0);

 \end{scope}

 \begin{scope}[shift={(0,-1.5)}]

  \draw (-1,0) node
    {$
      \mathrm{M5}
    $};

  \draw[line width=8pt, gray]
    (0-.4,0) to (0+.4,0);

  \draw[line width=8pt, gray]
    (2.4-.4,0) to (3.2+.4,0);

 \end{scope}

\end{tikzpicture}
};

\draw (5.7,-3) node {
\begin{tikzpicture}[framed, xscale=.8]
  \begin{scope}

  \draw (0,.1) node
    {$
      \overset{
      }{
        \RealNumbers^{3,1}
      }
    $};

  \draw (.8,.1) node
    {$
      \RealNumbers^1
    $};

  \draw (1.6,.1) node
    {$
      S^1_{\mathrm{b}}
    $};

  \draw (2.4,.1) node
    {$
      S^1_{\mathrm{f}}
    $};

  \draw (3.2,.1) node
    {$
      \RealNumbers^3
    $};

  \end{scope}

 \begin{scope}[shift={(0,-.5)}]

  \draw (-1,0) node
    {$
      \rho \mathrm{D4}
    $};

  \draw[line width=8pt, gray]
    (0-.4,0) to (.8+.4,0);

 \end{scope}

 \begin{scope}[shift={(0,-1)}]

  \draw (-1,0) node
    {$
      \mathbb{A}_{\ShiftedLevel-1}
    $};

  \draw[line width=8pt, gray]
    (0-.4,0) to (1.6+.4,0);

 \end{scope}

 \begin{scope}[shift={(0,-1.5)}]

  \draw (-1,0) node
    {$
      \mathrm{D4}
    $};

  \draw[line width=8pt, gray]
    (0-.4,0) to (0+.4,0);

  \draw[line width=8pt, gray]
    (1.6-.4,0) to (1.6+.4,0);

 \end{scope}

\end{tikzpicture}
};

\draw (12,-3) node {
\begin{tikzpicture}[framed, xscale=.8]
  \begin{scope}

  \draw (0,.1) node
    {$
      \overset{
      }{
        \RealNumbers^{3,1}
      }
    $};

  \draw (.8,.1) node
    {$
      \RealNumbers^1
    $};

  \draw (1.6,.1) node
    {$
      S^1_{\mathrm{b}}
    $};

  \draw (2.4,.1) node
    {$
      S^1_{\mathrm{f}}
    $};

  \draw (3.2,.1) node
    {$
      S^1_{\mathrm{m}}
    $};

  \draw (4,.1) node
    {$
      \RealNumbers^3
    $};

  \end{scope}

 \begin{scope}[shift={(0,-.5)}]

  \draw (-1,0) node
    {$
      \color{greenii}
      \rho \mathrm{M5}
    $};

  \draw[line width=8pt, greenii]
    (0-.4,0) to (.8+.4,0);
  \draw
    (0-.4,0) to (.8+.4,0);

  \draw[line width=8pt, greenii]
    (3.2-.4,0) to (3.2+.4,0);
  \draw
    (3.2-.4,0) to (3.2+.4,0);

 \end{scope}

 \begin{scope}[shift={(0,-1)}]

  \draw (-1,0) node
    {$
      \color{orangeii}
      \mathbb{A}_{\ShiftedLevel-1}
    $};

  \draw[line width=8pt, orangeii]
    (0-.4,0) to (1.6+.4,0);
  \draw
    (0-.4,0) to (1.6+.4,0);

  \draw[line width=8pt, orangeii]
    (3.2-.4,0) to (3.2+.4,0);
  \draw
    (3.2-.4,0) to (3.2+.4,0);

 \end{scope}

 \begin{scope}[shift={(0,-1.5)}]

  \draw (-1,0) node
    {$
      \mathrm{M5}
    $};

  \draw[line width=8pt, gray]
    (0-.4,0) to (0+.4,0);

  \draw[line width=8pt, gray]
    (1.6-.4,0) to (1.6+.4,0);
  \draw[line width=8pt, gray]
    (3.2-.4,0) to (3.2+.4,0);

 \end{scope}
\end{tikzpicture}
};

\end{tikzpicture}
  {
    \footnotesize
    {\bf Figure 5.}
    Sequence of stringy dualities exhibiting
    duality between the number of
    M5-branes and order of
    the orbifold singularity that they probe.
  }
\end{center}

The configuration on the bottom right
of \hyperlink{DualityBraneNumberOrbifoldOrder}{\it Figure 5}
is the fiber-base-dual configuration which we are after, where the $\ShiftedLevel$-M5-branes
from \hyperlink{StringDualitiesTurningD7D3BranesAtAIntoMTheory}{\it Figure 4}
have turned into a $\CyclicGroup{\ShiftedLevel}$-orbi-singularity, and we see M5-branes intersecting over this singular locus in an M3-brane \eqref{OrderedConfigurationSpaceAsSpaceOfIntersections}.
It is this configuration on which we now measure quantum brane charges according to {\HypothesisH}.

Before we turn to this in Rem. \ref{MThreeBraneModuliViaHypothesisH} below, here is a loose end worth mentioning:

\begin{remark}[\bf Level and rank in the $\SpecialLinearLieAlgebra{}$-WZW models appearing in the AGT correspondence]
  \label{LevelAndRankInTheAGTCorrespondence}
  In \cite{NishiokaTachikawa11}\cite{FMMW20} it is the
  rank and hence dually the
  level $\Level$
  of the AGT $\SpecialLinearLieAlgebra{}$-WZW theory (from Rem. \ref{AGTCorrespondenceAndWZW})
  which is identified with the order of the orbifold singularity probed by the M5-branes, while the results in \cref{ConformalBlocksAsTEdKTheory} instead identify this with the {\it shifted} level $\ShiftedLevel = \Level + 2$.

\noindent {\bf (i)}  While we currently do not see how to resolve this mismatch, it may be worthwhile to highlight the general subtlety with quantum corrections to the level.
  For example, \cite[p. 3]{NishiokaTachikawa11} quotes the argument \cite{DHSV08}, whose analysis identifies brane number with a {\it classical} Chern-Simons level (in \cite[p. 15-16]{DHSV08}). The latter is well-known to pick up quantum corrections by exactly dual Coxeter number
  \eqref{ShiftedLevel}
  that shifts $\Level$ to $\ShiftedLevel$
  (e.g. \cite{AGLR90}\cite{Shifman91}),
  but this correction seems not to have been considered in these arguments.
  The analogous issue arises in discussion the ABJM model
  for M2-branes at $\CyclicGroup{\ShiftedLevel}$-singularities
  (\cite{ABJM08}), where the quantum correction to the level is
  also generally disregarded -- but here an exception is \cite[(3.72)]{Marino11}.

\noindent {\bf (ii)}   Indeed, the careful computation
  in \cite[p. 10-12]{OoguriVafa96}\cite[p. 4]{LLS00}
  shows that such a
  shift
  {\it is}
  present in the type II-theory discussion \eqref{ShiftOfLevelViaStringsOnALE} and thus would be expected to translate under type-II/M-duality.

\end{remark}

\medskip

\noindent
{\bf Measuring M3-Brane charges at $\mathbb{A}$-type singularities.}
The above discussion of dualities means to have shown that the transverse orbi-geometries of codim=2 defect branes in F-theory and in M-theory correspond to each other under the expected stringy dualities. Therefore we conclude now with discussing defect M3-brane charges in M-theory according to {\HypothesisH}, amplifying that this makes (the respective dual of) the transverse space $\Sigma^2$ be generalized to its configuration spaces of points, as required to apply the full strength of Thm. \ref{ConformalBlockInTEdKTheory} above:

\medskip
\noindent
\begin{remark}
[{\bf M3-brane moduli via {\HypothesisH}.}]
\label{MThreeBraneModuliViaHypothesisH}
According to {\HypothesisH}, moduli spaces of M-branes
in the 11-dimensional M-theory bulk are cocycle space for \TED-4-Cohomotopy theory,
where the degree 4 corresponds to that of the flux density $G_4$ of the M-theory C-field \cite{SS19ConfigurationSpaces} (cf. \cite{Sati19} from a TD-generalized cohomology perspective).
Moreover, after localization to the 7-dimensional locus of an
MK6-brane (an ADE-singularity in 11d), this situation repeats, but now with
respect to 3-Cohomotopy, where the degree 3 corresponds to that of the flux
density $H_3$ \cite{FSS20TwistedString}\cite{FSS19HopfWZ} (see Rem. \ref{MoreOnHypothesisHOnM5Branes} for more on how this works).

\medskip
Specifically, the moduli space of a flat non-singular codimension=$d-p$ brane
with respect to plain $n$-Cohomotopy
is the pointed mapping space
$\ConstrainedMaps{\big}{\RealNumbers^{d-p}_{\mathrm{cpt}}}{S^n}$:

\vspace{-1mm}
\begin{itemize}[leftmargin=*]
\setlength\itemsep{-1pt}

\item In sufficiently large codimension $d - p \geq n$,
this is a homotopy type controlled by the homotopy groups of spheres \cite{SS21MF}.

\item On the other hand, in small codimension
$d-p < n$ of interest here, the homotopy type of this space is presented by the configuration space of un-ordered points
in $\RealNumbers^{n}$ which are distinct already in their projection to $\RealNumbers^{d-p}$
and which may escape to infinity along the remaining $n-d+p$ directions (as follows by a classical theorem due to P. May and G. Segal, recalled in \cite[Prop. 2.5]{SS19ConfigurationSpaces}):
$$
  \overset{
    \mathclap{
    \raisebox{2pt}{
      \tiny
      \color{darkblue}
      \bf
      \begin{tabular}{c}
        Cohomotopy cocycle space
      \end{tabular}
    }
    }
  }{
  \ConstrainedMaps{}
    { \RealNumbers^{d-p}_{\compact} }
    { S^n}
  }
  \;\;
  \underset{
    \homotopy
  }{
  \simeq
  }
  \;\;
  \coprod_{\NumberOfProbeBranes \in \NaturalNumbers}
  \overset{
    \mathclap{
    \raisebox{3pt}{
      \tiny
      \color{darkblue}
      \bf
      \def\arraystretch{.9}
      \begin{tabular}{c}
        configuration space of points
        in $\RealNumbers^n$
        \\
        which are distinct already in $\RealNumbers^{d-p}$ and
        \\
        may escape to $\infty$ along $\RealNumbers^{n-p-d}$
      \end{tabular}
    }
    }
  }{
  \mathrm{Conf}_{\NumberOfProbeBranes}
  \big(
    \RealNumbers^{d-p}
    ;\,
    \RealNumbers^{n-d+p}_{\compact}
  \big)
  }
  \,.
$$
\end{itemize}

\vspace{-1mm}
\noindent Accordingly, the corresponding moduli space for {\it intersections} of
branes in codimension $d-p$ and $d-p'$ is to be the fiber product of these
respective configuration spaces \cite[(9)]{SS19ConfigurationSpaces}. For example, when $d - p' = 1$, then configurations of codimension=1 branes are specified, up to homotopy of configurations, by the {\it order} in which they are arranged in their transverse space, and therefore this fiber product moduli space of flat $p \!\!\perp\!\! p'$-brane intersections for $d-p = n-1$ is presented by the configuration space of {\it ordered} points in $\RealNumbers^{d-p}$ \cite[Prop. 2.4]{SS19ConfigurationSpaces}.

\medskip
In \cite[Prop. 2.11]{SS19ConfigurationSpaces} this situation is considered for $n = 4$, applicable to the M-theory bulk; while here we are concerned with the case $n = 3$, $d = 7$, $p = 5$, $p' = 1$, corresponding to charges of intersecting probe M5-branes inside M5-domain walls inside MK6 worldvolumes, as expected:

\vspace{-6mm}
\hypertarget{FigureConfigurationSpaceForM3Branes}{}
\begin{equation}
\label{OrderedConfigurationSpaceAsSpaceOfIntersections}
\hspace{-1cm}
\def\arraystretch{7}
\begin{array}{rcl}
\overset{
  \raisebox{5pt}{
    \tiny
    \color{darkblue}
    \bf
    \def\arraystretch{.9}
    \begin{tabular}{c}
      Configuration space of
      \\
      ordered points
      in the plane
      \\
      \phantom{a}
    \end{tabular}
  }
}{
{\coprod}_{\NumberOfProbeBranes}
\ConfigurationSpace{\NumberOfProbeBranes}(\ComplexNumbers)
}
&
\simeq
&
\underset{
  \mathclap{
  \raisebox{-4pt}{
    \tiny
    \color{darkblue}
    \bf
    \def\arraystretch{.9}
    \begin{tabular}{c}
    Fiber product of respective configuration spaces
    \\
    (of un-ordered points escaping to transverse infinity)
    \\
    reflecting the brane intersections
    \end{tabular}
  }
  }
}{
\overset{
\overset{
  \mathclap{
  \raisebox{3pt}{
    \tiny
    \color{darkblue}
    \bf
    \def\arraystretch{.9}
    \begin{tabular}{c}
      3-Cohomotopy cocycle space
      \\
      for codim=1 branes
    \end{tabular}
  }
  }
}{
\ConstrainedMaps{}
  {
    \RealNumbers_{+}
    \wedge
    \ComplexNumbers_{\compact}
  }
  {
    S^3
  }
}
  \;
  \simeq
}{
  \overbrace{
  {\coprod}_{\NumberOfProbeBranes}
  \mathrm{Conf}_{\NumberOfProbeBranes}
  \big(
    \ComplexNumbers
    ;\,
    \RealNumbers_{\compact}
  \big)
  }
}
\qquad \qquad
\underset{
  \mathclap{ \small
    \raisebox{-6pt}{$
    \coprod_{\NumberOfProbeBranes}
    \mathrm{Conf}_{\NumberOfProbeBranes}
    \big(
      \ast;\,
      (\RealNumbers
        \times
      \ComplexNumbers)_{\compact}
    \big)
    $}
  }
}{
  \;\;
  \bigtimes
  \;\;
}
\qquad \qquad
\overset{
\overset{
  \mathclap{
  \raisebox{3pt}{
    \tiny
    \color{darkblue}
    \bf
    \def\arraystretch{.9}
    \begin{tabular}{c}
      3-Cohomotopy cocycle space
      \\
      for codim-2 branes
    \end{tabular}
  }
  }
}{
\ConstrainedMaps{}
  {
    \RealNumbers_{\compact}
    \wedge
    \ComplexNumbers_{+}
  }
  {
    S^3
  }
}
  \;\simeq
}
{
  \overbrace{
  {\coprod}_{\NumberOfProbeBranes}
  \mathrm{Conf}_{\NumberOfProbeBranes}
  \big(
    \RealNumbers
    ;\,
    \ComplexNumbers_{\compact}
  \big)
  }
}
}
\\
\mathllap{
  \mbox{e.g.:}
  \;\;\;
}
\ConfigurationSpace{3}(\ComplexNumbers)
&
\simeq
&
\left\{
\raisebox{9pt}{
\begin{tikzcd}
  \draw[
    draw=orangeii,
    fill=orangeii,
    draw opacity=.4,
    fill opacity=.4
  ]
    (-1.8,1) rectangle (3,-1.4);

  \draw (3, -1.66)
    node
    {
      \llap{
      \scalebox{.8}{
      \color{black}
      $\RealNumbers \to $
      }}
    };

  \draw (3.2, -1.2)
    node
    {
      {
      \scalebox{.8}{
      \color{black}
      $
        \def\arraystretch{.9}
        \begin{array}{c}
          \uparrow
          \\
          \ComplexNumbers
        \end{array}
      $
      }}
    };

  \draw (-1.4, .7)
    node
    {
      \scalebox{.7}{
      \color{orangeii}
      MK6
      }
    };

  \draw[
    line width=6pt,
    greenii
  ]
    (-.5, 1) to (-.5,-1.4);
  \draw
    (-.5,-1.63) node
    {
      \scalebox{.8}{$x_1$}
    };

  \draw (-.5, +1.05)
    node
    {
      \scalebox{.7}{
      \color{greenii}
      M5
      }
    };

  \draw[
    line width=5pt,
    gray
  ]
    (-.5, 0) to (3, 0);
  \draw[
    fill=gray,
    fill opacity=.9
  ]
    (-.5, 0) circle (3.5pt);

  \draw
    (3.2, -.04) node
    {
      \scalebox{.8}{
        $z_1$
      }
    };
  \draw
    (1.9, -.04+.2) node
    {
      \scalebox{.7}{
        \color{gray}
        $\mathrm{M5}^1$
      }
    };

  \draw (-.89, -.05)
    node
    {
      \scalebox{.7}{
      \color{black}
      $\mathrm{M3}^1$
      }
    };

  \draw[
    line width=8pt,
    white,
  ]
    (+.3, 1) to (+.3,-1.4);
  \draw[
    line width=8pt,
    orangeii,
    draw opacity=.4
  ]
    (+.3, 1) to (+.3,-1.4);
  \draw[
    line width=6pt,
    greenii
  ]
    (+.3, 1) to (+.3,-1.4);
  \draw
    (+.3,-1.63) node
    {
      \scalebox{.8}{$x_2$}
    };

  \draw
    (-.1,-1.63) node
    {
      \scalebox{.8}{$<$}
    };

  \draw[
    line width=5pt,
    gray
  ]
    (+.3, -.4) to (+3, -.4);
  \draw[
    fill=gray,
    fill opacity=.9
  ]
    (+.3, -.4) circle (3.5pt);
  \draw
    (3.2, -.44) node
    {
      \scalebox{.8}{
        $z_2$
      }
    };

  \draw[
    line width=8pt,
    white,
  ]
    (+1, 1) to (+1,-1.4);
  \draw[
    line width=8pt,
    orangeii,
    draw opacity=.4
  ]
    (+1, 1) to (+1,-1.4);
  \draw[line width=6pt, greenii]
    (+1, 1) to (+1,-1.4);

  \draw
    (+1,-1.63) node
    {
      \scalebox{.8}{$x_3$}
    };

  \draw
    (+.65,-1.63) node
    {
      \scalebox{.8}{$<$}
    };

  \draw[
    line width=5pt,
    gray
  ]
    (+1, +.6) to (+3, +.6);
  \draw[
    fill=gray,
    fill opacity=.9
  ]
    (+1, +.6) circle (3.5pt);
  \draw
    (3.2, +.56) node
    {
      \scalebox{.8}{
        $z_3$
      }
    };

  \draw
    (1.9, +.46+.2) node
    {
      \scalebox{.7}{
        \color{gray}
        %$\mathrm{M5}^3$
      }
    };

\end{tikzcd}
}
\hspace{-12pt}
\right\}
\end{array}
\end{equation}
\begin{center}
  \footnotesize
  {\bf Figure 6.} The moduli space of flat M3-branes according to {\HypothesisH} is the configuration space of ordered points in their transverse plane.
\end{center}

Therefore, taking $\NumberOfPunctures$ of the M3 as background branes and considering the moduli spaces of configurations of $\NumberOfProbeBranes$ further M3 branes around these yields exactly the configuration spaces of points in the punctured plane \eqref{ConfigurationSpaceAsFiberProduct}.
Now Thm. \ref{ConformalBlockInTEdKTheory} implies that
the corresponding quantum states, conceptualized as in \cite{SS19ConfigurationSpaces}\cite{CSS21} but now measured in full \TED-K-theory, reflects the $\slTwoAffine{\Level}$-conformal blocks in any degree $\NumberOfProbeBranes$.
\end{remark}

\begin{remark}[Charge quantization on M5-branes at $\mathbb{A}$-type singularities according to {\HypothesisH}]
\label{MoreOnHypothesisHOnM5Branes}
We recall and spell out in more detail how
{\HypothesisH} implies that the charges of
M3-brane charges inside $\mathbb{A}$-type singularities
(\hyperlink{FigureD7BraneConfiguration}{\it Figure 3})
have coefficients in 3-Cohomotopy, thus leading to the moduli spaces \eqref{OrderedConfigurationSpaceAsSpaceOfIntersections}:

\noindent
{\bf (i)}
On the ambient 11d orbi-spacetime, the tangentially J-twisted 4-Cohomotopy coefficient
(according to \cite{FSS19b}) as
applicable to an $\mathbb{A}_{\ShiftedLevel-1}$-type orbi-singularity $\RealNumbers^{3,1} \times \ComplexNumbers_{\mathrm{cpt}} \times \mathbb{R}^1 \times \HomotopyQuotient{\Quaternions}{\CyclicGroup{\ShiftedLevel}}$
(by \cite{SS19TadpoleCancellation}\cite[Def. 5.28 (ii)]{SS20OrbifoldCohomology})
is the representation 4-sphere of the left multiplication action $\CyclicGroup{\ShiftedLevel} \acts \, \Quaternions$ (by \cite[Thm. 5.16, Ex. 5.29 (ii)]{SS20OrbifoldCohomology}).

\noindent
{\bf (ii)}
On the $\mathrm{M5} \subset \mathrm{MK6}$-worldvolume the induced twisted 7-Cohomotopy coefficient
(by \cite[(118)]{FSS19b}\cite[(43)]{FSS19HopfWZ}\cite[(12)]{FSS20a})
is the lift of this action through the quaternionic Hopf fibration (e.g. \cite[(11)]{FSS20a}) --
which is the representation 7-sphere of the left multiplication action $\CyclicGroup{\ShiftedLevel} \acts \Quaternions \times \RealNumbers^4$ on the left factor,
\vspace{-3mm}
\begin{equation}
  \label{ThreeCohomotopyChargeOnM5Brane}
  \CyclicGroup{\ShiftedLevel}
  \acts
  \;
  \ShapeOfSphere{7}_L
  \;\;
  \coloneqq
  \;\;
  S
  \Big(
    \!\!\!\!
    \begin{tikzcd}[column sep=-5pt]
      \mathbb{H}
      \ar[out=180-60, in=60, looseness=3.8, "\scalebox{.77}{$\mathclap{
        \scalebox{.9}{$\CyclicGroup{\ShiftedLevel}$}
      }\;\;\;$}"{pos=.41, description},shift right=1]
      &\oplus&
      \mathbb{R}^4
    \end{tikzcd}
    \!\!\!
  \Big)
  \xrightarrow{
    \phantom{--}
    h_{\Quaternions}
    \phantom{--}
  }
  S
  \Big(
    \!\!\!\!
    \begin{tikzcd}[column sep=-5pt]
      \mathbb{H}
      \ar[out=180-60, in=60, looseness=3.8, "\scalebox{.77}{$\mathclap{
        \scalebox{.9}{$\CyclicGroup{\ShiftedLevel}$}
      }\;\;\;$}"{pos=.41, description},shift right=1]
    \end{tikzcd}
    \!\!\!
  \Big)
  =:
  \CyclicGroup{\ShiftedLevel}
  \acts \,
  S^4_L
  \,,
\end{equation}

\vspace{-2mm}
\noindent in that (the topological sector of) the Hopf WZ term on the M5-brane  (\cite[(46)]{FSS19HopfWZ}\cite[(1)]{FSS20TwistedString})
is given by dashed sections shown in the following pullback diagram on the left (as in \cite[(9)]{FSS20TwistedString}):
\begin{equation}
  \adjustbox{raise=-15pt}{
  \begin{tikzcd}[column sep=70pt, row sep=36pt]
    S^3
    \ar[r]
    &[-68pt]
    \widehat{
      \mathbb{R}^{3,1}
      \times
      \ComplexNumbers
      \times
      \RealNumbers
    }
    \ar[d]
    \ar[r]
    \ar[dr, phantom, "\mbox{\tiny\rm(pb)}"]
    &
    \ShapeOfSphere{7}_L
    \ar[
      d,
      "{
        h_{\mathbb{H}}
      }"
    ]
    \\
    &
    \underset{
      \mathclap{
      \raisebox{-3pt}{
        \tiny
        \color{darkblue}
        \bf
        \def\arraystretch{.9}
        \begin{tabular}{c}
          $\mathrm{M}5 \!\subset\! \mathrm{MK}6$
          \\
          worldvolume
        \end{tabular}
      }
      }
    }{
      \underbrace{
        (
        \mathbb{R}^{3,1}
        \times
        \ComplexNumbers
        \times
        \RealNumbers^1
        )
      }
    }
    \times
    \underset{
      \mathclap{
      \raisebox{-3pt}{
        \tiny
        \color{darkblue}
        \bf
        \def\arraystretch{.9}
        \begin{tabular}{c}
          ADE-
          \\
          singularity
        \end{tabular}
      }
      }
    }{
      \underbrace{
        \mathbb{H}_{\mathrm{cpt}}
      }
    }
    \ar[
      r,
      "{
        \underset{
          \mathclap{
          \raisebox{-8pt}{
            \tiny
            \def\arraystretch{.9}
            \begin{tabular}{c}
              \color{greenii}
              \bf
              unit M5/MK6
              \\
              \color{greenii}
              \bf
              charge
              \\
              \color{gray}
              \cite{SS19TadpoleCancellation}
            \end{tabular}
          }
          }
        }{
          \mathrm{pr}_2
        }
      }"{below}]
    \ar[
      u,
      bend left=30,
      dashed,
      "{
        \mbox{
          \tiny
          \def\arraystretch{.9}
          \begin{tabular}{c}
            \color{gray}
            \cite{FSS19HopfWZ}
            \\
            \color{gray}
            \cite{FSS20TwistedString}
          \end{tabular}
          \hspace{-12pt}
          \def\arraystretch{.9}
          \begin{tabular}{c}
            \color{greenii}
            \bf
            charge inside
            \\
            \color{greenii}
            \bf
            M5/MK6
          \end{tabular}
        }
      }"{xshift=6pt}
    ]
    &
    \ShapeOfSphere{4}_L
  \end{tikzcd}
  }
  \;\;
  \adjustbox{raise=-9pt}{
  \begin{tikzcd}[column sep=16pt]
    {}
    \ar[
      rr,
      |->,
      "{
        (-)^{\scalebox{.8}{$\CyclicGroup{\ShiftedLevel}$}}
      }",
      "{
        \mbox{
          \tiny
          \color{orangeii}
          \bf
          \def\arraystretch{.9}
          \begin{tabular}{c}
            restriction to
            \\
            $\CyclicGroup{\ShiftedLevel}$-fixed locus
          \end{tabular}
        }
      }"{below, yshift=-4pt}
    ]
    &&
    {}
  \end{tikzcd}
  }
  \hspace{-10pt}
  \adjustbox{raise=-15pt}{
  \begin{tikzcd}[column sep=55pt, row sep=36pt]
    \mathllap{(}
      \mathbb{R}^{3,1}
      \times
      \ComplexNumbers
      \times
      \RealNumbers^1
    )
    \mathrlap{
      \times
      S^3}
    \ar[
      r,
      start anchor={[xshift=15pt]}
    ]
    \ar[d]
    \ar[dr, phantom, "\mbox{\tiny\rm(pb)}"]
    &
    \ShapeOfSphere{3}
    \ar[d]
    \\
    \underset{
      \mathclap{
      \raisebox{-3pt}{
        \tiny
        \color{darkblue}
        \bf
        \def\arraystretch{.9}
        \begin{tabular}{c}
          4d
          \\
          spacetime
        \end{tabular}
      }
      }
    }{
      \underbrace{
        \mathbb{R}^{3,1}
      }
    }
    \;\times\;
    \underset{
      \mathclap{
      \raisebox{-3pt}{
        \tiny
        \color{darkblue}
        \bf
        \def\arraystretch{.9}
        \begin{tabular}{c}
          cylinder over
          \\
          surface
        \end{tabular}
      }
      }
    }{
      \underbrace{
        \ComplexNumbers
        \times
        \RealNumbers
      }
    }
    \ar[r]
    \ar[
      u,
      bend left=30,
      dashed,
      "{
        \mbox{
          \tiny
          \color{greenii}
          \bf
          \def\arraystretch{.9}
          \begin{tabular}{c}
            3-Cohomotopy
            \\
            cocycle
          \end{tabular}
        }
      }"{xshift=6pt}
    ]
    &
    S^0
    \,.
  \end{tikzcd}
  }
\end{equation}

\vspace{-3mm}
\noindent
{\bf (iii)}
Under passage to the singularity by
restricting this diagram to the $\CyclicGroup{\ShiftedLevel}$-fixed locus,
as shown on the right of
\eqref{ThreeCohomotopyChargeOnM5Brane},
the pullback diagram manifestly reduces to a direct product with the $S^3$-fiber of the quaternionic Hopf fibration; and hence the dashed sections inside the singularity are equivalently plain maps to $S^3$, hence cocycles in 3-Cohomotopy (quantizing the $H_3$-flux on the M5, under {\HypothesisH}).
\end{remark}

\medskip

\noindent
{\bf In conclusion} we have thus made plausible the following:
\vspace{-1mm}
\begin{itemize}
%[leftmargin=5pt]
\setlength\itemsep{-1pt}

\item[{\bf (1)}]
The charges of flat D3/D7-branes in F-theory
(\hyperlink{FigureD7BraneConfiguration}{\it Figure 2})
ought to be measured, according to the widely-expected {\HypothesisK} (Rem. \ref{HypothesesAboutBraneChargeQuantization}), by the (secondary) \TED-K-theory
of \eqref{SESForD7BraneChargeOnSigmaTwo}
their transverse complex curve inside an $\mathbb{A}$-type orbi-singularity; and the latter, computed via Prop. \ref{Degree1ConformalBlockInTEdKTheory} as shown in \eqref{ConcludingD7BraneCharges}, indeed neatly matches
(as per Rem. \ref{ListOfExoticD7BraneChargeAspects})
a host of expectations (listed in Rem. \ref{DSevenBraneChargesExpeectedInFTheory}) about exotic defect brane charges in F-theory.

The only shortcoming at this point is that the \TED-K-theory of the F-theoretic transverse space sees of the expected $\suTwoAffine{\Level}$-CFT \eqref{ShiftOfLevelViaStringsOnALE} only the conformal blocks of degree 1. But:

\item[{\bf (2)}]
Dualizing the situation to that of M3-branes in M-theory (\hyperlink{MBraneConfigurations}{\it Figure 3})
allows to enhance the traditional {\HypothesisK} to our more recently proposed {\HypothesisH} (Rem. \ref{HypothesesAboutBraneChargeQuantization}). This has the consequence
(\hyperlink{TableOfCohomologyOfCohomotopyCocycleSpaces}{\it Table 1})
that the orbi-space to evaluate the \TED-K-theory on is enhanced from the plain transverse space to its 3-Cohomotopy cocycle space, which is given, more generally, by the configuration space of points \eqref{OrderedConfigurationSpaceAsSpaceOfIntersections} inside the transverse space. On this, the \TED-K-theory is computed now via Thm. \ref{ConformalBlockInTEdKTheory}, which completes the F-theoretic analysis by including now also the conformal blocks in all higher degrees.
\end{itemize}

Notice that the arguments about brane charges in this one section here are {\it necessarily} non-rigorous, since the folklore about F/M-theory of branes and dualities which they refer to does not yet exist as a complete well-defined theory. But conversely, to the extent that this folklore plausibly matches to the well-defined \TED-K-theory of Cohomotopy cocycle spaces, it supports the {\HypothesisH} that the latter is (at least partly) the missing rigorous definition of F/M-theory. Assuming this, we may then turn this around and rigorously explore F/M-theory by purely mathematical analysis of the \TED-K-theory of Cohomotopy cycles spaces.

\newpage

%%%%%%%%%%%%%%%%%%%%%%%%%%%%%%%%%%%%%%%%%%%%
\section{Anyonic quantum computation via {TED-K}}
\label{Outlook}
%%%%%%%%%%%%%%%%%%%%%%%%%%%%%%%%%%%%%%%%%%

We close by highlighting some
potentially far-reaching consequences of the above analysis.
The following is an outline of a research program which will be laid out in
more detail, beginning with \cite{SS22TQC}.

\medskip

\noindent
{\bf 1. Quantum field theory of Defect branes.}
It is in fact an old observation that the correlators of some Euclidean quantum field theories are encoded in the de Rham cohomology
of a configuration space of points: For 3d Chern-Simons theory this goes back to \cite{AxelrodSinger94}, further discussed
in \cite[\S 3]{AF96}\cite[Rem. 3.6]{BottCattaneo98}, leading to Kontsevich's graph complexes; a discussion for more general
quantum field theories is in \cite{Berghoff14}.
The evident suggestion that therefore the {\it generalized} cohomology (such as the K-theory) of configuration spaces might
reflect yet more details of quantum field theory seems not to have been explored much yet (an exception being the
note \cite{Zhou18} -- the {\it Hilbert schemes} considered there are essentially the algebro-geometric version of configuration
spaces of points). The discussion here and its higher dimensional analog (see \hyperlink{TableOfCohomologyOfCohomotopyCocycleSpaces}{\it Table 1})
suggests that a fair bit of deep structure in quantum field theory is reflected in the (twisted, equivariant, differential, generalized, ...)
cohomology of configuration spaces of points, of which here we only discussed the simplest examples.

\smallskip
One immediate interesting generalization of the present discussion would be to consider M5-brane topologies more complex
than the $\RealNumbers^{3,1} \times \Sigma^2$ considered here. If we generalize $\mathbb{R}^{3,1}$ to a topologically
non-trivial spacetime manifold $M^{3,1}$, then the \TED-K-theory
$$\mathrm{KU}^{\NumberOfProbeBranes + [\FlatConnectionForm]}\Big( \ConfigurationSpace{\NumberOfProbeBranes}\big( M^{3,1} \times \Sigma^2 \times \HomotopyQuotient{\ast}{\CyclicGroup{\ShiftedLevel}}\big) \Big)
$$
will extend the $\slTwoAffine{\Level}$-conformal blocks associated to the $\Sigma^2$-factor by Prop.
\ref{Degree1ConformalBlockInTEdKTheory} by data attached to $M^{3,1}$. If the discussion in \cref{QuantumStatesAndObservables}
is anything to go by, this cohomological approach could reveal further fine-structure of the 6d CFT in its class-S sector.

\medskip

 \noindent
 {\bf 2. Topological quantum computation on Defect branes.}
 The idea of topological quantum computation with anyons
 (\cite{Kitaev03}\cite{FKLW01}, review in \cite{NSSFS08}\cite{Wang10})
 is traditionally thought of as implemented within solid state physics,
 where anyons are to be realized, in one way or another, as effective codim=2 defects
 in some quantum material whose gapped ground state is governed by an effective 2+1-dimensional Chern-Simons theory with WZW boundaries (e.g. \cite{Lerda92}\cite{Kitaev06}\cite{Rao16}).
Meanwhile, it has become understood that relevant solid state quantum systems (potentially) supporting quantum computation tend to be analogs of {\it intersecting brane models} in string theory/M-theory, a statement known as the
{\it AdS/CMT correspondence}
(\cite{ZaanenLiuSunSchalm15}\cite{HartnollLucasSachdev18}\cite{Zaanen21},
 closely related to the {\it AdS/QCD correspondence} \cite[\S 15-18]{RhoZahed16},
 both of which being variants of the more famous {\it AdS/CFT correspondence} \cite{AGMOO99},
 i.e., of the ``holographic principle'' in string/M-theory).

\smallskip
While the realization of anyons in condensed matter theory seems possible (\cite{NLGM20}) but remains somewhat elusive,
the analogous anyonic defects in string theory/M-theory would theoretically be ubiquitous, as soon as it is clarified that and how defect branes such as D7-branes obey anyon statistics (cf \cite[p. 65]{deBoerShigemori12}). But this is just what our discussion in \cref{QuantumStatesAndObservables} argues for (see Rem. \ref{ExoticDefectBranesAndAnyonStatistics}).

\smallskip
This suggests that topological quantum computation may have its natural conceptual home not
in ``mesoscopic'' condensed matter physics, but in the truly ``microscopic'' high energy physics
according to string/M-theory, specifically in the dynamics of defect branes,
with the former only being the approximate image or analog of the latter under the AdS/CMT-correspondence.
A similar state of affairs is already believed to hold for quantum error correction,
which in recent years has been argued
(\cite{ADH14}\cite{PYHP15})
to naturally reflect the quantum information theory inherent in the AdS/CFT correspondence (review in \cite{Harlow17}\cite{SSW20}\cite{JahnEisert21}).

\smallskip
In the case of quantum error correction, this seemingly remote identification of aspects of quantum computation with
aspects of stringy quantum gravity
has been argued to help with substantial practical problems in quantum computation
(see \cite[p. 14]{WVSB20}\cite[p. 16(4)]{Harlow20}\cite[p. 1]{CDCW21}).
Similarly, an understanding of anyon braiding as fundamentally describing defect brane dynamics might usefully
inform the discussion of topological quantum computation.

For example, it is striking that
among all mathematically possible anyon species (braid representations),
the \TED-K-theory in Prop. \ref{Degree1ConformalBlockInTEdKTheory} \&
Thm. \ref{ConformalBlockInTEdKTheory}
singles out
specifically monodromy braid representations realized on conformal blocks in conformal field theory (via Rem. \ref{ExoticDefectBranesAndAnyonStatistics}),
and here specifically the $\slTwoAffine{\Level}$-conformal blocks: These are known as ``$\mathrm{SU}(2)_k$-anyons'', such as Majorana/Ising-anyons for $k=2$, and Fiboniacci-anyons for $k = 3$ (e.g. \cite{TTWL08}\cite{GATHLTW13}\cite[p. 11]{SarmaFReedmanNayak15}\cite[\S III]{JohansenSimula20}) and include just those anyon species plausibly realized in nature, notably via the fractional quantum Hall effect (see \cite{MooreRead91}\cite{EPSS12}\cite{NLGM20}, review in \cite{Su18}) or via Majorana modes bound to vortices in topological superconductors (see \cite{Ivanov01}\cite{Beenakker11}\cite{SunJia17}), and hence in particular in quantum computers (e.g., \cite{KBWS21}). In fact, as we speak an experimental proof of principle for topological qbits based on Majorana anyons (a special form of $\suTwoAffine{2}$ Ising anyons \cite{SarmaFReedmanNayak15}) is being claimed by a major quantum computing lab (\cite{Nayak22}, following \cite{Pikulin21}).

\begin{center}
\label{SUTwoANyonSpecies}
\def\arraystretch{1.6}
\begin{tabular}{|c||c|c|c|c|}
  \hline
    \bf
  \begin{tabular}{c}
    WZW CFT
  \end{tabular}
  &
  $\suTwoAffine{2}$
  &
  $\suTwoAffine{3}$
  &
  $\suTwoAffine{\,k \geq 4}$
  &
  $\cdots$
  \\
  \hline
  \hline
  \bf
  \def\arraystretch{.9}
  \begin{tabular}{c}
    $\mathclap{\phantom{\vert^{\vert^{}}}}$
    realistic
    \\
    anyon species
  \end{tabular}
  &
  \def\arraystretch{.9}
  \begin{tabular}{c}
    Ising /
    Majorana
  \end{tabular}
  &
  \def\arraystretch{.9}
  \begin{tabular}{c}
    Fibonacci / Potts
  \end{tabular}
  &
  parafermions
  &
  $\cdots$
  \\
  \hline
\end{tabular}
\end{center}

\smallskip
Moreover, the realization of topological quantum computation on defect branes such as D3s or M3s
may not be all that remote from observed physics:
After all, Randall-Sundrum-like ``brane world models'' (e.g. \cite{KLLL03}), where the physically observed spacetime is identified with the worldvolume of 3-brane intersections in an unobserved higher dimensional bulk spacetime, notably with $\mathrm{D3}/\mathrm{D7} @ \mathbb{A}_{\ShiftedLevel-1}$-branes (e.g. \cite{GKP01}\cite{BayntonBurgessNierop09}\cite{MaharanaPalti13}), are the way in which all of type I/II/M/F-theory realizes quasi-realistic particle physics (see \cite{IbanesUranga12}, the only exception to this rule being HET-theory models).
In fact there is decent experimental indication for this notion:

\vspace{-.2cm}
\begin{itemize}
%[leftmargin=*]
\setlength\itemsep{-1pt}

\item[{\bf (1)}]
The geometric engineering of quantum chromodynamics on such brane world models
(``holographic QCD'') yields a strikingly realistic description of experimentally observed confined hadron physics (eg \cite{Erlich10}\cite{KimYi11}\cite[\S 15]{RhoZahed16}).

In fact, seen among the (actually observed) hadrons (as opposed to their un-observed constitutent quarks), the notorious super-partners do seem to be experimentally realized as baryon/meson pairs (``hadron supersymmetry'', see  \cite{Lichtenberg99}\cite{KlemptMetsch12} for the general phenomenon and, eg., \cite{BTDL16}\cite{Brodsky21} for its emergence in holographic QCD).

\item[{\bf (2)}]
The ongoing LHC measurements of B-meson anomalies
(see e.g. \cite{CFFMIN21}\cite{ILOS21})
have been argued to be experimental signatures of just such Randall-Sundrum-type 3-brane models (\cite{FMILSS22}\cite{FMIPS20}, review in \cite[\S 3]{Lizana22}\cite[\S 5.4]{AltmannshoferZupan22}, see also \cite{BlankeCrivellin18}).

\end{itemize}
\vspace{-.15cm}
Under this identification of 3-branes with observed spacetime, the model of topological quantum computation on defect branes should amount to operating
the hypothetical axio-dilaton-field \eqref{TheAxioDilaton}. One could go on to quote speculations that this string theoretic axio-dilaton field is secretly already observed (e.g. \cite[\S A.1]{CGRW21}) namely in the form of {\it fuzzy dark matter} (review in \cite{Khoury21}) . While this indicates
rich possibilities for
further phenomenological exploration of the idea of quantum computation
with defect branes, here is not the place to discuss this further.

\medskip

\noindent
{\bf 3. Topological quantum computation in Topological phases of matter.}
More concretely, our identification of realistic $\suTwoAffine{\Level}$-anyons in the \TED-K-theory of the {\it punctured plane} is reminiscent of the well-known fact that twisted equivariant K-theory of {\it tori} classifies topological phases (gapped phases) of quantum materials (\cite{Kitaev09}\cite{FreedMoore12}\cite{Thiang14}\cite{Thiang15}). These, of course, are just the systems thought to host anyonic defect excitations, generally (e.g. \cite[p. 4]{Kitaev06}).

We suggest that this is not a coincidence: While these tori are often thought of as Brillouin tori of quasi-momenta, their twisted equivariant K-theory is in fact equivalent (namely T-dual \cite{MathaiThiang15}\cite{GomiThiang18}) to dually-twisted equivariant K-theory of the actual position-space tori which represent cells in the quantum material's crystalline structure. But this means that anyon defects will appear as punctures in these position-space tori, see, e.g., \cite{Einarsson90}\cite{HosotaniHo92}\cite{GreiterWilczek92}\cite{PuJain21}.
(Notice that the dual situation of punctures in the momentum-space torus is known to correspond to Weyl points in Weyl semimetals, see \cite{MathaiThiang16}.)

In view of the above discussion, this suggests that the \TED-K-theory of {\it punctured tori} is in fact the natural language for discussing topological quantum computation on anyonic defects in topological phases of matter:\footnote{
   \label{PhysicsOfBraiding}
   While here we are all focused on abstract theory,
   it is interesting to highlight the practical viability of this approach:

    On the one hand, it is a general open problem in topological quantum computation of how to actually move anyons around, once they have been realized, hence how to actually braid their worldlines. This is certainly an engineering problem but also a theoretical problem (e.g. \cite[p. 8]{Kitaev06}\cite[p. 7-8]{SarmaFReedmanNayak15}), given that the defining properties of anyons revolve around them being inert to interactions and tending to behave like classical defects. This issue is only further amplified by the above identification/analogy of anyons with
    with defect branes (in \cref{QuantumStatesAndObservables}).
    In other words, while it is mathematically most natural to consider curves in the configuration space of points (braids), it remains generally unclear and in fact implausible (at least if these points represent defects, see also \cite[p. 7]{SarmaFReedmanNayak15}) that these may be realized as actual motions of anyons in time, let alone as a quantum propagations that admit coherent superposition.

  But a neat solution to this fundamental issue has been presented in \cite{BondersonFReedmanNayak08}\cite{BondersonFreedmanNayak08}\cite{Bonderson12}\cite{ZhengDuaJiang16}, where it is claimed that the braid group actions on anyon quantum states may be implemented, to suitable accuracy, by performing certain sequences of measurements of their topological charges, without actually moving the anyons (review in \cite[p. 8]{SarmaFReedmanNayak15}\cite{Beenakker19}), hence by ``braiding without braiding'' (\cite{VijayFu16}), in fact by braiding via ``quantum teleportation'' (as in \cite{Zhang06}).
  All this is understood fairly concretely for Majorana $\suTwoAffine{2}$-anyons,
  whose experimental realization has just been announced (\cite{Nayak22}, following \cite{Pikulin21}).
}
The \TED-K theory of the underlying un-punctured torus classifies the topological phase of the ambient quantum material, and its corrections around the punctures encode the topological order reflected by the presence of anyons. We will further discuss this in \cite{SS22TQC}:

\begin{center}
\def\arraystretch{2}
\begin{tabular}{|c||ccc|}
  \hline
  \def\arraystretch{.9}
  \bf
  \begin{tabular}{c}
    crystalline
    position-space /
    \\
    transverse toroidal orbifold
  $\mathclap{\phantom{\vert_{\vert_{\vert_{\vert}}}}}$
  \end{tabular}
  &
  $
  \overset{
    \raisebox{3pt}{
      \tiny
      \color{darkblue}
      \bf
      plain
    }
  }{
  \HomotopyQuotient
    {\Torus{2}}
    {G}
  }
  $
  &
  \hspace{-50pt}
  \begin{tikzcd}
    {}
    \ar[from=r, hook']
    &
    {}
  \end{tikzcd}
  \hspace{-50pt}
  &
  $
  \overset{
    \mathclap{
    \raisebox{3pt}{
      \tiny
      \bf
      \color{darkblue}
      \begin{tabular}{c}
        punctured /
        with defects
      \end{tabular}
    }
    }
  }{
  (
  \HomotopyQuotient
    {\Torus{2}}
    {G}
  )  \setminus \{\vec z\}
  }
  $
  \\
  \hline
  \def\arraystretch{.9}
  \bf
  \begin{tabular}{c}
    TED-K
    \\
    cohomology
    $\mathclap{\phantom{\vert_{\vert_{\vert}}}}$
  \end{tabular}
  &
  \def\arraystretch{.9}
  \begin{tabular}{c}
    topological
    \\
    phases
    $\mathclap{\phantom{\vert_{\vert_{\vert}}}}$
  \end{tabular}
  &&
  \def\arraystretch{.9}
  \begin{tabular}{c}
    anyonic
    \\
    topological orders
    $\mathclap{\phantom{\vert_{\vert_{\vert}}}}$
  \end{tabular}
  \\
  \hline
\end{tabular}
\end{center}

\noindent
From the point of view of defect branes, this corresponds to allowing the transverse space $\Sigma^2$ in \hyperlink{FigureD7BraneConfiguration}{\it Figure 2}, \hyperlink{MBraneConfigurations}{\it Figure 3}
to be a (punctured) toroidal orbifold, which is exactly what one wants to consider also in F/M-theory, notably in the context of the AGT correspondence, discussed in \cref{QuantumStatesAndObservables}.

\vspace{.1cm}

\noindent

\newpage

\bigskip
\noindent  Hisham Sati, {\it Mathematics, Division of Science,
\\
and Center for Quantum and Topological Systems (CQTS), NYUAD Research Institute,
\\
New York University Abu Dhabi, UAE.
}
\\
{\tt hsati@nyu.edu}
\\
\\
\noindent  Urs Schreiber, {\it Mathematics, Division of Science,
\\
and Center for Quantum and Topological Systems (CQTS), NYUAD Research Institute,
\\
New York University Abu Dhabi, UAE;
\\
on leave from Czech Academy of Science, Prague.}
\\
{\tt us13@nyu.edu}

\end{document}